\def\dOi{12(3:8)2016}
\newcommand{\amp}{_{\it amplitude}}
\newcommand{\spa}{_{\it space}}
\newcommand{\tim}{_{\it time}}
\newcommand{\br}{\mbox{\it Br}}
\newcommand{\mcarry}{\mbox{\it Carry}}
\newcommand{\bitsum}{\mbox{\it Bitsum}}
\newcommand{\adcarry}{\mbox{\it Carry1}}
\newcommand{\subcarry}{\mbox{\it Borrow1}}
\newcommand{\modulo}{\mbox{\hspace {3pt}mod\hspace{3pt}}}
\newcommand{\cltw}{\mbox{\bf CL12}}
\newcommand{\areleven}{\mbox{\bf CLA11}} 
\newcommand{\pa}{\mbox{\bf PA}}
\newcommand{\thr}{\areleven_{\mathcal A}^{\mathcal R}}
\newcommand{\rep}{\mbox{\raisebox{2pt}{\scriptsize $|\hspace{-4pt}\sim$}}}
 \newcommand{\blank}{\mbox{{\scriptsize {\sc Blank}}}}
 \newcommand{\bit}{\mbox{\it Bit}}
\newcommand{\intimpl}{\mbox{\hspace{2pt}$\circ$\hspace{-0.14cm} \raisebox{-0.043cm}{\Large --}\hspace{2pt}}}
\newcommand{\successor}{\mbox{\hspace{1pt}\boldmath $'$}}
\newcommand{\elz}[1]{\mbox{$\parallel\hspace{-3pt} #1 \hspace{-3pt}\parallel$}}
\newcommand{\oo}{\bot}            
\newcommand{\pp}{\top}            
\newcommand{\xx}{\wp}
\newcommand{\seq}[1]{\langle #1 \rangle}
\newcommand{\gneg}{\mbox{\small $\neg$}}                  
\newcommand{\mli}{\hspace{2pt}\mbox{\small $\rightarrow$}\hspace{2pt}}                      
\newcommand{\cla}{\mbox{$\forall$}}      
\newcommand{\cle}{\mbox{$\exists$}}        
\newcommand{\mld}{\hspace{2pt}\mbox{\small $\vee$}\hspace{2pt}}     
\newcommand{\mlc}{\hspace{2pt}\mbox{\small $\wedge$}\hspace{2pt}}   
\newcommand{\ade}{\mbox{\large $\sqcup$}}      
\newcommand{\ada}{\mbox{\large $\sqcap$}}      
\newcommand{\add}{\hspace{2pt}\mbox{\small $\sqcup$}\hspace{2pt}}                     
\newcommand{\adc}{\hspace{2pt}\mbox{\small $\sqcap$}\hspace{2pt}} 
\newcommand{\tlg}{\bot}               
\newcommand{\twg}{\top}               
\theoremstyle{italic}
\theoremstyle{italic}\newtheorem{lemma}[thm]{Lemma}
\theoremstyle{plain}
\theoremstyle{plain}\newtheorem{remark}[thm]{Remark}
\theoremstyle{plain}
\theoremstyle{plain} 
\begin{document}

\title[Build your own clarithmetic I]{Build your own clarithmetic I: Setup and completeness}

\author[G.~Japaridze]{Giorgi Japaridze}
\address{Department of Computing Sciences, Villanova University, 800 Lancaster Avenue, Villanova, PA 19085, USA}	
\urladdr{http://www.csc.villanova.edu/$\sim$japaridz/}  
\email{giorgi.japaridze@villanova.edu}  

\keywords{Computability logic; Interactive computation; Implicit computational complexity;  Game semantics;  Peano arithmetic; Bounded arithmetic}

\begin{abstract}
  \noindent {\em Clarithmetics} are number theories based on {\em computability logic}. Formulas of these 
theories represent interactive computational problems, and their ``truth'' is understood as existence of an algorithmic
solution. Various complexity constraints on such solutions induce various versions of clarithmetic. The
present paper introduces a parameterized/schematic version $\mbox{\bf CLA11}_{P_4}^{P_1,P_2,P_3}$. By  tuning the three parameters $P_1,P_2,P_3$ in an essentially mechanical manner, one automatically obtains sound and complete theories with respect to a wide range of target {\em tricomplexity}
classes, i.e., combinations of time (set by $P_3$), space (set by $P_2$) and so called amplitude (set by $P_1$) complexities. Sound in the sense that
every theorem $T$ of the system represents an interactive number-theoretic computational problem with
a solution from the given tricomplexity class and, furthermore, such a solution can be automatically extracted from a
proof of $T$. And complete in the sense that every interactive number-theoretic problem with a solution
from the given tricomplexity class is represented by some theorem of the system. Furthermore, through tuning the 4th parameter $P_4$, at the cost of sacrificing recursive axiomatizability but not simplicity or elegance, 
the above {\em extensional
completeness} can be strengthened to {\em intensional completeness}, according to which every formula 
representing a problem with a solution from the given tricomplexity class is a theorem of the system. This article
is published in two parts. The present Part I introduces the system and proves its completeness, while the
forthcoming Part II is devoted to proving soundness.
\end{abstract}

\maketitle

\tableofcontents

\newpage
\section{Introduction}\label{intr}

\subsection{Computability logic}\label{scmptlg}
{\em Computability logic} (CoL for short),\label{x1} together with its accompanying proof theory termed {\em cirquent calculus},\label{0crq} has evolved in recent years in a long series of publications 
\cite{bauerTOCL}-\cite{bauerLMCS}, \cite{Jap03}-\cite{cla5}, \cite{Ver}, \cite{Qu}, \cite{xuIGPL}-\cite{xu2016}.
 It is a mathematical platform and long-term program for rebuilding logic as
a formal theory of computability, as opposed to the more traditional role of logic as a formal theory of
truth. Under CoL's approach, logical operators stand for operations on computational problems, formulas
represent such problems, and their ``truth'' is seen as algorithmic solvability. In turn, computational problems
--- understood in their most general, {\em interactive} sense --- are defined as games played by a machine against
its environment, with ``algorithmic solvability'' meaning existence of a machine that wins the game against
any possible behavior of the environment. With this semantics, CoL provides a systematic answer to the
question ``{\em what can be computed?}'', just like classical logic is a systematic tool for telling what
is true. Furthermore, as it happens, in positive cases ``{\em what} can be computed'' always allows itself to be
replaced by ``{\em how} can be computed'', which makes CoL of potential interest in not only theoretical computer
science, but many applied areas as well, including interactive knowledge base systems, resource oriented
systems for planning and action, or declarative programming languages. 

Both syntactically and semantically, CoL is a conservative extension of classical first order logic. Classical sentences and predicates are seen in it as special, simplest cases of computational problems --- specifically, as games  with no moves, automatically won by the machine when true and lost when false. Such games are termed {\bf elementary}.\label{x4} All operators of classical logic remain present in the language of CoL, with their semantics  generalized from elementary games to all games. Namely: $\neg A$ is $A$ with the roles of the two players interchanged. 
 $A\mlc B$ is a game where both $A$ and $B$ are played in parallel, and where the machine wins if it wins in both 
components. $A\mld B$ is similar, with the difference that here winning in just one component is sufficient.  $A\mli B$ is understood as $\gneg A\mld B$, playing which, intuitively, means reducing $B$ to $A$. $\cla x A(x)$ is a game winning which means playing $A(x)$ in a uniform, $x$-independent way  so that a win for all possible values of $x$ is guaranteed. $\cle xA(x)$ is similar, only here existence of just one lucky value is sufficient. These operators are conservative generalizations of their classical counterparts in the sense that  the meanings of the former happen to coincide with the meanings of the latter when the operators are restricted to elementary games only.    

In addition to $\neg,\mlc,\mld,\mli,\cla,\cle$, there is a host of  ``non-classical'' connectives and quantifiers. Out of those, the present paper only deals with the so called {\em choice}\label{x5}  group of operators: $\adc,\add,\ada,\ade$,\label{x6a} referred to as choice (``constructive'') conjunction, disjunction, universal quantifier and existential quantifier, respectively. 
$A\adc B$ is a game where the environment chooses between $A$ and $B$, after which the play continues according to the rules of the chosen component. $A\add B$ is similar, only here 
the choice is made by the machine.    
In $\ada x A(x)$, the environment chooses a value $n$ for $x$, and the play continues as $A(n)$. In the dual $\ade x A(x)$, such a choice is made by the machine. 

The language of CoL allows us to specify an infinite variety of meaningful computational problems and relations between them in a systematic way.  
Here  are some examples, where $f$ is a unary function, $p,q$ are unary predicates, and $A\leftrightarrow B$ abbreviates $(A\mli B)\mlc(B\mli A)$.  
$\ada x\bigl(p(x)\add \neg p(x)\bigr)$ expresses the problem of deciding $p$. Indeed, this is a game where, at the beginning, the environment selects a value $n$ for $x$. In traditional terms, this event can be viewed as providing $n$ as an ``input''. The game then continues as $p(n)\add \neg p(n)$ and, in order to win, the machine has to choose the true $\add$-disjunct. So,  $p$ is decidable if and only if the machine has an algorithmic winning strategy in  $\ada x\bigl(p(x)\add \neg p(x)\bigr)$. Quite similarly, $\ada x\ade y\bigl(y=f(x)\bigr)$ can be seen to be the problem of computing $f$. Next, 
$\ada x\ade y\bigl(p(x)\leftrightarrow q(y)\bigr)$ is the problem of many-one reducing $p$ to $q$. If we want to specifically say that $f$ is a such a  reduction, then $\ada x\ade y\bigl(y=f(x)\mlc (p(x)\leftrightarrow q(y))\bigr)$ can be written. If we additionally want to indicate that here $f$ is in fact one-one reduction, we can write $\ada x\ade y\bigl(y=f(x)\mlc (p(x)\leftrightarrow q(y))\mlc\cla z(y=f(z)\mli z=x)\bigr)$.
Bounded Turing reduction of $p$ to $q$ takes the form 
\[\ada y_1\bigl(q(y_1)\add \neg q(y_1)\bigr)\mlc\ldots\mlc \ada y_n\bigl(q(y_n)\add \neg q(y_n)\bigr)\mli \ada x\bigl(p(x)\add \neg p(x)\bigr).\]
If, instead, we write
\[\ada y_1\ldots\ada y_n\Bigl(\bigl(q(y_1)\add \neg q(y_1)\bigr)\mlc\ldots\mlc \bigl(q(y_n)\add \neg q(y_n)\bigr)\Bigr)\mli \ada x\bigl(p(x)\add \neg p(x)\bigr),\]
then bounded weak truth-table reduction is generated. And so on. In all such cases, imposing various complexity constraints on the allowable computations, as will be done in the present paper,  yields the corresponding complexity-theoretic counterpart of the concept. For instance, if  computations are required to run in polynomial time, then $\ada x\ade y\bigl(p(x)\leftrightarrow q(y)\bigr)$ becomes polynomial time many-one reduction, more commonly referred to as simply ``polynomial time reduction''.

Lorenzen's \cite{Lor59}, Hintikka's \cite{Hintikka73} and Blass's \cite{Bla72,Bla92} dialogue/game semantics should be named as the most direct precursors  
of CoL. The presence of  close connections with intuitionistic logic \cite{Propint} and Girard's \cite{Gir87} linear logic at the level of syntax and overall philosophy is also a fact.  
A rather comprehensive and readable, tutorial-style introduction to CoL can be found in the first 10 sections of \cite{Japfin}, which
is the most recommended reading for a first acquaintance with the subject.
 A more compact yet self-contained introduction to the fragment of CoL relevant to the
present paper is given in \cite{cl12}. 

\subsection{Clarithmetic}
Steps towards claiming specific application areas for CoL have already been
made in the direction of basing applied theories --- namely, {\em Peano arithmetic} $\pa$\label{x2} --- on CoL instead of
the traditional, well established and little challenged  alternatives
such as classical or intuitionistic logics. Formal arithmetical systems based on CoL have been baptized
in \cite{cla4} as {\bf clarithmetics}.\label{x3} By now ten clarithmetical theories, named {\bf CLA1} through {\bf CLA10}, have been
introduced and studied \cite{Japtowards,cla4,cla8,cla5}. These theories are notably simple: most
of them happen to be conservative extensions of $\pa$ whose only non-classical  axiom is the sentence $\ada x\ade y(y = x\successor)$   
asserting computability of the successor function $\successor$,\label{x32}  
and whose only non-logical 
rule of inference is ``constructive induction'', the particular form of which varies from system to  system. The diversity of such theories is typically related to different complexity conditions  
imposed on the underlying concept of interactive computability. For instance,
{\bf CLA4} soundly and completely captures  the set of  polynomial
time solvable interactive number-theoretic problems, {\bf CLA5} does the same for polynomial space, {\bf CLA6} for elementary
recursive time (=space), {\bf CLA7} for primitive recursive time (=space), and {\bf CLA8} for $\pa$-provably recursive time (=space).

\subsection{The present system}
The present paper introduces a new system of clarithmetic, named {\bf CLA11}.\label{x6} Unlike its predecessors, 
this one is a scheme of clarithmetical theories rather than a particular theory. As such, it can be written
as $\mbox{\bf CLA11}^{P_1,P_2,P_3}_{P_4}$
where $P_1$, $P_2$, $P_3$, $P_4$ are ``tunable'' parameters, with different specific settings
of those parameters defining different particular theories of clarithmetic --- different 
{\em instances} of {\bf CLA11},\label{xinst} as we shall refer to them.
Technically, $P_1$, $P_2$, $P_3$ are sets of terms or pseudoterms used as bounds for certain quantifiers in certain postulates, and $P_4$ is a  set of formulas that act as supplementary axioms. The latter is typically empty yet ``expandable''. Intuitively, the value of $P_1$ determines the so called amplitude complexity  of the class
of problems captured by the theory, i.e., the complexity measure
concerned with the sizes of the machine's moves relative to the sizes of the environment's moves. $P_2$ determines the space complexity of that class. $P_3$
determines the time complexity of that class. And $P_4$ governs the intensional strength of the theory. Here
{\bf intensional}\label{x8} strength is about what {\em formulas} are {\em provable} in the theory. This is as opposed to 
 {\bf extensional}\label{x9} strength, which  is about
what number-theoretic {\em problems} are {\em representable}\label{x10} in the theory, where a problem $A$ is said to be representable
iff there is a provable formula $F$ that expresses $A$ under the standard interpretation (model) of arithmetic.

Where $P_1, P_2, P_3$ are sets of terms or pseudoterms identified with the functions that they represent in the
standard model of arithmetic, we say that a computational problem has a $(P_1, P_2, P_3)$ {\bf tricomplexity}\label{x7} solution
if it has a solution (machine's algorithmic winning strategy) that runs in $p_1$ amplitude, $p_2$ space and $p_3$ time
for some triple $(p_1, p_2, p_3) \in P_1 \times P_2 \times P_3$. The main result of this paper is that, as long as
the parameters of $\mbox{\bf CLA11}^{P_1,P_2,P_3}_{P_4}$
satisfy certain natural ``regularity'' conditions, the theory
is sound and complete with respect to the set of problems that have $(P_1, P_2, P_3)$ tricomplexity solutions.
Sound in the sense that every theorem $T$ of $\mbox{\bf CLA11}^{P_1,P_2,P_3}_{P_4}$
represents a number-theoretic computational
problem with a $(P_1, P_2, P_3)$ tricomplexity solution and, furthermore, such a solution can be mechanically extracted from a
proof of $T$. And complete in the sense that every number-theoretic problem with a $(P_1, P_2, P_3)$ tricomplexity
solution is represented by some theorem of $\mbox{\bf CLA11}^{P_1,P_2,P_3}_{P_4}$. Furthermore, as long as $P_4$ contains or entails
all true  sentences of $\pa$, the above {\em extensional completeness}\label{x11} automatically
strengthens to {\em intensional completeness},\label{x12} according to which every formula expressing a problem with a
$(P_1, P_2, P_3)$ tricomplexity solution is a theorem of the theory. Note that intensional completeness
implies extensional completeness but not vice versa, because the same problem may be expressed by many
different formulas, some of which may be provable and some not. G\"{o}del's celebrated theorem is about intensional
rather than extensional incompleteness. In fact, extensional completeness is not at all interesting in the context of classical-logic-based theories such as $\pa$: in such theories, unlike CoL-based theories, it is trivially achieved, because the provable formula $\twg$ represents every true sentence.
G\"{o}del's incompleteness theorem retains its validity for clarithmetical theories, meaning that
 intensional completeness of such theories can only be achieved at the expense of sacrificing recursive
axiomatizability.

The above-mentioned ``regularity'' conditions on the parameters of {\bf CLA11} are rather simple and easy-to-satisfy. As a result, by just ``mechanically'' varying those parameters, we can generate  
a great variety of theories 
for one or another tricomplexity class, the main constraint being that the space-complexity component of the triple should be at
least logarithmic, the amplitude-complexity component  at least linear, and the time-complexity component 
at least polynomial. Some natural examples of such tricomplexities are:

\begin{quote} \  \  

Polynomial amplitude + logarithmic space + polynomial time

Linear amplitude + $O(\log^i)$ space (for any particular $i\in\{1,2,3,\ldots\}$) + polynomial time

Linear amplitude + polylogarithmic space + polynomial time

Linear amplitude + linear space + polynomial time

Polynomial amplitude + polynomial space + polynomial time

Polynomial amplitude + polynomial space + quasipolynomial time

Polynomial amplitude + polynomial space + exponential time

Quasilinear amplitude + quasilinear space + polynomial time

Elementary amplitude + elementary space + elementary time

Primitive recursive amplitude + primitive recursive space + primitive recursive time

You name it\ldots 
 
\end{quote}

\subsection{Related work}
It has been long noticed that many complexity classes can be characterized by certain versions of arithmetic.
Of those, systems of bounded arithmetic\label{x13} should be named as the closest predecessors of
our systems of clarithmetic. In fact, most clarithmetical systems, including {\bf CLA11}, can be classified as bounded
arithmetics because, as with the latter, they control computational complexity by explicit resource bounds attached
to quantifiers, usually in induction or similar postulates.\footnote{Only the quantifiers $\adc$ and $\add$, not $\forall$ or $\exists$. It should be noted that the earlier ``intrinsic theories'' of Leivant
\cite{bbb6} also follow the tradition of quantifier restriction in induction.}   The best known alternative line of research \cite{bel92,bel00,bbb2,hof00,leiv94,sim88},
primarily developed by recursion theorists, controls computational complexity via type information instead. On the logical side, one should also mention ``bounded linear logic'' \cite{bbb5} and
``light linear logic''  \cite{bbb4} of Girard et al. Here we will not attempt any comparison with these alternative
approaches because of  big differences in the defining frameworks.

The story of bounded arithmetic starts with Parikh's 1971 work \cite{parikh}, where the first system  $I\Delta_0$ of
bounded arithmetic was introduced. Paris and Wilkie, in \cite{paris1} and a series of other papers, advanced the study of $I\Delta_0$ and of how it relates to complexity theory. Interest towards the area dramatically intensified after the
appearance of Buss' 1986 influential work \cite{Buss}, where systems of bounded arithmetic for polynomial hierarchy,
polynomial space and exponential time were introduced. Clote and Takeuti \cite{bbb3}, Cook and Nguyen \cite{cook} and
others introduced a host of theories related to other complexity classes. See \cite{BussChapter,cook,Hajek,bbb7}  for comprehensive
surveys and discussions of this line of research. The treatment of bounded arithmetic found in \cite{cook}, which
uses the two-sorted vocabulary of Zambella \cite{zambella}, is among the newest.
Just like the present paper, it offers a method for designing one's own system of bounded arithmetic for
a spectrum of complexity classes within P. Namely, one only needs to add a single axiom to the  base theory $V^0$, where the axiom states the existence of a solution to a complete problem of the
complexity class.

All of the above theories of bounded arithmetic are weak subtheories of $\pa$, typically obtained by
imposing certain syntactic restrictions on the induction axiom or its equivalent, and then adding some old theorems
of $\pa$ as new axioms to partially bring back the baby thrown out with the bath water. Since the weakening of the
deductive strength of $\pa$ makes certain important functions or predicates no longer  
definable, the non-logical vocabularies of these theories typically have to go beyond the original vocabulary $\{0,\successor,+,\times\}$
of $\pa$. These theories achieve soundness and extensional completeness with respect to the corresponding
complexity classes in the sense that a function $f(\vec{x})$ belongs to the target class if and only if it is provably
total in the system --- that is, if there is a $\Sigma_1$-formula $F(\vec{x},y)$ that represents  
the graph of $f(\vec{x})$, such that the system proves $\cla \vec{x}\cle !yF(\vec{x},y)$.

\subsection{Differences with bounded arithmetic}
Here we want to point out several differences between the above systems of bounded
arithmetic and our clarithmetical theories, including (the instances of) {\bf CLA11}. 

\subsubsection{Generality} While the other approaches are about functions, clarithmetics are about interactive problems, with functions being nothing but special cases of the latter. This way, clarithmetics allow us to systematically study not only computability in its narrow sense, but also many other meaningful  properties and relations, such as, for instance,  various sorts of reducibilities (cf. Section \ref{scmptlg}). Just like function effectiveness, such relations happen to be special cases of our broad concept of computability. Namely, a relation holds if and only if the corresponding interactive problem has a solution. Having said that, the differences discussed in the subsequent paragraphs of this subsection hold regardless of whether one keeps in mind the full generality of clarithmetics or restricts attention 
back to functions  only, the ``common denominators'' of the two approaches.

\subsubsection{Intensional strength} Our systems extend rather than restrict $\pa$. Furthermore, instead of $\pa$, as a classical basis one can take
anything from a very wide range of sound theories, beginning from certain weak fragments of $\pa$ and ending
with the absolute-strength theory $Th(N)$\label{x103} of the standard model $N$ of arithmetic (the ``truth arithmetic''). It is exactly due to this
flexibility that we can achieve not only extensional but also intensional completeness --- something inherently
unachievable within the traditional framework of bounded arithmetic, where computational soundness 
by its very definition entails deductive weakness.

\subsubsection{Language} Due to the fact that our theories are no longer weak, there is no need to have
 any new {\em non-logical} primitives
in the language and the associated new axioms in the theory: all recursive or 
arithmetical relations and functions can be expressed through $0,\successor,+,\times$  in the standard way. Instead, as mentioned earlier, the language of our
theories of clarithmetic only has two additional {\em logical} connectives $\adc,\add$   and two additional quantifiers $\ada,\ade$. It is CoL's 
constructive semantics for these operators that allows us to express nontrivial computational
problems. Otherwise, formulas not containing these operators --- formulas of the pure/traditional language of
$\pa$, that is --- only express elementary problems (i.e., moveless games --- see page \pageref{x4}). This explains how our approach makes it possible to reconcile unlimited deductive strength 
with computational soundness. For instance, the formula $\cla x\cle yF(x,y)$ may be provable even if $F(x,y)$ is  
the graph of a function which is ``too hard'' to compute.  This does not have any
relevance to the complexity class characterized by the theory because the formula $\cla x\cle yF(x,y)$, unlike its
``constructive counterpart'' $\ada x\ade yF(x,y)$, carries no nontrivial computational meaning.\footnote{It should be noted that the idea of differentiating between operators (usually only quantifiers) with and without
computational connotation has been surfacing now and then in the literature on complexity-bound arithmetics. For instance, the language of a 
system constructed in \cite{Sch} for polynomial time, along with ``ordinary'' quantifiers used in similar treatments, contains the ``computationally irrelevant'' quantifier $\cla ^{nc}$.}

\subsubsection{Quantifier alternation} Our approach admits arbitrarily many alternations of boun\-ded quantifiers in induction or whatever
similar postulates, whereas the traditional bounded arithmetics are typically very sensitive in this respect,
with different quantifier complexities yielding different computational complexity classes.\footnote{Insensitivity with respect to quantifier alternations is not really without precedents in the literature. See, for instance, \cite{bel98}.
The system introduced there, however, in its creator's own words from  \cite{bbb2}, is ``inadequate as a working logic, e.g., awkwardly
defined and not closed under modus ponens''.}

\subsubsection{Uniformity} As noted, both our approach and that of \cite{cook} offer uniform treatments of otherwise disparate  systems for various complexity classes. The spectrums of complexity classes for which  the two approaches allow one to uniformly
construct adequate systems are, however, different. Unlike the present work, \cite{cook} does not reach  
beyond polynomial hierarchy, thus missing, for instance, linear
space, polynomial space, quasipolynomial time or space, exponential time, etc. On the other hand,
unlike \cite{cook}, our uniform treatment is only about sequential and deterministic computation, thus missing classes such as $AC^0$, $NC^1$, $NL$ or $NC$. A more notable difference between the two approaches, however, is related to how uniformity is achieved. In the case of \cite{cook}, as already mentioned,
the way to ``build your own system'' is to add, to the base theory, an axiom expressing a
complete problem of the target complexity class. Doing so thus requires quite some nontrivial complexity-theoretic knowledge. In our case, on the other hand, adequacy is achieved by
straightforward, brute force tuning of the corresponding parameter of 
$\mbox{\bf CLA11}^{P_1,P_2,P_3}_{P_4}$. E.g., for 
 linear space, we simply need to take the $P_2$ parameter to be the set of
$(0,\successor,+)$-combinations of variables, i.e., the set of terms that ``canonically'' express the linear functions.
If we (simultaneously) want to achieve adequacy with respect to polynomial time, we shall (simultaneously)
take the $P_3$ parameter to be the set of $(0,\successor,+,\times)$-combinations of variables, i.e., the set of terms that express
the polynomial functions. And so on.

\subsection{Motivations}
Subjectively, the primary motivating factor for the author when writing this paper was that it further illustrates the scalability and appeal of CoL, his brainchild. On the objective side, the main motivations are as follows, somewhat arbitrarily divided 
 into the categories ``general'', ``theoretical'' and ``practical''.

\subsubsection{General}

Increasingly loud voices are being heard \cite{dina} that, since the real computers are interactive, it might be time in theoretical computer science to seriously consider switching from Church's narrow understanding of computational problems as functions to more general, interactive understandings. The present paper and clarithmetics in general serve the worthy job of lifting ``efficient arithmetics'' to the interactive level. Of course, these are only CoL's first modest steps in this direction, and there is still a long way to go. In any case, our generalization from functions to interaction appears to be beneficial even if, eventually, one is only interested in functions, because it allows a smoother treatment and makes our systems easy-to-understand in their own rights. Imagine how awkward it would be if one had tried to restrict the language of classical logic only to formulas with at most one 
alternation of quantifiers because more complex formulas seldom express things that we comprehend or care about, and, besides,  things can always be Skolemized anyway. 
Or, if mankind had let the Roman-European tradition prevail in its reluctance to go beyond positive integers and accept $0$ as a legitimate quantity,  to say nothing about the negative, fractional, or irrational numbers. 

The ``smoothness'' of our approach is related to the fact that, in it, all formulas --- rather than only those of the form $\cla x\cle !yF(x,y)$ with $F\in\Sigma_1$ --- have clearly defined meanings as 
computational problems. This allows us to apply certain systematic and scalable methods of analysis that otherwise would be inadequate. For instance, the soundness proofs for various clarithmetical theories go semantically by induction on the lengths of proofs,  by showing that all axioms  
have given (tri)complexity solutions, and that all rules of inference
preserve the property of having such solutions. Doing the same is impossible in the traditional approaches to bounded arithmetic (at least those based on classical logic), because not all intermediate steps in proofs will have the form  $\cla x\cle !yF(x,y)$ with $F\in\Sigma_1$. It is no accident that, to prove computational soundness,  such approaches usually have to appeal to  syntactic arguments that are around ``by good luck'', such as cut elimination. 

As mentioned, our approach 
extends rather than restricts $\pa$. This allows us to safely continue relying on  our standard arithmetical intuitions when reasoning within clarithmetic, without our hands 
being tied by various constraints, without the caution necessary when reasoning within weak theories. Generally, a feel for a formal theory and a ``sixth sense'' that it takes for someone to comfortably reason within the theory require time and efforts to develop. 
Many of us have such a ``sixth sense'' for $\pa$ but not  so many have it for weaker theories. This is so because weak theories, being artificially restricted and thus forcing us to pretend that we do not know certain things
that we actually do know, are farther from a mathematician's normal intuitions than $\pa$ is. Even if this was not the case,  mastering the one and universal theory $\pa$ is still easier and promises a greater payoff than trying to master tens of disparate yet equally important weak theories that are out there.

\subsubsection{Theoretical}
Among the main motivations for studying bounded arithmetics has been a hope that they can take us closer to solving some of the great open problems in complexity theory, for ``it ought to be easier to separate the theories corresponding to the complexity classes than to separate the classes themselves'' (\cite{cook}). The same applies to our systems of clarithmetic and {\bf CLA11} in particular that allows us to capture, in a uniform way, a very wide and diverse range of complexity classes.   

While the bounded arithmetic approach has been around and extensively studied since long ago, the progress towards realizing the above hope has been very slow.  This fact alone justifies all reasonable attempts to try something substantially new and so far not well explored. The clarithmetics line of research qualifies as such. Specifically, studying ``nonstandard models'' of clarithmetics, whatever they may mean, could be worth the effort. 

Among the factors which might be making {\bf CLA11} more promising in this respect than its traditional alternatives is that the former achieves intensional completeness while the latter inherently have to settle for merely extensional completeness. Separating theories intensionally is generally easier 
than separating them extensionally, yet intensional completeness implies that the two sorts of separation mean the same.  

Another factor relates to the ways in which theories are  axiomatized in uniform treatments, namely, the approach of {\bf CLA11} versus that of \cite{cook}. As noted earlier, the uniform method of \cite{cook} achieves extensional completeness with respect to a given complexity class by adding to the theory an axiom expressing a complete problem of that class. The same applies to the method used in \cite{bbb3}. Such axioms are typically long formulas as they carry nontrivial complexity-theoretic information.  They talk --- through encoding and arithmetization --- about graphs, computations, etc. rather
than about numbers. This makes such axioms hard to comprehend directly as number-theoretic statements, and makes the corresponding theories hard to analyze. This approach essentially means translating our complexity-theoretic knowledge into arithmetic. For this reason, it is likely to encounter the same kinds of challenges  as the ordinary, informal theory of computation does when it comes to separating complexity classes.  Also, oftentimes we may simply fail to know a complete problem of a given, not very well studied, complexity class. 

The uniform way in which {\bf CLA11} axiomatizes its instances, as explained earlier, is very different from the above. Here all axioms and rules are ``purely arithmetical'', carrying no direct complexity-theoretic information. This means that the number-theoretic contents of such theories are easy to comprehend, which, in turn, carries a promise that their model theories might be easier to successfully study, develop and use in proving independence/separation results.

\subsubsection{Practical}\label{sprc} 
More often than not, the developers of complexity-bound arithmetics have also been motivated by the potential of practical applications in computer science. Here we quote Schwichtenberg's \cite{Sch} words:  
\begin{quote}
``It is well known that it is undecidable in general whether a given program meets its specification. In contrast,
it can be checked easily by a machine whether a formal proof is correct, and from a constructive proof one
can automatically extract a corresponding program, which by its very construction is correct as well. This
at least in principle opens a way to produce correct software, e.g. for safety-critical applications.
Moreover, programs obtained from proofs are ``commented'' in a rather extreme sense. Therefore it is easy
to apply and maintain them, and also to adapt them to particular situations.''
\end{quote}
Applying the same line of thought to clarithmetics, where, by the way, all proofs qualify as ``constructive'' for the above purposes, the introductory section of \cite{cla4} further adds:
\begin{quote}
``In a more
ambitious and, at this point, somewhat fantastic perspective, after developing reasonable theorem-provers,
CoL-based efficiency-oriented systems can be seen as declarative programming languages in an extreme sense,
where human ``programming'' just means writing a formula expressing the problem whose efficient solution
is sought for systematic usage in the future. That is, a program simply coincides with its specification. The
compiler's job would be finding a proof (the hard part) and translating it into a machine-language code (the
easy part). The process of compiling could thus take long but, once compiled, the program would run fast
ever after.''
\end{quote}

\noindent What matters for applications like the above, of course, is the intensional rather than extensional strength of a theory. The greater that strength, the better the chances that a proof/program will be found for a declarative,  ad hoc specification of the goal. Attempts to put an intensionally weak theory (regardless of its extensional strength) to practical use would usually necessitate some pre-processing of the goal, such as expressing it through a certain standard-form $\Sigma_1$-formula. But this sort of pre-processing often essentially means already finding --- outside the formal system --- a solution of the target problem or, at least, already finding certain insights into such a solution.

In this respect, {\bf CLA11} fits the bill. Firstly, because it is easily, ``mechanically'' adjustable to a potentially infinite variety of target complexities that one may come across in real life. It allows us to adequately capture a complexity class from that variety without any preliminary complexity-theoretic knowledge about the class, such as knowledge of some complete problem of the class (yet another sort of ``pre-processing'') as required by the approaches in the style of  \cite{bbb3} or \cite{cook}. All relevant knowledge about the class is automatically extracted by the system from the definition (ad hoc description) of the class, without any need to look for help outside the formal theory itself. Secondly, and more importantly, {\bf CLA11} fits the bill  because of its intensional strength, which  
includes the full deductive power of $\pa$ and which is only limited by the G\"{o}del incompleteness phenomenon.   
Even when the $P_4$ parameter of a theory $\mbox{\bf CLA11}^{P_1,P_2,P_3}_{P_4}$ is empty (meaning that the theory does not possess any arithmetical knowledge that goes beyond $\pa$), the theory provides ``practically full'' information about $(P_1,P_2,P_3)$ tricomplexity computability. This is in the same sense as $\pa$, despite G\"{o}del's incompleteness, provides 
``practically full'' information about arithmetical truth. Namely, if a formula $F$ is not
provable in $\mbox{\bf CLA11}^{P_1,P_2,P_3}_{P_4}$, it is unlikely that anyone would find a $(P_1,P_2,P_3)$ tricomplexity algorithm solving the problem
expressed by $F$: either such an algorithm does not exist, or showing its correctness requires going beyond
ordinary combinatorial reasoning formalizable in $\pa$.

\subsection{How to read this paper}\label{sstn}
This paper is being published in two parts. The present Part I introduces {\bf CLA11} (Section \ref{ss11}), ``bootstraps'' it (Section \ref{sboot}), looks at certain particular instances of it (Section \ref{sharv}),  and proves its completeness (Sections \ref{s19} and \ref{s19e}). The
forthcoming \cite{cla11b} Part II  is devoted to proving the soundness of the system. 
  Even though the paper is long, a reader inclined to skip the proofs of  its main results can safely drop everything beyond Section \ref{sharv} of the present part, including the entire Part II. Dropping all proofs in the remaining sections of Part I will further reduce the amount of material to be read.

The only external source on which this paper relies is \cite{cl12}, and familiarity with the latter  is a necessary condition for reading this paper. Again, all proofs found in \cite{cl12} can be safely omitted, which should significantly reduce the size of that otherwise fairly long article. Familiarity with \cite{cl12} is also a sufficient condition, because \cite{cl12} presents a self-contained,  tutorial-style introduction to the relevant fragment of CoL. It would be accurate to say that \cite{cl12} is, in fact, ``Part 0'' of the present series of articles.  Having \cite{cl12} at hand for occasional references is necessary even for those who are well familiar with CoL but from some other sources. It contains an index, which can and should be looked up every time one encounters an unfamiliar term or notation. All definitions and conventions of \cite{cl12} are adopted in the present series without revisions. 

\section{The system \texorpdfstring{$\areleven$}{CLA11}}\label{ss11}
 $\areleven$ is a scheme of applied theories based on the system $\cltw$\label{x17} of computability logic, in the same sense as the well known (cf. \cite{Boolos,cook,Hajek,bbb7}) {\bf Peano Arithmetic}\label{x26} $\pa$ is an applied theory based on classical logic. We do not reintroduce logic $\cltw$ here, assuming that the reader is familiar with it  from \cite{cl12}. As noted just a while ago, the same holds for all other concepts used but not explained in this  article.  

\subsection{Language}\label{subsectlang}
The theories that we deal with in this paper have the same language,  obtained from the language of $\cltw$  by removing all nonlogical predicate letters,
removing all constants but $0$, and  removing all but the following three function letters:
 
\begin{itemize}
\item $successor$,\label{x33} unary. We   write $\tau\successor $ for $successor(\tau)$.
\item $sum$, binary. We   write $\tau_1+ \tau_2$ for $sum(\tau_1,\tau_2)$.
\item $product$, binary. We   write $\tau_1\times  \tau_2$ for $product(\tau_1,\tau_2)$.
\end{itemize}

Let us call this language $\mathbb{L}$.\label{x18}  Unless otherwise specified or implied by the context, when we say ``{\bf formula}'',\label{x19} it is to be understood as formula of $\mathbb{L} $. 
As always, {\bf sentences}\label{x20} are formulas with no free occurrences of variables. An {\bf $\mathbb{L}$-sequent}\label{xlsq} is a sequent all of whose formulas are sentences of $\mathbb{L}$. A {\bf paraformula}\label{21}  is defined as the result of replacing, in some formula, some free occurrences of variables by  constants. And a {\bf parasentence}\label{22} is a paraformula with no free occurrences of variables. Every formula is a paraformula but not vice versa, because a paraformula may contain constants other than $0$, which are not allowed in formulas. Yet, oftentimes we may forget about the distinction between formulas and paraformulas, and carelessly say ``formula'' where, strictly speaking, ``paraformula'' should have been said.  In any case, we implicitly let all definitions related to formulas  automatically extend to paraformulas whenever appropriate/possible.

For a formula $F$, \  
$\cla  F$\label{x25} means the $\cla$-closure of $F$, i.e., $\cla x_1\ldots\cla x_n F$, where $x_1,\ldots,x_n$ are the free variables of $F$ listed in their lexicographic order.  
Similarly for $\cle F$, $\ada F$, $\ade F$.

A  formula  is said to be {\bf elementary}\label{x27} iff it is $\adc,\add,\ada,\ade$-free.  We will be using the lowercase $p$, $q$, \ldots\ as metavariables for elementary formulas. This is as opposed to the uppercase letters $E,F,G,\ldots$, which will be used as metavariables for any (elementary or nonelementary) formulas.
 
\subsection{Peano arithmetic}\label{spari}
As one can see, $\mathbb{L} $ is an extension of the language of $\pa$  --- namely, the extension obtained by adding the choice operators $\adc,\add,\ada,\ade$. The language of $\pa$  is the {\em elementary fragment} of $\mathbb{L}$, in the sense that formulas of the former are nothing but elementary formulas of the latter. We remind the reader that, deductively, $\pa$ is the theory based on classical first-order  logic  
with the following nonlogical axioms, that we shall refer to as the {\bf Peano axioms}:\label{x28}
\begin{eqnarray*}
&  & 1. \ \cla x(0\not= x\successor ); \\
 & & 2.\ \cla x\cla y(x\successor = y\successor \mli x= y); \\
&  & 3. \ \cla x(x+ 0= x);\\
 & & 4.\ \cla x\cla y\bigl( x+ y\successor = (x+ y)\successor \bigr);\\
&  & 5. \ \cla x(x\times 0= 0);\\
 & & 6.\ \cla x\cla y\bigl(x\times  y\successor = (x\times  y)+ x\bigr);\\
&  & 7. \ \cla \Bigl(  p(0)\mlc \cla x\bigl(p(x)\mli p(x\successor )\bigr)\mli \cla x\hspace{2pt}p(x)\Bigr) \mbox{  for each elementary formula $p(x)$.}
\end{eqnarray*}

The concept of an interpretation explained in \cite{cl12}  can now be restricted to interpretations that   are only defined 
(other than the word ``{\em Universe}'') on $\successor $, $+ $ and  $\times $, as the present language $\mathbb{L}$ has no 
other nonlogical function or predicate letters. Of such interpretations, the {\bf standard interpretation}\label{x29} 
$^\dagger$ is the one whose universe $\mbox{\em Universe}^\dagger$ is the ideal universe (meaning that  $\mbox{\em Domain}^\dagger$ is $\{0,1,10,11,100,\ldots\}$ and $\mbox{\em Denotation}^\dagger$ is the identity function on $\mbox{\em Domain}^\dagger$), and that interprets  the letter $\successor $ 
as the standard successor function\label{x33s} $\mbox{\em var}_1+1$, interprets $+$ as the sum function 
$\mbox{\em var}_1+ \mbox{\em var}_2$,  and interprets $\times $ as the product function 
$\mbox{\em var}_1\times \mbox{\em var}_2$. We often terminologically identify a (para)formula $F$ with the game $F^\dagger$, 
and typically write $F$ instead of $F^\dagger$ unless doing so may cause ambiguity. Correspondingly, whenever we say that 
an elementary (para)sentence is {\bf true},\label{x30} it is to be understood as that the (para)sentence is true under  
the standard interpretation, i.e., is true in what is more commonly called the {\em standard model of arithmetic}.\label{xsma} 

Terminologically we  will further identify natural numbers with the corresponding binary numerals (constants). Usually it will be clear from the context whether we are talking about a number or a binary numeral. For instance, if we say that $x$ is greater than $y$, then we obviously mean $x$ and $y$ as numbers; on the other hand, if we say that $x$ is longer than $y$, then $x$ and $y$ are seen as numerals. Thus, $111$ (seven) is greater but not longer than $100$ (four). 

If we write \[\hat{0},\hat{1},\hat{2},\ldots\label{x31}\] within formal expressions, they are to be understood as the  terms $0,0\successor,0\successor\successor,\ldots$, respectively. Such terms will be referred to as the 
{\bf unary numerals}.\label{x34} Occasionally, we may carelessly omit $\hat{\ }$ and simply write $0,1,2,\ldots$. 

An  $n$-ary ($n\geq 0$)  
{\bf pterm}\footnote{The word ``pterm'', where ``p'' stands for ``pseudo'', is borrowed from \cite{Boolos}.}\label{x35} is an elementary  formula $p(y,x_1,\ldots,x_n)$ with all free variables as shown and one of such variables --- $y$ in the present case --- designated as what we call the {\bf value variable}\label{xvv} of the pterm,   such that $\pa$ proves $\cla x_1\ldots\cla x_n\cle ! y \hspace{1pt} \tau(y,x_1,\ldots,x_n)$. Here, as always,   $\cle ! y$ means ``there is a unique $y$ such that''.  We call $x_1,\ldots,x_n$ the {\bf argument variables}\label{xav} of the pterm.  If $p(y,\vec{x})$ is a pterm, we shall usually refer to it as $\mathfrak{p}(\vec{x})$ (or just $\mathfrak{p}$), changing  Latin to Gothic and dropping the value variable $y$ (or dropping all variables). Correspondingly, where $F(y)$ is a formula, we write $F\bigl(\mathfrak{p}(\vec{x})\bigr)$ to denote the formula $\cle y \bigl(p(y,\vec{x})\mlc F(y)\bigr)$, which, in turn, is equivalent to $\cla y \bigl(p(y,\vec{x})\mli F(y)\bigr)$. These sort of expressions, allowing us to syntactically treat pretms as if they were genuine terms of the language, are unambiguous in that all ``disabbreviations'' of them are provably equivalent in the system. 
 Terminologically, genuine terms of $\mathbb{L}$, such as $(x_1+x_2)\times x_1$, will  also count as pterms. Every $n$-ary pterm $\mathfrak{p}(x_1,\ldots,x_n)$ represents --- in the obvious sense --- some $\pa$-provably total $n$-ary function $f(x_1,\ldots,x_n)$. For further notational and terminological convenience, in many contexts we shall identify  
pterms with the  functions that they represent.

It is our convention that, unless otherwise specified, if we write a pterm as $\mathfrak{p}(x_1,\ldots,x_n)$ or $\mathfrak{p}(\vec{x})$ (as opposed to just $\mathfrak{p}$) when first mentioning it,   we always imply that the displayed variables are pairwise distinct,  and that they are exactly (all and only) the argument variables of the pterm. 
Similarly, if we write a function as $f(x_1,\ldots,x_n)$ or $f(\vec{x})$ when first mentioning it, we imply that the displayed variables are pairwise distinct,  and that $f$ is an $n$-ary function that does not depend on any variables other than the displayed ones.  
A convention in this style does not apply to formulas though: when writing a formula as $F(\vec{x})$, we do not necessarily imply that all variables of $\vec{x}$ have free occurrences in the formula, or that 
 all free variables of the formula are among $\vec{x}$ (but we still do imply that the displayed variables are distinct).

The language of $\pa$ is known to be  very expressive, despite its nonlogical vocabulary's officially being limited to only $0,\successor ,+ ,\times $. Specifically, it allows us to express, in a certain  standard way, all  recursive functions and relations, and beyond. Relying on the common knowledge of the power of the language of \pa,  we will be using standard expressions such as $x\leq y$, $y> x$, etc. in formulas as abbreviations of the corresponding proper expressions of the language. Similarly for pterms. So, for instance,  if we write ``$x< 2^y$'', it is officially to be understood as an abbreviation of a standard formula of {\bf PA} saying that $x$ is smaller than the $y$th power of $2$.

In our metalanguage, 
\[|x|\label{x40}\]  will refer to the length of (the binary numeral for) $x$. In other words, $|x|=\lceil \log_2(x+ 1)\rceil$, where,  as always, $\lceil z\rceil$ means the smallest integer $t$ with $z\leq t$. As in the case of other standard functions, the expression $|x|$ will be simultaneously  understood as a pterm naturally representing the function $|x|$. The delimiters ``$|\ldots|$'' will 
automatically also be treated as parentheses, so, for instance, when $f$ is a unary function or pterm, we will usually write ``$f|x|$'' to mean the same as the more awkward expression ``$f(|x|)$'' would normally mean.  Further generalizing this notational convention, if $\vec{x}$ stands for an $n$-tuple $(x_1,\ldots,x_n)$ ($n\geq 0$) and   we write $\tau|\vec{x}|$,\label{xtx} it is to be understood as $\tau(|x_1|,\ldots,|x_n|)$.

Among the other pterms/functions that we shall frequently use is 
\[(x)_y,\label{x41a}\] 
standing for $\lfloor x/2^y\rfloor\modulo 2$, where, as always, $\lfloor z\rfloor$ denotes the greatest integer $t$ with $z\geq t$. In other words, $(x)_y$ is the $y$th {\bf least significant bit}\label{x46a} of $x$. 
 Here, as usual,  the bit count starts from $0$ rather than $1$, and goes from right to left, i.e., from the least significant bit to the most significant bit; when $y\geq |x|$,  ``the $y$th least significant bit of  $x$'', by convention, is $0$. Sometimes we will talk about the $y$th {\bf most significant bit}\label{x47a} of  $x$, where $1\leq y\leq |x|$. In this case we count bits from left to right, and the bit count starts from $1$ rather than $0$. So, for instance, $0$ is the $4$th least significant bit and, simultaneously, the $5$th most significant bit, of   $111101111$. This number has a $99$th least significant bit (which is $0$), but it does not have a $99$th most significant bit.  

One more abbreviation that we shall frequently use is $\bit$, defined by 
\[\bit(y,x)\label{x50a}\ =_{def}\ (x)_y=1.\]
 
\subsection{Bounds}\label{sbou}
We say that a pterm $\mathfrak{p}_2$ is a {\bf syntactic variation}\label{x36} of a pterm $\mathfrak{p}_1$ iff there is a function $f$ from the set of (free and bound) variables of $\mathfrak{p}_1$ onto the set of (free and bound) variables of $\mathfrak{p}_2$ such that the following conditions are satisfied: 
\begin{enumerate}
  \item If $x,y$ are two distinct variables of $\mathfrak{p}_1$ where at least one of them is bound, then $f(x)\not=f(y)$.
  \item The two pterms only differ from each other in that, wherever $\mathfrak{p}_1$ has a (free or bound) variable $x$, $\mathfrak{p}_2$ has the variable $f(x)$ instead.
\end{enumerate}   
Example: $y+z$ is a syntactic variation of $x+y$, and so is $z+z$.     

By a   {\bf bound}\label{x60} we shall mean  a pterm $\mathfrak{p}(x_1,\ldots,x_n)$  ---  
 which may as well be written simply as  $\mathfrak{p}(\vec{x})$ or $\mathfrak{p}$ --- satisfying (making true) the following  
{\bf monotonicity}\label{x61} condition:
\[\cla x_1\ldots\cla x_n\cla y_1\ldots\cla y_n\bigl(x_1\leq y_1\mlc\ldots\mlc x_n\leq y_n \mli \mathfrak{p}(x_1,\ldots,x_n)\leq \mathfrak{p}(y_1,\ldots,y_n)\bigr).\] 
A {\bf boundclass}\label{x62} means a  set $\mathcal B$  of bounds closed under syntactic variation, in the sense that,  if a given bound is in $\mathcal B$, then so are all of its  syntactic variations.

Where $\mathfrak{p}$ is a pterm and $F$ is a formula, we  use the abbreviation $\ada x\leq \mathfrak{p} F$ for $\ada x(x\leq \mathfrak{p}\mli F)$, $\ade x \leq \mathfrak{p} F$ for $\ade x(x\leq \mathfrak{p}\mlc F)$, $\ada |x|\leq \mathfrak{p} F$ for $\ada x(|x|\leq \mathfrak{p}\mli F)$, and $\ade |x| \leq \mathfrak{p} F$ for $\ade x(|x|\leq \mathfrak{p}\mlc F)$.\label{xabq} Similarly for the blind quantifiers $\cla $ and $\cle$. And similarly for $<$ instead of $\leq$. 
 
Let $F$ be a  formula and $\mathcal B$ a boundclass. We say that $F $ is {\bf $\mathcal B$-bounded}\label{x63} iff   every $\ada$-subformula (resp. $\ade $-subformula) of $F$  has the form $\ada |z|\leq \mathfrak{b}|\vec{s}|  H$ (resp.   $\ade |z|\leq \mathfrak{b}|\vec{s}|  H$), where $z,\vec{s}$ are pairwise distinct variables  not bound by $\cla$ or $\cle $ in $F$, and $\mathfrak{b}(\vec{s})$ is a bound from ${\mathcal B}$. By simply saying ``{\bf bounded}'' we shall mean ``${\mathcal B}$-bounded for some boundclass $\mathcal B$''.

A {\bf boundclass triple}\label{x64} is a triple ${\mathcal R}=({\mathcal R}\amp,{\mathcal R}\spa,{\mathcal R}\tim)$\label{xtroica} of boundclasses.

\subsection{Axioms and rules}\label{saxru}
Every boundclass triple $\mathcal R$  and set $\mathcal A$ of sentences induces the  {\bf theory}  
$\thr$\label{x65} that we deductively define as follows.\vspace{4pt} 
  
The  {\bf axioms}  of $\thr$, with $x$ and $y$ below being arbitrary two distinct variables, are:
 \begin{eqnarray}
 &   &   \mbox{All {\bf Peano axioms}}; \label{A1}  \\
 &  & \ada x\ade y(y=x\successor), \mbox{ which we call the {\bf Successor axiom}};\label{A15}\\ 
 &  & \ada x\ade y (y=|x|), \mbox{ which we call the {\bf Log axiom}};    \label{A2} \\ 
 & & \ada x\ada y\bigl(\bit(y,x)\add \gneg\bit(y,x)\bigr),  \mbox{ which we call the {\bf Bit axiom}};    \label{A3} \\ 
 &  &  \mbox{All  sentences of $\mathcal A$, which we call  {\bf supplementary axioms}}.   \label{A4} 
\end{eqnarray}\medskip

The {\bf rules of inference} of $\thr$ are  {\em Logical Consequence}, \ {\em ${\mathcal R}$-Induction}, and 
{\em $\mathcal R$-Comprehension}. These rules are meant to deal exclusively with sentences, and correspondingly,   in our schematic representations   (\ref{r2}) and (\ref{r3}) of ${\mathcal R}$-Induction and ${\mathcal R}$-Comprehension below,     each premise or conclusion $H$ should be understood as its $\ada$-closure $\ada H$, with the prefix $\ada$    dropped merely for readability.\vspace{4pt}

The rule of {\bf Logical Consequence}\label{x66} (every application/instance of this rule, to be more precise),  abbreviated as {\bf LC},\label{x67} as already known from \cite{cl12}, is 
\begin{equation}\label{r1}
\frac{E_1\hspace{20pt}\ldots\hspace{20pt}E_n}{F},
\end{equation}
where $E_1,\ldots,E_n$ ($n\geq 0$) as well as $F$ are sentences  such that   $\cltw$ proves the sequent $E_1,\ldots,E_n\intimpl F$. 
More generally,  
we say that a parasentence $F$ is a {\bf logical consequence}\label{23} of parasentences $E_1,\ldots,E_n$ iff {\bf CL12} proves  $E_1,\ldots,E_n\intimpl F$. If here $n=0$, we can simply say that $F$ is {\bf logically valid}.\label{x24}\vspace{4pt} 

The rule of {\bf ${\mathcal R}$-Induction}\label{x68}  is 
\begin{equation}\label{r2}
 \frac{ F(0) \hspace{30pt} F(x)\mli F(x\successor) }{ x\leq \mathfrak{b}|\vec{s}|\mli F(x) },\end{equation}
where $x$ and $\vec{s}$ are pairwise distinct  variables, $F(x)$ is an  ${\mathcal R}\spa$-bounded formula, and 
$\mathfrak{b}(\vec{s})$ is a bound from ${\mathcal R}\tim$. 
We shall say that $ F(0) $ is  the {\bf basis} of induction,\label{x70} and  $ F(x)\mli F(x\successor) $ is the {\bf  inductive step}.\label{x71} Alternatively, we may refer to the two premises as the {\bf left premise}\label{x72} and the {\bf right premise},\label{x73} respectively. 
The variable $x$ has a special status here, and we say that the conclusion follows from the premises by {\bf ${\mathcal R}$-Induction on} $x$. We shall refer to the formula-variable pair $F(x)$ as the {\bf induction formula},\label{x74}
 and refer to the bound $\mathfrak{b}(\vec{s})$ as the {\bf induction bound}.\label{x75}\vspace{4pt}

 The rule of {\bf $\mathcal R$-Comprehension}\label{x69}   is 
\begin{equation}\label{r3}
\frac{p(y)\add\gneg p(y)}{\ade |x| \leq \mathfrak{b}|\vec{s}|\cla y<\mathfrak{b}|\vec{s}| \bigl(\bit(y,x)\leftrightarrow p(y)\bigr)}\end{equation}
($q_1\leftrightarrow q_2$\label{xleftrih} abbreviates $(q_1\mli q_2)\mlc (q_2\mli q_1)$), where $x,y$ and $\vec{s}$ are pairwise distinct variables,  $p(y)$ is an elementary formula not containing $x$,   and 
$\mathfrak{b}(\vec{s})$ is a  bound from ${\mathcal R}\amp$. 
We shall refer to the formula-variable pair $p(y)$ as the {\bf comprehension formula},\label{xcf} and refer to $\mathfrak{b}(\vec{s})$ as the {\bf comprehension bound}.\label{xcbn}\vspace{4pt}
  
When $\mathcal R$ is fixed in a context, we may simply say ``Induction'' and ``Comprehension'' instead of ``$\mathcal R$-Induction'' and ``$\mathcal R$-Comprehension''. Note that, of the three components of $\mathcal R$, the rule of $\mathcal R$-Induction only depends on ${\mathcal R}\spa$ and ${\mathcal R}\tim$, while ${\mathcal R}$-Comprehension only depends on  
${\mathcal R}\amp$. 

\subsection{Provability}\label{spr}
A sentence $F$ is considered to be {\bf provable} in $\thr$, written as
$\thr\vdash F,\label{x98}$
 iff there is a sequence of sentences, called a 
{\bf $\thr$-proof} of $F$, where each sentence is either an  axiom, or  follows from some previous sentences by one of the three  rules of  $\thr$, and where the last sentence is $F$. An {\bf extended $\thr$-proof}\label{x80}
 is defined in the same way, only, with the additional requirement that each application of LC should come together with an attached $\cltw$-proof of the corresponding sequent. 
  
Generally, in the context of $\thr$,  as in the above definition of provability and proofs,  we will only be interested in proving {\em sentences}.
In the premises and conclusions of (\ref{r2}) and (\ref{r3}), however,  we wrote not-necessarily-closed formulas  and pointed out that they were to be understood as their $\ada$-closures. For technical convenience, we continue this practice and agree that,    whenever we write 
$\thr\vdash F$ or say ``$F$ is provable'' for a non-sentence $F$, it  simply means that $\thr\vdash \ada F$. Similarly, 
when we say that $F$ is a logical consequence of $E_1,\ldots,E_n$, what we shall mean is that $\ada F$ is a logical 
consequence of $\ada E_1,\ldots,\ada E_n$. Similarly, when we say that a given strategy solves a given paraformula $F$, 
it is to be understood as that the strategy solves $\ada F$ ($\ada F^\dagger$, that is).  To summarize, when dealing with 
$\thr$ or reasoning within this system, any formula or paraformula with free variables should be understood as its 
$\ada$-closure, unless otherwise specified or implied by the context. An exception is when 
$F$ is an elementary paraformula and we say that $F$ is {\bf true}. This is to be understood as that the $\cla$-closure 
$\cla F$ of $F$ 
is true (in the standard model), for ``truth'' is only meaningful for elementary parasentences (which $\ada F$ generally would not be). An important fact on which we will often rely yet only implicitly so, is that the parasentence $\cla F\mli \ada F$ or the closed sequent $\cla F\intimpl \ada F$ is (always) $\cltw$-provable. In view of the soundness of $\cltw$ (Theorem 8.2 of \cite{cl12}), this means that whenever $F$ an elementary paraformula and $\cla F$ is true, $\ada F$ is automatically won by a strategy that does nothing.

\begin{remark}\label{j3}
{\em Our choice of $\pa$ as the ``{\bf elementary basis}''\label{x81} of $\thr$ --- that is, as the classical theory whose axioms constitute the axiom group (\ref{A1}) of $\thr$ ---   is   rather arbitrary, and its only explanation is that $\pa$ is the best known and easiest-to-deal-with recursively enumerable theory. Otherwise, for the purposes of this paper, a much weaker elementary basis   would suffice. It is interesting to understand exactly what weak subtheories of $\pa$ are sufficient as elementary bases of $\thr$,  but we postpone to the future any attempts  to  answer this question. 
Our choice of the language $\mathbb{L}$ is also arbitrary, and the results of this paper, as typically happens in similar cases,  generalize to a wide range of ``sufficiently expressive'' languages.}  
\end{remark}

As $\pa$ is well known and well studied, we safely assume that the reader has a good feel for what it can prove, so we do not usually further justify $\pa$-provability  claims that we make.  A reader less familiar with $\pa$ can take it as a rule of thumb that, despite G\"{o}del's incompleteness theorems, 
$\pa$ proves every true number-theoretic fact that a contemporary high school  student can establish, or that mankind was  aware of before 1931.
One fact worth noting at this point is  that, due to the presence of the axiom group (\ref{A1}) and the rule of LC, 
\begin{equation}\label{suppa}
\mbox{\em $\thr$ proves every sentence provable in $\pa$.}
\end{equation}  

\subsection{Regularity}\label{sreg}
Let $\mathcal B$ be a set of bounds. We define the {\bf linear closure}\label{x82} of ${\mathcal B}$ as the smallest boundclass $\mathcal C$ 
such that the following  conditions are satisfied:
\begin{itemize}
  \item ${\mathcal B}\subseteq{\mathcal C}$;
  \item $0\in {\mathcal C}$;
  \item whenever a bound $\mathfrak{b}$ is in ${\mathcal C}$, so is the bound $\mathfrak{b}\successor$;\footnote{We assume the presence of some fixed, natural way which, given any pterms $\mathfrak{b},\mathfrak{c}$, generates the pterms (whose meanings are) $\mathfrak{b}\successor$, $\mathfrak{b}+\mathfrak{c}$, $\mathfrak{b}\times\mathfrak{c}$. Similarly for any other standard combinations of pterms/functions, such as, for instance, composition $\mathfrak{b}(\mathfrak{c})$.} 
  \item whenever two bounds $\mathfrak{b}$ and $\mathfrak{c}$ are in ${\mathcal C}$, so is the bound  $\mathfrak{b}+\mathfrak{c}$.
\end{itemize}
The {\bf polynomial closure}\label{x83} of $\mathcal B$ is defined as the smallest boundclass $\mathcal C$ that satisfies the above four  conditions and, in addition, also satisfies the following condition:
\begin{itemize} 
  \item whenever two bounds $\mathfrak{b}$ and $\mathfrak{c}$ are in ${\mathcal C}$, so is the bound  $\mathfrak{b}\times\mathfrak{c}$.
\end{itemize}
Correspondingly, we say that $\mathcal B$ is {\bf linearly closed}\label{x84} 
(resp. {\bf polynomially closed}) 
iff $\mathcal B$ is the same as its linear 
(resp. polynomial) 
closure. 

Let $\mathfrak{b}=\mathfrak{b}(\vec{x})=\mathfrak{b}(x_1,\ldots,x_m)$ and $\mathfrak{c}=\mathfrak{c}(\vec{y})=\mathfrak{c}(y_1,\ldots,y_n)$ be functions  or pterms understood as functions.
We write \[\mathfrak{b} \preceq \mathfrak{c}\label{x85} \]  iff $m=n$ and $\mathfrak{b}(\vec{a})\leq \mathfrak{c}(\vec{a})$ is true for all constants $\vec{a}$.
 Next, where $\mathcal B$ and $\mathcal C$ are boundclasses, we write $\mathfrak{b}\preceq {\mathcal C}$   to mean that $\mathfrak{b}\preceq \mathfrak{c}$ for some $\mathfrak{c}\in{\mathcal C}$, and  write 
${\mathcal B}\preceq {\mathcal C}$ to mean that $\mathfrak{b}\preceq {\mathcal C}$ for all $\mathfrak{b}\in{\mathcal B}$. Finally, where 
$\mathfrak{a}_1,\mathfrak{s}_1,\mathfrak{t}_1,\mathfrak{a}_2,\mathfrak{s}_2,\mathfrak{t}_2$ are bounds, we write 
$(\mathfrak{a}_1,\mathfrak{s}_1,\mathfrak{t}_1)\preceq (\mathfrak{a}_2,\mathfrak{s}_2,\mathfrak{t}_2)$\label{xapr19} to mean that 
$\mathfrak{a}_1\preceq \mathfrak{a}_2$, $\mathfrak{s}_1\preceq \mathfrak{s}_2$ and $\mathfrak{t}_1\preceq \mathfrak{t}_2$.
   
\begin{defi}\label{dt}
We say that a boundclass triple ${\mathcal R}$  is {\bf regular}\label{x86} iff the following conditions are satisfied:\footnote{Not all of these conditions are  independent from each other.} 
\begin{enumerate}  
  \item For every bound $\mathfrak{b}(\vec{s})\in {\mathcal R}\amp\cup {\mathcal R}\spa\cup {\mathcal R}\tim$  and any (=some) variable $z$ not occurring in $\mathfrak{b}(\vec{s})$, the game $\ada \ade z(z=\mathfrak{b}|\vec{s}|)$  has an ${\mathcal R}$ tricomplexity\label{xnw} solution (in the sense of Convention 12.4 of \cite{cl12}), and such a solution can be effectively constructed from $\mathfrak{b}(\vec{s})$. 
 \item  ${\mathcal R}\amp$ is {\bf at least linear},\label{x89} ${\mathcal R}\spa$ is {\bf at least logarithmic}, and ${\mathcal R}\tim$ is {\bf at least polynomial}.\label{x90} This is in the sense that, for any variable $x$, we have $x\preceq {\mathcal R}\amp$, $|x|\preceq {\mathcal R}\spa$ and $x,x^2,x^3,\ldots\preceq {\mathcal R}\tim$.
\item All three components of ${\mathcal R}$ are linearly closed and, in addition, ${\mathcal R}\tim$ is also polynomially closed.
\item For each  component ${\mathcal B}\in\{{\mathcal R}\amp,{\mathcal R}\spa,{\mathcal R}\tim\}$ of $\mathcal R$, whenever $\mathfrak{b}(x_1,\ldots,x_n)$ is a bound in $\mathcal B$  and 
$\mathfrak{c}_1,\ldots,\mathfrak{c}_n\in{\mathcal R}\amp\cup{\mathcal R}\spa $, we have  $\mathfrak{b}(\mathfrak{c}_1,\ldots,\mathfrak{c}_n)\preceq {\mathcal B}$. 
    \item For every triple 
 $\bigl(\mathfrak{a}(\vec{x}),\mathfrak{s}(\vec{x}),\mathfrak{t}(\vec{x})\bigr)$ of bounds in ${\mathcal R}\amp\times {\mathcal R}\spa\times{\mathcal R}\tim$  there is a triple  
 $\bigl(\mathfrak{a}'(\vec{x}),$ $\mathfrak{s}'(\vec{x}),\mathfrak{t}'(\vec{x})\bigr)$ in ${\mathcal R}\amp\times {\mathcal R}\spa\times{\mathcal R}\tim$
such that
$\bigl(\mathfrak{a}(\vec{x}),\mathfrak{s}(\vec{x}),\mathfrak{t}(\vec{x})\bigr)\preceq \bigl(\mathfrak{a}'(\vec{x}),\mathfrak{s}'(\vec{x}),\mathfrak{t}'(\vec{x})\bigr)$ 
and  $|\mathfrak{t}'(\vec{x})|\preceq \mathfrak{s}'(\vec{x})\preceq 
\mathfrak{a}'(\vec{x})\preceq \mathfrak{t}'(\vec{x})$.
\end{enumerate}  
\end{defi}\medskip
   
\noindent Our use of the ``Big-O'' notation\label{xbon} below and elsewhere is standard. One of several equivalent ways to define it is to say that, given any two $n$-ary functions --- or pterms seen as functions --- $f(\vec{x})$ and $g(\vec{y})$,  $f(\vec{x})= O(g(\vec{y}))$ (or simply $f=O(g)$) means that there is a natural number $k$ such that $f(\vec{a})\leq kg(\vec{a})+k$ for all $n$-tuples $\vec{a}$ of natural numbers. If we say ``$O(g)$ amplitude'', it is to be understood as ``$f$  amplitude for some $f$ with $f=O(g)$''. Similarly for space and time. 

\begin{lemma}\label{lcol}
Assume $\mathcal R$ is a regular boundclass triple, ${\mathcal B}\in\{{\mathcal R}\amp,{\mathcal R}\spa,{\mathcal R}\tim\}$, 
 $\mathfrak{f}=\mathfrak{f}(x_1,\ldots,x_n)$ $(n\geq 0$) is a function,    
$\mathfrak{b}$ is an $n$-ary  bound from  ${\mathcal B}$, and $\mathfrak{f}=O(\mathfrak{b})$.  Then $\mathfrak{f}\preceq {\mathcal B}$.  
\end{lemma}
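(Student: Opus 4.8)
The goal is to show that if $\mathfrak{f} = O(\mathfrak{b})$ for some bound $\mathfrak{b}$ in one of the components $\mathcal{B}$ of a regular triple $\mathcal{R}$, then $\mathfrak{f} \preceq \mathcal{B}$, i.e., $\mathfrak{f} \preceq \mathfrak{c}$ for some $\mathfrak{c} \in \mathcal{B}$. The plan is to unwind the definition of $O$ and then build a witnessing bound $\mathfrak{c}$ inside $\mathcal{B}$ using the closure properties guaranteed by regularity (Definition \ref{dt}, clause 3, which says all three components are linearly closed, and clause 2, which gives the ``at least linear/logarithmic/polynomial'' lower bounds).

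First I would use $\mathfrak{f} = O(\mathfrak{b})$ to fix a natural number $k$ with $\mathfrak{f}(\vec{a}) \leq k\mathfrak{b}(\vec{a}) + k$ for all tuples $\vec{a}$. The natural candidate for the witnessing bound is therefore $\mathfrak{c} = k\mathfrak{b} + k$, understood as the pterm built from $\mathfrak{b}$ by iterated use of $+$ and $\successor$ (applying the footnote convention that there is a fixed natural way to form $\mathfrak{b} + \mathfrak{c}$, $\mathfrak{b}\successor$, etc.). Concretely, $k\mathfrak{b}$ is $\mathfrak{b} + \mathfrak{b} + \cdots + \mathfrak{b}$ ($k$ times), and $+k$ means applying $\successor$ $k$ times to the constant $0$; since $\mathcal{B}$ is linearly closed, $0 \in \mathcal{B}$, and $\mathcal{B}$ is closed under $\mathfrak{c} \mapsto \mathfrak{c}\successor$ and under $+$, so $\mathfrak{c} \in \mathcal{B}$. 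I would also check that $\mathfrak{c}$ is genuinely a bound, i.e., satisfies the monotonicity condition: this is immediate since $\mathfrak{b}$ is monotone and $+$, $\successor$ preserve monotonicity. Then $\mathfrak{f} \preceq \mathfrak{c}$ holds by the choice of $k$, giving $\mathfrak{f} \preceq \mathcal{B}$.

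There is one subtlety to attend to: the relation $\preceq$ as defined requires $\mathfrak{f}$ and $\mathfrak{c}$ to have the same arity (same number of argument variables), so I must make sure $\mathfrak{c} = k\mathfrak{b} + k$ is presented as an $n$-ary pterm with exactly the argument variables $x_1,\dots,x_n$ of $\mathfrak{f}$ — not fewer, even in the degenerate case $n = 0$ or when $\mathfrak{b}$ does not actually depend on all of $x_1,\dots,x_n$. If $\mathfrak{b}$ has fewer argument variables, I would pad it to arity $n$ by a harmless syntactic device (e.g., using $\mathfrak{b}(\vec{x}) + x_i \cdot 0$ or simply noting that a boundclass is closed under syntactic variation and that regularity clause 2 lets us add dummy variables bounded appropriately); in the $n = 0$ case everything reduces to comparing constants. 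This arity-bookkeeping, rather than any real mathematical content, is the only place requiring care; the closure argument itself is routine. I expect the main (modest) obstacle to be stating the padding/arity normalization cleanly enough that $\mathfrak{c} \in \mathcal{B}$ still follows from the stated closure conditions — but clause 3 of regularity together with closure under syntactic variation was designed precisely to make this go through.
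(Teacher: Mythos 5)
Your proof is correct and follows essentially the same route as the paper: unwind $\mathfrak{f}=O(\mathfrak{b})$ to get $\mathfrak{f}\preceq \hat{k}\times\mathfrak{b}+\hat{k}$ and observe that this bound lies in $\mathcal{B}$ by linear closure (Definition \ref{dt}), hence $\mathfrak{f}\preceq\mathcal{B}$. The arity-padding worry is harmless but unnecessary, since the lemma already stipulates that $\mathfrak{b}$ is $n$-ary, so $k\mathfrak{b}+k$ has the right arity automatically.
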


\begin{proof} Assume the conditions of the lemma. 
The condition $\mathfrak{f}=O(\mathfrak{b})$ means that, for some number $k$, \ $\mathfrak{f}(\vec{z})\preceq \hat{k}\times\mathfrak{b}(\vec{z})+\hat{k}$.  But, by condition 2 of Definition \ref{dt}, ${\mathcal B}$ is linearly closed.  Hence $\hat{k}\times \mathfrak{b}(\vec{z})+\hat{k}$ is in $\mathcal B$. Thus, $\mathfrak{f}\preceq {\mathcal B}$. 
\end{proof} 

\begin{remark}\label{pizza}
{\em When $\mathcal R$ is a regular boundclass triple, the above lemma allows us to safely rely on asymptotic (``Big-O'') terms and asymptotic analysis  when trying to show that a given machine $\mathcal M$ runs in time ${\mathcal R}\tim$. Namely, it is sufficient to show that $\mathcal M$ runs in time $O(\mathfrak{b})$ for some $\mathfrak{b}\in {\mathcal R}\tim$ or even just $\mathfrak{b}\preceq {\mathcal R}\tim$. Similarly for space and amplitude. }
\end{remark}

\begin{defi}\label{dadm}
We say that a theory $\thr$   is {\bf regular}\label{x91} iff the boundclass triple ${\mathcal R}$  is regular and, in addition, the following conditions are satisfied: 
\begin{enumerate}  
  \item Every sentence  of $\mathcal A$ has an ${\mathcal R}$ tricomplexity solution. Here, if $\mathcal A$ is infinite, we additionally require that there is   an effective procedure that returns an ${\mathcal R}$ tricomplexity   solution for each sentence of $\mathcal A$.
  \item For every bound $\mathfrak{b}(\vec{x})$ from 
${\mathcal R}\amp\cup{\mathcal R}\spa\cup{\mathcal R}\tim$  and every (=some) variable $z$ not occurring in 
$\mathfrak{b}(\vec{x})$, $\thr$ proves $ \ade z(z=\mathfrak{b}|\vec{x}|)$.
\end{enumerate}
\end{defi}

\subsection{Main result}\label{smr}
By an ({\bf arithmetical}) {\bf problem}\label{x93} in this paper we mean a game $G$ such that, for some sentence $X$, \ $G= X^\dagger$ (remember that $^\dagger$ is the standard interpretation). Such a sentence $X$ is said to be a {\bf representation}\label{x94} of $G$.
We say that a  problem $G$ is {\bf representable}\label{x95} in $\thr$  and write \[\thr \hspace{2pt}\rep\hspace{2pt} G\label{x99}\] iff $G$   has a $\thr$-provable representation.  

The {\bf truth arithmetic}, denoted\label{x101} $Th(N)$, is the set of all true elementary sentences.   
We agree that, whenever $\mathcal A$ is a set of (not necessarily elementary) sentences, ${\mathcal A}!$ is an abbreviation defined by 
\[{\mathcal A}!\label{x100}\ =\ {\mathcal A}\cup Th(N).\]

In these terms,  the central theorem of the present paper reads  as follows:

\begin{thm}\label{tt1}
Assume  a theory $\thr$ is  regular. Then the following conditions are satisfied:
\begin{enumerate}  
\item  {\bf  Extensional  adequacy:} A  problem $G$   has an ${\mathcal R}$ tricomplexity solution iff  $\thr\hspace{2pt}\rep \hspace{2pt}G$.\vspace{2pt} 

\item   {\bf  Intensional adequacy:}   A sentence $X$   has an ${\mathcal R}$ tricomplexity solution iff $\areleven^{\mathcal R}_{{\mathcal A}!}\vdash X$.\vspace{2pt} 

\item {\bf  Constructive soundness:} There is an effective procedure that takes an arbitrary extended $\areleven^{\mathcal R}_{{\mathcal A}!}$-proof of an arbitrary sentence $X$ and constructs an    
${\mathcal R}$ tricomplexity  solution for $X$. 
\end{enumerate}
\end{thm}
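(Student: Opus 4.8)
The plan is to split the theorem into its three components and prove them in a way that lets each feed the next. I would first observe that the ``only if'' direction of extensional adequacy (part 1) and of intensional adequacy (part 2) are exactly the content of constructive soundness (part 3): if a sentence $X$ is provable in $\thr$ (respectively in $\areleven^{\mathcal R}_{{\mathcal A}!}$), then running the effective procedure of part 3 on any extended proof of $X$ yields an ${\mathcal R}$ tricomplexity solution, hence $X$ (as a representation of $X^\dagger$) witnesses that $X^\dagger$ has such a solution. Since part 3 is deferred to Part II \cite{cla11b}, in the present paper I would treat constructive soundness as a black box to be established there, and concentrate the work here on the ``if'' (completeness) directions.

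\textbf{Completeness.} The heart of the argument is: \emph{every} arithmetical problem $G$ with an ${\mathcal R}$ tricomplexity solution has a $\thr$-provable representation, and moreover --- when we enrich the supplementary axioms to ${\mathcal A}!$ by throwing in all of $Th(N)$ --- \emph{every} sentence $X$ whose game $X^\dagger$ has such a solution is itself provable. For the extensional statement, I would proceed constructively: given a machine $\mathcal M$ solving $G$ within amplitude $p_1\in{\mathcal R}\amp$, space $p_2\in{\mathcal R}\spa$ and time $p_3\in{\mathcal R}\tim$, I would build a specific sentence $X_{\mathcal M}$ of $\mathbb{L}$ that ``describes the run of $\mathcal M$'' --- encoding configurations, the work-tape contents, and the move history as numbers, exactly the sort of arithmetization the language of $\pa$ supports --- and then show $\thr\vdash X_{\mathcal M}$, using ${\mathcal R}$-Induction to push the simulation forward step by step (the time bound $\mathfrak{b}|\vec s|\in{\mathcal R}\tim$ controlling how many induction steps are needed), ${\mathcal R}$-Comprehension to conjure up the bounded strings recording the tape contents (the amplitude bound controlling their size), and the Successor/Log/Bit axioms together with pure $\pa$-reasoning (available by (\ref{suppa})) for the bookkeeping. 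The regularity conditions of Definitions \ref{dt} and \ref{dadm} are precisely what make this go through: condition 1 of Definition \ref{dadm} and condition 1 of Definition \ref{dt} guarantee the bounds themselves are computable and provably so; the closure conditions (condition 3 of Definition \ref{dt}) guarantee that the compositions and products of bounds that arise in the simulation stay inside the boundclasses; and condition 5 of Definition \ref{dt}, letting us assume $|\mathfrak t'|\preceq\mathfrak s'\preceq\mathfrak a'\preceq\mathfrak t'$, synchronizes the three resource measures so that a single induction on a single bound suffices. I would also need Lemma \ref{lcol} to move freely between a bound and anything that is $O$ of it.

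\textbf{From extensional to intensional.} Once $\thr\hspace{2pt}\rep\hspace{2pt}G$ is established for every $G$ with an ${\mathcal R}$ tricomplexity solution, the intensional statement follows by a soft argument about ${\mathcal A}!={\mathcal A}\cup Th(N)$. Suppose $X^\dagger$ has such a solution. By extensional adequacy (now for the theory $\areleven^{\mathcal R}_{{\mathcal A}!}$, whose boundclass triple is the same $\mathcal R$ and which is still regular because $Th(N)$ consists of elementary sentences each won by the idle strategy, so condition 1 of Definition \ref{dadm} is met), there is \emph{some} representation $Y$ of $X^\dagger$ with $\areleven^{\mathcal R}_{{\mathcal A}!}\vdash Y$. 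Now $X$ and $Y$ represent the same game, so the parasentence $Y\mli X$, being of a form whose elementary ``skeleton'' is true, is won by the idle strategy and hence --- this is where having \emph{all} of $Th(N)$ as axioms is essential --- the relevant equivalence is provable, giving $\areleven^{\mathcal R}_{{\mathcal A}!}\vdash X$ by LC. I expect the main obstacle to be the first step of the completeness argument: namely, the faithful arithmetical encoding of an interactive ${\mathcal R}$ tricomplexity machine and the verification that the step-by-step simulation really is formalizable using only the restricted postulates (bounded ${\mathcal R}$-Induction and ${\mathcal R}$-Comprehension) rather than unrestricted $\pa$-induction --- in other words, checking that every inductive claim needed along the way is genuinely ${\mathcal R}\spa$-bounded and that every existential string one needs to produce genuinely fits under an ${\mathcal R}\amp$-bound. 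This is the content of Sections \ref{s19} and \ref{s19e}, and it is where the bulk of the technical labor lies.
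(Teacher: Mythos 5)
Your top-level decomposition is right and matches the paper: the soundness directions of clauses 1 and 2 follow from clause 3, which is deferred to Part II, so what remains here is completeness. But your sketch of the completeness arguments has two genuine gaps. For extensional completeness, you propose to build a sentence $X_{\mathcal M}$ that ``describes the run of $\mathcal M$'' via arithmetization of configurations and move history, and prove $\thr\vdash X_{\mathcal M}$. You never address the requirement that $X_{\mathcal M}^\dagger = G$. A sentence that directly describes a machine's run through encoded configurations would be elementary (a moveless game), and hence could not express the interactive game $G$ at all. The paper's key device, which your proposal is missing, is the $\overline{X}$ construction: fix a sentence $X$ with $X^\dagger = G$ and a provident solution $\mathcal X$, let $\mathbb{W}$ be the true elementary sentence asserting correctness of $\mathcal X$, and let $\overline{X}$ be the result of replacing every politeral $L$ of $X$ by $\mathbb{W}\mli L$. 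Because $\mathbb{W}$ is true, $\overline{X}^\dagger = X^\dagger = G$ (Lemma \ref{august12a}); and because $\overline{X}$ preserves the choice/parallel skeleton of $X$, the machinery you describe --- ${\mathcal R}$-Induction for stepping through configurations (the traceability Lemma \ref{l1}), ${\mathcal R}$-Comprehension for assembling bounded output strings (Lemmas \ref{l4a}, \ref{l4}), structural induction on $X$ (Lemma \ref{m2c}) --- can actually be brought to bear. The machine-simulation ingredients you list are indeed what the paper uses, but in service of proving $\overline{X}$, not a freestanding ``run-description'' sentence.

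Your passage from extensional to intensional completeness is also flawed. You claim that if $\areleven^{\mathcal R}_{{\mathcal A}!}\vdash Y$ for some representation $Y$ of $X^\dagger$, then $Y\mli X$ ``is won by the idle strategy'' and hence $X$ follows by LC. Neither half of this holds. When $Y$ and $X$ contain choice operators, $Y\mli X$ is not won by doing nothing --- it needs an active copy-cat --- and, more seriously, for a \emph{generic} representation $Y$ with the same game semantics but a different syntactic shape, there is no guarantee that any uniform, interpretation-independent strategy for $Y\mli X$ exists, so the CL12-sequent $Y\intimpl X$ need not be provable and LC cannot be applied. Throwing in $Th(N)$ does not help, since the putative ``equivalence'' is not elementary. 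The paper's argument works only because the specific representation it uses is $\overline{X}$, which shares $X$'s operator skeleton; this makes a structural induction possible showing $\thr\vdash \cla(\overline{E}\mlc\mathbb{W}\mli E)$ for every formula $E$ (the proof of Lemma \ref{feb15}). Taking $E=X$, combining with $\thr\vdash\overline{X}$ by LC gives $\thr\vdash\mathbb{W}\mli X$, and then, since the true elementary sentence $\mathbb{W}$ is an axiom of ${\mathcal A}!$, one more application of LC gives $\areleven^{\mathcal R}_{{\mathcal A}!}\vdash X$. So intensional completeness is not a ``soft consequence'' of extensional completeness as a black box; it depends essentially on the structure of the particular representation $\overline{X}$ produced there.
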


\begin{proof} The completeness (``only if'') parts of clauses 1 and 2  will be proven in Sections \ref{s19} and \ref{s19e}, respectively,
  and the soundness (``if'') part of either clause is immediately implied by   clause 3. The latter will be verified in \cite{cla11b}.
\end{proof}

\section{Bootstrapping \texorpdfstring{$\thr$}{CLA11AR}}\label{sboot}
Throughout this section, we assume that $\thr$ is a regular theory. Unless otherwise specified, ``provable'' means ``provable in $\thr$''. ``Induction'' and ``Comprehension'' mean ``$\mathcal R$-Induction'' and ``$\mathcal R$-Comprehension'', respectively. We continue to use our old convention according to which, context permitting,  $F$ can be written instead of $\ada F$. 

In order to prove the completeness of $\thr$, some work on establishing the provability of certain basic theorems in the system has to be done. This is also a good opportunity for the reader to gain intuitions about our system. This sort of often boring but necessary work  is called {\em bootstrapping}, named after the expression ``to lift oneself by one's bootstraps'' (cf. \cite{BussChapter}).  

\subsection{How we reason in clarithmetic}
Trying to generate full formal proofs in $\thr$, just like doing so in $\pa$, would be far from reasonable in a paper meant to be read by humans. This task is comparable with showing the existence of a Turing machine for one or another function. 
Constructing Turing machines if full detail is seldom feasible, and one usually resorts to some sort of lazy/informal constructions, such 
as constructions that rely on the Church-Turing thesis. Thesis 9.2 of \cite{cl12} will  implicitly  act in the role of ``our Church-Turing 
thesis'' when dealing with $\thr$-provability, allowing us to replace formal proofs with informal/intuitive descriptions of interpretation-independent winning strategies --- according to the thesis, once such a strategy exists for a given formula, we can be sure that the formula is provable.
In addition, we will be heavily 
relying  on our observation (\ref{suppa}) that   $\thr$ 
proves everything provable in $\pa$.   As noted earlier,  since $\pa$ is well known and since it proves ``essentially all'' true arithmetical facts,  we will hardly ever  
try to justify the $\pa$-provability claims that we  make, often only implicitly. Furthermore,   in relatively simple cases, we usually will not try to justify our 
 $\cltw$-provability claims of the sort $\cltw\vdash E_1,\ldots,E_n\intimpl F$ either and, instead, simply say that $F$ 
follows from $E_1,\ldots,E_n$ by LC (Logical Consequence), or that $F$ is a {\em logical consequence} of $E_1,\ldots,E_n$, or that $E_1,\ldots,E_n$ {\em logically imply}\label{xlcc} $F$. What allows us to take this kind of liberty is 
that $\cltw$ is an analytic 
system, and verifying provability in it is a mechanical 
job that a distrustful reader can do on his or her own; alternatively, our non-justified $\cltw$-provability 
claims can always be verified intuitively/informally based on Thesis 9.2 of \cite{cl12}.\footnote{Of course, when dealing with formula schemes (e.g., as in Fact \ref{rr26}) rather than particular formulas, the analyticity of $\cltw$ may not always be directly usable. However, in such cases, Thesis 9.2 of \cite{cl12} still remains at our full disposal.}

The following fact is the simplest of those established in this section, so let us look at its proof as a warm-up exercise.
Remember from Section  \ref{spari}  that $\hat{0}=0$, \ $\hat{1}=0\successor$, \ $\hat{2}=0\successor\successor$, \ $\hat{3}=0\successor\successor\successor$, \ \ldots

\begin{fact}\label{numerals}
For any natural number $n$, $\thr\vdash \ade z(z=\hat{n})$.
\end{fact}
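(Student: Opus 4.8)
The plan is to argue by induction on the natural number $n$, at the metalevel, using the Successor axiom (\ref{A15}) together with the rule of Logical Consequence. The base case $n=0$ is immediate: since $\hat{0}=0$, the elementary sentence $\cle z(z=0)$ is provable in $\pa$ (indeed it is logically valid in classical logic), hence by (\ref{suppa}) it is a theorem of $\thr$, and then $\ade z(z=\hat{0})$ follows by LC because $\cltw$ proves $\cle z(z=0)\intimpl \ade z(z=0)$ — this is an instance of the general fact, noted at the end of Section \ref{spr}, that a true elementary sentence $\cle F$ logically implies $\ade F$.

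For the inductive step, I would assume $\thr\vdash \ade z(z=\hat{n})$ as the induction hypothesis, and want to derive $\thr\vdash \ade z(z=\widehat{n{+}1})$. The key observation is that $\widehat{n{+}1}$ is, by definition, the term $\hat{n}\successor$. I would combine three ingredients via LC: the induction hypothesis $\ade z(z=\hat{n})$; an instance of the Successor axiom $\ada x\ade y(y=x\successor)$; and the $\pa$-provable (hence $\thr$-provable) elementary fact $\cla z\cla y(z=\hat{n}\mlc y=z\successor\mli y=\hat{n}\successor)$. The informal strategy witnessing the logical consequence is the obvious one: in $\ade z(z=\hat{n})$ the machine already has a strategy producing the value $n$; feed that value to the $\ada x$ of the Successor axiom (i.e.\ specify $x:=\hat{n}$), obtaining $\ade y(y=\hat{n}\successor)$, whose witness is $n{+}1$; this is the value the machine announces for the $z$ in $\ade z(z=\hat{n}\successor)$, with correctness guaranteed by the elementary fact just cited. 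By the analyticity of $\cltw$ (or, equivalently, by Thesis 9.2 of \cite{cl12} as invoked in the preceding paragraph), this is exactly the kind of simple $\cltw$-provability claim that we are entitled to assert without an explicit cirquent-calculus derivation, so the conclusion $\thr\vdash \ade z(z=\hat{n}\successor)$ follows by LC.

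I do not anticipate a genuine obstacle here — this is the designated warm-up exercise. The only point requiring a little care is bookkeeping about which object-level term denotes which numeral: that $\hat{n}\successor$ is literally the term $\widehat{n{+}1}$, and that the $\preceq$-style complexity conditions play no role at all, since Successor, Log, and Bit are outright axioms and LC imposes no bound constraints. One could alternatively phrase the whole argument as a single non-inductive observation that, for fixed $n$, the term $\hat{n}=0\successor\cdots\successor$ is built by $n$ applications of successor, and $n$-fold composition of the Successor-axiom strategy solves $\ade z(z=\hat{n})$; but the metalevel induction is the cleanest way to present it, and it is the template that the more substantial bootstrapping facts later in this section will elaborate.
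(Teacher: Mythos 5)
Your argument is correct and is essentially the same as the paper's: both iterate the Successor axiom $n$ times starting from $0$, and your metalevel induction on $n$ is an equivalent packaging of the paper's ``do it $n$ times, then choose $y_n$ for $z$'' phrasing, as you yourself note at the end. One slip in the base case deserves correction, though: the general fact recorded at the end of Section \ref{spr} concerns $\cla F\mli \ada F$, not $\cle F \mli \ade F$, and the latter is \emph{not} $\cltw$-provable in general --- for $\ade$ the machine must actually produce a witness, which the mere truth of $\cle F$ does not supply. In your particular instance the sequent $\cle z(z=0)\intimpl \ade z(z=0)$ happens to be provable only because the succedent $\ade z(z=0)$ is logically valid on its own (the machine chooses $0$ for $z$), so the detour through $\cle z(z=0)$ is unnecessary and, as justified, rests on a nonexistent general principle; simply observing the logical validity of $\ade z(z=0)$ is cleaner.
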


\begin{proof} Fix an $n$ and argue in $\thr$. 
Using  $0$ and the Successor axiom, we find the value $y_1$ of $0\successor$. Then, using $y_1$ and the Successor axiom again, we find the value $y_2$ of $0\successor\successor$. And so on, $n$ times. This way, we find the value $y_n$ of 
$\hat{n}$. We now choose $y_n$ for $z$ in   $\ade z(z=\hat{n})$ and win this game. 
\end{proof}

What is the precise meaning of the second sentence of  the above proof? The Successor axiom $\ada x\ade y (y=x\successor)$ is a resource that we  can use any number of times. As such, it is a game played and always won by its {\em provider}\label{xprovider} (=our environment) in the role of $\pp$ against us, with us acting in the role of $\oo$. So, a value for $x$ in this game should be picked by us. We choose $0$, bringing the game down to $\ade y (y=0\successor)$. The resource provider will   have to respond with a choice of a value (constant) $y_1$ for $y$, further bringing the game down to $y_1=0\successor$. This elementary game is  true (otherwise the provider would have lost), meaning that   $y_1$ is the value --- which we have just {\em found} --- of $0\successor$. 

The rest of the proof of Fact \ref{numerals} should be understood as that we play  $\ada x\ade y (y=x\successor)$ against its provider once again, but this time we specify the value of $x$ as $y_1$, bringing the game down to $\ade y (y=y_1\successor)$. In response, the provider will have to further bring the game down to $y_2=y_1\successor$ for some constant $y_2$. This means that now we know the value $y_2$ of $0\successor\successor$. And so on. Continuing this way, eventually we  come to know the value $y_n$ of $\hat{n}$. 
Now we can and do win the  
target game $\ade z(z=\hat{n})$ by choosing $y_n$ for $z$ in it, thus bringing it down to the true $y_n=\hat{n}$.  

Out of curiosity, let us also take a look at a formal counterpart of our informal proof of $\ade z(z=\hat{n})$. Specificity, consider the case of $n=2$.  A non-extended $\thr$-proof of $\ade z(z=\hat{2})$
consists of just the following two lines:

\begin{quote} \ \vspace{-8pt}

I. \ $\ada x\ade y(y=x\successor)$  \ \ \ Successor axiom

II. $\ade z(z=0\successor\successor)$ \ \ \ LC: I

\ \vspace{-8pt}

\end{quote}

Step II above is justified by LC which, in an extended proof, needs to be supplemented with a $\cltw$-proof of the sequent $\ada x\ade y(y=x\successor)\intimpl \ade z(z=0\successor\successor)$. Below is such a proof:

\begin{quote} \ \vspace{-8pt}

1. $y_1=0\successor,\ y_2=y_1\successor\ \intimpl\ y_2=0\successor\successor$ \ \ \ Wait: (no premises)

2. $y_1=0\successor,\ y_2=y_1\successor\ \intimpl\ \ade z(z=0\successor\successor)$ \ \ \ $\ade$-Choose: 1

3. $y_1=0\successor,\ \ade y(y=y_1\successor)\ \intimpl\ \ade z(z=0\successor\successor)$ \ \ \ Wait: 2

4. $y_1=0\successor,\ \ada x\ade y(y=x\successor)\ \intimpl\ \ade z(z=0\successor\successor)$ \ \ \ $\ada$-Choose: 3

5. $\ade y(y=0\successor),\ \ada x\ade y(y=x\successor)\ \intimpl\ \ade z(z=0\successor\successor)$ \ \ \ Wait: 4

6. $\ada x\ade y(y=x\successor),\ \ada x\ade y(y=x\successor)\ \intimpl\  \ade z(z=0\successor\successor)$ \ \ \  $\ada$-Choose: 5

7. $\ada x\ade y(y=x\successor) \ \intimpl\  \ade z(z=0\successor\successor)$ \ \ Replicate: 6

\ \vspace{-8pt}

\end{quote}

Unlike the above case, most formulas shown to be $\thr$-provable in this section will have free occurrences of variables. As a very simple example, consider $\ade y(y=x)$.  Remembering that it is just a lazy way to write $\ada x\ade y(y=x)$, our informal justification/strategy (translatable into a formal $\thr$-proof) for this formula would go like this:
\begin{quote} {\em Wait till Environment chooses a constant $c$ for $x$, thus bringing the game down to $\ade y(y=c)$. Then choose the same $c$ for $y$. We win because the resulting elementary game $c=c$ is true.}
\end{quote}
However, more often than not, in cases like this we will omit the routine phrase ``wait till Environment chooses constants for all free variables of the formula'', and correspondingly treat the free variables of the formula as standing for the constants already chosen by Environment for them. So, a shorter justification for the above $\ade y(y=x)$ would be:
\begin{quote}{\em Choose (the value of) $x$ for $y$. We win because the resulting elementary game $x=x$ is true.}
\end{quote}
Of course, an even more laconic justification  would be just the phrase ``{\em Choose $x$ for $y$.}'', quite sufficient and thus acceptable  due to the simplicity of the case. Alternatively, we can  simply say that the formula $\ade y(y=x)$ is logically valid (follows by LC from no premises).  

A reader who would like to see some 
additional illustrations and explanations, can browse Sections 11 and 12 of \cite{cla4}. In any case,
the informal methods of reasoning induced by computability logic and clarithmetic in particular
cannot be concisely or fully {\em explained}, but rather they should be {\em learned} through experience and practicing, not unlike the way one learns a foreign language. A reader who initially does not find some of our informal $\thr$-arguments very clear, should not feel disappointed. Greater fluency and better understanding will come gradually and inevitably. Counting on that, as we advance in this paper, 
 the degree of  ``laziness'' of our informal reasoning within $\thr$ will gradually increase, more and more often omitting explicit references to CL,  $\pa$, axioms or certain already established and frequently used facts when justifying certain relatively simple steps. 

\subsection{Reasonable Induction}\label{ssrea}
\begin{fact}\label{rr26}
The set of theorems of $\thr$ will remain the same if, instead of the ordinary $\mathcal R$-Induction rule 
(\ref{r2}), 
 one takes the following rule, which we call {\bf Reasonable $\mathcal R$-Induction}:\label{xrin}
\begin{equation}\label{r23}
 \frac{ F(0) \hspace{30pt} x< \mathfrak{b}|\vec{s}|\mlc F(x)\mli F(x\successor) }{ x\leq \mathfrak{b}|\vec{s}|\mli F(x) },\end{equation}
where $x$, $\vec{s}$,  $F(x)$,  $\mathfrak{b}$ are as in 
(\ref{r2}). 
\end{fact}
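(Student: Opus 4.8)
The plan is to show the two induction rules are interderivable over the other axioms and rules of $\thr$, so that they prove the same set of sentences. One direction is essentially trivial and the other requires a small trick. Throughout, recall that by our conventions a premise or conclusion written with free variables abbreviates its $\ada$-closure, and recall observation (\ref{suppa}) that $\thr$ proves everything $\pa$ proves; LC lets us combine $\thr$-provable sentences freely whenever $\cltw$ proves the corresponding sequent.

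First, the easy direction: any application of Reasonable $\mathcal R$-Induction (\ref{r23}) can be simulated by ordinary $\mathcal R$-Induction (\ref{r2}). Given the premises $F(0)$ and $x<\mathfrak{b}|\vec s|\mlc F(x)\mli F(x\successor)$, I would apply ordinary Induction on $x$ to the induction formula $G(x)$ defined as $x\leq\mathfrak{b}|\vec s|\mli F(x)$ (note $G(x)$ is still ${\mathcal R}\spa$-bounded, since adding the elementary guard $x\leq\mathfrak{b}|\vec s|$, whose bound is an ${\mathcal R}\tim$-term, does not introduce any new $\ada$- or $\ade$-subformulas). The basis $G(0)$, i.e. $0\leq\mathfrak{b}|\vec s|\mli F(0)$, is a logical consequence of $F(0)$. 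For the inductive step $G(x)\mli G(x\successor)$: assume $G(x)$ and $x\successor\leq\mathfrak{b}|\vec s|$; then $x<\mathfrak{b}|\vec s|$ (a $\pa$-fact), hence $x\leq\mathfrak{b}|\vec s|$, hence $F(x)$ by $G(x)$, hence $F(x\successor)$ by the second Reasonable-Induction premise; so $G(x\successor)$ follows. Thus $x\leq\mathfrak{b}|\vec s|\mli G(x)$ is provable, and since $x\leq\mathfrak{b}|\vec s|\mli\bigl((x\leq\mathfrak{b}|\vec s|\mli F(x))\bigr)$ logically implies $x\leq\mathfrak{b}|\vec s|\mli F(x)$, we recover the conclusion of (\ref{r23}).

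Second, the direction that actually has content: any application of ordinary $\mathcal R$-Induction (\ref{r2}) can be simulated using Reasonable $\mathcal R$-Induction. Here the point is that the extra hypothesis $x<\mathfrak{b}|\vec s|$ in the premise of (\ref{r23}) is strictly weaker, so the premise of (\ref{r23}) is logically implied by the premise $F(x)\mli F(x\successor)$ of (\ref{r2}). Concretely, given the premises $F(0)$ and $F(x)\mli F(x\successor)$ of an ordinary Induction application with induction formula $F(x)$ and bound $\mathfrak{b}(\vec s)$, I first note that $F(x)\mli F(x\successor)$ logically implies $x<\mathfrak{b}|\vec s|\mlc F(x)\mli F(x\successor)$ (just drop a conjunct of the antecedent — this is an LC step). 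Now apply Reasonable $\mathcal R$-Induction with the same induction formula $F(x)$, the same bound $\mathfrak{b}(\vec s)$, and premises $F(0)$ and $x<\mathfrak{b}|\vec s|\mlc F(x)\mli F(x\successor)$; its conclusion is exactly $x\leq\mathfrak{b}|\vec s|\mli F(x)$, the conclusion of the original (\ref{r2}) application.

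A careful write-up must check that the side conditions transfer in both directions: in the first direction, that the induction formula $G(x)=(x\leq\mathfrak{b}|\vec s|\mli F(x))$ is ${\mathcal R}\spa$-bounded whenever $F(x)$ is, and that $\mathfrak{b}$ remains a bound in ${\mathcal R}\tim$; in the second direction there is nothing to check, as the formula, bound, and distinguished variables are unchanged. The only mildly delicate point — the ``main obstacle'', though it is minor — is the bookkeeping around the $\ada$-closure convention and making sure the LC steps (weakening the inductive step, and the small $\pa$-arithmetic about $x\successor\leq\mathfrak{b}|\vec s|$ implying $x<\mathfrak{b}|\vec s|$) are genuinely $\cltw$-provable sequents; both follow routinely from the soundness/analyticity of $\cltw$ together with (\ref{suppa}). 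Once both simulations are in place, a proof in $\thr$ using one induction rule is converted line-by-line into a proof using the other, so the two theories have identical theorem sets.
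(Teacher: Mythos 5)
Your proof is correct and takes essentially the same approach as the paper's own proof: the nontrivial direction (simulating Reasonable $\mathcal R$-Induction by ordinary $\mathcal R$-Induction) uses the same modified induction formula $G(x)=\bigl(x\leq\mathfrak{b}|\vec s|\mli F(x)\bigr)$ with the same bound, and the other direction is the same weakening by LC; your explicit check that $G(x)$ remains ${\mathcal R}\spa$-bounded is a worthwhile addition. The only slip is that your labels are swapped --- you call the $G(x)$ argument ``the easy direction'' and the pure weakening ``the direction that actually has content,'' whereas it is the other way round --- but the mathematics in both parts is right.
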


\begin{proof} 
To see that the two rules are equivalent, observe that, while having identical left premises and identical conclusions,  the right premise of (\ref{r23}) is weaker than that of (\ref{r2})  --- the latter immediately implies the former by LC. This means that whenever  old induction 
 is applied, its conclusion can just as well be obtained through first weakening the premise 
$F(x)\mli F(x\successor)$ to $x< \mathfrak{b}|\vec{s}|\mlc F(x)\mli F(x\successor)$ using LC, and then applying (\ref{r23}). 

For the opposite direction, consider an application of (\ref{r23}). Weakening (by LC) its left premise $F(0)$, we find the following formula provable:
\begin{equation}\label{227a}
0\leq\mathfrak{b}|\vec{s}|\mli F(0).
\end{equation} 
Next, the right premise  $x< \mathfrak{b}|\vec{s}|\mlc F(x)\mli F(x\successor)$ of (\ref{r23}), together with the $\pa$-provable $\cla (x\successor\leq \mathfrak{b}|\vec{s}|\mli x< \mathfrak{b}|\vec{s}|)$ and $\cla (x\successor \leq\mathfrak{b}|\vec{s}|\mli x \leq\mathfrak{b}|\vec{s}|)$,  can be seen to logically imply  
\begin{equation}\label{227b}
\bigl(x\leq\mathfrak{b}|\vec{s}|\mli F(x)\bigr)\mli \bigl(x\successor \leq\mathfrak{b}|\vec{s}|\mli F(x\successor)\bigr).
\end{equation}
Applying rule (\ref{r2}) to (\ref{227a}) and (\ref{227b}), we get $x\leq\mathfrak{b}|\vec{s}|\mli\bigl(x\leq\mathfrak{b}|\vec{s}|\mli F(x)\bigr)$. The latter, by LC, immediately yields the target $x\leq\mathfrak{b}|\vec{s}|\mli F(x)$. 
\end{proof}

\subsection{Reasonable Comprehension}
\begin{fact}\label{rr266}
The set of theorems of $\thr$ will remain the same if, instead of the ordinary $\mathcal R$-Comprehension rule 
(\ref{r3}), one  takes the following rule, which we call {\bf Reasonable $\mathcal R$-Comprehension}:\label{xrrco}
\begin{equation}\label{r333}
\frac{y<\mathfrak{b}|\vec{s}|\mli p(y)\add\gneg p(y)}{\ade |x| \leq \mathfrak{b}|\vec{s}|\cla y<\mathfrak{b}|\vec{s}| \bigl(\bit(y,x)\leftrightarrow p(y)\bigr)},\end{equation}
where $x$, $y$, $\vec{s}$,  $p(y)$,  $\mathfrak{b}$ are as in (\ref{r3}).
\end{fact}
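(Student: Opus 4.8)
The plan is to mirror, almost verbatim, the proof of Fact \ref{rr26}, since Reasonable $\mathcal{R}$-Comprehension stands to ordinary $\mathcal{R}$-Comprehension exactly as Reasonable $\mathcal{R}$-Induction stands to ordinary $\mathcal{R}$-Induction: the premise has merely been \emph{weakened} by guarding the disjunction $p(y)\add\gneg p(y)$ with the antecedent $y<\mathfrak{b}|\vec{s}|$, while the conclusion is literally unchanged. So the two directions are: (1) anything provable via ordinary Comprehension is provable via Reasonable Comprehension, and (2) conversely.

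For direction (1) I would argue that, given an application of the old rule (\ref{r3}) with premise $p(y)\add\gneg p(y)$, we may first pass by LC from $p(y)\add\gneg p(y)$ to the weaker $y<\mathfrak{b}|\vec{s}|\mli p(y)\add\gneg p(y)$ — this is a trivial logical consequence, since $E\intimpl (q\mli E)$ is $\cltw$-provable — and then apply the Reasonable rule (\ref{r333}) to obtain the same conclusion. (One must keep in mind, as throughout this section, that ``$\thr\vdash G$'' for a non-sentence $G$ means $\thr\vdash\ada G$, and that the closure is handled uniformly; but this is routine.) Hence every theorem obtainable with ordinary Comprehension is obtainable with Reasonable Comprehension.

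For direction (2), I would start from an application of the Reasonable rule (\ref{r333}) with premise $y<\mathfrak{b}|\vec{s}|\mli p(y)\add\gneg p(y)$, and show how to reach the same conclusion using only ordinary Comprehension together with LC and $\pa$-provable facts. The natural move is to replace the comprehension formula $p(y)$ by the guarded formula $p'(y)\ :=\ y<\mathfrak{b}|\vec{s}|\mlc p(y)$, which is still elementary and still does not contain $x$. From the Reasonable premise one gets, by LC, that $p'(y)\add\gneg p'(y)$ is provable: if $y<\mathfrak{b}|\vec{s}|$ holds then we use the premise's $\add$-disjunction to decide $p(y)$ and hence $p'(y)$; if $y<\mathfrak{b}|\vec{s}|$ fails then $p'(y)$ is (classically, hence by LC) false, so we choose $\gneg p'(y)$. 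Note $\mathfrak{b}|\vec{s}|$ itself is available as a value because $\thr$ proves $\ade z(z=\mathfrak{b}|\vec{s}|)$ by condition 2 of Definition \ref{dadm} (regularity of $\thr$), so ``deciding $y<\mathfrak{b}|\vec{s}|$'' is legitimate. Applying ordinary $\mathcal{R}$-Comprehension (\ref{r3}) to $p'(y)\add\gneg p'(y)$ — with the same bound $\mathfrak{b}$, which is still in ${\mathcal R}\amp$ — yields the provability of $\ade|x|\leq\mathfrak{b}|\vec{s}|\,\cla y<\mathfrak{b}|\vec{s}|\,\bigl(\bit(y,x)\leftrightarrow p'(y)\bigr)$. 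Finally, since under the guard $y<\mathfrak{b}|\vec{s}|$ the formulas $p'(y)$ and $p(y)$ are $\pa$-provably (hence $\thr$-provably, hence logically, via LC) equivalent, this last formula logically implies the desired conclusion $\ade|x|\leq\mathfrak{b}|\vec{s}|\,\cla y<\mathfrak{b}|\vec{s}|\,\bigl(\bit(y,x)\leftrightarrow p(y)\bigr)$.

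The only mildly delicate point — the ``main obstacle'', such as it is — is the LC step deriving $p'(y)\add\gneg p'(y)$ from $y<\mathfrak{b}|\vec{s}|\mli p(y)\add\gneg p(y)$: one has to check that the case analysis on whether $y<\mathfrak{b}|\vec{s}|$ can be carried out inside $\cltw$, which requires the decidability of the elementary predicate $y<\mathfrak{b}|\vec{s}|$ to be available as a resource, and this is exactly what regularity of $\thr$ (Definition \ref{dadm}, clause 2, giving $\thr\vdash\ade z(z=\mathfrak{b}|\vec{x}|)$) plus the Bit/Log axioms provide, together with the fact that $\pa$ — and hence $\thr$ via (\ref{suppa}) — proves all the needed arithmetical comparisons. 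Everything else is bookkeeping of the same flavor as in Fact \ref{rr26}, and I would present it at the same level of informality.
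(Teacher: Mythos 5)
Your overall skeleton is the same as the paper's: the easy direction by weakening the premise via LC, and for the converse the use of the guarded comprehension formula $y<\mathfrak{b}|\vec{s}|\mlc p(y)$ followed by a final LC step to recover the conclusion. But there is a genuine gap at exactly the point you yourself flag as the ``only mildly delicate point''. To derive $\bigl(y<\mathfrak{b}|\vec{s}|\mlc p(y)\bigr)\add\gneg\bigl(y<\mathfrak{b}|\vec{s}|\mlc p(y)\bigr)$ from the premise of (\ref{r333}), your strategy must first \emph{decide} $y<\mathfrak{b}|\vec{s}|$, i.e., it needs the resource $y<\mathfrak{b}|\vec{s}|\add\gneg\, y<\mathfrak{b}|\vec{s}|$. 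Your justification --- that clause 2 of Definition \ref{dadm} supplies the value of $\mathfrak{b}|\vec{s}|$ and that the Bit/Log axioms together with $\pa$-provable comparisons then settle the matter --- does not work at this stage. Knowing a constant $b$ with $b=\mathfrak{b}|\vec{s}|$ does not let a uniform (interpretation-independent) strategy decide $y<b$ for an Environment-supplied constant $y$: comparing two constants is precisely the content of Fact \ref{tri} (Trichotomy), which is established only later in the bootstrapping and whose proof goes through Fact \ref{plus}, which in turn uses Reasonable Comprehension --- the very rule you are justifying --- so appealing to it here would be circular; and ``$\pa$ proves the needed comparisons'' only concerns closed terms (unary numerals), not arbitrary constants occurring as moves, which is why the paper has to prove Facts \ref{numerals} and \ref{tri} at all rather than citing $\pa$.

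The paper closes this gap with a trick your proposal is missing: it first applies \emph{ordinary} Comprehension to the trivially decidable formula $y=y$ (with the same bound $\mathfrak{b}$), obtaining $\ade |x|\leq\mathfrak{b}|\vec{s}|\cla y<\mathfrak{b}|\vec{s}|\bigl(\bit(y,x)\leftrightarrow y=y\bigr)$, i.e., a constant $x_0$ whose binary representation consists of $\mathfrak{b}|\vec{s}|$ ones; since $\pa$ proves $|x_0|=\mathfrak{b}|\vec{s}|$ and $y<|x_0|\leftrightarrow\bit(y,x_0)$, a single Bit-axiom query then decides $y<\mathfrak{b}|\vec{s}|$. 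With that resource in hand, the rest of your argument (deciding the guarded formula, applying (\ref{r3}) to it, and recovering the conclusion of (\ref{r333}) by LC) goes through exactly as in the paper. So the fix is local, but as written the proposal substitutes an unavailable resource for the one idea that makes this direction nontrivial.
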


\begin{proof} 
The two rules have identical conclusions, and the premise of (\ref{r333}) is a logical consequence of the premise of 
(\ref{r3}). So, whatever can be proven using 
(\ref{r3}), can just as well be proven using (\ref{r333}). 

For the opposite direction, consider an application of (\ref{r333}).   
Of course, $\thr$ proves the logically valid $y=y\add\gneg y=y$ without using either version of comprehension.
From here, by (\ref{r3}),  we obtain 
\begin{equation}\label{ii1}
\ade |x| \leq \mathfrak{b}|\vec{s}|\cla y<\mathfrak{b}|\vec{s}| \bigl(\bit(y,x)\leftrightarrow y=y\bigr),
\end{equation}
 which essentially means that the system proves the existence of a number $x_0$ whose binary representation consists of $\mathfrak{b}|\vec{s}|$ ``$1$''s.  Argue in $\thr$. Using (\ref{ii1}), we find the above number $x_0$. 
From $\pa$, we can see that  $|x_0|= \mathfrak{b}|\vec{s}|$. Now, we can win the game 
\begin{equation}\label{ii2}
y<\mathfrak{b}|\vec{s}| \add \gneg y< \mathfrak{b}|\vec{s}|.
\end{equation}
Namely, our strategy for (\ref{ii2}) is to find whether $\bit(y,x_0)$ is true or not using the Bit axiom; then, if true, we --- based on $\pa$ --- conclude that $y<|x_0|$, i.e., that $y<\mathfrak{b}|\vec{s}|$, and choose the left $\add$-disjunct in (\ref{ii2}); otherwise we conclude that $\gneg y<|x_0|$, i.e., $\gneg y<\mathfrak{b}|\vec{s}|$, and choose the right $\add$-disjunct in (\ref{ii2}). 

The following is a logical consequence of  (\ref{ii2}) and of the premise of (\ref{r333}):
\begin{equation}\label{tunnel}
\bigl(y<\mathfrak{b}|\vec{s}|\mlc p(y)\bigr)\add \gneg \bigl(y<\mathfrak{b}|\vec{s}|\mlc p(y)\bigr).
\end{equation}
Indeed, here is  a strategy for (\ref{tunnel}). Using (\ref{ii2}), figure out whether  $y<\mathfrak{b}|\vec{s}| $ is true or false. If false, choose the right $\add$-disjunct in (\ref{tunnel}) and rest your case. Suppose now   $y<\mathfrak{b}|\vec{s}| $ is true. Then, using the premise of (\ref{r333}), figure out whether $p(y)$ is true or false. If true (resp. false), choose the left (resp. right) $\add$-disjunct in (\ref{tunnel}). 
 
Applying rule (\ref{r3})  to (\ref{tunnel}) yields 
\begin{equation}\label{ii3}
\ade |x|\leq \mathfrak{b}|\vec{s}|\cla y<\mathfrak{b}|\vec{s}|\Bigl(\bit(y,x)\leftrightarrow\bigl(y<\mathfrak{b}|\vec{s}|\mlc p(y)\bigr)\Bigr).
\end{equation}
Now, the conclusion of (\ref{r333}), obtaining which was our goal, can easily be seen to be a logical consequence of (\ref{ii3}). 
\end{proof}

\subsection{Addition}
Throughout this and the subsequent subsections  we assume that the variables involved in a formula whose provability is claimed are pairwise distinct.  

\begin{fact}\label{plus}
$\thr\vdash  \ade z(z=u+v)$.
\end{fact}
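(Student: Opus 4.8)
The plan is to prove $\thr\vdash\ade z(z=u+v)$ by induction on one of the two addends, exploiting the Successor axiom to step and the $\pa$-provable recursion equations for $+$ (Peano axioms 3 and 4) to justify each step logically. Concretely, I would run $\mathcal R$-Induction on $v$, with a bound taken from ${\mathcal R}\tim$ large enough to dominate $v$ itself; since ${\mathcal R}\tim$ is at least polynomial by condition 2 of Definition~\ref{dt}, we have $v\preceq{\mathcal R}\tim$, so some bound $\mathfrak{b}|\vec{s}|$ with $v\leq\mathfrak{b}|\vec{s}|$ is available (here $\vec s$ would include $v$, and we would use a bound canonically expressing, say, the function whose value is $2^{|v|}$ or similar, exceeding $v$). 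The induction formula is $F(v)\ :=\ \ade z(z=u+v)$, which is trivially ${\mathcal R}\spa$-bounded since it contains no $\ada$-subformulas and its only $\ade$-subformula must be checked to fit the required shape — here a small wrinkle arises, which I discuss below.

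For the basis $F(0)$, i.e.\ $\ade z(z=u+0)$: by Peano axiom 3, $\pa\vdash\cla x(x+0=x)$, hence $\thr$ proves it by (\ref{suppa}); so choose (the already-known value of) $u$ for $z$ and win, since $u=u+0$ is true. For the inductive step $F(v)\mli F(v\successor)$, i.e.\ $\ade z(z=u+v)\mli\ade z(z=u+v\successor)$: wait for the antecedent's provider to supply a constant $c$ with $c=u+v$; then apply the Successor axiom to $c$ to obtain a constant $d$ with $d=c\successor$; now, using Peano axiom 4 ($\cla x\cla y(x+y\successor=(x+y)\successor)$, available in $\thr$), we have $d=c\successor=(u+v)\successor=u+v\successor$, so choose $d$ for $z$ in the consequent and win. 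Applying $\mathcal R$-Induction on $v$ to these two premises yields $v\leq\mathfrak{b}|\vec s|\mli\ade z(z=u+v)$, and since $v\leq\mathfrak{b}|\vec s|$ is $\pa$-provable (hence $\thr$-provable) for our chosen $\mathfrak b$, LC detaches it to give the target $\ade z(z=u+v)$.

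The main obstacle — really the only subtlety, everything else being routine — is the ${\mathcal R}\spa$-boundedness requirement on the induction formula in rule (\ref{r2}). A priori $\ade z(z=u+v)$ has an unbounded $\ade$-quantifier, so strictly it is not ${\mathcal R}\spa$-bounded. The fix is to replace $\ade z$ by a suitably bounded $\ade |z|\leq\mathfrak{c}|\vec s|$ where $\mathfrak c\in{\mathcal R}\spa$ is chosen large enough that $|u+v|\preceq\mathfrak{c}(|u|,|v|)$ — such a $\mathfrak c$ exists because $|u+v|\leq\max(|u|,|v|)+1$ is $O$ of a linear function of $|u|,|v|$, and ${\mathcal R}\spa$ is at least logarithmic and linearly closed (condition 2 of Definition~\ref{dt}), so by Lemma~\ref{lcol} such a bound lies in ${\mathcal R}\spa$. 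One then runs the induction on the ${\mathcal R}\spa$-bounded formula $\ade|z|\leq\mathfrak{c}|\vec s|\,(z=u+v)$ exactly as above — the basis and step strategies are unchanged, since the witnesses $u$ and $d$ demonstrably satisfy the length bound by $\pa$-reasoning — and finally uses LC together with the $\pa$-provable length estimate to strip the bound and recover $\ade z(z=u+v)$. I expect the bootstrapped facts already available (Fact~\ref{numerals} for handling numerals, and the Successor/Log axioms) together with (\ref{suppa}) to make all the $\pa$-level justifications immediate.
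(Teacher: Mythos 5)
Your plan breaks on the two side conditions of $\mathcal R$-Induction (\ref{r2}), and the patches you offer are unavailable for a general regular $\mathcal R$. Both the induction bound and the bounds inside the induction formula are applied to the \emph{lengths} of variables: the conclusion of (\ref{r2}) is $x\leq \mathfrak{b}|\vec{s}|\mli F(x)$ with $\mathfrak{b}\in{\mathcal R}\tim$, and an ${\mathcal R}\spa$-bounded formula may only contain quantifiers of the form $\ade |z|\leq \mathfrak{c}|\vec{s}|$ with $\mathfrak{c}\in{\mathcal R}\spa$. To detach the antecedent after inducting on an addend you need $v\leq \mathfrak{b}(|u|,|v|)$ to be true for all $v$, i.e.\ a bound in ${\mathcal R}\tim$ that is exponential in its arguments; regularity only guarantees ``at least polynomial'', and for, say, ${\mathcal R}\tim={\mathcal B}_5=\{x\}^\spadesuit$ every $\mathfrak{b}(|u|,|v|)$ is polynomial in $|u|,|v|$ and hence eventually smaller than $v$. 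Your appeal to $v\preceq{\mathcal R}\tim$ conflates $\mathfrak{b}(v)$ with $\mathfrak{b}(|v|)$ (and, as a side issue, the induction variable must be distinct from $\vec{s}$, so the bound cannot mention $v$ directly anyway; the charitable reading --- induct on a fresh $x$ and later instantiate it to $v$ --- runs into the same growth-rate problem). The space-side repair fails analogously: you need $\mathfrak{c}\in{\mathcal R}\spa$ with $\mathfrak{c}(|u|,|v|)\geq |u+v|\approx\max(|u|,|v|)$, i.e.\ a bound at least linear in its arguments, whereas regularity only makes ${\mathcal R}\spa$ at least logarithmic and linearly closed; for ${\mathcal R}\spa={\mathcal B}_{1}^{1}=\{|x|\}^\heartsuit$ or ${\mathcal B}_2$ every member evaluates at $(|u|,|v|)$ to something polylogarithmic in $|u|,|v|$. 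Lemma \ref{lcol} does not rescue this, since it presupposes an already existing member of ${\mathcal R}\spa$ dominating the function up to $O$, and none exists here. In short, value induction carrying the intermediate sum is exactly the kind of derivation the rule is designed to forbid.

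These obstacles are why the paper does not induct on an addend at all: it inducts on the bit position $y$ with the linear bound $|u|+|v|$ (legitimately in ${\mathcal R}\tim$), using an induction formula (\ref{trt3}) that carries only a single carry bit per step via $\add$ and contains no $\ada/\ade$, hence is trivially ${\mathcal R}\spa$-bounded; the ``large'' witness $z=u+v$ is then assembled from its bits by Reasonable Comprehension, whose bound $u+v$ lives in ${\mathcal R}\amp$ --- which, unlike ${\mathcal R}\spa$, is guaranteed to be at least linear --- and the target follows by LC from a true $\pa$ sentence. Your argument, as written, would establish Fact \ref{plus} only for those regular theories in which ${\mathcal R}\spa$ happens to be at least linear and ${\mathcal R}\tim$ contains bounds exponential in lengths (e.g.\ it excludes log-space with polynomial time, and even plain polynomial time), so it does not prove the stated fact.
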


\begin{proof} We shall rely on the pencil-and-paper algorithm for adding two numbers with ``carrying''  
which everyone is familiar with, as the algorithm is taught at the elementary school level (albeit for decimal rather than binary numerals). Here is an example to refresh our memory. Suppose we are adding the two binary numbers  $u=10101$ and  $v=1101$. They, 
together with the resulting number $z=100010$, should be written as rows in a right-aligned table 
as shown below:
\[\begin{array}{r}
10101\\
\mbox{\raisebox{7pt}{$+$}}\hspace{10pt} 1101\\
----\\
100010
\end{array}\]
The algorithm constructs the sum $z$  bit by bit, in the right-to-left order, i.e., starting from the least significant bit $(z)_0$. At any step $y>0$ we  have a ``carry'' $c_{y-1}\in\{0,1\}$ from the preceding step $y-1$. For uniformity, at step $0$, i.e., when computing $(z)_0$, the ``carry'' $c_{-1}$ from the non-existing ``preceding step'' $\#-1$ is stipulated to be $0$. Anyway, at each step $y=0,1,2,\ldots$, we first find the sum $t_y=(u)_y+(v)_y+c_{y-1}$. Then we declare $(z)_y$ to be $0$ (resp. $1$) if $t_y$ is even (resp. odd); and we declare $c_y$ --- the carry from the present step $y$ that should be ``carried over'' to the next step $y+1$ --- to be $0$ (resp. $1$) if $t_y\leq 1$ (resp. $t_y>1$).

Let 
$\adcarry(y,u,v)$\label{xcar1}  be a natural arithmetization of the predicate ``When calculating the $y$th least significant bit 
of $u+v$ using the above pencil-and-paper  algorithm, the carry $c_y$ generated by the corresponding ($y$th) step   is $1$.'' 

Argue in $\thr$. Our main claim is
\begin{equation}\label{trt3}
y\leq |u|+|v|\mli \bigl(\adcarry(y,u,v)\add\gneg\adcarry(y,u,v)\bigr)\mlc \bigl(\bit(y,u+v)\add \gneg \bit(y,u+v)\bigr),
\end{equation}
which we justify by Induction on $y$. Note that the  conditions of ${\mathcal R}$-Induction are indeed satisfied here: in view of the relevant clauses  of Definition \ref{dt}, 
the linear bound $u+v$ used in the antecedent of (\ref{trt3}) is in ${\mathcal R}\tim$ as it should. To solve the basis 
\begin{equation}\label{99}
\bigl(\adcarry(0,u,v)\add\gneg\adcarry(0,u,v)\bigr)\mlc \bigl(\bit(0,u+v)\add \gneg \bit(0,u+v)\bigr),
\end{equation}
 we use  the Bit axiom  and figure out whether $\bit(0,u)$ and $\bit(0,v)$ are true. If both are true, we choose 
$\adcarry(0,u,v)$ and $\gneg \bit(0,u+v)$ in the corresponding two conjuncts of (\ref{99}). If both are false, we choose 
$\gneg \adcarry(0,u,v)$ and $\gneg \bit(0,u+v)$. Finally, if exactly one of the two is true, we choose  $\gneg \adcarry(0,u,v)$ and $ \bit(0,u+v)$.

 The inductive step is 
\begin{equation}\label{trt}
\begin{array}{c}
\bigl(\adcarry(y,u,v)\add\gneg\adcarry(y,u,v)\bigr)\mlc \bigl(\bit(y,u+v)\add \gneg \bit(y,u+v)\bigr)\mli\\
 \bigl(\adcarry(y\successor,u,v)\add\gneg\adcarry(y\successor,u,v)\bigr)\mlc \bigl(\bit(y\successor,u+v)\add \gneg \bit(y\successor,u+v)\bigr) .
\end{array}
\end{equation}
The above is obviously solved by the following strategy.  We wait till the adversary tells us, in the antecedent, whether $\adcarry(y,u,v)$ is true.
After that, using the Successor axiom, we  compute the value of $y\successor$  and then, using the Bit axiom,  figure out  whether $\bit(y\successor,u)$ and  $\bit(y\successor,v)$ are true. If at least two of these three statements are true, we choose 
 $\adcarry(y\successor,u,v)$ in the left conjunct of the consequent of (\ref{trt}), otherwise choose   $\gneg\adcarry(y\successor,u,v)$. Also, 
if either one or all three statements are true, we additionally choose $\bit(y\successor,u+v)$ in the right conjunct of the consequent of (\ref{trt}), otherwise   choose  $\gneg\bit(y\successor,u+v)$.
(\ref{trt3}) is thus proven.

Of course (\ref{trt3}) logically implies $y< |u|+|v| \mli \bit(y,u+v)\add \gneg \bit(y,u+v)$, from which, by Reasonable 
Comprehension (where the comprehension bound $u+v$ is linear and hence, by Definition \ref{dt}, is guaranteed to be in ${\mathcal R}\amp$ as it should), we get 
\begin{equation}\label{io2}
\ade |z|\leq |u|+|v|\cla y< |u|+|v|\bigl(\bit(y,z)\leftrightarrow  \bit(y,u+v)\bigr).
\end{equation}

The following is a true (by $\pa$) sentence:
\begin{equation}\label{io3}
\cla u\cla v\cla |z|\leq |u|+|v|\Bigl(\cla y< |u|+|v|\bigl(\bit(y,z)\leftrightarrow \bit(y,u+v)\bigr)\mli z=u+v\Bigr).
\end{equation}
Now, the target $\ade z(z=u+v)$ is a logical consequence of (\ref{io2}) and (\ref{io3}). 
\end{proof}

\subsection{Trichotomy}
 
\begin{fact}\label{tri}
$\thr\vdash (u<v)\add (u=v)\add (u>v)$.
\end{fact}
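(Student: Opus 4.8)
The plan is to mimic the structure of the proof of Fact~\ref{plus}: work with bits, use $\mathcal{R}$-Comprehension (via its ``Reasonable'' form, Fact~\ref{rr266}) to assemble the needed information into a single number, and then close the argument with a $\pa$-provable elementary implication. The key observation is that comparing $u$ and $v$ amounts to finding the \emph{most significant bit position at which they differ} (if any), so I would introduce an arithmetization of the predicate ``the bits of $u$ and $v$ agree on all positions $\geq y$'', or equivalently work with the auxiliary predicate $\subcarry$ (``Borrow1'') that is the subtraction analogue of $\adcarry$, exactly as set up in the macros.

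First I would argue in $\thr$ and establish, by $\mathcal{R}$-Induction on $y$, a claim of the form
\begin{equation}\label{triind}
y\leq |u|+|v|\mli\bigl(\subcarry(y,u,v)\add\gneg\subcarry(y,u,v)\bigr),
\end{equation}
where $\subcarry(y,u,v)$ is a natural arithmetization of ``the pencil-and-paper subtraction $u\dotminus v$ generates a borrow out of step $y$'' (so that $\subcarry(|u|+|v|,u,v)$ holds iff $u<v$). The induction conditions are met: the bound $u+v$ is linear, hence in $\mathcal{R}\tim$ by Definition~\ref{dt}, and the induction formula is $\ada,\ade$-free, hence trivially $\mathcal{R}\spa$-bounded. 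The basis and inductive step are solved exactly as in Fact~\ref{plus}, using the Successor and Bit axioms to inspect $\bit(y,u)$ and $\bit(y,v)$ and computing the new borrow from the old one. From \eqref{triind} I obtain in particular $\subcarry(|u|+|v|,u,v)\add\gneg\subcarry(|u|+|v|,u,v)$; by $\pa$ this disjunction is provably equivalent to $(u<v)\add\gneg(u<v)$, which gives me an algorithm deciding $u<v$. By symmetry (swapping $u$ and $v$) I likewise get an algorithm deciding $u>v$.

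Having decided strategies for $u<v$ and for $u>v$, the target $(u<v)\add(u=v)\add(u>v)$ follows by $\mathsf{LC}$ together with the $\pa$-provable elementary fact
\begin{equation}\label{tritaut}
\cla u\cla v\bigl(u<v\mld u=v\mld u>v\bigr)
\end{equation}
and its companion facts that at most one disjunct holds: first decide whether $u<v$; if so, choose the left disjunct; if not, decide whether $u>v$; if so, choose the right disjunct; otherwise choose the middle disjunct $u=v$, which is then true by \eqref{tritaut}. Actually I could shortcut the whole construction: since I only need \emph{two} decisions rather than a comprehension step, I do not even need to invoke Comprehension here --- the two applications of Induction suffice.

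The only mild obstacle is the bookkeeping in \eqref{triind}: setting up $\subcarry$ so that the borrow recursion is clean when $|v|>|u|$ (i.e.\ when $v$ is longer), and making sure the chosen cutoff $|u|+|v|$ is safely past the most significant bits of both numbers so that the final borrow genuinely records the sign of $u-v$. This is routine $\pa$-reasoning of the kind we agreed not to belabor, so I would state the relevant arithmetical lemmas about $\subcarry$ as $\pa$-provable facts and proceed. Everything else is a direct transcription of the Fact~\ref{plus} template.
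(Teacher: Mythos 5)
Your proposal is correct, but it follows a genuinely different route from the paper's proof. The paper introduces the three predicates $u<^xv$, $u=^xv$, $u>^xv$ (abbreviating $(u\modulo 2^x)<(v\modulo 2^x)$, etc.) and proves by a single $\mathcal R$-Induction the three-way claim $x\leq|u|+|v|\mli(u<^xv)\add(u=^xv)\add(u>^xv)$; specializing $x$ to the computed value of $|u|+|v|$ (Log axiom plus Fact~\ref{plus}) then yields all three outcomes in one pass. Your argument instead tracks a one-bit borrow signal and proves its decidability by Induction; since a single bit cannot separate $=$ from $>$, you invoke the result a second time with $u$ and $v$ swapped and close with the $\pa$-provable trichotomy. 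Both routes are sound, and your remark that Comprehension is not needed matches the paper. One subtlety worth flagging in your favor: you define your $\subcarry$ as the raw borrow bit, \emph{without} the side condition ``$u\geq v$'' that the paper later builds into its version of $\subcarry$ when proving Fact~\ref{minus}; had you imported the paper's definition verbatim the argument would have been circular, as deciding that predicate already presupposes deciding $u\geq v$. Finally, as in the paper, you should spell out that substituting $|u|+|v|$ for $y$ first requires computing that value via the Log axiom and Fact~\ref{plus}.
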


\begin{proof} Argue in $\thr$. Let $u<^xv$ be an abbreviation of $ (u\modulo 2^{x}) <(v\modulo 2^{x})$, $u=^xv$ an abbreviation of 
$(u \modulo 2^x)=(v\modulo 2^x)$, and $u>^xv$ an abbreviation of $(u\modulo 2^x) >(v\modulo 2^x)$. 

By Induction on $x$, we first want to prove 
\begin{equation}\label{24a}
x\leq |u|+|v|\mli  (u <^x v)\add (u =^x v) \add (u>^x v).
\end{equation}
The basis  $(u <^0 v)\add (u =^0 v) \add (u>^0 v)$ of induction is won by choosing the obviously true $u =^0 v$ component.  The inductive step is 
\begin{equation}\label{24b}
(u <^x v)\add (u =^x v) \add (u>^x v)\mli (u <^{x'} v)\add (u =^{x'} v) \add (u>^{x'} v).
\end{equation}
To solve (\ref{24b}),  using the Bit axiom, we figure out the truth status of $\bit(x,u)$ and 
$\bit(x,v)$. If $\bit(x,u)$ is false while $\bit(x,v)$ is true, 
 we choose $u <^{x'} v $ in the consequent of  (\ref{24b}). If vice versa, we choose 
 $u >^{x'} v $. Finally, if both $\bit(x,u)$ and 
$\bit(x,v)$ are true or both are false,  we wait till Environment resolves the antecedent of (\ref{24b}). If it chooses $u <^x v$ (resp. $u =^x v$, resp. $u >^x v$) there, we choose $u <^{x'} v$ (resp. $u =^{x'} v$, resp. $u>^{x'} v$) in the consequent. With some basic knowledge from $\pa$, our strategy cab be seen to be successful. 

Having established (\ref{24a}), this is how we solve  $(u<v)\add (u=v)\add (u>v)$.
Using the Log axiom and Fact \ref{plus}, we find the value   $d$ with  $d=|u|+|v|$. Next, we plug  $d$ for  $x$ (i.e., specify $x$ as $d$) in   (\ref{24a}), resulting in 
\begin{equation}\label{24c}
d\leq |u|+|v|\mli  (u <^d v)\add (u =^d v) \add (u>^d v). 
\end{equation} 
The antecedent of (\ref{24c}) is true, so (\ref{24c})'s provider will have to resolve the consequent. If the first (resp. second, resp. third) $\add$-disjunct is chosen there, we choose the first (resp. second, resp. third) $\add$-disjunct in the target $ (u<v)\add (u=v)\add (u>v)$ and rest our case. 
By $\pa$, we know that $(u <^d v)\mli (u <  v)$, $(u =^d v)\mli (u =  v)$ and $(u>^d v)\mli (u>  v)$ are true. It is therefore obvious that our strategy succeeds. 
 \end{proof}

\subsection{Subtraction}
In what follows, we use $\ominus$ for a natural pterm for {\bf limited subtraction}, defined by $u\ominus v=\max(0,u-v)$. 

\begin{fact}\label{minus}
$\thr\vdash \ade z(z=u\ominus v)$.
\end{fact}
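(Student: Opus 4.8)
The plan is to mimic the proof of Fact~\ref{plus} (addition), replacing the pencil-and-paper addition-with-carrying algorithm by the pencil-and-paper subtraction-with-borrowing algorithm. Recall that $u\ominus v$ equals $u-v$ when $u\geq v$ and $0$ otherwise; by Trichotomy (Fact~\ref{tri}) we can, arguing in $\thr$, first decide which of $u<v$, $u=v$, $u>v$ holds. In the first two cases the answer is $0$, and we simply choose (the value found by) $\hat 0$ for $z$ using Fact~\ref{numerals}. So the substantive case is $u>v$, where we must compute $u-v$.

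First I would introduce an arithmetization $\subcarry(y,u,v)$ of the predicate ``when computing the $y$th least significant bit of $u-v$ (for $u\geq v$) by the standard borrow algorithm, the borrow $b_y$ propagated out of step $y$ is $1$'', exactly analogous to $\adcarry$. The recursion is the familiar one: at step $0$ the incoming borrow is $0$, and at each step $y$ one examines $(u)_y$, $(v)_y$ and the incoming borrow $b_{y-1}$, declares $(u-v)_y$ according to the parity of $(u)_y-(v)_y-b_{y-1}$, and declares $b_y=1$ iff $(u)_y-(v)_y-b_{y-1}<0$. Then, mirroring (\ref{trt3}), the main claim to prove by $\mathcal R$-Induction on $y$ is
\begin{equation}\label{subt-main}
y\leq |u|+|v|\mli \bigl(\subcarry(y,u,v)\add\gneg\subcarry(y,u,v)\bigr)\mlc \bigl(\bit(y,u\ominus v)\add \gneg \bit(y,u\ominus v)\bigr).
\end{equation}
As in Fact~\ref{plus}, the induction bound $u+v$ is linear, hence in ${\mathcal R}\tim$, and the induction formula is $\adc,\add,\ada,\ade$-free in the relevant sense, so the conditions of $\mathcal R$-Induction are met. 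The basis uses the Bit axiom to inspect $(u)_0,(v)_0$ and make the finitely many correct choices; the inductive step waits for the adversary to reveal $\subcarry(y,u,v)$ in the antecedent, then uses the Successor and Bit axioms to obtain $(u)_{y\successor}$ and $(v)_{y\successor}$, and from these three bits of information computes both $\subcarry(y\successor,u,v)$ and $\bit(y\successor,u\ominus v)$ by the tiny truth table of the borrow step.

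From (\ref{subt-main}) we get $y<|u|+|v|\mli \bit(y,u\ominus v)\add\gneg\bit(y,u\ominus v)$, and then Reasonable Comprehension (Fact~\ref{rr266}) with the linear comprehension bound $u+v\in{\mathcal R}\amp$ yields a number $z$ with $\cla y<|u|+|v|\bigl(\bit(y,z)\leftrightarrow\bit(y,u\ominus v)\bigr)$; together with the $\pa$-provable fact that this bit agreement up to length $|u|+|v|$ forces $z=u\ominus v$ (note $|u\ominus v|\leq|u|\leq|u|+|v|$), we conclude $\ade z(z=u\ominus v)$ in the case $u>v$. Combining the three cases of Trichotomy finishes the proof. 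The main obstacle I expect is not the CoL bookkeeping — which is routine once Fact~\ref{plus} is in hand — but getting the borrow recursion and the case split on $u$ versus $v$ stated cleanly enough that the required $\pa$-provable arithmetical facts (correctness of the borrow algorithm, and the ``bits determine the number'' lemma) are genuinely the sort of thing we are entitled to assert without proof; conceptually, the one genuinely new ingredient compared with addition is the need to dispose of the $u\leq v$ case first via Fact~\ref{tri}, since the borrow algorithm only directly computes $u-v$ when $u\geq v$.
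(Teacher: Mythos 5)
Your route is essentially the paper's own: arithmetize the borrow predicate $\subcarry$, prove by $\mathcal R$-Induction that both the borrow and the bit $\bit(y,u\ominus v)$ are decidable up to a linear bound, then assemble $u\ominus v$ via Reasonable Comprehension together with a true $\pa$-fact saying that bit agreement up to that bound forces $z=u\ominus v$. The one genuine problem is where you place the trichotomy check. Your displayed induction claim is unconditional (as it must be, to be derivable by the $\mathcal R$-Induction rule and usable as a theorem), yet the strategies you sketch for its basis and inductive step simply follow the borrow truth table, and those strategies lose when $u<v$: with $u=0$, $v=1$ at the basis, the table declares the output bit $1$ with outgoing borrow $1$, whereas $\bit(0,u\ominus v)$ is false because $u\ominus v=0$. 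The outer case split you propose cannot repair this, because the induction conclusion has to be established for \emph{all} $u,v$ before it can be invoked inside any branch of a strategy for the final goal $\ade z(z=u\ominus v)$; deciding $u$ versus $v$ ``first'' in that outer strategy does not license omitting the case $u\leq v$ when winning the induction premises. (A smaller slip: the induction formula is not $\add$-free, nor need it be; what the rule requires is only ${\mathcal R}\spa$-boundedness, which holds vacuously here since there are no $\ada,\ade$ quantifiers.)

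The fix is minor and is exactly what the paper does: build ``$u\geq v$'' into the arithmetization of $\subcarry$ (so it is false when $u<v$), and have the strategies for \emph{both} the basis and the inductive step begin by invoking Fact \ref{tri} to decide whether $u\geq v$, choosing $\gneg\subcarry$ and $\gneg\bit(\cdot,u\ominus v)$ outright when it fails and only otherwise consulting the Bit and Successor axioms and the borrow table. Alternatively, you could carry $u>v$ as a hypothesis inside the induction formula, but then, before applying Reasonable Comprehension (whose premise must have the exact form $y<\mathfrak{b}|\vec{s}|\mli p(y)\add\gneg p(y)$), you must recombine with Fact \ref{tri} and the $\pa$-fact that $u\leq v$ implies $u\ominus v=0$ to recover the unconditional premise. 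With either adjustment your argument goes through; the remaining deviations from the paper --- your bound $|u|+|v|$ in place of the paper's tighter $|u|$, and the outer $u<v$/$u=v$/$u>v$ analysis, which then becomes redundant --- are harmless.
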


\begin{proof} The present proof is  rather similar to our earlier proof of Fact \ref{plus}. It relies on the elementary school pencil-and-paper algorithm for computing $u-v$ (when $u\geq v$). 
This algorithm, just like the algorithm for $u+v$, constructs the value $z$ of $u-v$  digit by digit, in the right-to-left order.  At any step $y>0$, we  have a ``borrow'' (which is essentially nothing but a ``negative carry'') $b_{i-1}\in\{0,1\}$ 
 from the preceding step $y-1$. For  step $0$, the ``borrow'' $b_{-1}$ from the non-existing ``preceding step'' $\#-1$ is stipulated to be $0$. At each step $y=0,1,2,\ldots$, we first find the value $t_y=(u)_y-(v)_y-c_{y-1}$. Then we declare $(z)_y$ to be $0$ (resp. $1$) if $t_y$ is even (resp. odd); and we declare $b_y$ --- the value ``borrowed'' by the present step $y$ from the next step $y+1$ --- to be $0$ (resp. $1$) if $t_y>-1$ (resp. $t_y\leq -1$). 

Let $\subcarry(y,u,v)$\label{xborr}  be a natural arithmetization of the predicate ``$u\geq v$ and, when calculating the $y$th least significant bit of $u-v$ using the above pencil-and-paper  algorithm, the value $b_y$  borrowed from the $(y+1)$th step is $1$. 
 For instance,  $\subcarry(0,110,101)$ is true, $\subcarry(1,110,101)$ is false and $\subcarry(2,110,101)$ is also false. 

 Argue in $\thr$. Our main claim is
\begin{equation}\label{trt3s}
y\leq |u|\mli \bigl(\subcarry(y,u,v)\add\gneg\subcarry(y,u,v)\bigr)\mlc\bigl(\bit(y,u\ominus v)\add \gneg \bit(y,u\ominus v)\bigr),
\end{equation}
which we justify by Induction on $y$. For the basis 
\[\bigl(\subcarry(0,u,v)\add\gneg\subcarry(0,u,v)\bigr)\mlc\bigl(\bit(0,u\ominus v)\add \gneg \bit(0,u\ominus v)\bigr),\] using Fact \ref{tri}, we figure out whether $u\geq v$ of not. If not, we choose $\gneg\subcarry(0,u,v)$ and $\gneg \bit(0,u\ominus v)$. Now assume $u\geq v$. 
Using the Bit axiom, we determine the truth status of $\bit(0,u)$ and $\bit(0,v)$. If $\bit(0,u)\leftrightarrow \bit(0,v)$, we  choose
 $\gneg \subcarry(0,u,v)$ and $\gneg \bit(0,u\ominus v)$; if $\bit(0,u)\mlc \gneg\bit(0,v)$, we  choose
 $\gneg \subcarry(0,u,v)$ and $\bit(0,u\ominus v)$; and if $\gneg\bit(0,u)\mlc \bit(0,v)$, we  choose
 $\subcarry(0,u,v)$ and $\bit(0,u\ominus v)$.

The inductive step is 
\begin{equation}\label{trts}
\begin{array}{c}
\bigl(\subcarry(y,u,v)\add\gneg\subcarry(y,u,v)\bigr)\mlc \bigl(\bit(y,u\ominus v)\add \gneg \bit(y,u\ominus v)\bigr)\mli \\
\bigl( \subcarry(y\successor,u,v)\add\gneg\subcarry(y\successor,u,v)\bigr)\mlc \bigl(\bit(y\successor,u\ominus v)\add \gneg \bit(y\successor,u\ominus v)\bigr) .
\end{array}
\end{equation}
The above is obviously solved by the following strategy.  Using Fact \ref{tri}, we figure out whether $u\geq v$ or not. If not, we choose $\gneg\subcarry(y\successor,u,v)$ and 
$\gneg \bit(y\successor,u\ominus v)$ in the consequent of (\ref{trts}). Now assume   $u\geq v$.  We wait till the adversary tells us, in the antecedent, whether $\subcarry(y,u,v)$ is true.
Using the Bit axiom in combination with the Successor axiom, we also figure out whether $\bit(y\successor,u)$ and  $\bit(y\successor,v)$ are true. If we have $\subcarry(y,u,v)\mlc \bit(y\successor,v) $ or 
$\gneg\bit(y\successor,u)\mlc \bigl(\subcarry(y,u,v)\mld \bit(y\successor,v)\bigr)$, then we choose 
 $\subcarry(y\successor,u,v)$ in the consequent of (\ref{trts}), otherwise we choose   $\gneg\subcarry(y\successor,u,v)$. Also, 
if $\bit(y\successor,u)\leftrightarrow \bigl(\subcarry(y,u,v)\leftrightarrow \bit(y\successor,v)\bigr)$,
 we choose $\bit(y\successor,u\ominus v)$ in the consequent of (\ref{trts}), otherwise we choose $\gneg \bit(y\successor,u\ominus v)$. 

(\ref{trt3s}) is proven. It obviously implies $
y< |u| \mli \bit(y,u\ominus v)\add \gneg \bit(y,u\ominus v)$, from which, 
 by Reasonable Comprehension, we get 
\begin{equation}\label{io2s}
\ade |z|\leq |u|\cla y< |u|\bigl(\bit(y,z)\leftrightarrow  \bit(y,u\ominus v)\bigr).
\end{equation}

The following is a true (by $\pa$) sentence:
\begin{equation}\label{io3s}
\cla u\cla v\cla |z|\leq |u|\Bigl(\cla y< |u|\bigl(\bit(y,z)\leftrightarrow \bit(y,u\ominus v)\bigr)\mli z=u\ominus v\Bigr).
\end{equation}
Now, the target $\ade z(z=u\ominus v)$ is a logical consequence of (\ref{io2s}) and (\ref{io3s}). 
\end{proof}

\subsection{Bit replacement}
Let $\br_0(x,s)$ (resp. $\br_1(x,s)$)\label{xbrhg} be a natural pterm for the function that, on arguments $x$ and $s$, returns the number whose binary representation is obtained from that of $s$ by replacing the $x$th least significant bit $(s)_x$ by $0$ (resp. by $1$).    

\begin{fact}\label{br}
For either  $i\in\{0,1\}$, $\thr\vdash x< |s|\mli \ade z\bigl(z=\br_i(x,s)\bigr)$.  
\end{fact}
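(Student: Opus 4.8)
The plan is to reuse the template of the proofs of Facts~\ref{plus} and~\ref{minus}, but the argument is much shorter here because bit replacement is a \emph{local} operation: the $y$th least significant bit of $\br_i(x,s)$ equals $i$ if $y=x$ and equals $(s)_y$ if $y\not=x$, so there are no ``carries'' to propagate and no induction on $y$ is needed. Fix $i\in\{0,1\}$ and argue in $\thr$; assume the antecedent $x<|s|$, and, via the Log axiom, obtain the value $|s|$.

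First I would establish the bit-decision claim
\[
y<|s|\ \mli\ \bit(y,\br_i(x,s))\add\gneg\bit(y,\br_i(x,s)).
\]
Assuming $y<|s|$, invoke Trichotomy (Fact~\ref{tri}) on $y$ and $x$. If it reports $y=x$, then $\pa$ tells us that $\bit(y,\br_i(x,s))$ holds iff $i=1$, so, $i$ being a fixed constant, we pick the left $\add$-disjunct when $i=1$ and the right one when $i=0$. If it reports $y<x$ or $y>x$ (hence $y\not=x$), then $\pa$ gives $\bit(y,\br_i(x,s))\leftrightarrow\bit(y,s)$, so we use the Bit axiom to decide $\bit(y,s)$ and pick the corresponding disjunct. (This claim is in fact provable for every $x$, not only for $x<|s|$.)

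Next I would apply Reasonable $\mathcal R$-Comprehension (Fact~\ref{rr266}) to this claim, taking the comprehension formula to be $\bit(y,\br_i(x,s))$ and the comprehension bound to be the linear pterm $\mathfrak{b}(s)=s$ (so that $\mathfrak{b}|s|=|s|$, and, being linear, $\mathfrak{b}\in{\mathcal R}\amp$ by Definition~\ref{dt}), which yields
\[
\ade |z|\leq|s|\ \cla y<|s|\bigl(\bit(y,z)\leftrightarrow\bit(y,\br_i(x,s))\bigr).
\]
Finally, combine this with the $\pa$-provable sentence stating that, whenever $x<|s|$, any $z$ with $|z|\leq|s|$ agreeing with $\br_i(x,s)$ on all bits in positions below $|s|$ must equal $\br_i(x,s)$; this is true because $x<|s|$ forces $\br_i(x,s)<2^{|s|}$ (for $i=1$ because then $|\br_1(x,s)|=|s|$, and for $i=0$ because $\br_0(x,s)\leq s$), so both $z$ and $\br_i(x,s)$ lie below $2^{|s|}$ and are thus determined by their bits at positions $<|s|$. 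By LC this gives $\ade z\bigl(z=\br_i(x,s)\bigr)$ under the hypothesis $x<|s|$, i.e.\ the target formula.

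The only step needing a little care is the bound bookkeeping in the Comprehension application: one must check that the range $y<|s|$ is wide enough, which is exactly the content of the observation $\br_i(x,s)<2^{|s|}$ for $x<|s|$ --- and this is precisely why the hypothesis $x<|s|$ appears in the statement. Everything else reduces to the Bit axiom, Trichotomy, and routine $\pa$-reasoning.
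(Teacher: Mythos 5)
Your proposal is correct and follows essentially the same route as the paper's own proof: decide $\bit\bigl(y,\br_i(x,s)\bigr)$ by comparing $y$ with $x$ via Fact~\ref{tri} and, when $y\not=x$, consulting the Bit axiom; apply Comprehension with bound $s$ to get $\ade |z|\leq |s|\cla y<|s|\bigl(\bit(y,z)\leftrightarrow\bit(y,\br_i(x,s))\bigr)$; and finish by LC using the true $\pa$-sentence that such a $z$ must equal $\br_i(x,s)$ when $x<|s|$. The only (immaterial) deviations are that you prove the bit-decision claim under the hypothesis $y<|s|$ and invoke Reasonable Comprehension, whereas the paper proves it unconditionally and uses plain Comprehension, and you spell out the $\br_i(x,s)<2^{|s|}$ bookkeeping that the paper leaves implicit.
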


\begin{proof} Consider either $i\in\{0,1\}$.  Arguing in $\thr$, we claim that
\begin{equation}\label{uuu0}
\bit\bigl(y,\br_i(x,s)\bigr)\add\gneg \bit\bigl(y,\br_i(x,s)\bigr).
\end{equation}
This is our strategy for  (\ref{uuu0}). Using Fact \ref{tri}, we figure out whether $y=x$ or not. If $y=x$, we choose the left  $\add$-disjunct of   (\ref{uuu0}) if $i$ is $1$, and choose the right  $\add$-disjunct if $i$ is $0$. Now suppose $y\not=x$. In this case, using the Bit axiom,  we figure out whether $\bit(y,s)$ is true or not. If it is true, we choose the left $\add$-disjunct in (\ref{uuu0}), otherwise we choose the right $\add$-disjunct.  It is not hard to see that, this way, we win.

From (\ref{uuu0}), by Comprehension, we get
\begin{equation}\label{uuu1}
\ade |z|\leq |s|\cla y< |s|\Bigl(\bit(y,z)\leftrightarrow  \bit\bigl(y,\br_i(x,s)\bigr)\Bigr).
\end{equation}
From $\pa$, it can also be seen that the following sentence is true: 
\begin{equation}\label{uuu2}
\cla s\cla x<|s| \cla |z| \leq |s|\Bigl[\cla y< 
|s|\Bigl(\bit(y,z)\leftrightarrow  \bit\bigl(y,\br_i(x,s)\bigr)\Bigr)\mli z=\br_i(x,s)\Bigr].
\end{equation}
Now, the target $x< |s|\mli \ade z\bigl(z=\br_i(x,s)\bigr)$ is a logical consequence of (\ref{uuu1}) and (\ref{uuu2}).
\end{proof}

\subsection{Multiplication}
In what follows, $\lfloor u/2\rfloor$\label{xslja} is a pterm for the function that, for a given number $u$, returns the number whose binary representation is obtained from that of $u$ by deleting the least significant bit if 
such a bit exists (i.e., if $u\not=0$), and returns $0$ otherwise. 

\begin{lemma}\label{div2}
$\thr\vdash  \ade z(z=\lfloor u/2\rfloor)$.
\end{lemma}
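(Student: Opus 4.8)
The plan is to mimic the proofs of Facts \ref{plus}, \ref{minus} and \ref{br}: establish bitwise a $\thr$-provable ``$\bit \add \gneg\bit$'' statement for $\lfloor u/2\rfloor$, feed it into Comprehension to obtain a number $z$ with the right bits, and then invoke $\pa$ to conclude $z=\lfloor u/2\rfloor$. The key arithmetical observation is that the $y$th least significant bit of $\lfloor u/2\rfloor$ is just the $(y\successor)$th least significant bit of $u$; that is, $\pa$ proves $\bit\bigl(y,\lfloor u/2\rfloor\bigr)\leftrightarrow \bit(y\successor,u)$. This makes the bitwise decision problem trivial, so no Induction is needed here (unlike in Fact \ref{plus}).

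First I would argue in $\thr$ and claim the logically-handled statement
\[\bit\bigl(y,\lfloor u/2\rfloor\bigr)\add\gneg\bit\bigl(y,\lfloor u/2\rfloor\bigr).\]
The strategy: using the Successor axiom, compute the value of $y\successor$; then, using the Bit axiom on $y\successor$ and $u$, figure out whether $\bit(y\successor,u)$ is true; if it is, choose the left $\add$-disjunct, otherwise the right one. By the $\pa$-provable equivalence above, this wins. Next, apply $\mathcal R$-Comprehension (the comprehension bound being $|u|$, which is linear and hence in ${\mathcal R}\amp$ by Definition \ref{dt}, and which dominates $|\lfloor u/2\rfloor|$) to get
\[\ade |z|\leq |u|\,\cla y<|u|\bigl(\bit(y,z)\leftrightarrow\bit(y,\lfloor u/2\rfloor)\bigr).\]
Then I would note the true-by-$\pa$ sentence
\[\cla u\,\cla |z|\leq |u|\Bigl(\cla y<|u|\bigl(\bit(y,z)\leftrightarrow\bit(y,\lfloor u/2\rfloor)\bigr)\mli z=\lfloor u/2\rfloor\Bigr),\]
and conclude that the target $\ade z(z=\lfloor u/2\rfloor)$ is a logical consequence of the last two displayed formulas, completing the proof.

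I do not expect any genuine obstacle: this lemma is strictly simpler than Facts \ref{plus}–\ref{minus} since the bit computation is pointwise and requires only the Successor and Bit axioms rather than an inductive argument over carries. The only minor subtleties are (i) correctly handling the edge case $u=0$ (where $\lfloor u/2\rfloor=0$), which is absorbed by the $\pa$-provable bit equivalence and the convention that bits beyond $|x|$ are $0$, and (ii) making sure the length bound $|u|$ used for Comprehension really does bound $|\lfloor u/2\rfloor|$ — which it does, since deleting a bit cannot increase length. Everything else is routine ``plug into Comprehension, then cite $\pa$'' reasoning of exactly the kind already illustrated several times in this section.
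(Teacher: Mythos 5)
Your proposal is correct and follows essentially the same route as the paper's proof: decide $\bit(y,\lfloor u/2\rfloor)\add\gneg\bit(y,\lfloor u/2\rfloor)$ via the Successor and Bit axioms (using that this bit is the $(y\successor)$th bit of $u$), apply Comprehension with the linear bound on $u$, and finish by LC from the $\pa$-provable uniqueness sentence. The extra remarks about the $u=0$ edge case and the length bound are fine but not needed beyond what the paper already does implicitly.
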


\begin{proof} Argue in $\thr$. We first claim that 
\begin{equation}\label{sep1}
\bit(y,\lfloor u/2\rfloor)\add\gneg\bit(y,\lfloor u/2\rfloor).
\end{equation}
To win (\ref{sep1}), we compute the value $a$ of $y\successor$ using the Successor axiom.  Next, using the Bit axiom, we figure out whether the $a$th least significant bit of $u$ is $1$ or $0$. If it is $1$, we choose the left $\add$-disjunct of (\ref{sep1}), otherwise choose the right $\add$-disjunct.

From (\ref{sep1}), by Comprehension, we get
\begin{equation}\label{sep1a}
\ade |z|\leq |u|\cla y<|u|\bigl(\bit(y,z)\leftrightarrow \bit(y,\lfloor u/2\rfloor)\bigr).
\end{equation}   
From $\pa$, we also know that 
\begin{equation}\label{odsh}
\cla u\cla |z|\leq |u|\Bigl(\cla y<|u|\bigl(\bit(y,z)\leftrightarrow \bit(y,\lfloor u/2\rfloor)\bigr)\mli z=\lfloor u/2\rfloor\Bigr).
\end{equation}
Now, the target $\ade z(z=\lfloor u/2\rfloor)$ is a logical consequence of (\ref{sep1a}) and (\ref{odsh}).
\end{proof}

In what follows, $\bitsum(x,y,u,v)$\label{xbitsum} is (a pterm for) the function  
\[
(u)_0\times (v)_{y\ominus 0}+(u)_1\times (v)_{y\ominus 1}+(u)_2\times (v)_{y\ominus 2}+\ldots+(u)_{\min(x,y)}\times 
(v)_{y\ominus x}
\] 
(here, of course, $\min(x,y)$\label{xmin}  means the smaller of $y,x$).

Take a note of the following obvious facts:
\begin{eqnarray}
&  \pa\vdash \cla\bigl(\bitsum(x,y,u,v)\leq |u|\bigr). & \label{jhal}\\
& \pa\vdash \cla\bigl(x\geq y\mli \bitsum(x\successor,y,u,v)=\bitsum (x,y,u,v)\bigr). & \label{str0}\\
& \pa\vdash \cla\bigl(x>|u|\mli \bitsum(x,y,u,v)=\bitsum (|u|,y,u,v)\bigr). & \label{str1}
\end{eqnarray}

\begin{lemma}\label{bitsum}
$\thr\vdash  \ade z\bigl(z=\bitsum(x,y,u,v)\bigr)$.
\end{lemma}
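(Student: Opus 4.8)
The plan is to follow the template used for Facts~\ref{plus} and \ref{minus}: first compute the value $\bitsum(x,y,u,v)$ by ${\mathcal R}$-Induction on $x$, and then remove the induction bound with the help of the stabilization facts (\ref{str0})--(\ref{str1}). Since $\bitsum(x,y,u,v)\leq |u|$ by (\ref{jhal}), the number $\bitsum(x,y,u,v)$ has at most $||u||$ binary digits, so I would take the induction formula to be
\[F(x)\ :=\ \ade |z|\leq ||u||\bigl(z=\bitsum(x,y,u,v)\bigr).\]
The bound $||u||$ is $\mathfrak{c}(|u|)$ for the pterm $\mathfrak{c}(s)=|s|$, which lies in ${\mathcal R}\spa$ because ${\mathcal R}\spa$ is at least logarithmic; hence $F(x)$ is ${\mathcal R}\spa$-bounded. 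Running ${\mathcal R}$-Induction on $x$ with induction bound $|u|$ --- which is in ${\mathcal R}\tim$ by regularity, exactly as in the proof of Fact~\ref{minus} --- then delivers $x\leq |u|\mli F(x)$.

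The basis $F(0)$ is easy: since $y\ominus 0=y$ we have $\bitsum(0,y,u,v)=(u)_0\times (v)_y\in\{0,1\}$, so using the Bit axiom we test $\bit(0,u)$ and $\bit(y,v)$ and answer $z=\hat{1}$ (found via Fact~\ref{numerals}, or via the Successor axiom applied to $0$) if both hold and $z=0$ otherwise. For the inductive step $F(x)\mli F(x\successor)$, first use $F(x)$ to obtain the value $z_0=\bitsum(x,y,u,v)$ and the Successor axiom to obtain $a=x\successor$, then use Fact~\ref{tri} to compare $x$ with $y$. If $x\geq y$, then $\bitsum(x\successor,y,u,v)=z_0$ by (\ref{str0}) and we answer $z_0$. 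If $x<y$, then $\bitsum(x\successor,y,u,v)=z_0+(u)_a\times (v)_{y\ominus a}$ by the defining recursion of $\bitsum$: we test $\bit(a,u)$ via the Bit axiom and answer $z_0$ if it fails; otherwise we compute $b=y\ominus a$ via Fact~\ref{minus}, test $\bit(b,v)$ via the Bit axiom, and answer $z_0\successor$ (via the Successor axiom) if it holds and $z_0$ if it fails. In every branch the chosen value is $\leq |u|$, hence of length $\leq ||u||$, by (\ref{jhal}); correctness of the case split is $\pa$-provable.

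Finally, to pass from $x\leq |u|\mli F(x)$ to the target $\ade z\bigl(z=\bitsum(x,y,u,v)\bigr)$ for all $x$: use the Log axiom to compute $|u|$, use Fact~\ref{tri} to let $d$ be the smaller of $x$ and $|u|$, use the established implication with $x$ set to $d$ (legitimate since $d\leq |u|$) to obtain $z_0=\bitsum(d,y,u,v)$, and answer $z_0$; this is correct because $\bitsum(d,y,u,v)=\bitsum(x,y,u,v)$ both when $d=x$ and, by (\ref{str1}), when $d=|u|<x$. Collecting these steps by LC yields $\thr\vdash\ade z\bigl(z=\bitsum(x,y,u,v)\bigr)$. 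The one point that needs care --- rather than a real obstacle --- is the very first one: because the object being built is here the witness of the $\ade$ itself, rather than an individual bit as in the proofs for $+$ and $\ominus$, the induction formula must be packaged as a \emph{bounded} $\ade$ to stay ${\mathcal R}\spa$-bounded (this is where (\ref{jhal}) and the ``at least logarithmic'' clause are used), and since the induction alone only reaches $x\leq |u|$, the stabilization identities (\ref{str0})--(\ref{str1}) are essential for the remaining $x$.
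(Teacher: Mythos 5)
Your proposal is correct and follows essentially the same route as the paper's own proof: the same ${\mathcal R}\spa$-bounded induction formula $\ade |z|\leq ||u||\bigl(z=\bitsum(x,y,u,v)\bigr)$ with induction bound (the linear pterm applied to) $|u|$, the same use of (\ref{jhal}), (\ref{str0}) and (\ref{str1}), and the same final step of truncating $x$ to $|u|$ via the Log axiom and Fact~\ref{tri}. The only deviation --- adding the new bit product via the Successor axiom rather than computing it separately and invoking Fact~\ref{plus} --- is an inessential variant.
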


\begin{proof} 
Argue in $\thr$. By Induction on $x$, we want to show that 
\begin{equation}\label{sep1b}
x\leq |u|\mli \ade |z|\leq ||u||\bigl(z=\bitsum(x,y,u,v)\bigr).
\end{equation}
Here and later in similar cases, as expected, ``$||u||$'' is not any sort of new notation, it simply stands for ``$|(|u|)|$''. Note that the consequent of the above formula is logarithmically  bounded (namely, the bound for $\ade$ is $|u|$, unlike the linear bound $u$ used in the antecedent) and hence, in view of clause 2 of Definition \ref{dt}, is guaranteed to be ${\mathcal R}\spa$-bounded as  required by the  conditions of $\mathcal R$-Induction. 

The basis $\ade |z|\leq ||u||\bigl(z=\bitsum(0,y,u,v)\bigr)$ is solved by choosing, for $z$, the constant 
 $b$ with  $b=(u)_0\times (v)_y$. Here our writing ``$\times$'' should not suggest that we are  relying on the system's (not yet proven) knowledge of how to compute multiplication. Rather, $(u)_0\times (v)_y$ has a simple propositional-combinatorial   meaning: it means $1$ if both $\bit(0,u)$ and $\bit(y,v)$ are true, and means $0$ otherwise. So, $b$ can be computed by just using the Bit axiom twice and then, if $b$ is $1$, further using Fact \ref{numerals}. 

The inductive step is 
\begin{equation}\label{noww}
\ade |z|\leq ||u||\bigl(z=\bitsum(x,y,u,v)\bigr)\mli \ade |z|\leq ||u||\bigl(z=\bitsum(x\successor,y,u,v)\bigr). 
\end{equation}
To solve the above, we wait till Environment chooses a constant $a$ for $z$ in the antecedent. 
After that, using Fact \ref{tri}, we figure out whether $x< y$.  If not, we choose $a$ for $z$ in the consequent and, in view of (\ref{str0}), win.   Now  suppose $x< y$.
With the help of  the Successor axiom, Bit axiom, Fact \ref{minus} and perhaps also Fact \ref{numerals}, we find the constant $b$ with $b=(u)_{x'}\times (v)_{y\ominus x'}$.  
 Then, using Fact \ref{plus}, we find the constant $c$ with $c=a+b$, and  specify $z$ as $c$ in the consequent. With some basic knowledge from $\pa$ including (\ref{jhal}), our strategy can be seen to win (\ref{noww}).

Now, to solve the target  $\ade z\bigl(z=\bitsum(x,y,u,v)\bigr)$, we do the following. We first wait till Environment specifies values $x_0,y_0,u_0,v_0$ for the (implicitly $\ada$-bound) variables $x,y,u,v$, thus bringing the game down to $\ade z\bigl(z=\bitsum(x_0,y_0,u_0,v_0)\bigr)$. (Ordinarily, such a step would be omitted in an informal argument and we would simply use 
$x,y,u,v$ to denote the constants chosen by Environment for these variables; but we are being more cautious in the present case.)  
Now, using the Log axiom, we find the value $c_0$ of $|u_0|$ and then, using Fact \ref{tri}, we figure out the truth status of 
$x_0\leq c_0$. If it is true, then, choosing $x_0,y_0,u_0,v_0$ for the free variables $x,y,u,v$ of (\ref{sep1b}), we force the provider of 
(\ref{sep1b}) to choose a constant $d$ for $z$ such that $d=\bitsum(x_0,y_0,u_0,v_0)$ is true. We select that very constant $d$ for 
$z$ in $\ade z\bigl(z=\bitsum(x_0,y_0,u_0,v_0)\bigr)$, and celebrate victory. Now suppose $x_0\leq c_0$ is false. We do exactly the same as in the preceding case, with the only difference that we choose $c_0,y_0,u_0,v_0$ (rather than $x_0,y_0,u_0,v_0)$ for the free variables $x,y,u,v$ of (\ref{sep1b}). In view of (\ref{str1}), we win.  
\end{proof}

\begin{fact}\label{mult}
$\thr\vdash  \ade z(z=u\times v)$.
\end{fact}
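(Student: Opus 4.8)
The plan is to follow the pattern of Fact \ref{plus} and Lemma \ref{bitsum}: compute $u\times v$ bit by bit, column by column, as in the grade-school multiplication algorithm. The new feature relative to addition is that a single column of the multiplication table can sum to a number as large as $|u|$, so the ``carry'' passed between columns is not a bit but a number bounded by $|u|$; this is precisely where Lemma \ref{bitsum} is needed, since by (\ref{jhal}) and (\ref{str1}) the value accumulated in column $y$ (before carrying) is exactly $\bitsum(|u|,y,u,v)$, which moreover is $\leq|u|$.

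First I would fix, through $\pa$, a natural arithmetization $\mcarry(y,u,v)$ of ``the carry produced by the $y$th step of the column-by-column computation of $u\times v$'', characterized by
\[\mcarry(0,u,v)=\lfloor\bitsum(|u|,0,u,v)/2\rfloor\quad\mbox{and}\quad\mcarry(y\successor,u,v)=\lfloor\bigl(\bitsum(|u|,y\successor,u,v)+\mcarry(y,u,v)\bigr)/2\rfloor.\]
An easy $\pa$-induction then gives $\mcarry(y,u,v)\leq|u|$ (using $\bitsum(|u|,y,u,v)\leq|u|$ from (\ref{jhal})), hence $|\mcarry(y,u,v)|\leq||u||$; and $\pa$ also proves that $\bit(0,u\times v)$ holds iff $\bitsum(|u|,0,u,v)$ is odd, and $\bit(y\successor,u\times v)$ holds iff $\bitsum(|u|,y\successor,u,v)+\mcarry(y,u,v)$ is odd.

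Arguing in $\thr$, I would then prove by Induction on $y$ the formula
\[y\leq|u|+|v|\mli \bigl(\ade|w|\leq||u||(w=\mcarry(y,u,v))\bigr)\mlc\bigl(\bit(y,u\times v)\add\gneg\bit(y,u\times v)\bigr).\]
As in Lemma \ref{bitsum}, the $\ade$-subformula is logarithmically bounded (bound $||u||$) and hence $\mathcal R\spa$-bounded, so the induction formula is admissible; and the induction bound $|u|+|v|$, being linear, is in $\mathcal R\tim$ as required, by Definition \ref{dt}. The basis is immediate: the column-$0$ carry is $0$, and $\bit(0,u\times v)$ is decided by checking $\bit(0,u)$ and $\bit(0,v)$ via the Bit axiom. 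For the inductive step, once the antecedent hands us the value $c$ of $\mcarry(y,u,v)$, we compute $y\successor$ by the Successor axiom and $|u|$ by the Log axiom, obtain the value $s$ of $\bitsum(|u|,y\successor,u,v)$ by Lemma \ref{bitsum}, form $t=s+c$ by Fact \ref{plus}, compute $\lfloor t/2\rfloor$ by Lemma \ref{div2} and choose it for $w$ in the left conjunct, and decide the parity of $t$ by the Bit axiom to select the correct $\add$-disjunct in the right conjunct. Basic $\pa$ facts --- the recursion for $\mcarry$, the bound (\ref{jhal}), and the stated relationship between $\bit(\cdot,u\times v)$, $\bitsum$ and $\mcarry$ --- confirm that this strategy wins.

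The proven formula logically implies $y<|u|+|v|\mli\bit(y,u\times v)\add\gneg\bit(y,u\times v)$, so Reasonable Comprehension, with the linear comprehension bound $|u|+|v|$ (which is in $\mathcal R\amp$ by Definition \ref{dt}), yields
\[\ade|z|\leq|u|+|v|\,\cla y<|u|+|v|\,\bigl(\bit(y,z)\leftrightarrow\bit(y,u\times v)\bigr).\]
Since $|u\times v|\leq|u|+|v|$, the sentence $\cla u\cla v\cla|z|\leq|u|+|v|\bigl(\cla y<|u|+|v|(\bit(y,z)\leftrightarrow\bit(y,u\times v))\mli z=u\times v\bigr)$ is $\pa$-provable, and combining it with the previous formula by LC gives the target $\ade z(z=u\times v)$. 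The one genuine delicacy is the bookkeeping around $\mcarry$: settling on its $\pa$-characterization and, above all, verifying the bound $\mcarry(y,u,v)\leq|u|$, since that is what keeps the carry in logarithmic space and makes the induction formula genuinely $\mathcal R\spa$-bounded --- everything else is routine.
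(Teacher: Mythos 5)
Your proposal is correct and follows essentially the same path as the paper's proof: introduce a carry predicate $\mcarry$, establish the $\pa$-provable bound $\mcarry(y,u,v)\leq|u|$ so that the carry fits under a logarithmically bounded $\ade$, run ${\mathcal R}$-Induction on $y$ (invoking Lemma \ref{bitsum}, Lemma \ref{div2}, Fact \ref{plus} and the Bit axiom in the inductive step) to decide $\bit(y,u\times v)$, and then assemble $z=u\times v$ via Reasonable Comprehension and a concluding $\pa$ fact. The small variations --- your use of $\bitsum(|u|,y\successor,u,v)$ where the paper uses $\bitsum(y\successor,y\successor,u,v)$, and the comprehension bound $|u|+|v|$ where the paper uses $|u|+|v|+\hat{1}$ --- are cosmetic and equally valid.
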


\begin{proof} The pencil-and-paper algorithm for multiplying binary numbers, which creates a picture like the following one, is also well known: 

\[\begin{array}{r}
11011\\
\mbox{\raisebox{7pt}{$\times$}}\hspace{15pt} 101\\
----\\
11011\\
+\hspace{10pt}000000\\
1101100\\
-----\\
10000111
\end{array}\] 
One way to describe it is as follows. The algorithm constructs the value $z$ of the product $u\times v$ bit by bit, in 
the right-to-left order. At any step $y>0$ we  have a carry $c_{i-1}$  from the preceding step $y-1$ (unlike the carries 
that emerge in the addition algorithm, here the carry can be greater than $1$). For  step $0$,  the ``carry'' $c_{-1}$ 
from the non-existing ``preceding step'' $\#-1$ is stipulated to be $0$. At each step $y=0,1,2,\ldots$, we first find the 
sum $t_y=\bitsum(y,y,u,v)+c_{y-1}$. Then we declare $(z)_y$ to be $0$ (resp. $1$) if $t_y$ is even (resp. odd); and we declare $c_y$ to be  $\lfloor t_y/2\rfloor$.

Let $\mcarry(y,u,v)$\label{xsmacr} be a natural pterm for ``the carry $c_y$ that we get at step $y\geq 0$ when computing $u\times v$''. Take a note of the following $\pa$-provable fact: 
\begin{equation}\label{stasta}
\cla\bigl(\mcarry(y,u,v)\leq |u|\bigr).
\end{equation} 

Arguing in $\thr$, we  claim that 
\begin{equation}\label{mn}
 y \leq |u|+|v|\mli \ade |w|\leq ||u|| \bigl(\mcarry(y,u,v)=w\bigr)\mlc \bigl(\bit(y,u\times v)\add\gneg\bit(y,u\times v)\bigr).
\end{equation}

This claim can be proven by Induction on $y$. The basis is 
\begin{equation}\label{tho}
 \ade |w|\leq ||u|| \bigl(\mcarry(0,u,v)=w\bigr)\mlc \bigl(\bit(0,u\times v)\add\gneg\bit(0,u\times v)\bigr). \end{equation}
Our strategy for (\ref{tho}) is as follows. Using Lemma \ref{bitsum}, we  compute the value $a$ of $\bitsum(0,0,u,v)$.
 Then, using Lemma \ref{div2}, we  compute the  value $b$ of $\lfloor a/2\rfloor$. After that, we choose $b$ for $w$ in the left conjunct  of (\ref{tho}). Also, using the Bit axiom, we figure out whether $\bit(0,a)$ is true.  If yes, if  we choose $\bit(0,u\times v)$ in the right conjunct of (\ref{tho}), otherwise we choose $\gneg\bit(0,u\times v)$. With some basic knowledge from $\pa$ including (\ref{stasta}), we can see that victory is guaranteed.

The inductive step is 
\begin{equation}\label{mna}
\begin{array}{c}
 \ade |w|\leq ||u||  \bigl(\mcarry(y,u,v)=w\bigr)\mlc \bigl(\bit(y,u\times v)\add\gneg\bit(y,u\times v)\bigr)\mli \\
\ade |w|\leq ||u||  \bigl(\mcarry(y\successor,u,v)=w\bigr)\mlc \bigl(\bit(y\successor,u\times v)
\add\gneg\bit(y\successor,u\times v)\bigr).
\end{array}
\end{equation}
Here is our strategy for (\ref{mna}). We wait till, in the antecedent, the adversary tells us the carry $a=\mcarry(y,u,v)$ from the $y$th  step. Using the Successor axiom, we also find the value $b$ of $y\successor$. Then, using Lemma  \ref{bitsum}, we  compute the value $c$ of $\bitsum(b,b,u,v)$. Then, using Fact \ref{plus}, we  compute the value $d$ of $a+c$. Then, using Lemma  \ref{div2}, we compute  
the value $e$ of $\lfloor d/2\rfloor$. Now, we choose $e$ for $w$ in the consequent of (\ref{mna}).  Also, using the Bit axiom, we figure out whether $\bit(0,d)$ is true. If true, we choose $\bit(y\successor,u\times v)$ in the consequent of (\ref{mna}), otherwise we choose $\gneg\bit(y\successor,u\times v)$.  Again, with some basic knowledge from $\pa$ including (\ref{stasta}), we can see that victory is guaranteed.

The following formula is a logical consequence of   (\ref{mn}) and the $\pa$-provable fact 
$\cla(y < |u|+ |v|+\hat{1}\mli y \leq |u|+ |v| )$:
\begin{equation}\label{mnb}
 y < |u|+ |v|+\hat{1}\mli \bit(y,u\times v) \add\gneg \bit(y,u\times v) .
\end{equation}
From (\ref{mnb}), by Reasonable Comprehension, we get 
\begin{equation}\label{mnbj}
 \ade |z|\leq |u|+ |v|+\hat{1}\cla y < |u|+ |v|+\hat{1}\bigl(\bit(y,z)\leftrightarrow \bit(y,u\times v)\bigr) .
\end{equation}
By $\pa$, we also have
\begin{equation}\label{anany}
 \cla \Bigl( |z|\leq |u|+ |v|+\hat{1}\mlc\cla  y < |u|+ |v|+\hat{1} \bigl(\bit(y,z)\leftrightarrow \bit(y,u\times v)\bigr)\mli z=u\times v\Bigr) .
\end{equation}

Now, the target  $\ade z(z=u\times v)$ is a logical consequence of   (\ref{mnbj}) and (\ref{anany}). 
\end{proof}

\section{Some instances of \texorpdfstring{{\bf CLA11}}{CLA11}}\label{sharv}

In this section we are going to see an infinite yet incomplete  series of natural theories that are regular and thus adequate 
(sound and complete) 
in the sense of Theorem \ref{tt1}. 
 All these theories look like $\areleven_{\emptyset}^{\mathcal R}$, with the subscript $\emptyset$ indicating that there are no supplementary axioms. 

Given a set $S$ of bounds, by $S^\heartsuit$ (resp. $S^\spadesuit$)\label{xmasti} we shall denote the linear (resp. polynomial) closure of $S$. 

\begin{lemma}\label{uy}
Consider any regular boundclass triple $\mathcal R$, and any set $S$ of bounds. Assume that, for every pterm $\mathfrak{p}(\vec{x})\in S$, 
we have  $\areleven_{\emptyset}^{\mathcal R}\vdash  \ade z(z=\mathfrak{p}|\vec{x}|)$ 
for some (=any) variable $z$ not occurring in $\mathfrak{p}$. Then the same holds for $ S^\spadesuit$ --- and hence also $S^\heartsuit$  --- instead of $S$. 
\end{lemma}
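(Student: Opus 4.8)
The plan is to prove the statement first for the polynomial closure $S^\spadesuit$, by induction on the way a pterm is generated into $S^\spadesuit$, and then to obtain it for the linear closure $S^\heartsuit$ at no extra cost: indeed $S^\heartsuit\subseteq S^\spadesuit$, because $S^\spadesuit$ satisfies all four of the conditions defining $S^\heartsuit$ and the latter is, by definition, the smallest boundclass satisfying them.

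For the base of the induction, if $\mathfrak{p}\in S$ then $\areleven_{\emptyset}^{\mathcal R}\vdash\ade z(z=\mathfrak{p}|\vec{x}|)$ is exactly the hypothesis of the lemma, and if $\mathfrak{p}=0$ (a $0$-ary bound) then $\mathfrak{p}|\vec{x}|$ is just $0$ and $\ade z(z=0)$ is won by choosing $0$ for $z$. For the inductive step I would argue inside $\areleven_{\emptyset}^{\mathcal R}$ in the informal style licensed by Thesis 9.2 of \cite{cl12}, freely using $\pa$-facts via (\ref{suppa}). Suppose the claim holds for $\mathfrak{b},\mathfrak{c}\in S^\spadesuit$. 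If $\mathfrak{p}$ is a syntactic variation of $\mathfrak{b}$, the same strategy with the variables renamed accordingly works. If $\mathfrak{p}=\mathfrak{b}\successor$: using the hypothesis for $\mathfrak{b}$ find the value $a$ of $\mathfrak{b}|\vec{x}|$, using the Successor axiom find the value $b$ of $a\successor$, and, since $\pa$ proves $b=(\mathfrak{b}\successor)|\vec{x}|$, choose $b$ for $z$. If $\mathfrak{p}=\mathfrak{b}+\mathfrak{c}$ (resp.\ $\mathfrak{p}=\mathfrak{b}\times\mathfrak{c}$): using the two hypotheses find the values $a$ of $\mathfrak{b}|\vec{x}|$ and $b$ of $\mathfrak{c}|\vec{x}|$, then apply Fact \ref{plus} (resp.\ Fact \ref{mult}) with $a$ for $u$ and $b$ for $v$ to find the value $c$ of $a+b$ (resp.\ $a\times b$), and, since $\pa$ proves $c=\mathfrak{p}|\vec{x}|$, choose $c$ for $z$.

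I do not expect a genuine obstacle: all the real work — computing $+$, $\times$ and the successor — has already been carried out in the bootstrapping of Section \ref{sboot}, so the argument is just an appeal to those facts along the construction of the closure. The only point needing a word of care is the bookkeeping of argument variables when $\mathfrak{b}$ and $\mathfrak{c}$ come with different argument tuples: one first uses closure under syntactic variation to regard both as functions of a single common tuple $\vec{x}$, after which the strategies above apply verbatim. If at some moment a concrete numeral is convenient, Fact \ref{numerals} supplies it, though it is not really needed.
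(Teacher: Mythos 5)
Your proof is correct and follows exactly the route the paper intends: the paper's own proof is a one-line "straightforward (meta)induction on the complexity of pterms, relying on the Successor axiom, Fact \ref{plus} and Fact \ref{mult}", which is precisely the induction you spell out (including the harmless observation that $S^\heartsuit\subseteq S^\spadesuit$ disposes of the linear-closure case).
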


\begin{proof} Straightforward (meta)induction on the complexity of pterms, relying on the Successor axiom, Fact  \ref{plus} and Fact  \ref{mult}.
\end{proof}

\begin{lemma}\label{ed}
Consider any regular boundclass triple $\mathcal R$, any pterms $\mathfrak{p}(\vec{x})$ and $\mathfrak{a}(\vec{x})$, and any variable $z$ not occurring in these pterms. Assume $\mathfrak{a}(\vec{x})$ is in 
${\mathcal R}\amp$, and $\areleven_{\emptyset}^{\mathcal R}$ proves the following two sentences:
\begin{eqnarray}\label{yy}
& \ada \ade z\bigl(z=\mathfrak{p}(\vec{x})\bigr); & \label{yy1}\\
& \cla\bigl(\mathfrak{p}(\vec{x})\leq\mathfrak{a}(\vec{x})\bigr) .& \label{yy2}
\end{eqnarray} 
 Then $\areleven_{\emptyset}^{\mathcal R}$ also proves 
$\ada \ade z(z=2^{\mathfrak{p}|\vec{x}|})$.
\end{lemma}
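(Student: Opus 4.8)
The plan is to prove the target by showing how to compute $2^{\mathfrak{p}|\vec{x}|}$ bit by bit and then assemble the number using $\mathcal R$-Comprehension, exactly as in the bootstrapping proofs of Fact~\ref{plus}, Fact~\ref{minus} and Fact~\ref{mult}. The key observation is that $2^{\mathfrak{p}|\vec{x}|}$ is the number whose binary representation has a single $1$ in position $\mathfrak{p}|\vec{x}|$ (counting least-significant bits from $0$) and $0$ everywhere else; its length is $\mathfrak{p}|\vec{x}|+1$. So $\bit(y,2^{\mathfrak{p}|\vec{x}|})$ holds iff $y=\mathfrak{p}|\vec{x}|$, and a strategy for $\bit(y,2^{\mathfrak{p}|\vec{x}|})\add\gneg\bit(y,2^{\mathfrak{p}|\vec{x}|})$ just needs to compute $\mathfrak{p}|\vec{x}|$ (available by (\ref{yy1}) together with the Log axiom applied to each $x_i$), then decide $y=\mathfrak{p}|\vec{x}|$ using Fact~\ref{tri}, and choose accordingly.

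First I would, arguing in $\areleven_{\emptyset}^{\mathcal R}$, use the Log axiom and (\ref{yy1}) to obtain the value $m$ of $\mathfrak{p}|\vec{x}|$, and also (via Fact~\ref{plus} and the Successor axiom) the value of $\mathfrak{a}|\vec{x}|$ or rather a bound-term witnessing the comprehension bound. The comprehension bound must come from ${\mathcal R}\amp$; here is where the hypothesis $\mathfrak{a}\in{\mathcal R}\amp$ and (\ref{yy2}) are used: since $\mathfrak{p}(\vec{x})\le\mathfrak{a}(\vec{x})$ and $\mathcal R\amp$ is at least linear and linearly closed (Definition~\ref{dt}), the term $\mathfrak{a}|\vec{x}|\successor$ — or more carefully $\mathfrak{a}(|x_1|,\ldots,|x_n|)\successor$, which by regularity (condition 4 of Definition~\ref{dt}) is $\preceq{\mathcal R}\amp$ and hence usable as a bound — dominates $|2^{\mathfrak{p}|\vec{x}|}| = \mathfrak{p}|\vec{x}|+1$. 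I would then establish, by the argument above, that
\begin{equation}\label{plan-bit}
\bit\bigl(y,2^{\mathfrak{p}|\vec{x}|}\bigr)\add\gneg\bit\bigl(y,2^{\mathfrak{p}|\vec{x}|}\bigr)
\end{equation}
is provable (the value of $\mathfrak{p}|\vec{x}|$ being computable by (\ref{yy1}) and the Log axiom, then Fact~\ref{tri} to compare with $y$), apply $\mathcal R$-Comprehension with bound $\mathfrak{a}|\vec{x}|\successor$ to get a $z$ with $\bit(y,z)\leftrightarrow\bit(y,2^{\mathfrak{p}|\vec{x}|})$ for all $y<\mathfrak{a}|\vec{x}|\successor$, and finally invoke a $\pa$-provable uniqueness sentence of the form $\cla(|z|\le\mathfrak{a}|\vec{x}|\successor\mlc\cla y<\mathfrak{a}|\vec{x}|\successor(\bit(y,z)\leftrightarrow\bit(y,2^{\mathfrak{p}|\vec{x}|}))\mli z=2^{\mathfrak{p}|\vec{x}|})$ to conclude $\ade z(z=2^{\mathfrak{p}|\vec{x}|})$ by Logical Consequence, as in (\ref{io3})/(\ref{anany}).

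The main obstacle I anticipate is bookkeeping around the comprehension bound: one must be sure that a genuine bound from ${\mathcal R}\amp$ is available that provably (in $\pa$) exceeds $\mathfrak{p}|\vec{x}|$, so that the $\cla y<\mathfrak{b}|\vec{s}|$ range in the comprehension conclusion actually covers all bit positions of $2^{\mathfrak{p}|\vec{x}|}$, including position $\mathfrak{p}|\vec{x}|$ itself. This is exactly what (\ref{yy2}) plus $\mathfrak{a}\in{\mathcal R}\amp$ and the closure/composition clauses of Definition~\ref{dt} deliver — without the amplitude bound $\mathfrak{a}$ one could not legitimately name the needed comprehension bound, which is presumably why the lemma carries that hypothesis rather than merely (\ref{yy1}). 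Everything else (the $\pa$-provable equivalences, the $\mathcal R$-Comprehension application, the closing Logical Consequence step) is routine and mirrors the earlier facts in Section~\ref{sboot}.
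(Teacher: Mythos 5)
Your proposal is correct and follows essentially the same route as the paper's own proof: establish $\bit(y,2^{\mathfrak{p}|\vec{x}|})\add\gneg\bit(y,2^{\mathfrak{p}|\vec{x}|})$ by computing $\mathfrak{p}|\vec{x}|$ via the Log axiom and (\ref{yy1}) and comparing with $y$ by Fact \ref{tri}, then apply Comprehension with the bound $\mathfrak{a}\successor$ and finish with a $\pa$-provable uniqueness sentence using (\ref{yy2}) and LC. The only cosmetic difference is your justification of the comprehension bound: the paper simply takes $\mathfrak{a}(\vec{x})\successor\in{\mathcal R}\amp$ by linear closure (condition 3 of Definition \ref{dt}), whereas your appeal to condition 4 and ``$\preceq{\mathcal R}\amp$'' is a slightly roundabout (but harmless) way of getting the same thing.
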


\begin{proof} Assume the conditions of the lemma, and  argue in $\areleven_{\emptyset}^{\mathcal R}$.  We claim that 
\begin{equation}\label{poi}
\bit(y,2^{\mathfrak{p}|\vec{x}|})\add\gneg\bit(y,2^{\mathfrak{p}|\vec{x}|}).
\end{equation}
Our strategy for (\ref{poi}) is as follows. Using the Log axiom, we compute the values $\vec{c}$ of $|\vec{x}|$. Then, relying on (\ref{yy1}),  we  find the value $a$ of $\mathfrak{p}(\vec{c})$.  From $\pa$, we know that the $a$th least significant bit of $2^a$ --- and only that bit --- is a $1$. So, 
 using Fact \ref{tri}, we compare $a$ with $y$. If $a=y$, we choose $\bit(y,2^{\mathfrak{p}|\vec{x}|})$ in (\ref{poi}), otherwise choose 
$\gneg\bit(y,2^{\mathfrak{p}|\vec{x}|})$. 

From (\ref{poi}), by Comprehension, we get 
\[\ade |z|\leq (\mathfrak{a}|\vec{x}|)\successor\cla y<(\mathfrak{a}|\vec{x}|)\successor\bigl(\bit(y,z)\leftrightarrow \bit(y,2^{\mathfrak{p}|\vec{x}|})\bigr). \]
The above, in view of the $\pa$-provable fact $|2^{\mathfrak{a}|\vec{x}|}|= (\mathfrak{a}|\vec{x}|)\successor$,  implies 
\begin{equation}\label{pod}
\ade |z|\leq |2^{\mathfrak{a}|\vec{x}|}|\cla y<|2^{\mathfrak{a}|\vec{x}|}|\bigl(\bit(y,z)\leftrightarrow \bit(y,2^{\mathfrak{p}|\vec{x}|})\bigr).  
\end{equation}
Obviously, from $\pa$ and (\ref{yy2}), we  also have 
\begin{equation}\label{ioo}
\cla \Bigl(|z|\leq |2^{\mathfrak{a}|\vec{x}|}|\mlc \cla y<|2^{\mathfrak{a}|\vec{x}|}|\bigl(\bit(y,z)\leftrightarrow \bit(y,2^{\mathfrak{p}|\vec{x}|})\bigr)\mli z=2^{\mathfrak{p}|\vec{x}|} \Bigr).
\end{equation}
Now, the target  $\ade z(z=2^{\mathfrak{p}|\vec{x}|})$ is a logical consequence of (\ref{pod}) and (\ref{ioo}). 
\end{proof}

Here we define the following series ${\mathcal B}_{1}^{1},{\mathcal B}_{1}^{2},{\mathcal B}_{1}^{3},\ldots,{\mathcal B}_{2},{\mathcal B}_{3},{\mathcal B}_{4},{\mathcal B}_{5},{\mathcal B}_{6},{\mathcal B}_{7},{\mathcal B}_{8}$ of sets of terms:
\begin{enumerate}
  \item \begin{enumerate}  \item ${\mathcal B}_{1}^{1}\ = \ \{|x|\}^\heartsuit$ ({\bf logarithmic} boundclass); \item ${\mathcal B}_{1}^{2}\ =\ \{|x|^2\}^\heartsuit$; \item ${\mathcal B}_{1}^{3}\ = \ \{|x|^3\}^\heartsuit$; \item \ldots;\end{enumerate}
  \item ${\mathcal B}_{2} \ =\ \{|x|\}^\spadesuit$ ({\bf polylogarithmic} boundclass);
  \item ${\mathcal B}_{3}\ =\ \{x\}^\heartsuit$ ({\bf linear} boundclass);
  \item ${\mathcal B}_{4}\ =\ \{x\times |x|, \ x\times |x|^2, \ x\times |x|^3,\ \ldots\}^\heartsuit$ ({\bf quasilinear} boundclass);
  \item ${\mathcal B}_{5}\ =\ \{x\}^\spadesuit$ ({\bf polynomial} boundclass);
\item ${\mathcal B}_{6}\ =\ \{2^{|x|},\ 2^{|x|^2},\ 2^{|x|^3},\ \ldots\}^\spadesuit$ ({\bf quasipolynomial} boundclass);
\item ${\mathcal B}_{7}\ =\ \{2^x\}^\spadesuit$ ({\bf exponential-with-linear-exponent} boundclass);
\item ${\mathcal B}_{8}\ =\ \{2^{x},\ 2^{x^2},\ 2^{x^3},\ \ldots\}^\spadesuit$ ({\bf exponential-with-polynomial-exponent} boundclass).
\end{enumerate}

Note that all elements of any of the above sets are bounds, i.e., monotone pterms. Further, since all sets have the form $S^\heartsuit$ or  $S^\spadesuit$, they are (indeed) boundclasses, i.e., are closed under syntactic variation. 

\begin{fact}\label{dds}
For any boundclass triple $\mathcal R$ listed below, the theory $\areleven_{\emptyset}^{\mathcal R}$ is regular:

\noindent $({\mathcal B}_3,{\mathcal B}_{1}^{1},{\mathcal B}_5)$;  $({\mathcal B}_3,{\mathcal B}_{1}^{2},{\mathcal B}_5)$; $({\mathcal B}_3,{\mathcal B}_{1}^{3},{\mathcal B}_5)$; \ldots; $({\mathcal B}_3,{\mathcal B}_{2},{\mathcal B}_5)$;  $({\mathcal B}_3,{\mathcal B}_{2},{\mathcal B}_6)$; $({\mathcal B}_3,{\mathcal B}_{2},{\mathcal B}_7)$;
$({\mathcal B}_3,{\mathcal B}_{3},{\mathcal B}_5)$; $({\mathcal B}_3,{\mathcal B}_{3},{\mathcal B}_6)$; $({\mathcal B}_3,{\mathcal B}_{3},{\mathcal B}_7)$;
$({\mathcal B}_4,{\mathcal B}_{1}^{1},{\mathcal B}_5)$; $({\mathcal B}_4,{\mathcal B}_{1}^{2},{\mathcal B}_5)$; $({\mathcal B}_4,{\mathcal B}_{1}^{3},{\mathcal B}_5)$; \ldots; $({\mathcal B}_4,{\mathcal B}_{2},{\mathcal B}_5)$; $({\mathcal B}_4,{\mathcal B}_{2},{\mathcal B}_6)$; $({\mathcal B}_4,{\mathcal B}_{4},{\mathcal B}_5)$; $({\mathcal B}_4,{\mathcal B}_{4},{\mathcal B}_6)$; $({\mathcal B}_4,{\mathcal B}_{4},{\mathcal B}_7)$; 
$({\mathcal B}_5,{\mathcal B}_{1}^{1},{\mathcal B}_5)$; $({\mathcal B}_5,{\mathcal B}_{1}^{2},{\mathcal B}_5)$; $({\mathcal B}_5,{\mathcal B}_{1}^{3},{\mathcal B}_5)$; \ldots; $({\mathcal B}_5,{\mathcal B}_{2},{\mathcal B}_5)$; $({\mathcal B}_5,{\mathcal B}_{2},{\mathcal B}_6)$;
$({\mathcal B}_5,{\mathcal B}_{5},{\mathcal B}_5)$; $({\mathcal B}_5,{\mathcal B}_{5},{\mathcal B}_6)$; $({\mathcal B}_5,{\mathcal B}_{5},{\mathcal B}_7)$; $({\mathcal B}_5,{\mathcal B}_{5},{\mathcal B}_8)$.
\end{fact}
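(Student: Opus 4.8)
The plan is, for each tabulated triple $\mathcal R=(\mathcal R\amp,\mathcal R\spa,\mathcal R\tim)$, to verify the two clauses of Definition \ref{dadm}. Since the supplementary axiom set is $\emptyset$, its first clause is vacuous, so the work is (a) to check that $\mathcal R$ meets the five conditions of Definition \ref{dt}, and (b) to check that $\areleven_{\emptyset}^{\mathcal R}\vdash\ade z(z=\mathfrak b|\vec x|)$ for every bound $\mathfrak b$ in $\mathcal R\amp\cup\mathcal R\spa\cup\mathcal R\tim$. I would open with three observations that clear most of (a): each component of each listed $\mathcal R$ is $S^\heartsuit$ or $S^\spadesuit$ for one of the explicit finite generating sets defining ${\mathcal B}_1^i,\ldots,{\mathcal B}_8$, hence is closed under syntactic variation and linearly closed, and in every listed triple the time component is one of the polynomially-closed classes ${\mathcal B}_5,\ldots,{\mathcal B}_8$; moreover the amplitude, space and time components always contain (syntactic variants of) $x$, $|x|$ and $x,x^2,x^3,\ldots$ respectively. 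Together these yield conditions 2 and 3 of Definition \ref{dt} at once.

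What remains of (a) — conditions 1, 4 and 5 of Definition \ref{dt} — is where the real work lies, though it is all routine growth-rate bookkeeping. Condition 1 I would handle uniformly: for an arbitrary generator $\mathfrak b$, the naive bit-by-bit procedure for $\mathfrak b|\vec s|$ computes its $y$-th output bit by a short comparison or arithmetic operation on numbers of length bounded by the space component, produces output of length bounded by the amplitude component, and runs in time bounded by the time component — an $\mathcal R$ tricomplexity solution, evidently constructible from $\mathfrak b$. For condition 4 one substitutes bounds from $\mathcal R\amp\cup\mathcal R\spa$ into the bounds of each component and checks the result stays in that component up to $\preceq$; the only non-automatic instances are those in which the ambient component is the fast-growing time class ${\mathcal B}_6$, ${\mathcal B}_7$ or ${\mathcal B}_8$, and it is precisely to keep these substitutions in range that the list pairs those time classes only with sufficiently restricted amplitude/space classes. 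For condition 5, given $(\mathfrak a,\mathfrak s,\mathfrak t)$ one passes to suitable dominating bounds $(\mathfrak a',\mathfrak s',\mathfrak t')$ inside the respective classes and verifies the chain $|\mathfrak t'|\preceq\mathfrak s'\preceq\mathfrak a'\preceq\mathfrak t'$; the substance is the existence of a space bound dominating $|\mathfrak t'|$ and of a time bound dominating $\mathfrak a'$, which again is exactly the constraint delimiting the admissible triples (for instance ${\mathcal B}_8$ never appears with a non-polynomial amplitude or space class, because $|2^{x^i}|\approx x^i$).

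For (b), Lemma \ref{uy} reduces the task to proving $\ade z(z=\mathfrak p|\vec x|)$ for each of the finitely many generators $\mathfrak p$ of each component. For the generators $x$, $|x|$, $|x|^i$ and $x\times|x|^i$ this is immediate from the Log and Successor axioms together with Facts \ref{plus} and \ref{mult} (for instance $\ade z(z=||x||^i)$ follows from two applications of the Log axiom and $i-1$ of Fact \ref{mult}). The only remaining generators are the exponential ones, $2^{|x|^i}$ (in ${\mathcal B}_6$), $2^{x^i}$ (in ${\mathcal B}_8$) and $2^x$ (in ${\mathcal B}_7$), and for each I would invoke Lemma \ref{ed} with $\mathfrak p$ taken to be $|x|^i$, $x^i$ and $x$ respectively: the first hypothesis of Lemma \ref{ed} has just been established, and its second hypothesis, $\cla(\mathfrak p(\vec x)\leq\mathfrak a(\vec x))$ for some $\mathfrak a\in\mathcal R\amp$, holds because in every listed triple carrying one of ${\mathcal B}_6,{\mathcal B}_7,{\mathcal B}_8$ the amplitude class contains a bound dominating $\mathfrak p$ — $\mathfrak p$ being at worst $|x|^i=O(x)$ (hence dominated by a linear bound), with ${\mathcal B}_8$ appearing only when $\mathcal R\amp={\mathcal B}_5$. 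Lemma \ref{ed} then yields $\ade z(z=2^{\mathfrak p|\vec x|})$, which is exactly the required $\ade z(z=\mathfrak b|\vec x|)$ for the corresponding generator.

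The main obstacle is not any single deep argument but the disciplined case analysis across conditions 4 and 5 of Definition \ref{dt} and the amplitude precondition of Lemma \ref{ed}: for each tabulated triple one must confirm that substitutions of amplitude/space bounds stay in range and that the interlocking chain $|\mathfrak t|\preceq\mathfrak s\preceq\mathfrak a\preceq\mathfrak t$ can be realized after passing to dominating bounds — which is precisely why the table of admissible triples has the shape it does. Everything else is mechanical, the provability clauses reducing to the Log and Successor axioms and Facts \ref{plus} and \ref{mult} as amplified through Lemmas \ref{uy} and \ref{ed}.
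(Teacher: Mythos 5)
Your proposal is correct and follows essentially the same route as the paper: it reduces clause 2 of Definition \ref{dadm} to the finitely many generators via Lemma \ref{uy}, settles the non-exponential generators with the Log/Successor axioms and Facts \ref{plus} and \ref{mult}, handles the generators of ${\mathcal B}_6$, ${\mathcal B}_7$, ${\mathcal B}_8$ through Lemma \ref{ed} with the same amplitude-domination checks (noting that ${\mathcal B}_8$ is paired only with ${\mathcal B}_5$), and, like the paper, treats the verification of the conditions of Definition \ref{dt} as routine. The only nit is that the first hypothesis of Lemma \ref{ed} is $\ada\ade z\bigl(z=\mathfrak{p}(\vec{x})\bigr)$ rather than the previously obtained $\ade z(z=\mathfrak{p}|\vec{x}|)$, but it is provable by exactly the same appeal to the Log axiom and Fact \ref{mult}, so this is harmless.
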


\begin{proof} Let $\mathcal R$ be any one of the above-listed triples. By definition, a theory $\areleven_{\emptyset}^{\mathcal R}$ is regular iff the triple $\mathcal R$ is regular and,  in addition, $\areleven_{\emptyset}^{\mathcal R}$ satisfies the two conditions of Definition
 \ref{dadm}.

To verify that $\mathcal R$ is regular, one has to make sure that all five conditions of Definition \ref{dt} are satisfied by any value of $\mathcal R$ from the list. This is a  rather easy  job. For instance, the satisfaction of condition 3 of Definition \ref{dt}  is automatically 
guaranteed in view of the fact that all of the boundclasses ${\mathcal B}_{1}^{1},\ldots,{\mathcal B}_8$ have the form $S^\heartsuit$ or  $S^\spadesuit$, and the  
${\mathcal R}\tim$  component of each of the listed triples has the form $S^\spadesuit$. We leave a verification of the satisfaction of the other conditions of Definition  \ref{dt} to the reader. 

As for Definition \ref{dadm}, 
 condition 1 of it is trivially satisfied because the set of the supplementary axioms of 
each theory $\areleven_{\emptyset}^{\mathcal R}$  under question is empty. So, it remains to only verify the satisfaction of condition 2. Namely, we shall show that, for every bound $\mathfrak{b}(\vec{x})$ from ${\mathcal R}\amp$, ${\mathcal R}\spa$ or ${\mathcal R}\tim$,  $\areleven_{\emptyset}^{\mathcal R}$ proves $\ade z(z=\mathfrak{b}|\vec{x}|)$. Let us start with ${\mathcal R}\spa$.

Assume ${\mathcal R}\spa={\mathcal B}_{1}^{1}=\{|x|\}^\heartsuit$. In view of Lemma \ref{uy}, in order to show (here and below in similar situations) that $\areleven_{\emptyset}^{\mathcal R}\vdash\ade z(z=\mathfrak{b}|\vec{x}|)$  for every bound $\mathfrak{b}(\vec{x})$ from this boundclass, it is sufficient for us to just show that $\areleven_{\emptyset}^{\mathcal R}\vdash\ade z(z=||x||)$. But this is indeed so: apply the Log axiom to $x$ twice.  

Assume ${\mathcal R}\spa={\mathcal B}_{1}^{2}=\{|x|^2\}^\heartsuit$. Again, in view of Lemma \ref{uy}, it is sufficient for us to show that $\areleven_{\emptyset}^{\mathcal R}$ proves $\ade z(z=||x||^2)$, i.e., $\ade z(z=||x||\times ||x||)$. But this is indeed so: apply the Log axiom to $x$ twice to obtain the value $a$ of $||x||$, and then apply Fact \ref{mult} to compute the value of $a\times a$. 

The cases of  ${\mathcal R}\spa$ being ${\mathcal B}_{1}^{3}$, ${\mathcal B}_{1}^{4}$, \ldots will be handled in a similar way, relying on  Fact \ref{mult} several times rather than just once. 
 
The case of ${\mathcal R}\spa={\mathcal B}_{2}=\{|x|\}^\spadesuit$ will be handled in exactly the same way as we handled 
${\mathcal R}\spa={\mathcal B}_{1}^{1}=\{|x|\}^\heartsuit$.

So will be the case of ${\mathcal R}\spa={\mathcal B}_{3}=\{x\}^\heartsuit$, with the only difference that, the Log axiom needs to be applied only once rather than twice.

Assume ${\mathcal R}\spa={\mathcal B}_{4}=\{x\times |x|,x\times |x|^2,x\times |x|^3,\ldots\}^\heartsuit$. In view of Lemma \ref{uy}, it is sufficient for us to show that, for any $i\geq 1$, $\areleven_{\emptyset}^{\mathcal R}\vdash\ade z(z=|x|\times ||x||^i)$. This provability indeed holds due to the Log axiom (applied twice) and Fact \ref{mult} (applied $i$ times). 

The case of ${\mathcal R}\spa={\mathcal B}_{5}=\{x\}^\spadesuit$ will be handled in exactly the same way as we handled 
${\mathcal R}\spa={\mathcal B}_{3}=\{x\}^\heartsuit$.

Looking back at the triples listed in the present lemma, we see that ${\mathcal R}\spa$ is always one of ${\mathcal B}_{1}^{1},{\mathcal B}_{1}^{2},\ldots$, ${\mathcal B}_{2}$, ${\mathcal B}_{3}$, ${\mathcal B}_{4}$, ${\mathcal B}_{5}$. This means we are done with ${\mathcal R}\spa$. If ${\mathcal R}\amp$ or ${\mathcal R}\tim$ is one of ${\mathcal B}_{1}^{1},{\mathcal B}_{1}^{2},\ldots$, ${\mathcal B}_{2}$, ${\mathcal B}_{3}$, ${\mathcal B}_{4}$, ${\mathcal B}_{5}$, the above argument applies without any changes. In fact, ${\mathcal R}\amp$ is always one of ${\mathcal B}_{3}$, ${\mathcal B}_{4}$, ${\mathcal B}_{5}$, meaning that we are already done with ${\mathcal R}\amp$ as well. So, it only remains to consider ${\mathcal R}\tim$ in the cases where the latter is one of ${\mathcal B}_{6}$, ${\mathcal B}_{7}$, ${\mathcal B}_{8}$.

Assume ${\mathcal R}\tim={\mathcal B}_{6}=\{2^{|x|},2^{|x|^2},2^{|x|^3},\ldots\}^\spadesuit$. In view of Lemma \ref{uy}, it is sufficient for us to show that, for any $i\geq 1$, $\areleven_{\emptyset}^{\mathcal R}\vdash\ade z(z=2^{||x||^i})$. Consider any such $i$. Relying on the Log axiom once and Fact \ref{mult} $i$ times, we find that $\areleven_{\emptyset}^{\mathcal R}\vdash\ade z(z=|x|^i)$. Also, as $\mathcal R$ is a regular boundclass triple,  ${\mathcal R}\amp$ is at least linear, implying that it contains a bound $\mathfrak{a}(x)$  with $\pa\vdash\cla x\bigl(|x|^i\leq \mathfrak{a}(x)\bigr)$. Hence, by Lemma \ref{ed}, $\areleven_{\emptyset}^{\mathcal R}\vdash \ade z(z=2^{||x||^i})$, as desired.

 Assume ${\mathcal R}\tim={\mathcal B}_{7}=\{2^x\}^\spadesuit$. It is sufficient to show that $\areleven_{\emptyset}^{\mathcal R}\vdash\ade z(z=2^{|x|})$. The sentence $\ade z(z=x)$ is logically valid and hence provable in $\areleven_{\emptyset}^{\mathcal R}$. Also, due to being at least linear, ${\mathcal R}\amp$ contains a bound $\mathfrak{a}(x)$  with $\pa\vdash\cla x\bigl(x\leq \mathfrak{a}(x)\bigr)$. Hence, by Lemma \ref{ed}, $\areleven_{\emptyset}^{\mathcal R}\vdash \ade z(z=2^{|x|})$, as desired.

Finally,  assume ${\mathcal R}\tim={\mathcal B}_{8}=\{2^x,2^{x^2},2^{x^3},\ldots\}^\spadesuit$.  It is sufficient for us to show that, for any $i\geq 1$, $\areleven_{\emptyset}^{\mathcal R}\vdash\ade z(z=2^{|x|^i})$. Consider any such $i$. Relying on Fact \ref{mult} $i$ times, we find that $\areleven_{\emptyset}^{\mathcal R}\vdash\ade z(z=x^i)$. Also, 
 ${\mathcal R}\amp$, which in our case --- as seen from the list of triples --- can only be  ${\mathcal B}_{5}=\{x\}^\spadesuit$, contains the bound $x^i$, for which we trivially have $\pa\vdash\cla x(x^i\leq x^i)$. Hence, by Lemma \ref{ed}, $\areleven_{\emptyset}^{\mathcal R}\vdash \ade z(z=2^{|x|^i})$, as desired.
\end{proof}

In view of Theorem
 \ref{tt1}, 
an immediate corollary of Fact \ref{dds} is that, where $\mathcal R$ is any one of the boundclass triples listed in Fact \ref{dds}, the theory $\areleven_{\emptyset}^{\mathcal R}$ (resp. $\areleven_{\emptyset !}^{\mathcal R}$) is extensionally (resp. intensionally) adequate with respect to computability in the corresponding tricomplexity. 
 For instance, $\areleven_{\emptyset}^{({\mathcal B}_3,{\mathcal B}_{2},{\mathcal B}_5)}$ and 
$\areleven_{\emptyset !}^{({\mathcal B}_3,{\mathcal B}_{2},{\mathcal B}_5)}$ are adequate with 
respect to (simultaneously) linear amplitude, polylogarithmic space and polynomial time computability; 
$\areleven_{\emptyset}^{({\mathcal B}_5,{\mathcal B}_{3},{\mathcal B}_8)}$ and 
$\areleven_{\emptyset !}^{({\mathcal B}_5,{\mathcal B}_{3},{\mathcal B}_8)}$ are adequate with respect to polynomial amplitude, 
linear space and exponential time computability; and so on.
 
Fact \ref{dds} was just to somewhat illustrate the scalability and import of Theorem
\ref{tt1}.  There are many meaningful and interesting boundclasses and boundclass triples yielding regular and hence adequate theories yet not mentioned in this section. 

\section{Extensional completeness}\label{s19}
We let $\thr$  be an arbitrary but fixed regular theory. Additionally, we pick and fix an arbitrary arithmetical problem $A$ with an $\mathcal R$ tricomplexity solution.  Proving the extensional completeness of $\thr$ --- i.e., the completeness part of Theorem \ref{tt1}(1)  ---  means showing the existence of a theorem of $\thr$ which, under the standard interpretation $^\dagger$, equals to (``expresses'') $A$. This is what the present section is exclusively devoted to.

\subsection{\texorpdfstring{$X$}{X}, \texorpdfstring{$\mathcal X$}{calX} and \texorpdfstring{$(\mathfrak{a},\mathfrak{s},\mathfrak{t})$}{a,s,t}} \label{xxxxx}
By definition, the above $A$ is an arithmetical problem because, for some sentence $X$, $A=X^\dagger$. For the rest of Section \ref{s19}, we fix such a sentence $X$, and fix $\mathcal X$ as an HPM\label{xHPM} (=strategy)  which solves $X^\dagger$ in $\mathcal R$ tricomplexity. 
In view of Lemma 10.1 of \cite{cl12} and Lemma \ref{lcol},  
 we may and will assume that, as a solution of $X^\dagger$, $\mathcal X$ is provident. 
We further fix three unary bounds $\mathfrak{a}(x)\in{\mathcal R}\amp$, $\mathfrak{s}(x)\in{\mathcal R}\spa$ and $\mathfrak{t}(x)\in{\mathcal R}\tim$ such that $\mathcal X$ is an $(\mathfrak{a},\mathfrak{s},\mathfrak{t})$ tricomplexity solution of $X^\dagger$. 
In view of conditions 2, 3 and 5 of  Definition \ref{dt}, we may and will assume that the following sentence is true:
\begin{equation}
 \cla x\bigl(x\leq \mathfrak{a}(x)\mlc |\mathfrak{t}(x)|\leq \mathfrak{s}(x)\leq \mathfrak{a}(x)\leq \mathfrak{t}(x)\bigr). \label{aprili20}
\end{equation}

$X$ is not necessarily provable in $\thr$, and our goal is to construct another sentence $\overline{X}$ so that  $A=\overline{X}^\dagger$ and so that $\overline{X}$ is guaranteed to be provable in $\thr$.  

Following our earlier conventions,  more often than not we will drop the superscript $^\dagger$ applied 
to (para)for\-mulas, writing $F^\dagger$ simply as $F$. 

We also agree that, throughout the present section, unless otherwise suggested by the context, different metavariables $x,y,z,s,s_1,\ldots$ stand for different variables of the language of $\thr$.

\subsection{Preliminary insights}\label{gggg}
It might be worthwhile to try to get some preliminary insights into the basic idea behind our extensional completeness proof before going into its details.
Let us consider a simple special case  where $X$ is $\ada s\ade y\hspace{1pt}p(s,y)$ for some elementary formula $p(s,y)$. 

The assertion ``$\mathcal X$ is an $(\mathfrak{a},\mathfrak{s},\mathfrak{t})$ tricomplexity solution of $X$'' 
can be formalized in the language of $\pa$ as a certain sentence $\mathbb{W}$. 
Then we let the earlier mentioned  $\overline{X}$ be the sentence 
$\ada s\ade y \bigl(\mathbb{W}\mli p(s,y)\bigr)$.
Since $\mathbb{W}$ is true,  $\mathbb{W}\mli p(s,y)$ is equivalent to $p(s,y)$. This means that $\overline{X}$ and $X$, as games, are the same --- that is,  $\overline{X}^\dagger=X^\dagger$. It now remains to understand why $\thr\vdash \overline{X}$.  Let us agree to write ``${\mathcal X}(s)$'' as an abbreviation of the phrase ``${\mathcal X}$ in the scenario where, at the very beginning of the play, $\mathcal X$'s adversary made the move $\#s$, and made no other moves afterwards''. Argue in $\thr$.

A central lemma, proven by ${\mathcal R}$-induction in turn relying on the results of Section \ref{sboot},  is one establishing that  the work of $\mathcal X$ is provably ``{\em traceable}''. A simplest version of this lemma applied to our present case would look like 
\begin{equation}\label{insig}
t\leq \mathfrak{t}|s|\mli \ade |v|\leq \mathfrak{s}|s| \mbox{\em Config}(s,t,v),
\end{equation}
where $\mbox{\em Config}(s,t,v)$ is an elementary formula asserting that $v$ is a partial description of the $t$'th configuration of ${\mathcal X}(s)$. Here $v$ is not a full description as it omits certain information.   Namely, $v$ does not include the contents of $\mathcal X$'s buffer and run tape, because this could make $|v|$ bigger than the allowed $\mathfrak{s}|s|$;  on the other hand, $v$ includes all other  information necessary for finding a similar partial description of the next configuration, such as scanning head locations or work-tape contents.  

Tracing the work of ${\mathcal X}(s)$ up to its $(\mathfrak{t}|s|)$th step in the style of (\ref{insig}),  
one of the following two eventual scenarios will be observed:
\begin{eqnarray}
&   \mbox{\em ``${\mathcal X}(s)$  does something wrong'';} &\label{ins3}\\
& \mbox{\em $\gneg(\ref{ins3})\ \mlc$ ``at some point, ${\mathcal X}(s)$ makes the move $\#c$ for some constant $c$''.} &\label{ins2}
\end{eqnarray} 
Here ``${\mathcal X}(s)$  does something wrong'' is an assertion that ${\mathcal X}(s)$ makes an illegal move, or  makes an oversized (exceeding $\mathfrak{a}|s|$) move, or consumes too much (exceeding $\mathfrak{s}|s|$) work-tape space, or  makes no moves at all, etc. --- any observable fact that contradicts $\mathbb{W}$. As an aside, why do we consider ${\mathcal X}(s)$'s not making any moves as ``wrong''? Because it means that ${\mathcal X}(s)$ either loses the game or violates the $\mathfrak{t}$ time bound by making an unseen-by-us move sometime after step $\mathfrak{t}|s|$. 

We will know precisely which of (\ref{ins3}) or (\ref{ins2}) is the case. That is, we will have the resource  
\begin{equation}\label{eqdis}
(\ref{ins3})\add (\ref{ins2}).
\end{equation}

If (\ref{ins3}) is the case, then $\mathcal X$ does not satisfy what $\mathbb{W}$ asserts about it,  so $\mathbb{W}$ is false. 
In this case, 
we can win $\ade y\bigl(\mathbb{W}\mli \hspace{1pt}p(s,y)\bigr)$ by choosing $0$ (or any other constant) for $y$, because the resulting $\mathbb{W}\mli \hspace{1pt}p(s,0)$, having a false antecedent, is true. Thus, as  we have just established, 
\begin{equation}\label{eqdis1}
(\ref{ins3})\mli  \ade y\bigl(\mathbb{W}\mli \hspace{1pt}p(s,y)\bigr).
\end{equation}
Now suppose (\ref{ins2}) is the case. This means that the play of $X$  by ${\mathcal X}(s)$ hits $p(s,c)$. If $\mathbb{W}$ is true and thus $\mathcal X$ is a winning strategy for $X$, then $p(s,c)$ has to be true, because hitting a false parasentence would make $\mathcal X$ lose. Thus, $\mathbb{W}\mli p(s,c)$ is true. If so, we can win $\ade y\hspace{1pt}\bigl(\mathbb{W}\mli p(s,y)\bigr)$ by choosing $c$ for $y$. But how can we obtain $c$? We know that $c$ is on ${\mathcal X}(s)$'s run tape at the $(\mathfrak{t}|s|)$th step. However, as mentioned, the partial description $v$ of the $(\mathfrak{t}|s|)$th configuration that we can obtain from (\ref{insig}) does not include this possibly ``oversized'' constant. It is again the traceability of the work of $\mathcal X$ --- in just a slightly different form from (\ref{insig}) --- that comes in to help. Even though we cannot keep track of the evolving (in ${\mathcal X}$'s buffer) $c$ in its entirety while tracing the work of ${\mathcal X}(s)$ in the style of (\ref{insig}), finding any given bit of $c$ is no problem. And this is sufficient, because our ability to find all particular bits of $c$, due to Comprehension, allows us to assemble the constant $c$ itself.  
In summary, 
we have 
\begin{equation}\label{eqdis2}
(\ref{ins2})\mli  \ade y\bigl(\mathbb{W}\mli \hspace{1pt}p(s,y)\bigr).
\end{equation}

Our target $\overline{X}$ is now a logical consequence of (\ref{eqdis}), (\ref{eqdis1}) and (\ref{eqdis2}).\vspace{5pt}

What we saw above was about the exceptionally simple case of $X=\ada s \ade y\hspace{1pt}p(s,y)$, and the general case is  much more complex, of course. Among other things,  showing the provability of $\overline{X}$ requires a certain metainduction on its complexity. But the idea that we have just tried to explain, with certain adjustments and refinements, still remains at the core of the proof.

\subsection{The sentence \texorpdfstring{$\mathbb{W}$}{W}}
Remember the operation of prefixation from \cite{cl12}. It takes a constant game $G$ together with a legal position $\Phi$ of 
$G$, and returns a constant game $\seq{\Phi}G$. Intuitively, $\seq{\Phi}G$ is the game to which $G$ is brought down by 
the labmoves of $\Phi$. This is an ``extensional'' operation, insensitive with respect to how games are represented/written.  
Below we define an ``intensional'' version $\seq{\cdot}!\cdot$ of prefixation, which differs from its extensional counterpart  
in that, instead of dealing with games, it deals with parasentences. Namely:

Assume $F$ is a parasentence and  $\Phi$ is a legal position of $F$. We define the parasentence $\seq{\Phi}!F$\label{xiprf}  inductively as follows:
\begin{itemize}
  \item $\seq{}!F=F$ (as always, $\seq{}$ means the empty position).
  \item For any nonempty legal position $\seq{\lambda,\Psi}$ of $F$, where $\lambda$ is a labmove and $\Psi$ is a sequence of labmoves: 
  \begin{itemize}  
  \item If $\lambda$ signifies a choice of a component $G_i$ in an occurrence of a subformula $G_0\add G_1$ or $G_0\adc G_1$ of $F$, and $F'$ is the result of replacing that occurrence by $G_i$ in $F$, then $\seq{\lambda,\Psi}! F=\seq{\Psi}F'$.
  \item If $\lambda$ signifies a choice of a constant $c$ for a variable $x$ in an occurrence of a subformula $\ade xG(x)$ or $\ada xG(x)$ of $F$, and $F'$ is the result of replacing that occurrence by $G(c)$ in $F$, then $\seq{\lambda,\Psi}! F=\seq{\Psi}F'$.  
\end{itemize}  
\end{itemize}\smallskip

\noindent For example, $\seq{\oo 1.\#101, \pp 1.0}!\Bigl(E\mlc\ada x\bigl(G(x)\add H(x)\bigr)\Bigr)\ =\ E\mlc G(101)$.

We assume that the reader is sufficiently familiar with G\"{o}del's technique of encoding and arithmetizing. Using that technique,  
we can construct an elementary sentence  $\mathbb{W}_1$\label{xwlong}  which asserts that 
\begin{equation}\label{prct}
\mbox{\em ``$\mathcal X$  is a provident $(\mathfrak{a},\mathfrak{s},\mathfrak{t})$  tricomplexity solution of $X$''.}
\end{equation}
While we are not going to actually construct $\mathbb{W}_1$ here, some clarifications could still be helpful. A brute force attempt to express (\ref{prct}) would have to include the phrase ``for all computation branches of $\mathcal X$''. Yet, there are uncountably many computation branches, and thus they cannot be encoded  through natural numbers. Luckily, this does not present a problem. Instead of considering all computation branches, for our purposes it is sufficient to only consider  $\oo$-legal branches of $\mathcal X$ with finitely many $\oo$-labeled moves.  Call such branches 
{\bf  relevant}.\label{xrelv}  Each branch is fully determined by what moves are made in it by Environment and when. Since  the number of Environment's moves in any relevant branch is finite, all such branches can be listed according to --- and, in a sense,  identified with --- the corresponding finite sequences of Environment's timestamped moves. This means that there are only countably many  relevant branches,  and  they can be encoded with natural numbers. Next, let us say that a parasentence $E$ is 
{\bf relevant}\label{xrpr} 
iff $E=\seq{\Gamma}!X$ for some legal position $\Gamma$ of $X$. In these terms, the formula $\mathbb{W}_1$ can be constructed as a natural arithmetization of the following, expanded, form of (\ref{prct}):
\begin{quote} 
{\em ``$\mathfrak{a},\mathfrak{s},\mathfrak{t}$ are bounds\hspace{1pt}\footnote{I.e., $\mathfrak{a},\mathfrak{s},\mathfrak{t}$ are monotone pterms --- see Section \ref{sbou}. This condition is implicit in (\ref{prct}).}  and, for any relevant computation branch $B$, the following conditions are satisfied:
\begin{enumerate}
\item {\bf ($\mathcal X$ plays $X$ in $(\mathfrak{a},\mathfrak{s},\mathfrak{t})$ tricomplexity):} For any step $c$ of  $B$, where $\ell$ is the background of $c$, we have:
\begin{enumerate}
  \item The spacecost of $c$ does not exceed $\mathfrak{s}(\ell)$;
  \item If $\mathcal X$ makes a move $\alpha$ at step $c$, then the magnitude of $\alpha$ does not exceed $\mathfrak{a}(\ell)$ and the timecost of $\alpha$ does not exceed $\mathfrak{t}(\ell)$.  
  \end{enumerate} 
\item {\bf ($\mathcal X$ wins $X$):} There is a  legal position $\Gamma$ of $X$ and a parasentence $H$ such that $\Gamma$ is the run spelled by $B$, 
$H=\seq{\Gamma}!X$, and the elementarization $\elz{H}$ of $H$ is true.
\item {\bf ($\mathcal X$ plays $X$ providently):} There is an integer $c$ such that, for any $d\geq c$, $\mathcal X$'s buffer at step $d$ of $B$ is empty.'' 
\end{enumerate} } 
\end{quote}
Clause 2 of the above description relies on the predicate ``true'' which, in full generality, by Tarski's theorem, is non-arithmetical.  
However, in the present case, the truth predicate  is limited to the parasentences $\elz{H}$ where $H$ is a relevant parasentence. Due to $H$'s being relevant,  all occurrences of blind quantifiers  in $\elz{H}$ are inherited from $X$. This means that, as long as $X$ is fixed (and, in our case, it is indeed fixed), the $\cla,\cle$-depth of $\elz{H}$ is bounded by a constant. It is well known (cf. \cite{BussChapter}) that limiting the $\cla,\cle$-depths of arithmetical parasentences to any particular value makes the corresponding truth predicate expressible in the language of $\pa$. So, it is clear that constructing $\mathbb{W}_1$ formally does not present a problem.

We now define the sentence $\mathbb{W}$\label{xwwwww} by 
\[\mathbb{W}\ =_{def} \ \mathbb{W}_1\mlc (\ref{aprili20}).\]

 \subsection{The overline notation}\label{sovnot}
A {\bf literal}\label{xliteral} is  $\twg$, $\tlg$, or a (nonlogical) atomic formula  with or without negation $\gneg$. By a {\bf politeral}\label{ipoliteral} of a formula we mean a positive (not in the scope of $\gneg$) occurrence of a literal in it. For instance, the occurrence of $p$, as well as of $\gneg q$ --- but not of $q$ --- is a politeral of $p\mlc\gneg q$.
While a politeral is not merely a literal but a literal $L$  {\em together} with a fixed occurrence, we shall often refer to it just by the name $L$ of the literal, assuming that it is clear from the context which (positive) occurrence of $L$ is meant.

As we remember, our goal is to construct a formula $\overline{X}$ which expresses the same problem as $X$ does and which is provable in $\thr$. Where $E$ is $X$ or any other formula, we let 
$\overline{E}\label{ipver}$
be the result of replacing in $E$ every politeral $L$ by $\mathbb{W}\mli L$.

\begin{lemma}\label{august12a}
For any  formula $E$, including $X$, we have  $E^\dagger=\overline{E}^\dagger$. 
\end{lemma}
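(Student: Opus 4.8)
The plan is to reduce the lemma to one elementary observation: $\mathbb{W}$ is a \emph{true} elementary sentence, and hence, for \emph{every} elementary parasentence $L$, the parasentence $\mathbb{W}\mli L$ (which unabbreviates to $\gneg\mathbb{W}\mld L$) satisfies $(\mathbb{W}\mli L)^\dagger=L^\dagger$. To see this, note that $\mathbb{W}$, $L$ and $\gneg\mathbb{W}\mld L$ are all elementary, so under $^\dagger$ each becomes a moveless game, and a moveless game is fully determined by which of the two players wins it. Now $\mathbb{W}$ is true: its conjunct $\mathbb{W}_1$ holds because $\mathcal X$ is, by the choices made in Section~\ref{xxxxx}, a provident $(\mathfrak a,\mathfrak s,\mathfrak t)$ tricomplexity solution of $X$, and the other conjunct $(\ref{aprili20})$ was assumed true. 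Thus the moveless game $\mathbb{W}^\dagger$ is won by the machine, so $\gneg(\mathbb{W}^\dagger)$ is lost by it; and since for moveless games $A,B$ the game $A\mld B$ is won by the machine exactly when at least one of $A,B$ is, the moveless games $(\gneg\mathbb{W}\mld L)^\dagger$ and $L^\dagger$ are won by the machine under precisely the same condition --- hence they coincide.

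With this fact in hand I would prove $E^\dagger=\overline{E}^\dagger$ by induction on the complexity of $E$ (as usual, a formula with free variables is understood through its constant instances following \cite{cl12}, so it suffices to treat each instance, to which the pointwise fact above applies verbatim). If $E$ is a literal $L$, then $L$ is the unique politeral of $E$, so $\overline E=\mathbb{W}\mli L$, and the observation above gives $\overline E^\dagger=L^\dagger=E^\dagger$. If $E=\gneg F$ with $F$ non-atomic, then $E$ is not a literal and every literal occurring in $E$ lies inside $F$, hence within the scope of the displayed $\gneg$; therefore $E$ has no politerals, $\overline E=E$, and the claim is trivial. In every remaining case $E$ has the form $F_0\odot F_1$ with $\odot$ one of $\mlc,\mld,\mli,\adc,\add$, or the form $Qx\,F$ with $Q$ one of $\cla,\cle,\ada,\ade$. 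Here every politeral of $E$ is a politeral of one of the immediate constituents occurring at the same place, so $\overline E$ is obtained from $E$ by replacing each immediate constituent $G$ by $\overline G$; since, by its very definition, $^\dagger$ commutes with each logical operator (in the quantifier case, building $E^\dagger$ uniformly from the interpretations of all the constant instances of the constituent), the induction hypothesis yields $\overline E^\dagger=E^\dagger$. Specializing to $E=X$ gives the lemma.

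I do not expect any genuine obstacle here. The only point that deserves a line of care is the quantifier step, where one uses $\overline{F(c)}=\overline{F}(c)$: this holds because $\mathbb{W}$ is a \emph{sentence}, so the substitution $x\mapsto c$ leaves untouched the antecedents $\mathbb{W}$ that overlining inserts, after which the pointwise fact of the first paragraph applies afresh to each elementary parasentence $\mathbb{W}\mli L(c,\dots)$ and the induction goes through.
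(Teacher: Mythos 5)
Your proof is correct and follows essentially the same route as the paper's: the literal case is settled by the truth of $\mathbb{W}$ (making $\mathbb{W}\mli L$ and $L$ the same moveless game under $^\dagger$), and the general case is the routine structural induction that the paper compresses into the phrase ``automatically extends from literals to all formulas''. Your extra care about the quantifier step and about formulas with no politerals is harmless elaboration of the same argument.
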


\begin{proof} If $E$ is a literal, then, since $\mathbb{W}$ is true, $E $ is equivalent (in the standard model) to    $\mathbb{W}\mli E$, meaning that $E^\dagger=\overline{E}^\dagger$. The phenomenon $E^\dagger=\overline{E}^\dagger$ now automatically extends from literals to all formulas.   
\end{proof}

In view of the above lemma, what now remains to do for the completion of our extensional completeness proof is to show that $\thr\vdash\overline{X}$. The rest of Section \ref{s19} is entirely devoted to this task.

\begin{lemma}\label{jan4d}
For any  formula $E$, $\thr\vdash \mathbb{W} \mld \cla \overline{E}$. 
\end{lemma}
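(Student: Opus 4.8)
The plan is to prove the statement $\thr\vdash\mathbb{W}\mld\cla\overline{E}$ by (meta)induction on the complexity of the formula $E$. The point of this lemma — rather than the more direct-looking $\thr\vdash\overline{X}$ — is that it is exactly the right strengthening to make the induction on subformulas go through: the disjunct $\mathbb{W}$ on the left acts as an ``escape hatch'' that lets us discharge any subformula whenever we have detected that $\mathcal{X}$ misbehaves (i.e.\ that $\mathbb{W}$ is in fact false), and the $\cla$-closure accounts for the free variables that appear when we pass to subformulas. First I would record the base case: if $E$ is a literal $L$, then $\overline{E}$ is $\mathbb{W}\mli L$, so $\cla\overline{E}$ is $\cla(\mathbb{W}\mli L)$, and since $\mathbb{W}$ is a sentence, $\mathbb{W}\mld\cla(\mathbb{W}\mli L)$ is logically valid (it follows by \textbf{LC} from no premises: waiting on the $\cla$-quantifiers and then, in the right disjunct, copying the antecedent $\mathbb{W}$ that the environment must produce — or, if it never does, resting). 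The cases $E=\twg$ and $E=\tlg$ are trivial.

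Next come the inductive cases, organized by the principal connective of $E$. For the ``blind'' operators $\cla,\cle,\mlc,\mld,\mli,\gneg$ this is essentially bookkeeping, because $\overline{\cdot}$ commutes with all of them (it only touches politerals) and these operators are handled by classical/CoL logic: from $\thr\vdash\mathbb{W}\mld\cla\overline{E_i}$ for the immediate subformulas $E_i$ one gets $\thr\vdash\mathbb{W}\mld\cla\overline{E}$ by \textbf{LC}, distributing the disjunction with $\mathbb{W}$ (a sentence) over the classical connective and over $\cla,\cle$. The genuinely interactive cases are the choice operators. For $E=E_0\add E_1$: by the induction hypothesis $\thr\vdash\mathbb{W}\mld\cla\overline{E_0}$, and from this $\thr\vdash\mathbb{W}\mld\cla(\overline{E_0}\add\overline{E_1})$ follows by \textbf{LC} (in the right disjunct, choose the left $\add$-component and then play $\overline{E_0}$); and $\cla(\overline{E_0}\add\overline{E_1})=\cla\overline{E}$. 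Dually, for $E=E_0\adc E_1$: here we need \emph{both} induction hypotheses; the strategy for the right disjunct $\cla(\overline{E_0}\adc\overline{E_1})$ waits for the environment to pick a component $i$, then — having meanwhile kept $\mathbb{W}$ available in the left disjunct — switches to playing $\cla\overline{E_i}$ according to its induction-hypothesis strategy. The quantifier cases $E=\ade x\,G(x)$ and $E=\ada x\,G(x)$ are analogous, using that $\overline{G(x)}=\overline{G}(x)$ and that the induction hypothesis for $G$ gives us what to do once a constant has been chosen for $x$. In each case the formal justification is a single application of \textbf{LC}, whose underlying $\cltw$-provability is routine (and, per the conventions of Section~\ref{sboot}, can be left to the reader or verified via Thesis~9.2 of \cite{cl12}).

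The step I expect to require the most care — though it is still not deep — is getting the interaction between the $\cla$-closure and the choice operators exactly right, since $\overline{E}$ may have free variables that become bound by $\cla$ in $\cla\overline{E}$ but are \emph{not} the variables quantified by an outermost $\ada$/$\ade$ in $E$; one must be careful that ``$\cla\overline{E} = \cla(\overline{E_0}\ast\overline{E_1})$'' and similar identities hold on the nose (equality of formulas, after the convention that $\cla$ closes over the free variables in lexicographic order), so that the \textbf{LC} steps connect the induction hypotheses to the goal without a variable mismatch. Apart from that, the proof is a straightforward structural induction; the substantive content of the completeness argument — the traceability lemma and the analysis of the two scenarios \eqref{ins3}/\eqref{ins2} sketched in Section~\ref{gggg} — is not needed for \emph{this} lemma, which is purely a logical fact about the $\overline{\cdot}$ transformation and will be one of the ingredients combined later (together with the traceability machinery) to conclude $\thr\vdash\overline{X}$.
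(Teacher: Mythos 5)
Your proposal is correct and takes essentially the same route as the paper: induction on the complexity of $E$, with the base case resting on the classical validity of the elementary sentence $\mathbb{W}\mld\cla(\mathbb{W}\mli L)$ (no actual interaction is involved there, since the formula is elementary) and every inductive case discharged by a single application of LC to the induction hypotheses for the immediate subformulas. The only cosmetic differences are that the paper has no separate $\gneg$/$\mli$ cases (negation is applied only to atoms, so $\overline{\cdot}$ need not "commute" with them) and that it invokes both induction hypotheses for $\add$ where you observe one suffices.
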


\begin{proof} Induction on the complexity of $E$. The base, which is about the cases where $E$ is a literal $L$, is straightforward, as then 
$\mathbb{W}\mld \cla \overline{E}$ is the classically valid $\mathbb{W}\mld \cla (\mathbb{W}\mli L)$.
If $E$ has the form $H_0\mlc H_1$, $H_0\mld H_1$, $H_0\adc H_1$ or $H_0\add H_1$ then, by the induction hypothesis, $\thr$ proves $\mathbb{W} \mld  \cla \overline{H_0}$ and $\mathbb{W} \mld \cla \overline{H_1}$, from which $ \mathbb{W} \mld  \cla \overline{E}$ follows by LC.   Similarly, if $E$ has the form 
$\cla xH(x)$, $\cle x H(x)$, $\ada xH(x)$ or $\ade xH(x)$, then, by the induction hypothesis,  $\thr$ proves $\mathbb{W} \mld  \cla \overline{H(x)}$, from which  $\mathbb{W} \mld  \cla \overline{E}$ follows by LC.
\end{proof}

\subsection{Configurations}\label{sds}
Let us  fix 
$\mathfrak{y}\label{xyyy}$
 as the number of work tapes of $\mathcal X$, and 
$\mathfrak{d}\label{x5dd}$
 as  the maximum possible number of labmoves in any legal run of $X$ (the {\em depth} of $X$). 

For the rest of Section \ref{s19}, by a {\bf configuration}\label{xconf} we shall mean a description of what intuitively can be thought of as the ``current'' situation at some step of $\mathcal X$. Specifically, such a description consists of the following $7$ pieces of information: 
\begin{enumerate}
  \item  The state of $\mathcal X$.
  \item  A $\mathfrak{y}$-element array of the contents of the corresponding   $\mathfrak{y}$ work tapes of $\mathcal X$.
  \item The content of $\mathcal X$'s buffer.
  \item The content of $\mathcal X$'s run tape. 
  \item A $\mathfrak{y}$-element array of the locations of the corresponding $\mathfrak{y}$ work-tape heads of $\mathcal X$.
  \item The location of the run-tape head of $\mathcal X$. 
  \item The string  that $\mathcal X$  put into its buffer on the transition to the ``current'' configuration from the predecessor configuration;  if there is no predecessor configuration, then such a string is empty.  
\end{enumerate}\smallskip 

\noindent Notice a difference between our present meaning of ``configuration'' (of $\mathcal X$) and the normal meaning of this word as given in \cite{cl12}. Namely, the piece of information from item 7 is not normally part of a configuration, as this information 
is not really necessary in order to be able to find the next configuration.  

It also is important to point out that any possible combination of any possible settings of the above $7$ parameters is considered to be a configuration, regardless of whether such settings can actually be reached in some computation branch of $\mathcal X$ or not. For this reason, we shall use the adjective {\bf reachable}\label{xreaco} to characterize those configurations that can actually be reached. 

 We fix some reasonable encoding of configurations. For technical convenience, we assume that every  configuration has a unique code, and vice versa: every natural number is the code of some unique configuration. With this one-to-one correspondence in mind, we will routinely identify configurations with their codes. Namely, for a number $c$, instead of saying ``the configuration encoded by $c$'', we may simply say ``the configuration $c$''. ``The state of $c$'', or ``$c$'s state'',   will mean the state of the machine $\mathcal X$ in configuration $c$ --- i.e., the $1$st one of the above-listed $7$ components of $c$. Similarly for the other components of a configuration, such as tape or buffer contents and scanning head locations.  

By the {\bf background}\label{xcbgsd} of a configuration $c$ we shall mean the greatest of the magnitudes of the $\oo$-labeled moves on $c$'s run tape, or $0$ if there are no such moves. 

The following definition, along with the earlier fixed constant $\mathfrak{d}$, involves the constants $\mathfrak{m}$ and $\mathfrak{p}$ introduced  later in Section \ref{smp}.

\begin{defi}\label{dcor}
We say that a configuration $c$ is {\bf uncorrupt}\label{xuncog} iff, where $\Gamma$ is the position spelled on $c$'s run tape, $\alpha$ is the string found in $c$'s buffer and $\ell$ is the background of $c$, all of the following conditions are satisfied:
\begin{enumerate}
  \item $\Gamma$ is a legal position of $X$.
  \item $\ell\leq \mathfrak{a}(\ell)\mlc |\mathfrak{t}(\ell)|\leq \mathfrak{s}(\ell)\leq \mathfrak{a}(\ell)\leq \mathfrak{t}(\ell)$. 
  \item  $|\mathfrak{m}|\leq \mathfrak{s}(\ell)$, where $\mathfrak{m}$ is as in (\ref{pp1}).
  \item  $|\mathfrak{d}(\mathfrak{a}(\ell)+\mathfrak{p}+1)+1|\leq \mathfrak{s}(\ell)$, where $\mathfrak{d}$ is as at the beginning of Section \ref{sds}  
and $\mathfrak{p}$ is as in (\ref{pp2}).
 \item The number of non-blank cells  on any one of the work tapes of $c$ does not exceed $\mathfrak{s}(\ell)$.
 \item There is no $\pp$-labeled move in $\Gamma$ whose magnitude exceeds $\mathfrak{a}(\ell)$. 
  \item If $\alpha$ is nonempty, then there is a string  $\beta$ such that $\seq{\Gamma,\pp\alpha\beta}$ is a legal position of $X $ and the magnitude of the move $\alpha\beta$ does not exceed $\mathfrak{a}(\ell)$. 
\end{enumerate} 
As expected, ``{\bf corrupt}''\label{xncog} means ``not uncorrupt''.  If $c$ merely satisfies condition 1 of Definition \ref{dcor}, then we say that $c$ is {\bf semiuncorrupt}.\label{xscrff} 
\end{defi} 

We define the {\bf yield}\label{xyield}  of a semiuncorrupt configuration   $c$ as the game $\seq{\Gamma}! X$, where $\Gamma$ is the position spelled on $c$'s run tape. 

Let $c,d$ be two configurations and $k$ a natural number. We say that $d$ is a {\bf $k$th unadulterated successor}\label{xus} of $c$ iff there is a sequence $a_0,\ldots,a_k$ ($k\geq 0$) of configurations such that $a_0=c$, $a_k=d$ and, for each $i\in\{1,\ldots,k\}$, we have: (1) $a_i$ is a legitimate successor of (possible next configuration immediately after) $a_{i-1}$, and 
(2) $a_i$'s   run tape content is the same as that of $a_{i-1}$. Note that every configuration $c$ has at most one $k$th unadulterated successor. The latter 
is the configuration to which $c$ evolves within $k$ steps/transitions in the scenario where Environment does not move, as long as $\mathcal X$ does not move in that scenario either (otherwise, if $\mathcal X$ moves, $c$ has no $k$th unadulterated successor).   Also note that every configuration $c$ has a $0$th unadulterated successor, which is $c$ itself. 

For simplicity and without loss of generality, we shall assume that the work-tape alphabet of $\mathcal X$ --- for each of its work tapes --- consists of just $0$, $1$ and \blank, and that the leftmost cells of the work tapes never contain a $0$.\footnote{If not, $\mathcal X$ can be easily modified using rather standard techniques so as to satisfy this condition without losing any of the relevant properties  of the old $\mathcal X$. The same can be said about the additional assumptions made in the following paragraph.}  Then, remembering from \cite{cl12} that an HPM never writes a $\blank$ and never moves its head past the leftmost blank cell, the content of a given work tape at any given time can be understood as the bitstring $b_{n-1},\ldots,b_{0}$, where $n$ is the number of  non-blank cells on the tape\footnote{If $n=0$, then the string $b_{n-1},\ldots,b_{0}$ is empty.} and, for each $i\in\{1,\ldots,n\}$,  $b_{n-i}$ is the bit written in the $i$th cell of the tape
(here the cell count starts from $1$, with the $1$st cell being the leftmost cell of the tape). We agree to consider the number represented by such a string --- i.e., the number $b_{n-1}\times 2^{n-1}+b_{n-2}\times 2^{n-2}+\ldots+b_{1}\times 2^{1}+b_{0}\times 2^{0}$ --- to be the code of the corresponding content of the work tape.  As with configurations, we will routinely identify work-tape contents with their codes. 

For further simplicity and again without loss of generality, we assume that, on  any transition, $\mathcal X$ puts at most one symbol into its buffer. We  shall further assume that, on a transition to a move state,  $\mathcal X$ never repositions any of its scanning heads and never modifies the content of any of its work tapes.  

\subsection{The white circle and black circle notations}
For the rest of this paper we agree that, whenever $\tau(z)$ is a unary pterm but we write $\tau(\vec{x})$ or $\tau(x_1,\ldots,x_n)$,  
it is to be understood as an abbreviation of the pterm $\tau\bigl(\max(x_1,\ldots,x_n)\bigr)$. By convention, if $n=0$, $\max(x_1,\ldots,x_n)$ is considered to be $0$. And if we write $\tau|\vec{x}|$, it is to be understood as $\tau(|x_1|,\ldots,|x_n|)$. 

Let $E(\vec{s})$ be a   formula all of whose  free variables are among  $\vec{s}$ (but not necessarily vice versa), and $z$ be a variable not among $\vec{s}$. We 
 will write   \[E^\circ(z,\vec{s})\label{xecr}\] 
to denote an elementary formula whose free variables are $z,\vec{s}$,  and which is a natural arithmetization of the predicate that,  for any constants $a,\vec{c}$ in the roles of $z,\vec{s}$, holds (that is, $E^\circ(a,\vec{c})$ is true) iff $a$ is a reachable uncorrupt configuration whose yield is $E(\vec{c})$ and whose background does not exceed $\max(\vec{c})$.
Further,  we 
 will write   \[E^{\bullet} (z,\vec{s})\label{xecrb}\] 
to denote an elementary formula whose free variables are $z,\vec{s}$, and which  is a natural arithmetization of the predicate that,  for any constants $a,\vec{c}$  in the roles of $z,\vec{s}$, holds iff $E^\circ(a,\vec{c})$ is true and $a$ has a $(\mathfrak{t}|\vec{c}|)$th unadulterated successor. 

Thus, while $E^\circ(a,\vec{c})$ simply says that the formula $E(\vec{c})$ is the yield of the (reachable, uncorrupt and $\leq \max(\vec{c})$-background) configuration $a$, the stronger $E^\bullet(a,\vec{c})$ additionally asserts that such a yield $E(\vec{c})$ is {\em persistent}, in the sense that, unless the adversary moves, $\mathcal X$ does not move --- and hence the yield of $a$ remains the same $E(\vec{c})$ --- for at least  $\mathfrak{t}|\vec{c}|$ steps beginning from $a$.

We say that a formula $E$  is {\bf critical}\label{icritical} iff one of the following conditions is satisfied:
\begin{itemize}
\item $E$ is of the form $G_0\add G_1$ or $\ade y G$;
\item $E$ is of the form $\cla y G$ or $\cle y G$, and $G$ is critical;
\item $E$ is of the form  $G_0\mld G_1$, and both $G_0$ and $G_1$ are critical;
\item $E$ is of the form  $G_0\mlc G_1$, and at least one of $G_0,G_1$ is critical.
\end{itemize}

\begin{lemma}\label{august20b}
Assume $E(\vec{s})$ is a non-critical  formula all of whose  free variables are among  $\vec{s}$. 
Then
\[\pa\vdash \cla\bigl(E^{\bullet}(z,\vec{s})  \mli \elz{\overline{E(\vec{s})}}\bigr).\] 
\end{lemma}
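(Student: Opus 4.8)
The plan is to argue by induction on the complexity of the non-critical formula $E(\vec{s})$, mirroring the recursive clauses in the definition of ``critical''. Fix constants $a,\vec{c}$ and suppose $E^\bullet(a,\vec{c})$ holds; by definition of the black-circle notation, $a$ is a reachable, uncorrupt configuration with background $\leq\max(\vec{c})$, whose yield is $E(\vec{c})$, and which has a $(\mathfrak{t}|\vec{c}|)$th unadulterated successor --- meaning $\mathcal{X}$ does not move for at least $\mathfrak{t}|\vec{c}|$ steps from $a$ unless the environment does. First I would recall that, since $\mathbb{W}$ is a \emph{true} sentence, $\overline{E(\vec{c})}$ is obtained from $E(\vec{c})$ by replacing each politeral $L$ with $\mathbb{W}\mli L$, and these are equivalent in the standard model; hence $\elz{\overline{E(\vec{c})}}$ is true iff $\elz{E(\vec{c})}$ is (after accounting for how $\mathbb{W}$'s truth makes each modified politeral equivalent to the original). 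So it essentially suffices to show that the elementarization $\elz{E(\vec{c})}$ of the yield of $a$ is true, and then transfer this through the overline operation.

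The heart of the argument is the base case and the handling of the connectives. \emph{Base case:} $E$ is a literal $L$. Being non-critical, $E$ cannot be of the form $G_0\add G_1$ or $\ade y\,G$, so this case is genuinely a literal. Its elementarization is $L$ itself. The key point is that $\mathcal{X}$, being a provident $(\mathfrak{a},\mathfrak{s},\mathfrak{t})$ tricomplexity solution as asserted by $\mathbb{W}_1$ (and $\mathbb{W}$ includes $\mathbb{W}_1$ together with (\ref{aprili20})), wins $X$; since $a$ is reachable and uncorrupt with yield $L$, and $\mathcal{X}$ does not move from $a$ (by the unadulterated-successor clause, the play stays at $L$ unless Environment moves, and if Environment never moves the run is complete with final position yielding $L$), $\mathcal{X}$ can only win if $L$ is true --- otherwise $\mathcal{X}$ would lose the play that hits the false $L$. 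Thus $L$ is true, so $\elz{E(\vec{c})}$ is true. \emph{Inductive clauses:} If $E=G_0\mlc G_1$ with, say, $G_0$ non-critical (since $E$ is non-critical, at least one of $G_0,G_1$ must be non-critical, and I would pick a non-critical conjunct), then the yield $E(\vec{c})$ being uncorrupt and reachable with a $(\mathfrak{t}|\vec{c}|)$th unadulterated successor implies the same for the relevant subconfiguration viewing $G_0(\vec{c})$ as the yield --- here one uses that the elementarization of a conjunction is the conjunction of elementarizations, and that the yield of $a$ ``contains'' $G_0$ as a subgame that $\mathcal{X}$ must also be winning. Actually the cleaner route: $\elz{G_0\mlc G_1}=\elz{G_0}\mlc\elz{G_1}$, and by providence/winning, \emph{both} conjuncts' elementarizations are forced true when the configuration is persistent; I invoke the induction hypothesis on the non-critical $G_i$ (or directly argue both are true). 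If $E=G_0\mld G_1$ with both $G_i$ non-critical, then $\elz{E}=\elz{G_0}\mld\elz{G_1}$; since $\mathcal{X}$ wins and makes no move, at least one disjunct's elementarization is true --- but because both $G_i$ are non-critical (no $\add$ or $\ade$ inside), no further interaction is possible, so the true disjunct makes $\elz{E}$ true. If $E=\cla y\,G$ with $G$ non-critical, then $\elz{\cla y\,G}=\cla y\,\elz{G}$; for each value $b$ of $y$, there is a corresponding reachable uncorrupt persistent configuration with yield $G(b,\vec{c})$ (Environment's choice of $b$ does not disturb uncorruptness or the time margin, as the background only grows to $\max(b,\vec{c})$ and the bounds were set up monotonically), so the induction hypothesis gives $\elz{G(b,\vec{c})}$ true; hence $\cla y\,\elz{G}$ is true. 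The case $E=\cle y\,G$ with $G$ non-critical is similar, using that $\mathcal{X}$'s winning on the $\cle$ forces the existential witness. In each case, once $\elz{E(\vec{c})}$ is established, $\elz{\overline{E(\vec{c})}}$ follows because $\mathbb{W}$ is true and $\overline{\cdot}$ only weakens each politeral $L$ to $\mathbb{W}\mli L$, which is true whenever $L$ is.

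\textbf{Main obstacle.} The delicate point is the $\cla y$ (and $\cle y$) clause: I must check that when Environment specifies a constant $b$ for $y$ in the yield of $a$, the resulting configuration is still uncorrupt, still reachable, and still has a $(\mathfrak{t}|\vec{c}|)$th unadulterated successor \emph{with the right bound}, i.e.\ that $E^\bullet$ of the appropriate subformula holds at the new $\max(b,\vec{c})$. This requires the monotonicity of $\mathfrak{a},\mathfrak{s},\mathfrak{t}$ and the regularity conditions (\ref{aprili20}), and a careful bookkeeping of how the background of a configuration changes when Environment adds a move $\#b$ --- in particular, $b$'s magnitude may now exceed the old background but this is precisely allowed by the uncorruptness conditions since we re-evaluate at $\max(b,\vec{c})$. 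The other subtlety is ensuring that ``non-critical'' is exactly the right hypothesis to make the $\mld$ and $\cla/\cle$ cases go through without needing to track $\mathcal{X}$'s future moves: criticality is engineered so that a non-critical formula, once its yield is reached and is persistent, \emph{cannot} require $\mathcal{X}$ to make any further choice move, so winning collapses to plain truth of the elementarization. Verifying that this informal ``no further choice move is needed'' claim is faithfully captured by the arithmetization $E^\bullet$ and the definition of yield is the part that demands the most care, but it is essentially a matter of unwinding the definitions rather than a deep argument; everything else is a routine structural induction backed by $\pa$'s ability to reason about the (bounded-quantifier-depth) truth predicate for relevant parasentences, as already discussed in the construction of $\mathbb{W}_1$.
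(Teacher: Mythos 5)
Your plan goes off the rails at the very first move: you propose to show that $\elz{E(\vec{c})}$ itself is true, invoking the fact that ``$\mathbb{W}$ is a \emph{true} sentence'' and hence that $\mathcal{X}$ wins $X$. But the lemma asserts a \emph{$\pa$-provable} implication, and $\pa$ does not prove $\mathbb{W}$; indeed $E^\bullet(a,\vec{c})$ is merely an arithmetized statement about what one reachable configuration of $\mathcal{X}$ looks like, and it carries no information at all about whether $\mathcal{X}$ actually wins. Consequently $E^\bullet(z,\vec{s})\mli\elz{E(\vec{s})}$ is not $\pa$-provable (and is false in models where $\mathbb{W}$ fails), and the whole point of the overline construction is to make the correct target $\elz{\overline{E(\vec{s})}}$ provable without appealing to $\mathbb{W}$'s truth. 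The paper sidesteps this by arguing by contradiction \emph{inside} $\pa$: assume $E^\bullet(z,\vec{s})$ and $\neg\elz{\overline{E(\vec{s})}}$; since each politeral of $\elz{\overline{E}}$ is $\mathbb{W}\mli L$, which is weaker than $L$, the falsity of $\elz{\overline{E}}$ forces the falsity of $\elz{E}$; combining this with $E^\bullet$ one derives (in $\pa$) that $\mathcal{X}$ either never moves after $z$ and therefore loses, or moves too late and violates $\mathfrak{t}$, so in either case $\neg\mathbb{W}$; and then a short structural induction shows that once $\neg\mathbb{W}$ is in hand, $\elz{\overline{G}}$ is true for \emph{every} non-critical $G$ (each politeral becomes the vacuously true $\mathbb{W}\mli L$), contradicting the assumption. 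The induction that is actually needed is this last, easy one under $\neg\mathbb{W}$ --- not your proposed game-theoretic induction on the yield.

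A secondary but genuine confusion is your treatment of the $\cla y$ and $\cle y$ cases. These are blind quantifiers: they involve no moves, Environment never ``chooses a value $b$'', and there is no ``corresponding reachable configuration with yield $G(b,\vec{c})$''. The yield stays $\cla y\,G(y,\vec{c})$; elementarization simply commutes with $\cla$ and $\cle$, and the truth of $\cla y\,\elz{\overline{G}}$ must be argued by pure logic (in the paper's argument, trivially from $\neg\mathbb{W}$ via the induction hypothesis), not by tracking new configurations or backgrounds. The bookkeeping you flag as the ``main obstacle'' --- adjusting $\max(b,\vec{c})$, preserving uncorruptness under a new Environment move --- belongs to the treatment of choice quantifiers in other lemmas (e.g.\ Lemma~\ref{l6}), not here, and its appearance in your plan is a symptom of the same underlying misstep of trying to read game-theoretic information out of $E^\bullet$ that simply isn't encoded there.
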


\begin{proof} Assume the conditions of the lemma. Argue in $\pa$. Consider arbitrary $(\cla$) values of $z$ and $\vec{s}$, which we continue writing as $z$ and $\vec{s}$. Suppose, for a contradiction, that $E^{\bullet}(z,\vec{s})$ is true but $\elz{\overline{E(\vec{s})}}$ is false.   The falsity of $\elz{\overline{E(\vec{s})}}$ implies the falsity of $\elz{E(\vec{s})}$. This is so because the only difference between the two formulas is that, wherever the latter has some politeral $L$, the former has $\mathbb{W}\mli L$.  

The truth of $E^{\bullet}(z,\vec{s})$ implies that, at some point of some actual play,  $\mathcal X$ reaches the configuration $z$,  where $z$ is uncorrupt, the yield of $z$ is $E(\vec{s})$, the background of $z$ is at most $\max(\vec{s})$ and, 
 in the scenario where Environment does not move, $\mathcal X$ does not move either for at least $\mathfrak{t}|\vec{s}|$ steps afterwards. If $\mathcal X$ does not move even after $\mathfrak{t}|\vec{s}|$ steps, then it has lost the game, because the eventual position hit by the latter is $E(\vec{s})$ and the elementarization of $E(\vec{s})$  is false (it is not hard to see that every such game is indeed lost). And if $\mathcal X$ does make a move sometime after $\mathfrak{t}|\vec{s}|$ steps, then, as long as $\mathfrak{t}$ is monotone (and if not, $\mathbb{W}$ is false), $\mathcal X$ violates the time complexity bound $\mathfrak{t}$, because the background of that move does not exceed $\max(\vec{s})$  but the timecost is greater than 
 $\mathfrak{t}|\vec{s}|$. In either case we have:
\begin{equation}\label{jan4b}
\mbox{\em $\mathbb{W}$ is false.}
\end{equation}

Consider any non-critical   formula $G$.  By induction on the complexity of $G$, we are going to show that $\elz{\overline {G}}$ is true for any ($\cla$) values of its free variables. Indeed:
\begin{itemize}
\item If $G$ is a literal, then  $\elz{\overline {G}}$ is $\mathbb{W}\mli G$ which, by (\ref{jan4b}), is true. 

\item If $G$ is $H_0\adc H_1$ or $\ada xH(x)$, then $\elz{\overline {G}}$ is $\twg$ and is thus true.

\item $G$ cannot be $H_0\add H_1$ or $\ade xH(x)$, because then it would be critical. 

\item If $G$ is $\cla yH(y)$ or $\cle y H(y)$, then $\elz{\overline {G}}$ is $\cla y\elz{\overline{H(y)}}$ or $\cle y\elz{\overline{H(y)}}$, where $H(y)$ is non-critical. In either case $\elz{\overline{G}}$ is true because, by the induction hypothesis, $\elz{\overline{H(y)}}$ is true for every value of its free variables, including variable $y$.

\item If $G$ is $H_0\mlc H_1$, then both  $H_0$ and $H_1$ are non-critical. Hence, by the induction hypothesis, both $\elz{\overline{H_0}}$ and 
$\elz{\overline{H_1}}$ are  true. Hence so is  $\elz{\overline{H_0}}\mlc\elz{\overline{H_1}}$ which, in turn, is nothing but $\elz{\overline{G}}$. 

\item Finally, if  $G$ is $H_0\mld H_1$, then one of the formulas $H_i$ is non-critical. Hence, by the induction hypothesis, $\elz{\overline{H_i}}$ is  true. Hence so is  $\elz{\overline{H_0}}\mld\elz{\overline{H_1}}$ which, in turn, is nothing but $\elz{\overline{G}}$.
\end{itemize}\smallskip

\noindent Thus, for any non-critical formula $G$, $\elz{\overline {G}}$ is true. This includes the case $G= E(\vec{s}) $  which, however,  contradicts our assumption that $\elz{\overline{E(\vec{s})}}$ is false.
\end{proof}

\begin{lemma}\label{august20a}
Assume $E(\vec{s})$ is a  critical   formula all of whose  free variables are among  $\vec{s}$. Then
\begin{equation}\label{m13a}
\thr\vdash   \cle E^{\bullet} (z,\vec{s})  \mli \cla\overline{E(\vec{s})}.
\end{equation} 
\end{lemma}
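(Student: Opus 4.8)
The plan is to prove Lemma~\ref{august20a} by induction on the complexity of the critical formula $E(\vec{s})$, running in parallel with (and making essential use of) the earlier Lemma~\ref{august20b} for the non-critical case and Lemma~\ref{jan4d}. The structure of the induction mirrors the clause-by-clause definition of ``critical'': the base case is when $E$ is $G_0\add G_1$ or $\ade yG$, and the inductive cases handle $\cla yG$, $\cle yG$ (with $G$ critical), $G_0\mld G_1$ (both $G_i$ critical), and $G_0\mlc G_1$ (at least one $G_i$ critical). Throughout, I argue inside $\thr$, and the key resource I will repeatedly extract is: from a hypothetical configuration $z$ with $E^{\bullet}(z,\vec{s})$, I can \emph{trace the work of $\mathcal X$} for $\mathfrak{t}|\vec{s}|$ steps using $\mathcal R$-Induction together with the bootstrapped arithmetic of Section~\ref{sboot}. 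Since $E$ is critical and $E^{\bullet}(z,\vec{s})$ guarantees the yield of $z$ is persistent (unless the environment moves), within the time budget either $\mathcal X$ makes a move that refines the yield along the relevant $\add$/$\ade$ choice, or $\mathcal X$ never moves — but in the latter case, as in Lemma~\ref{august20b}, $\mathbb{W}$ must be false, so $\overline{E(\vec{s})}$ becomes trivially solvable (choose anything in the choice components, since every politeral $\mathbb{W}\mli L$ is true).

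For the base case $E=G_0\add G_1$: given a configuration $z$ with $E^{\bullet}(z,\vec{s})$, I trace $\mathcal X$'s computation from $z$; by providence and the persistence encoded in $E^{\bullet}$, within $\mathfrak{t}|\vec{s}|$ steps $\mathcal X$ either stays put (so $\mathbb{W}$ false, handled as above via Lemma~\ref{jan4d} which gives $\mathbb{W}\mld\cla\overline{E}$) or makes a move selecting component $i$, reaching a configuration $z'$ whose yield is $G_i(\vec{s})$. I then need $G_i^{\circ}(z',\vec{s})$, and to obtain the \emph{bulleted} version $G_i^{\bullet}(z',\vec{s})$ I trace a further $\mathfrak{t}|\vec{s}|$ steps — here the regularity condition that $\mathcal R\tim$ is (at least polynomial and) closed under the relevant operations, so that $2\mathfrak{t}$ is dominated by something in $\mathcal R\tim$, is what makes the double-tracing legitimate. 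Whether $G_i$ is critical or not, I can invoke the appropriate lemma (Lemma~\ref{august20a} by the induction hypothesis if critical, Lemma~\ref{august20b} if not) to conclude $\cla\overline{G_i(\vec{s})}$, and since $\overline{G_0(\vec{s})}\add\overline{G_1(\vec{s})}$ is $\overline{E(\vec{s})}$, an LC step finishes. The case $E=\ade yG$ is analogous: the move selects a constant $c$ for $y$, and, exactly as sketched in Section~\ref{gggg}, even though $c$ may be ``oversized'' relative to $\mathfrak{s}|\vec{s}|$, each bit of $c$ is recoverable while tracing, so Comprehension assembles $c$ itself; then I recurse on $G(c)$.

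The inductive cases are lighter. For $E=\cla yG$ or $\cle yG$ with $G$ critical: $E^{\bullet}(z,\vec{s})$ with $y$ among $\vec{s}$ (after instantiation) gives $G^{\bullet}(z,\vec{s})$, the induction hypothesis yields $\cle G^{\bullet}\mli\cla\overline{G}$, and since the yield of a configuration does not interact with blind quantifiers (these are inherited from $X$ and just passed along), $\cla\overline{E}$ follows by LC from $\cla\overline{G}$. For $E=G_0\mlc G_1$ with, say, $G_0$ critical: from $E^{\bullet}(z,\vec{s})$ I extract $G_0^{\bullet}(z,\vec{s})$ (same configuration, the yield of a conjunction ``projects'' to the yield of a conjunct), apply the hypothesis to get $\cla\overline{G_0(\vec{s})}$, and combine it with $\mathbb{W}\mld\cla\overline{G_1(\vec{s})}$ from Lemma~\ref{jan4d}; if $\mathbb{W}$ holds we are in trouble only if $G_1$'s part fails, but then one checks $\overline{G_1}$ is anyway handled — more carefully, I split on $\mathbb{W}$ via the $\mld$ and in the $\mathbb{W}$-true branch use that $\cla\overline{G_1(\vec{s})}$ is then forced (the elementarization argument of Lemma~\ref{august20b} applied to $G_1$, or directly Lemma~\ref{jan4d}), so $\cla\overline{G_0(\vec{s})}\mlc\cla\overline{G_1(\vec{s})}$, whence $\cla\overline{E(\vec{s})}$. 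For $E=G_0\mld G_1$ with both critical: here the configuration $z$'s yield being a disjunction means $\mathcal X$'s subsequent move (if any, within $\mathfrak{t}|\vec{s}|$) commits to one disjunct $G_i$; tracing identifies $i$, gives $G_i^{\bullet}$ at the successor, and the hypothesis gives $\cla\overline{G_i(\vec{s})}$, hence $\cla\overline{E(\vec{s})}$ by LC; if no move occurs, $\mathbb{W}$ is false and we again finish via Lemma~\ref{jan4d}.

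The main obstacle I expect is the \textbf{tracing lemma} itself — formalizing, and proving by $\mathcal R$-Induction, that from a configuration $z$ with a persistent yield one can in $\thr$ follow $\mathcal X$'s evolving partial configurations for $\mathfrak{t}|\vec{s}|$ steps, correctly detecting the first move (and, when that move is a $\add$/$\ade$ choice, correctly reading off the chosen component or extracting the chosen constant bit by bit). This is where the results of Section~\ref{sboot} (addition, subtraction, multiplication, bit-manipulation, $\lfloor\cdot/2\rfloor$, $\bitsum$) and the precise bookkeeping of Section~\ref{sds} (semiuncorrupt configurations, unadulterated successors, the space-bounded partial description $v$ versus the full configuration) all get used, and where the regularity conditions on $\mathcal R$ — space at least logarithmic, amplitude at least linear, time at least polynomial and suitably closed — must be invoked to guarantee that the bounds on $v$, on work-tape contents, and on the doubled time budget all stay inside $\mathcal R\spa$, $\mathcal R\amp$, $\mathcal R\tim$ respectively. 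I anticipate this tracing machinery is developed as a separate central lemma (the ``traceability'' lemma alluded to in Section~\ref{gggg}) that both Lemma~\ref{august20b} and the present lemma draw on; modulo that lemma, the induction above is a routine, if tedious, case analysis.
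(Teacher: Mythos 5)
There is a genuine gap, and it stems from a misreading of what $E^{\bullet}(z,\vec{s})$ asserts. By definition, $E^{\bullet}(z,\vec{s})$ says that $z$ is a reachable uncorrupt configuration with yield $E(\vec{s})$ \emph{and} that $z$ has a $(\mathfrak{t}|\vec{s}|)$th unadulterated successor, which by the definition of unadulterated successor means precisely that $\mathcal X$ makes \emph{no} move for at least $\mathfrak{t}|\vec{s}|$ steps after $z$ (if $\mathcal X$ moved, the later unadulterated successors would not exist). So your central case split --- ``within the time budget either $\mathcal X$ makes a move refining the yield, or $\mathcal X$ never moves'' --- is vacuous here: the first alternative is already excluded by the antecedent. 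The move-detection and yield-refinement work you describe belongs to Lemma \ref{l5} (which starts from $\tilde{E}^{\circ}$, not $E^{\bullet}$), not to this lemma. Moreover, your plan to ``trace the work of $\mathcal X$ from $z$'' inside $\thr$ cannot get started, because $z$ in (\ref{m13a}) is bound by the \emph{blind} quantifier $\cle$: a $\thr$-strategy has no access to the value of $z$, which is exactly why the senior lemmas carry the configuration data around in the $\ade$-bounded form $\mathbb{D}^{\epsilon}_{\mbox{\tiny $\sqcup$}}$ rather than relying on a blind existential.

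The paper's proof is far shorter and avoids all of this. A simple metalevel induction on the definition of ``critical'' shows that the elementarization of any critical formula is false (the $\add$/$\ade$ clauses elementarize to $\tlg$, and the other clauses propagate falsity); hence $\elz{E(\vec{s})}$ is false for all values of $\vec{s}$. Then, exactly as in the derivation of (\ref{jan4b}) in Lemma \ref{august20b}: if $E^{\bullet}(z,\vec{s})$ holds, $\mathcal X$ sits for $\mathfrak{t}|\vec{s}|$ steps on a position whose elementarization is false, so it either never moves (and loses) or moves too late (violating the time bound $\mathfrak{t}$, since the background is at most $\max(\vec{s})$); either way $\mathbb{W}$ is false. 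This argument formalizes in $\pa$, giving $\pa\vdash\cle E^{\bullet}(z,\vec{s})\mli\gneg\mathbb{W}$, and combining this with Lemma \ref{jan4d} ($\mathbb{W}\mld\cla\overline{E(\vec{s})}$) by LC yields (\ref{m13a}). No structural induction on $E$ inside $\thr$, no traceability lemma, and none of the regularity-of-$2\mathfrak{t}$ considerations you invoke are needed; the entire content of the lemma is the elementary implication to $\gneg\mathbb{W}$ plus Lemma \ref{jan4d}.
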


\begin{proof} Assume the conditions of the lemma. By induction on complexity, one can easily see that the $\cle$-closure of the elementarization of any critical formula is false. Thus, for whatever ($\cla$) values of $\vec{s}$, $\elz{E(\vec{s})}$ is false. Arguing further as we did in the proof of Lemma \ref{august20b} when deriving (\ref{jan4b}), we find that,  if  $E^{\bullet} (z,\vec{s})$ is true for whatever ($\cle$) values of $z$ and $\vec{s}$,  then $\mathbb{W}$ is false. And this argument can be formalized in $\pa$, so  we have 
$\pa\vdash\cle E^{\bullet} (z,\vec{s})\mli \gneg \mathbb{W}.$
This, together with Lemma \ref{jan4d}, can be easily seen to imply (\ref{m13a}) by LC.   
\end{proof}

\subsection{Titles}\label{smp}
A {\bf paralegal move}\label{xprlmv} means a string $\alpha$ such that, for some (possibly empty) string $\beta$, position $\Phi$  and player 
$\xx\in\{\pp,\oo\}$, $\seq{\Phi,\xx\alpha\beta}$ is a legal position of $X$.  
In other words, a paralegal move is  
 a  prefix of some move of some legal run of $X$. Every paralegal move $\alpha$ we divide into two parts, called 
the {\bf header}\label{xhdr} and the {\bf numer}.\label{xnmrte} Namely, if $\alpha$ does not contain the symbol $\#$, then $\alpha$ is its own header, with the numer being $0$
(i.e., the empty bit string); and if $\alpha$ is of the form $\beta\# c$, then its header is $\beta\#$ and its numer 
is $c$. When we simply say ``{\bf a header}'', it is to be understood as ``the header of some paralegal move''. Note that, unlike numers, there are only finitely many headers. For instance, if $X$ is $\ade x\hspace{1pt}p\mlc \ada y(q\add r)$ where $p,q,r$ are elementary formulas,  then the  headers are  $0.\#$, $1.\#$, $1.0$, $1.1$ and their proper prefixes --- nine strings altogether. 

Given a configuration $x$, by the {\bf title}\label{xtit} of $x$ we shall mean a partial description of $x$ consisting of the following four pieces of information, to which we shall refer as {\bf titular components}:\label{xtitc}
\begin{enumerate}
  \item  $x$'s state.
  \item The header of the move spelled in $x$'s buffer.   
  \item The string  put into the buffer on the transition to $x$ from its predecessor configuration;  if $x$ has no predecessor configuration, then such a string is  empty. 
  \item The list $\xx_1\alpha_1,\ldots,\xx_n\alpha_m$, where $m$ is the total number of labmoves on $x$'s run tape and, for each $i\in\{1,\ldots,m\}$, $\xx_i$ and $\alpha_i$ are the label ($\pp$ or $\oo$) and the header of the $i$th labmove.  
\end{enumerate}  
We say that a title is {\bf buffer-empty}\label{xbet} if its 2nd titular component is the empty string. 

Obviously there are infinitely many titles, yet only finitely many of those are titles of semiuncorrupt configurations. We fix an infinite, recursive list 
\[\mbox{\em Title}_0, \mbox{\em Title}_1, \mbox{\em Title}_2,   \ldots, \mbox{\em Title}_\mathfrak{k}, 
\mbox{\em Title}_{\mathfrak{k}+1},\mbox{\em Title}_{\mathfrak{k}+2} \ldots,\mbox{\em Title}_\mathfrak{m},\mbox{\em Title}_{\mathfrak{m}+1},\mbox{\em Title}_{\mathfrak{m}+2}\ldots\label{x5m}\]
--- together with the natural numbers  $1\leq \mathfrak{k}\leq \mathfrak{m}$ ---  
of all titles without repetitions, where $\mbox{\em Title}_0$ through  $\mbox{\em Title}_{\mathfrak{m}-1}$ 
(and only these titles) are titles of 
 semiuncorrupt configurations, with  $\mbox{\em Title}_0$ through  $\mbox{\em Title}_{\mathfrak{k}-1}$ (and only these titles) being 
buffer-empty titles of semiuncorrupt configurations.
 By the {\bf titular number}\label{xtnb} of a given configuration $c$ we shall mean the number $i$ such that $\mbox{\em Title}_i$ is $c$'s title. 

We may and will assume that, where $\mathfrak{p}$ is the size of the longest header, $\mathfrak{m}$ is as above and $\mathfrak{d}$ is as at the beginning of Section \ref{sds}, 
$\pa$ proves the following sentences:
\begin{eqnarray}
&\mathbb{W}\mli \cla x\bigl(|\hat{\mathfrak{m}}|\leq \mathfrak{s}(x)\bigr);& \label{pp1}\\
& \mathbb{W}\mli\cla x\bigl(|\hat{\mathfrak{d}}(\mathfrak{a}(x)+\hat{\mathfrak{p}}+\hat{1})+\hat{1}|\leq \mathfrak{s}(x)\bigr).& \label{pp2}
\end{eqnarray}
Indeed, if this is not the case,  we can replace $\mathfrak{s}(x)$ with $\mathfrak{s}(x)+\ldots+\mathfrak{s}(x)+\hat{k}$, $\mathfrak{a}(x)$ with $\mathfrak{a}(x)+\ldots+\mathfrak{a}(x)+\hat{k}$ and $\mathfrak{t}(x)$ with $\mathfrak{t}(x)+\ldots+\mathfrak{t}(x)+\hat{k}$, where ``$\mathfrak{s}(x)$'', ``$\mathfrak{a}(x)$'' and ``$\mathfrak{t}(x)$'' are repeated $k$ times, 
 for some sufficiently large $k$. Based on (\ref{aprili20}) and Definition \ref{dt},  one can see that, with these new values of $\mathfrak{a},\mathfrak{s},\mathfrak{t}$  and the corresponding new value of $\mathbb{W}$, (\ref{pp1}) and (\ref{pp2}) become provable while no old relevant properties of the triple are lost, such as $\mathcal X$'s being a provident $(\mathfrak{a},\mathfrak{s},\mathfrak{t})$ tricomplexity solution of $X$,  $(\mathfrak{a},\mathfrak{s},\mathfrak{t})$'s being a member of ${\mathcal R}\amp\times{\mathcal R}\spa\times{\mathcal R}\tim$, or the satisfaction of  (\ref{aprili20}).

\subsection{Further notation}\label{bignot}
Here is a list of additional notational conventions.  Everywhere below:  
$x,u,z,t$ range over natural numbers; 
  $n\in\{0,\ldots, \mathfrak{d}\}$;  
  $\vec{s}$ abbreviates an $n$-tuple $s_1,\ldots,s_n$  of variables ranging over natural numbers; 
  $\vec{v}$ abbreviates a $(2\mathfrak{y}+3)$-tuple $v_1,\ldots,v_{2\mathfrak{y}+3}$ of variables ranging over natural numbers;
   ``$|\vec{v}| \leq\mathfrak{s}|\vec{s}|$'' abbreviates $|v_1|\leq \mathfrak{s}|\vec{s}|\mlc \ldots\mlc |v_{2\mathfrak{y}+3}|\leq \mathfrak{s}|\vec{s}|$;  and 
  ``$\ade |\vec{v}| \leq\mathfrak{s}|\vec{s}| $'' abbreviates $\ade | v_{1}| \leq\mathfrak{s}|\vec{s}|\ldots \ade | v_{2\mathfrak{y}+3}| \leq\mathfrak{s}|\vec{s}|$.  Also, we identify informal statements or predicates with their natural arithmetizations.

\begin{enumerate}[widest=10]  
  \item $\mathbb{N}(x,z)$\label{xnnn} states that configuration $x$ does not have a corrupt $k$th unadulterated successor for any 
$k\leq z$. 
  \item $\mathbb{D}(x,\vec{s},\vec{v})$\label{x41} is a $\mlc$-conjunction of the following statements: 
 \begin{enumerate}  
  \item ``There are exactly $n$ (i.e., as many as the number of variables in $\vec{s}$) labmoves on configuration $x$'s run tape and, for each $i\in\{1,\ldots,n\}$, if the $i$th (lab)move is numeric, then $s_i$ is its numer''.
  \item ``$v_1$ is the location of $x$'s 1st work-tape head, \ldots, $v_{\mathfrak{y}}$ is the location of $x$'s $\mathfrak{y}$th work-tape head''.
  \item $v_{\mathfrak{y}+1}$ is the content of $x$'s  1st work tape, \ldots,  $v_{2\mathfrak{y}}$ is the content of $x$'s $\mathfrak{y}$th work tape''.
  \item ``$v_{2\mathfrak{y}+1}$ is the location of $x$'s run-tape head''.
  \item  ``$v_{2\mathfrak{y}+2}$ is the length of the numer of the move found in $x$'s buffer''.
  \item  ``$v_{2\mathfrak{y}+3}$ is $x$'s titular number, with $v_{2\mathfrak{y}+3}< \hat{\mathfrak{m}}$ (implying that $x$ is semiuncorrupt)''. 
\end{enumerate} 
\item $\mathbb{D}^\epsilon(x,\vec{s},\vec{v})$\label{x42} abbreviates $\mathbb{D}(x,\vec{s},\vec{v})\mlc v_{2\mathfrak{y}+3}<
\hat{\mathfrak{k}}$. 
  \item $\mathbb{D}_{\mbox{\tiny $\sqcup$}}(x,\vec{s})$\label{x43} and $\mathbb{D}^{\epsilon}_{\mbox{\tiny $\sqcup$}}(x,\vec{s})$ abbreviate 
$\ade |\vec{v}|\leq\mathfrak{s}|\vec{s}|\mathbb{D}(x,\vec{s},\vec{v})$ and $\ade |\vec{v}|\leq\mathfrak{s}|\vec{s}|\mathbb{D}^\epsilon(x,\vec{s},\vec{v})$, respectively.
\item $\mathbb{U}(x,t,z,u)$\label{x44} says ``Configuration $t$ is a $u$th  unadulterated successor of configuration $x$, and $u$ is the greatest number not exceeding $z$ such that $x$ has a $u$th  unadulterated successor''.
\item $\mathbb{U}_{\mbox{\tiny $\sqcup$}}^{\vec{s}}(x,t,z)$\label{x45} abbreviates  $\ade |u|\leq \mathfrak{s}|\vec{s}|\mathbb{U}(x,t,z,u)$. 
\item $\mathbb{U}_{\mbox{\tiny $\exists$}}^{\vec{s}}(x,t)$\label{x46} abbreviates  $\cle u\mathbb{U}(x,t,\mathfrak{t}|\vec{s}|,u)$. 
\item  $\mathbb{Q}(\vec{s},z)$\label{x47} abbreviates $\cla x\Bigl[\mathbb{D}_{\mbox{\tiny $\sqcup$}}(x,\vec{s})\mli  \gneg \mathbb{N}(x,z)\add\Bigl(\mathbb{N}(x,z)\mlc \cle t \bigl(\mathbb{U}_{\mbox{\tiny $\sqcup$}}^{\vec{s}}(x,t,z)\mlc  \mathbb{D}_{\mbox{\tiny $\sqcup$}}(t,\vec{s})\bigr)\Bigr)\Bigr] $.
\item $\mathbb{F}(x,y)$\label{x48} says ``$y$ is the numer of the move found in configuration $x$'s buffer''. 
\item $\tilde{E}^\circ(\vec{s})$\label{x49} abbreviates $\cle x\bigl({E}^\circ(x,\vec{s})\mlc \mathbb{D}^{\epsilon}_{\mbox{\tiny $\sqcup$}} (x,\vec{s})\bigr)$.
\item $\tilde{E}^{\bullet}(\vec{s})$\label{x50} abbreviates 
$\cle x\bigl({E}^{\bullet}(x,\vec{s})\mlc \mathbb{D}^{\epsilon}_{\mbox{\tiny $\sqcup$}} (x,\vec{s})\bigr)$.
\end{enumerate}

\subsection{Scenes}\label{sscn}

In this subsection  and later, unless otherwise suggested by the context, $n$, $\vec{s}$, $\vec{v}$ are as stipulated in Section \ref{bignot}.

Given a configuration $x$, by the {\bf scene}\label{xtitsc} of $x$ we shall mean a partial description of $x$ consisting of 
the following two pieces of information for the run tape and each of the work tapes of $x$: 
\begin{itemize}
  \item  The symbol scanned by the scanning head of the tape.  
  \item An indication  (yes/no) of whether the scanning head  is located at the  beginning of the tape.
\end{itemize}  
Take a note of the obvious fact that the number of all possible scenes is finite. We let $\mathfrak{j}$ denote that number, and 
let us correspondingly  fix a list 
\[\mbox{\it Scene}_1,\label{xscn}\ldots, \mbox{\it Scene}_{\mathfrak{j}}\]
of all scenes. Also, for each $i\in\{1,\ldots,\mathfrak{j}\}$, we let $\mbox{\it Scene}_i(x)$ be a natural formalization of the predicate ``$\mbox{\it Scene}_i$ is the scene of configuration $x$''.

According to the following lemma, information on $x$ contained in $\mathbb{D}(x,\vec{s},\vec{v})$ is sufficient to determine 
(in $\thr$) the scene of $x$. 

\begin{lemma}\label{sclemma}
$\thr$ proves 
\begin{equation}\label{lksw}
\cla x \bigl(\mathbb{D}(x,\vec{s},\vec{v})\mli \mbox{\it Scene}_1(x)\add\ldots\add \mbox{\it Scene}_{\mathfrak{j}}(x)\bigr).
\end{equation}
\end{lemma}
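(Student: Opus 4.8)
The plan is to prove the statement by constructing, within $\thr$, a strategy that reads off the scene of a configuration $x$ from the data recorded in $\mathbb{D}(x,\vec{s},\vec{v})$. Recall that $\mathbb{D}(x,\vec{s},\vec{v})$ records the locations of all work-tape heads ($v_1,\ldots,v_\mathfrak{y}$), the contents of all work tapes ($v_{\mathfrak{y}+1},\ldots,v_{2\mathfrak{y}}$, encoded as numbers via the bit-coding agreed in Section \ref{sds}), the run-tape head location ($v_{2\mathfrak{y}+1}$), the length $v_{2\mathfrak{y}+2}$ of the buffer numer, and the titular number $v_{2\mathfrak{y}+3}$ of $x$ (which, together with $v_{2\mathfrak{y}+2}$, pins down the entire run-tape content, since the title lists the labels and headers of all labmoves and $\vec{s}$ supplies their numers, while the buffer content is determined by the title's second and third components together with $v_{2\mathfrak{y}+2}$). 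A scene is just the finite packet consisting of, for the run tape and each work tape, the currently scanned symbol and a yes/no flag for whether the head sits at the leftmost cell. So the whole task is to compute finitely many bits from the numbers $\vec{s},\vec{v}$.

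First I would argue in $\thr$ and, treating $\vec{s},\vec{v}$ as the constants chosen by Environment, handle each tape separately. For a work tape $j$: the ``head at beginning'' flag is decided by testing whether $v_j = 0$, which is available by Trichotomy (Fact \ref{tri}). The scanned symbol is the $v_j$th least significant bit of the content-code $v_{\mathfrak{y}+j}$ if that position is within $|v_{\mathfrak{y}+j}|$, and $\blank$ otherwise; the condition ``$v_j < |v_{\mathfrak{y}+j}|$'' is decidable using the Log axiom together with Fact \ref{tri}, and when it holds the bit itself is delivered by the Bit axiom. For the run tape: the ``at beginning'' flag is the test $v_{2\mathfrak{y}+1} = 0$ via Fact \ref{tri}; the scanned symbol is a bit of the run-tape content, and since the run-tape content is a fixed, effectively computable function of $v_{2\mathfrak{y}+3}$ (the titular number), $v_{2\mathfrak{y}+2}$ (the numer length), and $\vec{s}$ (the numers themselves) — all recoverable from $\mathbb{D}$ — one more application of the Bit axiom, preceded by the arithmetic bookkeeping that locates the relevant bit position, yields it. Having obtained this finite list of yes/no answers, there is a unique index $i\in\{1,\ldots,\mathfrak{j}\}$ with $\mbox{\it Scene}_i$ matching, and we choose that $\add$-disjunct in the consequent of (\ref{lksw}); the correctness of the choice is a $\pa$-provable fact about the encodings, so the LC step closing the argument is routine. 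Formally, this description is translated into a $\thr$-proof via Thesis 9.2 of \cite{cl12}, exactly as with the bootstrapping facts of Section \ref{sboot}.

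The only point that needs a little care — and what I expect to be the main obstacle — is verifying that all the intermediate quantities stay within the complexity budget, i.e. that the induction/comprehension-free reasoning here genuinely goes through as a legitimate $\thr$-argument rather than merely an informal one. Concretely, one must check that the bit positions and auxiliary numbers computed above (the position of the run-tape-head cell inside the encoded run-tape content, say) are bounded by $\pa$-provable polynomial-in-$|\vec s|$ expressions, so that the relevant instances of the Successor, Log, Bit axioms and of Facts \ref{plus}, \ref{tri}, \ref{mult}, \ref{minus} apply; this is where clauses 2–4 of Definition \ref{dt} and the hypothesis $|\vec{v}|\leq\mathfrak{s}|\vec{s}|$ implicit in the surrounding development do the work, guaranteeing that $|v_i|$ is itself $\preceq {\mathcal R}\spa$ and hence that everything computed about tape $i$ fits. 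Once that is observed, there is no genuine difficulty: the statement is, at bottom, the assertion that ``a finite packet of bits is a computable function of the data in $\mathbb{D}$'', which clearly has a strategy of the required shape, and the $\pa$-provability of the matching fact about scene codes is of the ``true arithmetical statement a high-schooler could check'' variety that $\pa$ settles.
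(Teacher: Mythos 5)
Your proposal is correct and takes essentially the same approach as the paper's proof: test $v_j=0$ via Fact \ref{tri} for the ``at beginning'' flag, compare $v_j$ against $|v_{\mathfrak{y}+j}|$ via the Log axiom and Fact \ref{tri} to decide blank-vs-bit, extract the bit via the Bit axiom, handle the run tape by similar arithmetic bookkeeping, and select the matching $\add$-disjunct by LC. Your closing worry about a ``complexity budget'' is unnecessary here, since the argument uses no $\mathcal R$-Induction or $\mathcal R$-Comprehension (the only rules carrying resource bounds) and the lemma carries no hypothesis $|\vec v|\leq\mathfrak{s}|\vec s|$, but this does not affect the correctness of the proof.
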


\begin{proof} Recall that $\vec{s}$ is the tuple $s_1,\ldots,s_n$ and $\vec{v}$ is the tuple 
$v_1,\ldots,v_{2\mathfrak{y}+3}$. Argue in $\thr$. Consider an arbitrary ($\cla$) configuration $x$,
keeping in mind  --- here and later in similar contexts --- that we do not really know the (``blind'') value of $x$. Assume  
$\mathbb{D}(x,\vec{s},\vec{v})$ is true, for otherwise (\ref{lksw}) will be won no matter how we (legally) act.  

Consider the $1$st work tape of $\mathcal X$. According to $\mathbb{D}(x,\vec{s},\vec{v})$, $v_1$ is the location of the corresponding scanning head in configuration $x$. Using Fact \ref{tri}, we figure out whether $v_1=0$. This way we come to know whether the  
scanning head of the tape is located at the  beginning 
of the tape.  Next, we know that 
$v_{\mathfrak{y}+1}$ is the content of $x$'s  1st work tape. Using the Log axiom and Fact \ref{tri}, we compare $|v_{\mathfrak{y}+1}|$ with $v_1$. If  $v_1\geq |v_{\mathfrak{y}+1}|$, we conclude that the symbol scanned by the head is $\blank$. And if 
$v_1< |v_{\mathfrak{y}+1}|$, then the symbol is either a $0$ or $1$; which of these two is the case depends on whether $\bit(v_{\mathfrak{y}+1},v_1)$ is true  or false;  we make such a determination using the Bit axiom. 

The other work tapes are handled similarly.

Finally, consider the run tape. We figure out whether $x$'s run-tape scanning head is looking at the leftmost cell of the tape by comparing $v_{2\mathfrak{y}+1}$ with $0$. The task of finding the symbol scanned by the scanning head in this case is less straightforward than in the case of the work tapes, but still doable in view of our ability to perform the basic arithmetic operations established in Section \ref{sboot}. We leave  details to the reader.

The information obtained by now fully determines which of $\mbox{\it Scene}_1,\ldots, \mbox{\it Scene}_{\mathfrak{j}}$  is the scene of $x$. We win (\ref{lksw}) by choosing the corresponding $\add$-disjunct in the consequent. 
\end{proof}

\subsection{The traceability lemma}\label{sml}

\begin{lemma}\label{l1}
$\thr\vdash z\leq \mathfrak{t}|\vec{s}| \mli  \mathbb{Q}(\vec{s},z)$. 
\end{lemma}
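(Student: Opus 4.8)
The plan is to prove the statement by $\mathcal{R}$-Induction on $z$, with the induction formula being $\mathbb{Q}(\vec{s},z)$ itself and the induction bound being $\mathfrak{t}(\vec{s})$, which lies in ${\mathcal R}\tim$ as required. Before invoking the induction rule I must check that $\mathbb{Q}(\vec{s},z)$, viewed as a formula of $z$, is ${\mathcal R}\spa$-bounded: every $\ade$-subformula occurring in it is of the shape $\ade|\vec v|\leq\mathfrak{s}|\vec s|$ or $\ade|u|\leq\mathfrak{s}|\vec s|$, and by clause 2 of Definition \ref{dt} the space bound $\mathfrak{s}$ is at least logarithmic, so these are legitimate ${\mathcal R}\spa$-bounds; I would spell out this bookkeeping once. (The $\cle$-quantifiers hidden inside $\mathbb{D}_{\mbox{\tiny $\sqcup$}}$-unfoldings and inside $\gneg\mathbb{N}$ are blind, hence harmless for ${\mathcal R}\spa$-boundedness.) So the two premises to establish are the basis $\mathbb{Q}(\vec s,0)$ and the inductive step $\mathbb{Q}(\vec s,z)\mli\mathbb{Q}(\vec s,z\successor)$; in fact it is cleaner to use Reasonable $\mathcal R$-Induction (Fact \ref{rr26}) so that in the step we may additionally assume $z<\mathfrak{t}|\vec s|$.

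For the \emph{basis}, I argue in $\thr$: fix an arbitrary (blind) configuration $x$ and assume $\mathbb{D}_{\mbox{\tiny $\sqcup$}}(x,\vec s)$, which gives us constants $\vec v$ with $|\vec v|\leq\mathfrak{s}|\vec s|$ and $\mathbb{D}(x,\vec s,\vec v)$ true — in particular $x$ is semiuncorrupt (titular number $<\mathfrak m$). Since the only $k\leq 0$ is $k=0$, the $0$th unadulterated successor of $x$ is $x$ itself, so $\mathbb{N}(x,0)$ amounts to ``$x$ is uncorrupt''. Using the data in $\mathbb{D}(x,\vec s,\vec v)$ together with the arithmetic machinery of Section \ref{sboot} (Trichotomy, the Log/Bit axioms, and — for checking work-tape cell counts against $\mathfrak{s}$ and the conditions involving $\mathfrak m,\mathfrak d,\mathfrak p$ from Definition \ref{dcor}, which are $\pa$-decidable from $\vec v,\vec s$) I can decide each of the finitely many clauses of uncorruptness, obtaining $\mathbb{N}(x,0)\add\gneg\mathbb{N}(x,0)$; here condition 2 of Definition \ref{dcor} is available because $\mathbb{W}$ entails (\ref{aprili20}), and conditions 3–4 because of (\ref{pp1})–(\ref{pp2}). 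If $\gneg\mathbb{N}(x,0)$ we choose the left disjunct of $\mathbb{Q}$'s body; if $\mathbb{N}(x,0)$ we choose the right one, taking $t:=x$ and $u:=0$ — then $\mathbb{U}(x,x,0,0)$ holds, so $\mathbb{U}_{\mbox{\tiny $\sqcup$}}^{\vec s}(x,x,0)$ is won (the witness $u=0$ satisfies $|0|\leq\mathfrak s|\vec s|$), and $\mathbb{D}_{\mbox{\tiny $\sqcup$}}(x,\vec s)$ holds by the very assumption, so we reuse the same $\vec v$.

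For the \emph{inductive step}, assume $\mathbb{Q}(\vec s,z)$ and (by Reasonable Induction) $z<\mathfrak{t}|\vec s|$; fix a blind $x$ with $\mathbb{D}_{\mbox{\tiny $\sqcup$}}(x,\vec s)$, giving $\vec v$ with $\mathbb{D}(x,\vec s,\vec v)$. Feeding $x$ into $\mathbb{Q}(\vec s,z)$ yields the disjunction $\gneg\mathbb N(x,z)\add\bigl(\mathbb N(x,z)\mlc\cle t(\mathbb U_{\mbox{\tiny $\sqcup$}}^{\vec s}(x,t,z)\mlc\mathbb D_{\mbox{\tiny $\sqcup$}}(t,\vec s))\bigr)$. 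In the first case $x$ already has a corrupt $k$th unadulterated successor for some $k\leq z\leq z\successor$, so $\gneg\mathbb N(x,z\successor)$ and we are done. In the second case we obtain a configuration $t$, the $u$th unadulterated successor of $x$ for the maximal $u\leq z$, together with $\vec v\,'$, $|\vec v\,'|\leq\mathfrak s|\vec s|$, witnessing $\mathbb D(t,\vec s,\vec v\,')$; moreover $t$ is uncorrupt (else $\gneg\mathbb N(x,z)$). Now the key move: simulate \emph{one} transition of $\mathcal X$ out of $t$ in the no-Environment-move scenario. This is possible inside $\thr$ because, by Lemma \ref{sclemma}, $\mathbb D(t,\vec s,\vec v\,')$ determines the scene of $t$, and the scene plus the state (read off the titular number) determines $\mathcal X$'s transition; using the basic-arithmetic facts of Section \ref{sboot} I compute the updated head positions, the (at most one) new work-tape symbol, the buffer update, and the new titular number — i.e. I compute the tuple $\vec v\,''$ describing the successor configuration $t'$ (or detect that $\mathcal X$ moved, in which case $u$ was already maximal and $t'=t$ stays put, so $t$ is the $(z\successor)$th unadulterated successor too). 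I must then decide $\mathbb N(t',z\successor)$ — but since $t$ was uncorrupt and $t'$ is its one-step successor, this reduces to deciding whether $t'$ itself is corrupt, done as in the basis; and I must verify $|\vec v\,''|\leq\mathfrak s|\vec s|$, which follows from the uncorruptness conditions (work-tape contents and head positions stay $\leq\mathfrak s$, the numer-length component is governed by the amplitude/space bounds via conditions 3–4, and the titular-number component is $<\mathfrak m$ with $|\mathfrak m|\leq\mathfrak s$ by (\ref{pp1})). If $t'$ is corrupt, choose $\gneg\mathbb N(x,z\successor)$; otherwise choose the right disjunct with witness $t'$, whose $\mathbb U_{\mbox{\tiny $\sqcup$}}^{\vec s}(x,t',z\successor)$ holds for $u'=u\successor\leq z\successor$ (again $|u'|\leq\mathfrak s|\vec s|$ since $u'\leq z\successor\leq\mathfrak t|\vec s|$ and $|\mathfrak t|\preceq\mathfrak s$ by (\ref{aprili20})), and $\mathbb D_{\mbox{\tiny $\sqcup$}}(t',\vec s)$ holds by the computed $\vec v\,''$.

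The main obstacle, and the part requiring genuine care, is the one-step simulation in the inductive step: showing that everything $\mathcal X$ needs to know to make its next transition from the partial description $\mathbb D(t,\vec s,\vec v\,')$ is in fact recoverable \emph{within the logarithmic space budget} $\mathfrak s|\vec s|$, and that the resulting $\vec v\,''$ again fits that budget. This is where Lemma \ref{sclemma} does the heavy lifting (the scene is finite information, so the transition function of $\mathcal X$ acts on it combinatorially) and where the uncorruptness conditions of Definition \ref{dcor} — tuned precisely via (\ref{aprili20}), (\ref{pp1}), (\ref{pp2}) — are exactly what is needed to keep all components of a configuration's $\vec v$-description short. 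The run-tape bookkeeping in $\mathbb D$ is mild here because in unadulterated successors the run tape does not change, so its head position and the recorded numers $\vec s$ are untouched; the only run-tape-related subtlety is the buffer-numer length $v_{2\mathfrak y+3}$, which changes by at most one per step and stays within the amplitude bound, hence within $\mathfrak s$.
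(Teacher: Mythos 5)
Your proof takes essentially the same route as the paper: Reasonable $\mathcal{R}$-Induction on $z$ with $\mathbb{Q}(\vec{s},z)$ as induction formula, the basis witnessed by $t=x$, $u=0$ after deciding the clauses of Definition~\ref{dcor} via Trichotomy, the Log/Bit axioms and the other results of Section~\ref{sboot}, and the inductive step retrieving the configuration from the induction hypothesis, simulating one transition of $\mathcal{X}$ using Lemma~\ref{sclemma} to recover the scene, verifying corruption of the successor, and updating the witness tuple and the index $u$ accordingly (the paper makes the $u<z$ versus $u=z$ split explicit, you fold it into detecting whether $\mathcal{X}$ moves from $t$, but the content is the same). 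Two small slips that do not affect the substance: the formula to be decided in the inductive step is $\mathbb{N}(x,z\successor)$, not $\mathbb{N}(t',z\successor)$ (your ensuing reduction shows you mean the former), and deciding the clauses of Definition~\ref{dcor} for a given configuration is done by direct computation and comparison and needs no appeal to $\mathbb{W}$, which is not a hypothesis of this lemma.
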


\begin{proof} Argue in $\thr$. 
We proceed by Reasonable $\mathcal R$-Induction on $z$. The {\em basis}  $\mathbb{Q}(\vec{s},0)$ abbreviates 
\[\cla x\Bigl(\mathbb{D}_{\mbox{\tiny $\sqcup$}}(x,\vec{s})\mli  \gneg \mathbb{N}(x,0)\add\bigl[\mathbb{N}(x,0)\mlc \cle t \bigl(\mathbb{U}_{\mbox{\tiny $\sqcup$}}^{\vec{s}}(x,t,0)\mlc  \mathbb{D}_{\mbox{\tiny $\sqcup$}}(t,\vec{s})\bigr)\bigr]\Bigr).\]
Solving it means solving the following problem for a blindly-arbitrary ($\cla$) $x$:
\[\mathbb{D}_{\mbox{\tiny $\sqcup$}}(x,\vec{s})\mli  \gneg \mathbb{N}(x,0)\add\bigl[\mathbb{N}(x,0)\mlc \cle t \bigl(\mathbb{U}_{\mbox{\tiny $\sqcup$}}^{\vec{s}}(x,t,0)\mlc  \mathbb{D}_{\mbox{\tiny $\sqcup$}}(t,\vec{s})\bigr)\bigr].\]
To solve the above, we wait till the adversary brings it down to 
\begin{equation}\label{yy11}  
|\vec{c}|\leq\mathfrak{s}|\vec{s}|\mlc \mathbb{D}(x,\vec{s},\vec{c})\mli  \gneg \mathbb{N}(x,0)\add\bigl[\mathbb{N}(x,0)\mlc \cle t \bigl(\mathbb{U}_{\mbox{\tiny $\sqcup$}}^{\vec{s}}(x,t,0)\mlc  \mathbb{D}_{\mbox{\tiny $\sqcup$}}(t,\vec{s})\bigr)\bigr]
\end{equation} 
for some $(2\mathfrak{y}+3)$-tuple $\vec{c}=c_1,\ldots,c_{2\mathfrak{y}+3}$ of constants. From now on we will assume that 
\begin{equation}\label{lkd}
|\vec{c}|\leq\mathfrak{s}|\vec{s}|\mlc \mathbb{D}(x,\vec{s},\vec{c})
\end{equation}
is true, for otherwise (\ref{yy11}) will be won no matter what. On this assumption, solving (\ref{yy11})  means solving its consequent, which disabbreviates as 
\begin{equation}\label{yy111}  
\gneg \mathbb{N}(x,0) \add \Bigl(\mathbb{N}(x,0)\mlc \cle t\bigl(\ade |r|\leq \mathfrak{s}|\vec{s}|\mathbb{U}(x,t,0,r)\mlc 
\ade|\vec{v}|\leq\mathfrak{s}|\vec{s}|\mathbb{D}(t,\vec{s},\vec{v})\bigr)\Bigr). 
\end{equation} 

In order to solve (\ref{yy111}), we first of all need to figure out whether $\mathbb{N}(x,0)$ is true. Even though we do not know the actual value of (the implicitly $\cla$-bounded) $x$, we do know that it satisfies (\ref{lkd}), and this is sufficient for our purposes. Note that $\mathbb{N}(x,0)$ is true iff $x$ is uncorrupt. So, it is sufficient to just go through the seven conditions of Definition \ref{dcor} and test their satisfaction.  From the $\mathbb{D}(x,\vec{s},\vec{c})$ conjunct of 
(\ref{lkd}), we know that $c_{2\mathfrak{y}+3}$ is $x$'s titular number. Therefore, $x$ is semiuncorrupt --- i.e., condition 1 of Definition \ref{dcor} is satisfied --- iff 
$c_{2\mathfrak{y}+3}< \hat{\mathfrak{m}}$. And whether    $c_{2\mathfrak{y}+3}<\hat{\mathfrak{m}}$ we can determine based on  Facts \ref{numerals} and \ref{tri}. Next, from the title $\mbox{\em Title}_{c_{2\mathfrak{y}+3}}$ of $x$, we can figure out which of the $n$ 
moves residing on $x$'s run tape are numeric. We look at the numers of such moves from among $s_{1},\ldots,s_{n}$ and, using Fact \ref{tri} several times, find the greatest numer $a$. After that, using the Log axiom, we find the background $\ell$ of $x$, which is nothing but $|a|$.  Knowing the value of $\ell$, we can now test the satisfaction of condition 2 of Definition \ref{dcor} based on clause 2 of Definition \ref{dadm}, 
the Log axiom and Fact \ref{tri}. Conditions 3 and 4 of Definition \ref{dcor} will be handled in a similar way. Next, from $c_{\mathfrak{y}+1},\ldots,c_{2\mathfrak{y}}$, we know the contents of the work tapes of $x$. This, in combination with the Log axiom,   allows us to determine the numbers of non-blank cells on those work tapes. Comparing 
those numbers with $\mathfrak{s}(\ell)$, 
we figure out whether condition 5 of Definition \ref{dcor} is satisfied. Checking the satisfaction of conditions 6 and 7 of Definition \ref{tri} is also a doable task, and we leave details to the reader. 
 
So, now we know whether $x$ is corrupt or not. If $x$ is corrupt,  we choose $\gneg\mathbb{N}(x,0)$ in (\ref{yy111}) and win. And if $x$ is uncorrupt, i.e., $\mathbb{N}(x,0)$ is true, then we bring (\ref{yy111}) down to
\[\mathbb{N}(x,0)\mlc \cle t\bigl(|0|\leq \mathfrak{s}|\vec{s}|\mlc\mathbb{U}(x,t,0,0)\mlc |\vec{c}|\leq\mathfrak{s}|\vec{s}|\mlc\mathbb{D}(t,\vec{s},\vec{c})\bigr) 
.\]
We win because the above is a  logical consequence of (\ref{lkd}), $\mathbb{N}(x,0)$ and the obviously true $|0|\leq \mathfrak{s}|\vec{s}|\mlc\mathbb{U}(x,x,0,0)$. 
The basis of our induction is thus proven.\vspace{3pt}

The {\em inductive step} is $z<\mathfrak{t}|\vec{s}|\mlc \mathbb{Q}(\vec{s},z)\mli \mathbb{Q}(\vec{s},z\successor)$, which partially disabbreviates as
\begin{equation}\label{y4}
\begin{array}{l}
z<\mathfrak{t}|\vec{s}|\mlc \cla x\Bigl(\mathbb{D}_{\mbox{\tiny $\sqcup$}}(x,\vec{s})\mli  \gneg \mathbb{N}(x,z)\add\bigl[\mathbb{N}(x,z)\mlc \cle t \bigl(\mathbb{U}_{\mbox{\tiny $\sqcup$}}^{\vec{s}}(x,t,z)\mlc  \mathbb{D}_{\mbox{\tiny $\sqcup$}}(t,\vec{s})\bigr)\bigr]\Bigr) \\ \mli
\cla x\Bigl(\mathbb{D}_{\mbox{\tiny $\sqcup$}}(x,\vec{s})\mli  \gneg \mathbb{N}(x,z\successor)\add\bigl[\mathbb{N}(x,z\successor)\mlc \cle t \bigl(\mathbb{U}_{\mbox{\tiny $\sqcup$}}^{\vec{s}}(x,t,z\successor)\mlc  \mathbb{D}_{\mbox{\tiny $\sqcup$}}(t,\vec{s})\bigr)\bigr]\Bigr).
\end{array}
\end{equation}
With some thought, (\ref{y4}) can be seen to be  a logical consequence of  
\[\begin{array}{l}
\cla x\cla t\Bigl[z<\mathfrak{t}|\vec{s}|\mlc \Bigl(  \gneg \mathbb{N}(x,z)\add\bigl[\mathbb{N}(x,z)\mlc  \bigl(\mathbb{U}_{\mbox{\tiny $\sqcup$}}^{\vec{s}}(x,t,z)\mlc  \mathbb{D}_{\mbox{\tiny $\sqcup$}}(t,\vec{s})\bigr)\bigr]\Bigr)  \\ \mli
 \gneg \mathbb{N}(x,z\successor)\add\bigl[\mathbb{N}(x,z\successor)\mlc \cle t \bigl(\mathbb{U}_{\mbox{\tiny $\sqcup$}}^{\vec{s}}(x,t,z\successor)\mlc  \mathbb{D}_{\mbox{\tiny $\sqcup$}}(t,\vec{s})\bigr)\bigr]\Bigr],
\end{array}\] 
so let us pick arbitrary ($\cla$) numbers $a$, $b$ in the roles of the $\cla$-bounded variables $x,t$ of the above expression and focus on  
\begin{equation}\label{yy44}
\begin{array}{l}
z<\mathfrak{t}|\vec{s}|\mlc \Bigl(  \gneg \mathbb{N}(a,z)\add\bigl[\mathbb{N}(a,z)\mlc  \bigl(\mathbb{U}_{\mbox{\tiny $\sqcup$}}^{\vec{s}}(a,b,z)\mlc  \mathbb{D}_{\mbox{\tiny $\sqcup$}}(b,\vec{s})\bigr)\bigr]\Bigr) \\ \mli
 \gneg \mathbb{N}(a,z\successor)\add\bigl[\mathbb{N}(a,z\successor)\mlc \cle t \bigl(\mathbb{U}_{\mbox{\tiny $\sqcup$}}^{\vec{s}}(a,t,z\successor)\mlc  \mathbb{D}_{\mbox{\tiny $\sqcup$}}(t,\vec{s})\bigr)\bigr].
\end{array} 
\end{equation}
To solve (\ref{yy44}), 
 we wait till the  $\add$-disjunction in its antecedent is resolved. If the adversary chooses the first $\add$-disjunct there,   we do the same  in the consequent and win, because $\gneg \mathbb{N}(a,z)$ obviously implies $\gneg \mathbb{N}(a,z\successor)$.  Now suppose the adversary chooses the second $\add$-disjunct in the antecedent. We wait further until 
 (\ref{yy44}) is brought down to 
\begin{equation}\label{446}
\begin{array}{l}
 z<\mathfrak{t}|\vec{s}|\mlc  \mathbb{N}(a,z)\mlc  |d|\leq \mathfrak{s}|\vec{s}|\mlc \mathbb{U}(a,b,z,d)\mlc |\vec{c}|\leq\mathfrak{s}|\vec{s}|\mlc \mathbb{D}(b,\vec{s},\vec{c})  \\ \mli
 \gneg \mathbb{N}(a,z\successor)\add\bigl[\mathbb{N}(a,z\successor)\mlc \cle t \bigl(\mathbb{U}_{\mbox{\tiny $\sqcup$}}^{\vec{s}}(a,t,z\successor)\mlc  \mathbb{D}_{\mbox{\tiny $\sqcup$}}(t,\vec{s})\bigr)\bigr]
\end{array}
\end{equation}
for some  constant $d$ and some $(2\mathfrak{y}+3)$-tuple $\vec{c}=c_1,\ldots,c_{2\mathfrak{y}+3}$
of constants. From now on we will assume that the antecedent 
\begin{equation}\label{gla}
 z<\mathfrak{t}|\vec{s}|\mlc  \mathbb{N}(a,z)\mlc  |d|\leq \mathfrak{s}|\vec{s}|\mlc \mathbb{U}(a,b,z,d)\mlc |\vec{c}|\leq\mathfrak{s}|\vec{s}|\mlc \mathbb{D}(b,\vec{s},\vec{c})  
\end{equation}
of (\ref{446}) is true, for otherwise we win (\ref{446}) no matter what. Our goal is to win the consequent of (\ref{446}), i.e., the game 
\begin{equation}\label{glc}
 \gneg \mathbb{N}(a,z\successor)\add\bigl[\mathbb{N}(a,z\successor)\mlc \cle t \bigl(\mathbb{U}_{\mbox{\tiny $\sqcup$}}^{\vec{s}}(a,t,z\successor)\mlc  \mathbb{D}_{\mbox{\tiny $\sqcup$}}(t,\vec{s})\bigr)\bigr].
\end{equation}

Using Fact \ref{tri}, we compare $d$ with $z$. The case $d>z$ is ruled out by our assumption (\ref{gla}), because it is inconsistent with the truth of $\mathbb{U}(a,b,z,d)$. 
If $d<z$,  we bring (\ref{glc}) down to 
\begin{equation}\label{fjs}
\mathbb{N}(a,z\successor)\mlc \cle t \bigl( |d|\leq \mathfrak{s}|\vec{s}|\mlc \mathbb{U}(a,t,z\successor,d)\mlc |\vec{c}|\leq\mathfrak{s}|\vec{s}|\mlc \mathbb{D}(t,\vec{s},\vec{c})\bigr),\end{equation}
which is a logical consequence of 
\begin{equation}\label{y44}
\mathbb{N}(a,z\successor)\mlc |d|\leq \mathfrak{s}|\vec{s}|\mlc \mathbb{U}(a,b,z\successor,d)\mlc |\vec{c}|\leq\mathfrak{s}|\vec{s}|\mlc \mathbb{D}(b,\vec{s},\vec{c}).
\end{equation}
This way we win, because (\ref{y44}) is true and hence so is (\ref{fjs}). Namely, the truth of  (\ref{y44}) follows from the truth of (\ref{gla}) in view of the fact that, on our assumption $d<z$, $\mathbb{U}(a,b,z,d)$ obviously implies $\mathbb{U}(a,b,z\successor,d)$ and 
$\mathbb{N}(a,z)$ implies $\mathbb{N}(a,z\successor)$.

Now suppose $d=z$. So, our resource (\ref{gla}) is the same as 
\begin{equation}\label{ggj}
 z<\mathfrak{t}|\vec{s}|\mlc  \mathbb{N}(a,z)\mlc  |z|\leq \mathfrak{s}|\vec{s}|\mlc \mathbb{U}(a,b,z,z)\mlc |\vec{c}|\leq\mathfrak{s}|\vec{s}|\mlc \mathbb{D}(b,\vec{s},\vec{c}).
\end{equation}
The $\mathbb{D} (b,\vec{s},\vec{c})$ component of  (\ref{ggj})  contains sufficient information on whether the  configuration  $b$ has any unadulterated successors other than itself.\footnote{Namely, $b$ has an unadulterated successor other than itself iff the state component of $b$ --- which can be found in $\mbox{\em Title}_{c_{2\mathfrak{y}+3}}$ --- is not a move state.}  
If not,  $\mathbb{N}(a,z)$ obviously implies $\mathbb{N}(a,z\successor)$ and $\mathbb{U}(a,b,z,z)$ implies $\mathbb{U}(a,b,z\successor,z)$; hence, (\ref{ggj}) implies 
\[ \mathbb{N}(a,z\successor)\mlc  |z|\leq \mathfrak{s}|\vec{s}|\mlc \mathbb{U}(a,b,z\successor,z)\mlc |\vec{c}|\leq\mathfrak{s}|\vec{s}|\mlc \mathbb{D}(b,\vec{s},\vec{c}),
\]
which, in turn, implies 
\begin{equation}\label{y8}
\mathbb{N}(a,z\successor)\mlc \cle t\bigl( |z|\leq \mathfrak{s}|\vec{s}|\mlc \mathbb{U}(a,t,z\successor,z)\mlc |\vec{c}|\leq\mathfrak{s}|\vec{s}|\mlc \mathbb{D}(t,\vec{s},\vec{c})\bigr).
\end{equation}
So, we win 
(\ref{glc}) by bringing it down to the true (\ref{y8}). 

 Now, for the rest of this proof, assume $b$ has unadulterated successors other than itself. From the $\mathbb{U}(a,b,z,z)$ conjunct of (\ref{ggj}) we also know that $b$ is a $z$th unadulterated successor of $a$.  Thus, a $(z+1)$st unadulterated successor of $a$ --- call it $e$ --- exists, implying the truth of 
\begin{equation}\label{y14}
\mathbb{U}(a,e,z\successor,z\successor).
\end{equation}

In order to solve (\ref{glc}), we want to find a tuple 
$\vec{d}=d_1,\ldots,d_{2\mathfrak{y}+3}$ of constants satisfying 
\begin{equation}\label{kantee}
\mathbb{D}(e,\vec{s},\vec{d})
\end{equation}
 --- that is, satisfying conditions 2(a) through $2(f)$ of Section \ref{bignot} with $e$, $\vec{s}$ and $\vec{d}$ in the roles of $x$, $\vec{s}$ and $\vec{v}$, respectively.  In doing so below, we shall rely on the truth of $\mathbb{D}(b,\vec{s},\vec{c})$ implied by (\ref{ggj}). 
We shall then also rely on our knowledge of the scene of $b$ obtained from $\mathbb{D}(b,\vec{s},\vec{c})$ based on Lemma \ref{sclemma}, and our knowledge of the  
state component of $b$ obtained from $c_{2\mathfrak{y}+3}$  (the $(2\mathfrak{y}+3)$rd constant of the tuple $\vec{c}$). 

First of all, notice that, no matter how we select  $\vec{d}$, condition 2(a)  of Section  \ref{bignot} is satisfied with $e$ in the role of $x$. This is so because, as implied by  $\mathbb{D}(b,\vec{s},\vec{c})$, that condition is satisfied with $b$ in the role of $x$, and $e$ is an unadulterated successor  of $b$, meaning that $b$ and $e$ have identical run-tape contents.  

From $ \mathbb{D}(b,\vec{s},\vec{c})$, we know that the location of $b$'s 1st work-tape head is $c_1$; based on our knowledge of the state and the scene of $b$, we can also figure out whether that tape's  scanning head moves to the right, to the left, or stays put  on the transition from $b$ to $e$. If it moves to the right, we  apply the Successor axiom and compute the value  $d_1$ to be $c_1\successor$.  If the head stays put or tries to move to the left while $c_1=0$ (whether $c_1=0$ we figure out using Fact \ref{tri}), we know that  $d_1=c_1$. Finally, if it moves to the left while $c_1\not=0$, then $d_1=c_1-1$, and we compute this value  using Facts \ref{numerals} and \ref{minus}.  
We find the constants $d_2,\ldots,d_{\mathfrak{y}}$ in a similar manner.

The values  $d_{\mathfrak{y}+1},\ldots,d_{2\mathfrak{y}}$  can be computed from $c_{\mathfrak{y}+1},\ldots,c_{2\mathfrak{y}}$ and our knowledge --- determined by $b$'s state and scene --- of the symbols written on $\mathcal X$'s work tapes on the transition from $b$ to $e$. If such a symbol was written in a previously non-blank cell (meaning that the size of the work tape content did not change), we shall rely on Fact \ref{br} in computing $d_{\mathfrak{y}+i}$ from $c_{\mathfrak{y}+i}$ ($1\leq i\leq \mathfrak{y}$), as the former is the result of changing one bit in the latter. Otherwise, if the new symbol was written in a previously blank (the leftmost blank) cell, then $d_{\mathfrak{y}+i}$ is either $c_{\mathfrak{y}+i}+c_{\mathfrak{y}+i}$ (if the written symbol is $0$) or $c_{\mathfrak{y}+i}+c_{\mathfrak{y}+i}+\hat{1}$ (if the written symbol is $1$); so, $d_{\mathfrak{y}+i}$ can be computed using Facts  \ref{numerals} and  \ref{plus}. 

We find the value $d_{2\mathfrak{y}+1}$ in a way similar to the way we found $d_1,\ldots,d_{\mathfrak{y}}$. 

From the state and the scene of $b$, we can also figure out whether the length of the numer of the string in the buffer has increased (by $1$) or not on the transition from $b$ to $e$. If not, we determine that 
$d_{2\mathfrak{y}+2}=c_{2\mathfrak{y}+2}$. If yes, then $d_{2\mathfrak{y}+2}=c_{2\mathfrak{y}+2}\successor$, which we compute using the Successor axiom. 

From the $\mathbb{N}(a,z)$ component of (\ref{ggj}) we know that configuration $a$ is uncorrupt and hence semiuncorrupt. From 
(\ref{y14}) we also know that $e$ is an unadulterated successor of $a$. As an unadulterated successor of a semiuncorrupt configuration,  $e$ obviously remains semiuncorrupt, meaning that its titular number $d_{2\mathfrak{y}+3}$ is an element of the set $\{0,\ldots,\mathfrak{m}-1\}$. Which of these $\mathfrak{m}$ values is precisely assumed by $d_{2\mathfrak{y}+3}$  is fully determined by the title and the scene of $b$, both of which we know.   
All $2\mathfrak{y}+3$ constants from the $\vec{d}$ group are now found.

As our next step, from (\ref{kantee}) --- from $\mathbb{D}(e,\vec{s},\vec{d})$, that is --- we figure out whether $e$ is corrupt in the same style as from $\mathbb{D}(x,\vec{s},\vec{c})$  we figured out whether $x$ was corrupt when building our strategy for (\ref{yy111}).  If $e$ is corrupt, we choose $\gneg \mathbb{N}(a,z\successor)$ in (\ref{glc}) and win. Now, for the rest of this proof, assume 
\begin{equation}\label{hgks}
\mbox{\it $e$ is uncorrupt.}\end{equation}
 Using the Successor axiom, we compute the value $g$ of $z\successor$ and then we bring (\ref{glc})  down to 
\begin{equation}\label{fjss}
\mathbb{N}(a,g)\mlc \cle t \bigl( |g|\leq \mathfrak{s}|\vec{s}|\mlc \mathbb{U}(a,t,g,g)\mlc |\vec{d}|\leq\mathfrak{s}|\vec{s}|\mlc \mathbb{D}(t,\vec{s},\vec{d})\bigr),\end{equation}
which is a  logical consequence of 
\begin{equation}\label{y44s}
\mathbb{N}(a,g)\mlc |g|\leq \mathfrak{s}|\vec{s}|\mlc \mathbb{U}(a,e,g,g)\mlc |\vec{d}|\leq\mathfrak{s}|\vec{s}|\mlc \mathbb{D}(e,\vec{s},\vec{d}).
\end{equation}

To declare victory, it remains to see that (\ref{y44s}) is true. The $3$rd and the $5$th conjuncts of 
(\ref{y44s}) are true because they are nothing but (\ref{y14}) and (\ref{kantee}), respectively. The $4$th conjunct can be seen to follow from (\ref{kantee}) and (\ref{hgks}). From (\ref{ggj}), we know that $z<\mathfrak{t}|\vec{s}|$, which implies $g\leq \mathfrak{t}|\vec{s}|$ and hence $|g|\leq |\mathfrak{t}|\vec{s}||$. Since $e$ is uncorrupt,  by clause 2 of Definition \ref{dcor}, we also have  $|\mathfrak{t}|\vec{s}||\leq \mathfrak{s}|\vec{s}|$. Thus, the second conjunct  of (\ref{y44s}) is also true. Finally, for the first conjunct of (\ref{y44s}), observe the following. 
According to (\ref{ggj}), $\mathbb{N}(a,z)$ is true, meaning that $a$ does not have a corrupt $k$th unadulterated successor for any $k$ with $k\leq z$.  By (\ref{hgks}), $e$ ---  
which is the $(z+1)$th unadulterated successor of $a$ --- is uncorrupt. Thus,  $a$ does not have a corrupt $k$th unadulterated successor for any $k$ with $k\leq z+1=g$. This means nothing but that  $\mathbb{N}(a,g)$ is true. 
 \end{proof}

\subsection{Junior lemmas}
\begin{lemma}\label{l3}
 $\thr\vdash  \ade z \bigl(z=\mathfrak{t}|\vec{s}|\mlc  \mathbb{Q}(\vec{s},z)\bigr)$. 
\end{lemma}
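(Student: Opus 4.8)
The plan is to derive Lemma~\ref{l3} as a routine combination of the traceability lemma (Lemma~\ref{l1}) with the $\thr$-provable availability of the value $\mathfrak{t}|\vec{s}|$. First I would invoke condition~2 of Definition~\ref{dadm}: since $\mathfrak{t}$ is a bound from ${\mathcal R}\tim$, the theory $\thr$ proves $\ade z(z=\mathfrak{t}|\vec{x}|)$, and hence (treating $\vec{s}$ as the implicitly $\ada$-bound variables and recalling the notational convention that $\mathfrak{t}|\vec{s}|$ abbreviates $\mathfrak{t}(\max(\vec{s}))$, whose computability follows from that of $|\cdot|$, $\max$ via Fact~\ref{tri}, and $\mathfrak{t}$) we have $\thr\vdash\ade z(z=\mathfrak{t}|\vec{s}|)$.

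Next I would argue in $\thr$. Using the resource just described, we find the constant $c$ with $c=\mathfrak{t}|\vec{s}|$. Now we plug $c$ for the variable $z$ in the conclusion of Lemma~\ref{l1}, i.e.\ we instantiate $z\leq\mathfrak{t}|\vec{s}|\mli\mathbb{Q}(\vec{s},z)$ at $z=c$, obtaining $c\leq\mathfrak{t}|\vec{s}|\mli\mathbb{Q}(\vec{s},c)$. The antecedent $c\leq\mathfrak{t}|\vec{s}|$ is a true elementary statement (indeed $c=\mathfrak{t}|\vec{s}|$), so by detaching it we get $\mathbb{Q}(\vec{s},c)$. Having both $c=\mathfrak{t}|\vec{s}|$ and $\mathbb{Q}(\vec{s},c)$ in hand, we choose $c$ for $z$ in the target $\ade z\bigl(z=\mathfrak{t}|\vec{s}|\mlc\mathbb{Q}(\vec{s},z)\bigr)$ and win, since the resulting $c=\mathfrak{t}|\vec{s}|\mlc\mathbb{Q}(\vec{s},c)$ holds. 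Formally, the target is a logical consequence (by LC) of $\ade z(z=\mathfrak{t}|\vec{s}|)$ and the instantiated-and-detached form of Lemma~\ref{l1}, so a single application of Logical Consequence finishes the argument.

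I do not expect any real obstacle here; the only mild subtlety is the standard bookkeeping about instantiating the special induction-style variable $z$ in Lemma~\ref{l1} by the concrete value $c$ (we are specifying $z$ as $c$, in the same sense as in the proofs of Fact~\ref{tri} and Lemma~\ref{bitsum}) and the usual care that $\mathbb{Q}(\vec{s},z)$, being written with free $\vec{s}$, is understood under its $\ada$-closure. Everything else is immediate from Lemma~\ref{l1}, condition~2 of Definition~\ref{dadm}, and LC.
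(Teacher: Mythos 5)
Your proposal is correct and takes essentially the same route as the paper's proof: obtain the value of $\mathfrak{t}|\vec{s}|$ inside $\thr$ (via Fact~\ref{tri} for the maximum, the Log axiom, and condition 2 of Definition~\ref{dadm}), plug that value in for $z$ in Lemma~\ref{l1}, and win the target by choosing the same value for $z$. The only cosmetic difference is that where you say ``detach the true antecedent and conclude by LC'', the paper explicitly describes the copy-cat synchronization between the consequent $\mathbb{Q}(\vec{s},b)$ of the instantiated resource and the second conjunct of the target, which is what that LC step amounts to.
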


\begin{proof} Argue in $\thr$. Using Fact \ref{tri} several times, we find the greatest number $s$ among $\vec{s}$. Then, relying on the Log axiom and condition 2 of Definition \ref{dadm},
we compute  the value $b$ of $\mathfrak{t}|s|$.
Specifying $z$ as  $b$ in the resource provided by  Lemma \ref{l1}, we bring the latter down to 
\begin{equation}\label{dsw}
b\leq \mathfrak{t}|\vec{s}| \mli \mathbb{Q}(\vec{s},b).
\end{equation}
 Now, the target $\ade z \bigl(z=\mathfrak{t}|\vec{s}|\mlc  \mathbb{Q}(\vec{s},z)\bigr)$ is won by specifying $z$ as $b$, and then synchronizing the second conjunct of the resulting  $b=\mathfrak{t}|\vec{s}|\mlc \mathbb{Q}(\vec{s},b)$ with the consequent of (\ref{dsw}) --- that is, acting in the former exactly as the provider of (\ref{dsw}) acts in the latter, and ``vice versa'':  acting in the latter as Environment acts in former. 
\end{proof}
  
For the purposes of the following two lemmas, we agree that  $\mbox{\em Nothing}\hspace{1pt}(t,q)$\label{xnothing} is an elementary formula asserting that the numer $c$ of the move found in configuration $t$'s buffer  does not have a $q$th most significant bit (meaning that either $q=0$ or $|c|<q$). Next, 
$\mbox{\em Zero}\hspace{1pt}(t,q)$\label{xzero} means ``$\gneg \mbox{\em Nothing}\hspace{1pt}(t,q)$ and the $q$th most significant bit of the numer of the move found in $t$'s buffer is a $0$''. Similarly,  
$\mbox{\em One}\hspace{1pt}(t,q)$\label{xone} means ``$\gneg \mbox{\em Nothing}\hspace{1pt}(t,q)$ and the $q$th most significant bit of the numer of the move found in $t$'s buffer is a $1$''. 

\begin{lemma}\label{l4a}
 $\thr$ proves 
\begin{equation}\label{exta}
\begin{array}{l}
z\leq \mathfrak{t}|\vec{s}|\mli  
\cla x\cla t \bigl(\mathbb{N}(x,z)\mlc |\vec{v}|\leq \mathfrak{s}|\vec{s}|\mlc \mathbb{D}^\epsilon(x,\vec{s},\vec{v})\mlc 
  \mathbb{U}(x,t,z,z)\mli \\
 \mbox{Nothing}\hspace{1pt}(t,q)\add \mbox{Zero}\hspace{1pt}(t,q)\add \mbox{One}\hspace{1pt}(t,q) \bigr).
\end{array}
\end{equation}
\end{lemma}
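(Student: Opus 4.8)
{\em Proof proposal.} The plan is to argue in $\thr$ and proceed by Reasonable $\mathcal R$-Induction on $z$, just as in the proof of Lemma \ref{l1}, which is the closest analogue. The consequent of (\ref{exta}) is a $\add$-disjunction of three elementary formulas and is thus the kind of ``decision problem'' that our bootstrapping machinery (Section \ref{sboot}) is designed to handle; the key point to check before invoking Induction is that the induction formula is ${\mathcal R}\spa$-bounded --- which it is, since the only bounds on $\ade$-quantifiers appearing in it are the logarithmic $\mathfrak{s}|\vec{s}|$ on the components of $\vec{v}$ --- and that the induction bound $\mathfrak{t}|\vec{s}|$ is in ${\mathcal R}\tim$, which holds by condition 2 of Definition \ref{dadm} together with the regularity conditions of Definition \ref{dt}. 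As always, the free variables $q$, $\vec{v}$ (and the implicitly $\cla$-bound $x,t$) are to be read as standing for constants already chosen by Environment.

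For the basis, $z=0$, the hypothesis $\mathbb{U}(x,t,0,0)$ forces $t=x$ (the $0$th unadulterated successor), and $\mathbb{D}^\epsilon(x,\vec{s},\vec{v})$ gives us the titular number $v_{2\mathfrak{y}+3}<\hat{\mathfrak{k}}$, i.e.\ $x$ has a \emph{buffer-empty} title. Hence the numer of the move in $x$'s buffer is the empty bitstring, so $\mbox{\em Nothing}(t,q)$ is true for every $q$, and we win by choosing the first $\add$-disjunct --- after confirming, via the Successor axiom and Fact \ref{numerals}/Fact \ref{tri}, whatever trivial arithmetic is needed. For the inductive step we follow the pattern of Lemma \ref{l1}: we wait till the antecedent is resolved into concrete constants, we use Lemma \ref{l1} itself (or rather the already-established traceability) to trace $x$ one more step to its $(z+1)$st unadulterated successor $e$, we recover, from $\mathbb{D}^\epsilon(x,\vec{s},\vec{v})$ plus Lemma \ref{sclemma} (the scene of $x$) plus the state component read off from $v_{2\mathfrak{y}+3}$, exactly what symbol $\mathcal X$ appends to its buffer on the transition $x\leadsto e$, and we update our knowledge of the numer accordingly. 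Specifically: if no symbol is appended, the numer is unchanged and we pass the old answer through; if a bit $b$ is appended, the new numer is obtained from the old by appending $b$ on the right, so the length grows by one and the $q$th \emph{most} significant bit of the new numer is computable from the old numer's bits together with $b$ --- using the Successor axiom, the Bit axiom, Facts \ref{numerals}, \ref{tri}, \ref{minus} (to compare $q$ against the new length, and to index into the numer) and then $\mbox{\em Nothing}/\mbox{\em Zero}/\mbox{\em One}$ is selected in the consequent.

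The main obstacle, and the step that will require the most care, is the bookkeeping that connects ``the $q$th \emph{most} significant bit of the numer'' (the quantity the lemma speaks about) with the ``least significant bit'' indexing in which $\bit$ and $(x)_y$ are defined, all while the length of the numer is shifting by one at each inductive step and $q$ is a fixed constant that may or may not already exceed that length. One must be careful that the reindexing is done by provable arithmetic rather than waved at: when a bit is appended on the right, the bit that was the $j$th most significant becomes the $(j{+}1)$th least significant relative to a string of new length, and so on. This is exactly the sort of elementary-school-algorithm reasoning handled in Facts \ref{plus}, \ref{minus}, \ref{br}, and the honest thing to do is to cite those and to note that $\pa$ proves the relevant identities relating most/least significant bit indices to string length, so the whole update is a logical consequence of the induction hypothesis together with $\pa$-provable facts and the decidability resources ($\bit$, Successor, Log) at our disposal. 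A secondary, more routine obstacle is verifying that the not-yet-exceeded-length case and the newly-exceeded-length case are correctly separated by a comparison of $q$ with $v_{2\mathfrak{y}+2}\successor$ (the updated numer length), which is doable via Fact \ref{tri}; details of this and of the scene-to-appended-symbol lookup are left to the reader, exactly as in the proof of Lemma \ref{l1}. $\ $ \rule{1.5mm}{1.5mm}
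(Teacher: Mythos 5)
Your overall plan (Reasonable ${\mathcal R}$-Induction on $z$, modeled on Lemma \ref{l1}) and your basis are both right, but the inductive step has a genuine gap that your sketched fill-in would not close. The lemma phrases $\mbox{\em Nothing}/\mbox{\em Zero}/\mbox{\em One}$ in terms of the $q$th \emph{most} significant bit precisely so that appending a symbol to the right end of the buffer \emph{never} disturbs the $q$th most significant bit once it exists: if the numer at step $z$ already has a $q$th most significant bit equal to $0$ (resp.\ $1$), then after one more transition --- whether or not a bit is appended --- that bit is still the $q$th most significant bit and still equals $0$ (resp.\ $1$). Accordingly, the paper's inductive step first waits for the induction hypothesis to resolve its $\add$-disjunction and simply \emph{copies} the answer verbatim whenever it is $\mbox{\em Zero}$ or $\mbox{\em One}$; no tracing, no scene inspection, and no arithmetic is needed in that branch. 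Only in the $\mbox{\em Nothing}$ branch does the proof trace one more step (via $\mathbb{Q}(\vec{s},z\successor)$ from Lemma \ref{l1}, as you say) and then compare $q$ with the new numer length $d_{2\mathfrak{y}+2}$: if they differ, the new length must still be below $q$, so $\mbox{\em Nothing}$ again; if they are equal, the bit just appended (read off the title $d_{2\mathfrak{y}+3}$) \emph{is} the $q$th most significant bit.

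Your proposal instead traces unconditionally and splits on whether a bit was appended; in the ``bit appended'' case it speaks of ``computing the $q$th most significant bit of the new numer from the old numer's bits together with $b$,'' and singles out the most-significant/least-significant conversion as the main obstacle. That obstacle is a phantom of the wrong coordinate system: the lemma never needs the least-significant indexing at all, and the conversion $q=d_{2\mathfrak{y}+2}\ominus r$ occurs only later, in Lemma \ref{l4}. Moreover the specific reindexing fact you state --- that appending a bit on the right turns the $j$th most significant bit into the $(j{+}1)$th least significant --- is incorrect: if the old length is $n$, the $j$th most significant bit moves from least-significant position $n-j$ to position $n+1-j$, a shift depending on $n$, not on $j$. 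Finally, a ``recompute from the old numer's bits'' step is not available to the strategy anyway, since the induction hypothesis hands you the answer for one fixed $q$ only, not the whole old numer. Replacing the recompute-and-reindex sketch with the observation that most-significant-bit positions are invariant under right-append --- copy on $\mbox{\em Zero}$/$\mbox{\em One}$, compare $q$ with the new length on $\mbox{\em Nothing}$ --- is what actually closes the gap.
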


\begin{proof} Argue in $\thr$.
Reasonable Induction on $z$. The basis is 
\[\cla x\cla t \bigl(\mathbb{N}(x,0)\mlc |\vec{v}|\leq \mathfrak{s}|\vec{s}|\mlc \mathbb{D}^\epsilon(x,\vec{s},\vec{v})\mlc 
  \mathbb{U}(x,t,0,0)\mli \mbox{\em Nothing}\hspace{1pt}(t,q)\add \mbox{\em Zero}\hspace{1pt}(t,q)\add \mbox{\em One}\hspace{1pt}(t,q) \bigr),\]
which is obviously won by choosing $\mbox{\em Nothing}\hspace{1pt}(t,q)$ in the consequent.

The inductive step is 
\begin{equation}\label{extaa}
\begin{array}{l}
 z<\mathfrak{t}|\vec{s}|\mlc
\cla x\cla t \bigl(\mathbb{N}(x,z)\mlc|\vec{v}|\leq \mathfrak{s}|\vec{s}|\mlc \mathbb{D}^\epsilon(x,\vec{s},\vec{v})\mlc 
  \mathbb{U}(x,t,z,z)\mli \\
\mbox{\em Nothing}\hspace{1pt}(t,q)\add \mbox{\em Zero}\hspace{1pt}(t,q)\add \mbox{\em One}\hspace{1pt}(t,q) \bigr)\mli
\cla x\cla t \bigl(\mathbb{N}(x,z\successor)\mlc|\vec{v}|\leq \mathfrak{s}|\vec{s}|\mlc\\ \mathbb{D}^\epsilon(x,\vec{s},\vec{v})\mlc 
  \mathbb{U}(x,t,z\successor,z\successor)\mli 
\mbox{\em Nothing}\hspace{1pt}(t,q)\add \mbox{\em Zero}\hspace{1pt}(t,q)\add \mbox{\em One}\hspace{1pt}(t,q) \bigr).
\end{array}
\end{equation}
To solve (\ref{extaa}), 
we wait till the adversary makes a choice in the  antecedent. If it chooses $\mbox{\em Zero}\hspace{1pt}(t,q)$ or $\mbox{\em One}\hspace{1pt}(t,q)$, we make the same choice in the consequent, and rest our case. Suppose now the adversary chooses $\mbox{\em Nothing}\hspace{1pt}(t,q)$, thus bringing (\ref{extaa}) down to 
\begin{equation}\label{d18}
\begin{array}{l}
 z<\mathfrak{t}|\vec{s}|\mlc \cla x\cla t \bigl(\mathbb{N}(x,z)\mlc|\vec{v}|\leq \mathfrak{s}|\vec{s}|\mlc \mathbb{D}^\epsilon(x,\vec{s},\vec{v})\mlc 
  \mathbb{U}(x,t,z,z)\mli \\ \mbox{\it Nothing}\hspace{1pt}(t,q) \bigr)
 \mli 
 \cla x\cla t \bigl(\mathbb{N}(x,z\successor)\mlc|\vec{v}|\leq \mathfrak{s}|\vec{s}|\mlc \\ \mathbb{D}^\epsilon(x,\vec{s},\vec{v})\mlc 
  \mathbb{U}(x,t,z\successor,z\successor)\mli \mbox{\it Nothing}\hspace{1pt}(t,q)\add \mbox{\em Zero}\hspace{1pt}(t,q)\add \mbox{\em One}\hspace{1pt}(t,q) \bigr).
\end{array}
\end{equation}
In order to win (\ref{d18}), we need a strategy that, for arbitrary ($\cla$) and unknown $a$ and $c$, wins 
\begin{equation}\label{d18b}
\begin{array}{l}
 z<\mathfrak{t}|\vec{s}|\mlc\cla x\cla t \bigl(\mathbb{N}(x,z)\mlc|\vec{v}|\leq \mathfrak{s}|\vec{s}|\mlc \mathbb{D}^\epsilon(x,\vec{s},\vec{v})\mlc 
  \mathbb{U}(x,t,z,z)\mli \\ \mbox{\it Nothing}\hspace{1pt}(t,q) \bigr)
 \mli 
\bigl(\mathbb{N}(a,z\successor)\mlc|\vec{v}|\leq \mathfrak{s}|\vec{s}|\mlc \\ \mathbb{D}^\epsilon(a,\vec{s},\vec{v})\mlc  
  \mathbb{U}(a,c,z\successor,z\successor)\mli  \mbox{\em Nothing}\hspace{1pt}(c,q)\add \mbox{\em Zero}\hspace{1pt}(c,q)\add \mbox{\em One}\hspace{1pt}(c,q)\bigr).
\end{array}
\end{equation}
To solve (\ref{d18b}), assume both the antecedent and the antecedent of the consequent of it are true (otherwise we win no matter what). So, all of the following statements are true:
\begin{eqnarray}
\label{q1} & z< \mathfrak{t}|\vec{s}|; & \\
\label{q1.5} & \cla x\cla t \bigl(\mathbb{N}(x,z)\mlc|\vec{v}|\leq \mathfrak{s}|\vec{s}|\mlc \mathbb{D}^\epsilon(x,\vec{s},\vec{v})\mlc 
 \mathbb{U}(x,t,z,z)\mli \mbox{\it Nothing}\hspace{1pt}(t,q) \bigr); & \\
\label{q2} & \mathbb{N}(a,z\successor)\mlc |\vec{v}|\leq \mathfrak{s}|\vec{s}| \mlc \mathbb{D}^\epsilon(a,\vec{s},\vec{v});  & \\
\label{q4} & \mathbb{U}(a,c,z\successor,z\successor). & 
\end{eqnarray}
Assumption (\ref{q4}) implies that $a$ has (not only a $(z\successor)$th but also) a $z$th unadulterated successor. Let $b$ be that successor. Thus, the following is true:
\begin{equation}\label{d18d}
\mathbb{U}(a,b,z,z ).
\end{equation}
The $\mathbb{N}(a,z\successor)$ conjunct of (\ref{q2}), of course, implies 
\begin{equation}\label{fre}
\mathbb{N}(a,z).
\end{equation}
From (\ref{q1.5}),  we also get 
\[
\mathbb{N}(a,z)\mlc |\vec{v}|\leq \mathfrak{s}|\vec{s}|\mlc \mathbb{D}^\epsilon(a,\vec{s},\vec{v})\mlc 
 \mathbb{U}(a,b,z,z)\mli \mbox{\it Nothing}\hspace{1pt}(b,q),\]
 which, together with (\ref{q2}), (\ref{d18d}) and (\ref{fre}), implies 
\begin{equation}\label{k15}
 \mbox{\it Nothing}(b,q). 
\end{equation}

From (\ref{q1}), we have $z'\leq \mathfrak{t}|\vec{s}|$. Hence, using Lemma \ref{l1} in combination with the Successor axiom, we can obtain the resource $ \mathbb{Q}(\vec{s},z\successor)$, which disabbreviates as 
\[\cla x\Bigl[\mathbb{D}_{\mbox{\tiny $\sqcup$}}(x,\vec{s})\mli  \gneg \mathbb{N}(x,z\successor)\add\Bigl(\mathbb{N}(x,z\successor)\mlc \cle t \bigl(\mathbb{U}_{\mbox{\tiny $\sqcup$}}^{\vec{s}}(x,t,z\successor)\mlc  \mathbb{D}_{\mbox{\tiny $\sqcup$}}(t,\vec{s})\bigr)\Bigr)\Bigr]. \]
We bring the above down to 
\begin{equation}\label{uy9}
\cla x\Bigl[|\vec{v}|\leq\mathfrak{s}|\vec{s}|\mlc \mathbb{D}(x,\vec{s},\vec{v})\mli  \gneg \mathbb{N}(x,z\successor)\add\Bigl(\mathbb{N}(x,z\successor)\mlc \cle t \bigl(\mathbb{U}_{\mbox{\tiny $\sqcup$}}^{\vec{s}}(x,t,z\successor)\mlc  \mathbb{D}_{\mbox{\tiny $\sqcup$}}(t,\vec{s})\bigr)\Bigr)\Bigr].
\end{equation}
Now (\ref{uy9}), in conjunction with (\ref{q2}) and the obvious fact $\cla\bigl(\mathbb{D}(x,\vec{s},\vec{v})\mli \mathbb{D}^\epsilon(a,\vec{s},\vec{v})\bigr)$,  implies 
$ \cle t\bigl( \mathbb{U}_{\mbox{\tiny $\sqcup$}}^{\vec{s}}(a,t,z\successor)\mlc$ $\mathbb{D}_{\mbox{\tiny $\sqcup$}} (t,\vec{s}) \bigr)$, i.e.,  
\begin{equation}\label{q5}
\cle t \bigl(\ade |r|\leq\mathfrak{s}|\vec{s}|\mathbb{U}(a,t,z\successor,r)\mlc \mathbb{D}_{\mbox{\tiny $\sqcup$}} (t,\vec{s}) \bigr).
\end{equation}

From (\ref{q4}), by \pa, we know that  $c$ is the unique number satisfying $\mathbb{U}(a,t,z\successor,r)$ in the role of $t$ for some $r$ (in fact, for $r=z\successor$ and only for $r=z\successor$). This implies that  the 
provider of (\ref{q5}), in fact, provides (can only provide) the resource 
\[\ade |r|\leq\mathfrak{s}|\vec{s}|\mathbb{U}(a,c,z\successor,r)\mlc \mathbb{D}_{\mbox{\tiny $\sqcup$}} (c,\vec{s}).\]
 Thus, $\mathbb{D}_{\mbox{\tiny $\sqcup$}} (c,\vec{s})$ is at our disposal, which disabbreviates as 
$\ade |\vec{v}|\leq\mathfrak{s}|\vec{s}|\mathbb{D}(c,\vec{s},\vec{v})$.
The provider of this resource will have to bring it down to  
\begin{equation}\label{q7}
|\vec{d}|\leq\mathfrak{s}|\vec{s}|\mlc \mathbb{D}(c,\vec{s},\vec{d})
\end{equation}
for some tuple $\vec{d}=d_1,\ldots,d_{2\mathfrak{y}+3}$ of constants. Here $d_{2\mathfrak{y}+2}$ is 
the length of the numer of the move found in $c$'s buffer.
 Using Fact \ref{tri}, we figure out whether $d_{2\mathfrak{y}+2}=q$. If $d_{2\mathfrak{y}+2}\not=q$, we choose $\mbox{\em Nothing}(c,q)$ in the consequent of (\ref{d18b}). Now suppose 
$d_{2\mathfrak{y}+2}=q$. In this case, from $d_{2\mathfrak{y}+3}$ (the title of $c$), we extract information about what bit has been placed into the buffer on the transition from $b$ to $c$.\footnote{A symbol other than $0$ or $1$ could not have been placed into the buffer, because then, by clause 7 of Definition \ref{dcor}, $c$ would be corrupt, contradicting the $\mathbb{N}(a,z\successor)$ conjunct of (\ref{q2}).} If that bit is $1$, we choose $ \mbox{\em One}\hspace{1pt}(c,q)$ in  (\ref{d18b}); otherwise choose $\mbox{\em Zero}\hspace{1pt}(c,q)$. With a little thought and with (\ref{k15}) in mind, it can be seen that our strategy succeeds. 
\end{proof}

\begin{lemma}\label{l4}
$\thr$  proves 
\begin{equation}\label{ext}
\begin{array}{l}
 \cle x \cle t\cle y \bigl(\mathbb{N}(x,\mathfrak{t}|\vec{s}|)\mlc \mathbb{D}^\epsilon(x,\vec{s},\vec{v})\mlc 
  \mathbb{U}_{\mbox{\tiny $\exists$}}^{\vec{s}}(x,t) \mlc \mathbb{F}(t,y)\mlc \bit(r,y)\bigr)\add\\
\gneg  \cle x \cle t\cle y 
\bigl( \mathbb{N}(x,\mathfrak{t}|\vec{s}|)\mlc \mathbb{D}^\epsilon(x,\vec{s},\vec{v})\mlc 
  \mathbb{U}_{\mbox{\tiny $\exists$}}^{\vec{s}}(x,t) \mlc \mathbb{F}(t,y)\mlc \bit(r,y)\bigr).
\end{array}
\end{equation}
\end{lemma}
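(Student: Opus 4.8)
The plan is to prove (\ref{ext}) by exhibiting an explicit strategy in $\thr$ that, on blindly-chosen values of $\vec{s},\vec{v},r$, correctly selects one of the two $\add$-disjuncts. The key observation is that the two $\add$-disjuncts are formally of the form $\Psi\add\gneg\Psi$, so it suffices to decide the truth of the elementary condition $\Psi=\cle x\cle t\cle y(\mathbb{N}(x,\mathfrak{t}|\vec{s}|)\mlc \mathbb{D}^\epsilon(x,\vec{s},\vec{v})\mlc \mathbb{U}_{\mbox{\tiny $\exists$}}^{\vec{s}}(x,t)\mlc \mathbb{F}(t,y)\mlc \bit(r,y))$. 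First I would reduce matters to the traceability machinery: using Lemma \ref{l3} we obtain a value $z_0=\mathfrak{t}|\vec{s}|$ together with the resource $\mathbb{Q}(\vec{s},z_0)$, which tells us, for any configuration $x$ satisfying $\mathbb{D}_{\mbox{\tiny $\sqcup$}}(x,\vec{s})$, whether $x$ has a corrupt $k$th unadulterated successor for some $k\le z_0$, and if not, supplies (a short description of) its $z_0$th unadulterated successor $t$.

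Next I would run the following decision procedure, arguing in $\thr$. Given the chosen tuple $\vec{v}$ with $|\vec{v}|\leq\mathfrak{s}|\vec{s}|$, first test, using Fact \ref{tri} and the basic arithmetic of Section \ref{sboot} applied to the components of $\vec{v}$, whether $\vec{v}$ in fact has the form required by $\mathbb{D}^\epsilon$ of \emph{some} (necessarily unique, since configurations are coded injectively) configuration $x$ --- i.e., whether $v_{2\mathfrak{y}+3}<\hat{\mathfrak{k}}$ and whether the remaining constraints of $\mathbb{D}(x,\vec{s},\vec{v})$ are internally consistent. If not, $\Psi$ is false and we choose $\gneg\Psi$. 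If yes, we in effect have our hands on $x$ via $\vec v$. We then query $\mathbb{Q}(\vec{s},z_0)$ on this $x$: if it reports a corrupt successor within $z_0$ steps, then $\mathbb{N}(x,\mathfrak{t}|\vec{s}|)$ fails, $\Psi$ is false, and we choose $\gneg\Psi$; otherwise we obtain the configuration $t$ with $\mathbb{U}_{\mbox{\tiny $\sqcup$}}^{\vec{s}}(x,t,z_0)$, hence (by $\pa$, since $z_0=\mathfrak{t}|\vec{s}|$) $\mathbb{U}_{\mbox{\tiny $\exists$}}^{\vec{s}}(x,t)$, together with a short description of $t$. From that description we read off $v_{2\mathfrak{y}+2}$, the length of the numer $y$ in $t$'s buffer; using Fact \ref{tri} we compare $r$ with this length: if $r\geq v_{2\mathfrak{y}+2}$ (so $\bit(r,y)$ is false by convention), or if the buffer numer is empty, we choose $\gneg\Psi$. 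Otherwise we must recover the bit $\bit(r,y)$. This is where Lemma \ref{l4a} comes in: converting between the ``$r$th least significant bit'' and the ``most significant bit'' indexing (a $\pa$-level computation using $v_{2\mathfrak{y}+2}$ and Facts \ref{numerals}, \ref{minus}), we invoke the disjunction $\mbox{\it Nothing}(t,q)\add\mbox{\it Zero}(t,q)\add\mbox{\it One}(t,q)$ provided by Lemma \ref{l4a} (with $q$ the corresponding most-significant index, and after supplying it with $z_0\leq\mathfrak{t}|\vec{s}|$, $\mathbb{N}(x,z_0)$, $\mathbb{D}^\epsilon(x,\vec{s},\vec{v})$ and $\mathbb{U}(x,t,z_0,z_0)$, all already at our disposal). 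Depending on whether it returns $\mbox{\it One}$ (so $\bit(r,y)$ holds) or $\mbox{\it Nothing}/\mbox{\it Zero}$ (so it fails), we choose $\Psi$ or $\gneg\Psi$ accordingly.

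Finally I would verify correctness: in each branch above, the disjunct we selected is true, because $x$ (if it exists) is uniquely determined by $\vec v$, $t$ is uniquely determined from $x$ as its $z_0$th unadulterated successor, $y$ is uniquely determined as the numer in $t$'s buffer, and all of $\mathbb{N}$, $\mathbb{D}^\epsilon$, $\mathbb{U}_{\mbox{\tiny $\exists$}}^{\vec{s}}$, $\mathbb{F}$ are functional in the relevant arguments; thus the $\cle$-quantified statement $\Psi$ is true precisely when the concrete witnesses we computed satisfy all five conjuncts. The main obstacle I anticipate is not the arithmetic but the bookkeeping of \emph{uniqueness}: one must argue, at the $\pa$ level, that the existential quantifiers in $\Psi$ can only be witnessed by the $x,t,y$ our procedure reconstructs, so that a negative outcome of any one test genuinely refutes $\Psi$ rather than merely one candidate; this is exactly the sort of reasoning already used (for $c$ with respect to $\mathbb{U}(a,t,z\successor,r)$) inside the proof of Lemma \ref{l4a}, and it goes through here for the same reasons, the key facts being the injectivity of configuration coding and the at-most-one-ness of $k$th unadulterated successors noted in Section \ref{sds}.
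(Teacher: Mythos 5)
Your proposal follows the same overall approach as the paper: reduce to deciding the truth value of the elementary paraformula, use Lemma~\ref{l3} to trace via $\mathbb{Q}(\vec s,z_0)$, and then use Lemma~\ref{l4a} to read off the bit after converting from the ``$r$th least significant'' to the ``$q$th most significant'' index with $q=d_{2\mathfrak{y}+2}\ominus r$. Your explicit consistency check on $\vec v$ and the closing discussion of uniqueness correspond to the paper's terser opening claim that the witnesses $x,t,y$ exist and are unique.

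There is, however, a genuine gap in the final invocation of Lemma~\ref{l4a}. You claim $\mathbb{U}(x,t,z_0,z_0)$ is ``already at our disposal'', but what you actually obtain from resolving $\mathbb{U}_{\mbox{\tiny $\sqcup$}}^{\vec s}(x,t,z_0)$ is $\mathbb{U}(x,t,z_0,a)$ for some constant $a$ that may well be strictly less than $z_0$, namely when $t$ is reached in fewer than $z_0$ unadulterated steps because $\mathcal X$ hits a move state earlier. In that case $\mathbb{U}(x,t,z_0,z_0)$ is simply false (since $z_0$ is then not the greatest $u\leq z_0$ with an existing $u$th unadulterated successor), so instantiating $z:=z_0$ in Lemma~\ref{l4a} makes the inner antecedent false and the provider never has to resolve $\mbox{\it Nothing}(t,q)\add\mbox{\it Zero}(t,q)\add\mbox{\it One}(t,q)$. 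The paper's fix is to observe that $\mathbb{U}(x,t,z_0,a)$ already entails $a\leq\mathfrak{t}|\vec s|\mlc\mathbb{U}(x,t,a,a)$ and that $\mathbb{N}(x,z_0)$ with $a\leq z_0$ yields $\mathbb{N}(x,a)$, so one should instantiate $z:=a$ rather than $z:=z_0$. A second, smaller issue: you open with ``given the chosen tuple $\vec v$ with $|\vec v|\leq\mathfrak{s}|\vec s|$'', but this bound is not supplied by the game --- the $\ada$-chosen $\vec v$ is arbitrary. The strategy must first decide $|\vec v|\leq\mathfrak{s}|\vec s|$ (using the Log axiom, Fact~\ref{tri}, and clause 2 of Definition~\ref{dadm}) and choose $\gneg\Psi$ when it fails, since then any $x$ satisfying $\mathbb{D}(x,\vec s,\vec v)$ would be corrupt; this check is needed both because $\mathbb{Q}$ only yields information under the bounded description and because Lemma~\ref{l4a} carries that very conjunct in its antecedent.
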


\begin{proof}   Argue in $\thr$. From $\pa$ we know that values $x$, $t$, $y$ satisfying 
\begin{equation}\label{uyoe}
 \mathbb{D}^\epsilon(x,\vec{s},\vec{v})\mlc 
  \mathbb{U}_{\mbox{\tiny $\exists$}}^{\vec{s}}(x,t) \mlc \mathbb{F}(t,y)
\end{equation}
exist ($\cle$) and are unique. Fix them for the rest of this proof. This allows us to switch from (\ref{ext}) to  (\ref{extrs}) as the target for our strategy, because the two paraformulas are  identical as a games: 
\begin{equation}\label{extrs}
\begin{array}{l}
\bigl(\mathbb{N}(x,\mathfrak{t}|\vec{s}|)\mlc \bit(r,y)\bigr)\add\gneg
\bigl( \mathbb{N}(x,\mathfrak{t}|\vec{s}|)\mlc \bit(r,y)\bigr).
\end{array}
\end{equation}

Relying on the Log axiom, Fact \ref{tri}  and clause 2 of Definition \ref{dadm}, 
we find the value of $\mathfrak{s}|\vec{s}|$.  Then, 
using that value and relying on the Log axiom and Fact \ref{tri} again, we figure out the truth status of 
$|\vec{v}|\leq \mathfrak{s}|\vec{s}|$.  If it  is false, then, with a little analysis of Definition \ref{dcor},
$x$ can be seen to be corrupt; for this reason, $\mathbb{N}(x,\mathfrak{t}|\vec{s}|)$ is false, so we choose the right $\add$-disjunct in (\ref{extrs}) and rest our case. Now, for the remainder of this proof, assume 
\begin{equation}\label{short}
 |\vec{v}|\leq \mathfrak{s}|\vec{s}|.
\end{equation}
 
By Lemma \ref{l3},  the resource $\mathbb{Q}(\vec{s},\mathfrak{t}|\vec{s}|)$, i.e., 
\[\cla x\Bigl[ \mathbb{D}_{\mbox{\tiny $\sqcup$}}(x,\vec{s})\mli \gneg\mathbb{N}(x,\mathfrak{t}|\vec{s}|)\add \Bigl(\mathbb{N}(x,\mathfrak{t}|\vec{s}|)\mlc \cle t \bigl( \mathbb{U}_{\mbox{\tiny $\sqcup$}}^{\vec{s}}(x,t,\mathfrak{t}|\vec{s}|)\mlc  \mathbb{D}_{\mbox{\tiny $\sqcup$}}(t,\vec{s})\bigr)\Bigr)\Bigr],\]
 is at our disposal. We bring it down to  
\[\cla x\Bigl[|\vec{v}|\leq\mathfrak{s}|\vec{s}|\mlc \mathbb{D}(x,\vec{s},\vec{v})\mli \gneg\mathbb{N}(x,\mathfrak{t}|\vec{s}|)\add \Bigl(\mathbb{N}(x,\mathfrak{t}|\vec{s}|)\mlc \cle t \bigl( \mathbb{U}_{\mbox{\tiny $\sqcup$}}^{\vec{s}}(x,t,\mathfrak{t}|\vec{s}|)\mlc  \mathbb{D}_{\mbox{\tiny $\sqcup$}}(t,\vec{s})\bigr)\Bigr)\Bigr],\]
which, in view of (\ref{uyoe}), (\ref{short}) and the fact $\cla\bigl(\mathbb{D}^\epsilon(x,\vec{s},\vec{v})\mli\mathbb{D}(x,\vec{s},\vec{v})\bigr)$,  implies  
\begin{equation}\label{saxli1}
\gneg\mathbb{N}(x,\mathfrak{t}|\vec{s}|)\add \Bigl(\mathbb{N}(x,\mathfrak{t}|\vec{s}|)\mlc \cle t \bigl( \mathbb{U}_{\mbox{\tiny $\sqcup$}}^{\vec{s}}(x,t,\mathfrak{t}|\vec{s}|)\mlc  \mathbb{D}_{\mbox{\tiny $\sqcup$}}(t,\vec{s})\bigr)\Bigr).\end{equation}
We wait till one of the two $\add$-disjuncts of (\ref{saxli1}) is selected by the provider. If the left disjunct is selected, 
 we choose the right $\add$-disjunct in (\ref{extrs}) and retire. Now suppose the right disjunct of (\ref{saxli1}) is selected. Such a move, with $\mathbb{U}_{\mbox{\tiny $\sqcup$}}^{\vec{s}}(x,t,\mathfrak{t}|\vec{s}|)$ and $\mathbb{D}_{\mbox{\tiny $\sqcup$}}(t,\vec{s})$ disabbreviated, brings (\ref{saxli1}) down to   
\begin{equation}\label{w2}
\mathbb{N}(x,\mathfrak{t}|\vec{s}|)\mlc\cle t\Bigl(\ade u \bigl(|u|\leq \mathfrak{s}|\vec{s}|\mlc \mathbb{U}(x,t,\mathfrak{t}|\vec{s}|,u)\bigr)\mlc  \ade \vec{v}\bigl( |\vec{v}|\leq \mathfrak{s}|\vec{s}|\mlc \mathbb{D}(t,\vec{s},\vec{v})\bigr)\Bigr).
\end{equation}
We wait till   (\ref{w2}) is fully resolved by its provider, i.e., is brought down to 
\begin{equation}\label{koro}
\mathbb{N}(x,\mathfrak{t}|\vec{s}|)\mlc\cle t\bigl(|a|\leq \mathfrak{s}|\vec{s}|\mlc \mathbb{U}(x,t,\mathfrak{t}|\vec{s}|,a) \mlc |\vec{d}|\leq \mathfrak{s}|\vec{s}|\mlc \mathbb{D}(t,\vec{s},\vec{d})\bigr)
\end{equation}
for some constant $a$ and tuple  $\vec{d}=d_1,\ldots,d_{2\mathfrak{y}+3}$ of constants.  By $\pa$,  (\ref{uyoe}) and (\ref{koro})  imply 
\begin{equation}\label{wx}
\mathbb{N}(x,\mathfrak{t}|\vec{s}|)\mlc \mathbb{U}(x,t,\mathfrak{t}|\vec{s}|,a)\mlc \mathbb{D}(t,\vec{s},\vec{v}).
\end{equation}
The $\mathbb{U}(x,t,\mathfrak{t}|\vec{s}|,a)$ conjunct of (\ref{wx}) further implies 
\begin{equation}\label{nsl}
a\leq\mathfrak{t}|\vec{s}| \mlc \mathbb{U}(x,t,a,a).
\end{equation}
By $\pa$, the $\mathbb{N}(x,\mathfrak{t}|\vec{s}|)$ conjunct of (\ref{wx}) and the $a\leq\mathfrak{t}|\vec{s}|$ conjunct of (\ref{nsl}) imply
\begin{equation}\label{csa}
\mathbb{N}(x,a).
\end{equation}

The $\mathbb{D}(t,\vec{s},\vec{d})$ conjunct of (\ref{wx}) implies that  
   $d_{2\mathfrak{y}+2}$ is the length of the numer of the move residing in  $t$'s buffer. By the $\mathbb{F}(t,y)$ conjunct of (\ref{uyoe}) we know that $y$ is such a numer. Thus,  
   $d_{2\mathfrak{y}+2}=|y|$. Let $q=d_{2\mathfrak{y}+2}\ominus  r$. This number can be computed using Fact \ref{minus}. 
The $r$th least significant bit of $y$  is nothing but the $q$th most significant bit of $y$. 

By Lemma \ref{l4a}, we have  
\begin{equation}\label{ea}
\begin{array}{l}
a\leq \mathfrak{t}|\vec{s}|\mlc\mathbb{N}(x,a)\mlc |\vec{v}|\leq \mathfrak{s}|\vec{s}|\mlc \mathbb{D}^\epsilon(x,\vec{s},\vec{v})\mlc 
  \mathbb{U}(x,t,a,a)\mli \\ \mbox{Nothing}\hspace{1pt}(t,q)\add \mbox{Zero}\hspace{1pt}(t,q)\add \mbox{One}\hspace{1pt}(t,q).
\end{array}
\end{equation}
The $a\leq \mathfrak{t}|\vec{s}|$ and $\mathbb{U}(x,t,a,a)$ conjuncts of the antecedent of (\ref{ea}) are true by (\ref{nsl}); the $\mathbb{N}(x,a)$ conjunct is true by (\ref{csa}); the $|\vec{v}|\leq \mathfrak{s}|\vec{s}|$ conjunct is true by (\ref{short}); and the  $\mathbb{D}^\epsilon(x,\vec{s},\vec{v})$ conjunct is true by (\ref{uyoe}). Hence, 
the provider of (\ref{ea}) has to resolve the $\add$-disjunction in the consequent.  If it chooses $\mbox{\it One}(t,q)$, we choose the left $\add$-disjunct in (\ref{extrs}); otherwise we choose the right $\add$-disjunct. 
 In either case we win.
\end{proof}

\subsection{Senior lemmas}\label{smainl}

Let $E$ be a formula not containing the variable $y$. We say that a formula $H$ is a {\bf $(\oo,y)$-development} of  $E$ iff $H$ is the result of replacing in $E$: 
\begin{itemize}
\item either a surface occurrence of a subformula $F_0\adc F_1$ by $F_i$ ($i=0$ or $i=1$), 
\item  or  a surface occurrence of a subformula $\ada xF(x)$ by $F(y)$.   
\end{itemize}

{\bf $(\pp ,y)$-development} is defined in the same way, only with $\add,\ade$ instead of $\adc,\ada$.  

\begin{lemma}\label{l6}
Assume $E(\vec{s})$ is a formula all of whose free variables are among $\vec{s}$,  $y$ is a variable not occurring in $E(\vec{s})$,  and $H(\vec{s},y)$ is a
 $(\oo,y)$-development  of $E(\vec{s})$. Then $\thr$  proves $\tilde{E}^{\circ}   (\vec{s})  \mli  \tilde{H}^{\circ}(\vec{s},y)$.
\end{lemma}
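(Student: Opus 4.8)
\textbf{Proof proposal for Lemma \ref{l6}.}

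The plan is to argue in $\thr$ and solve $\tilde{E}^{\circ}(\vec{s}) \mli \tilde{H}^{\circ}(\vec{s},y)$ by tracing the behavior of $\mathcal X$ starting from a configuration whose yield is $E(\vec{s})$, and following it until $\mathcal X$ (equivalently, the machine's adversary acting on its behalf) makes the choice move that resolves the relevant surface occurrence, landing on a configuration whose yield is the $(\oo,y)$-development $H(\vec{s},y)$. Recall that $\tilde{E}^{\circ}(\vec{s})$ abbreviates $\cle x\bigl(E^{\circ}(x,\vec{s})\mlc \mathbb{D}^{\epsilon}_{\mbox{\tiny $\sqcup$}}(x,\vec{s})\bigr)$, i.e.\ it asserts existence of a reachable uncorrupt configuration $x$ with yield $E(\vec{s})$, buffer-empty semiuncorrupt title (the $\mathbb{D}^{\epsilon}$ part), and work-tape/head data bounded by $\mathfrak{s}|\vec{s}|$. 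First I would unpack the antecedent: wait until the environment (provider of the resource $\tilde{E}^{\circ}(\vec{s})$) supplies a constant $x_0$ together with a tuple $\vec{c}$ of constants witnessing $E^{\circ}(x_0,\vec{s})\mlc |\vec{c}|\leq\mathfrak{s}|\vec{s}|\mlc\mathbb{D}^{\epsilon}(x_0,\vec{s},\vec{c})$. On the assumption that these are true (otherwise the implication is won trivially), the job reduces to exhibiting a constant $x_1$ with $H^{\circ}(x_1,\vec{s},y)\mlc\mathbb{D}^{\epsilon}_{\mbox{\tiny $\sqcup$}}(x_1,\vec{s})$.

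The key steps, in order, are: (i) invoke the traceability machinery of Section \ref{sml} — specifically Lemma \ref{l3} to obtain $\ade z(z=\mathfrak{t}|\vec{s}|\mlc\mathbb{Q}(\vec{s},z))$, then instantiate $\mathbb{Q}(\vec{s},\mathfrak{t}|\vec{s}|)$ at the configuration $x_0$ via its $\mathbb{D}_{\mbox{\tiny $\sqcup$}}$-hypothesis (which holds since $\mathbb{D}^{\epsilon}_{\mbox{\tiny $\sqcup$}}$ implies $\mathbb{D}_{\mbox{\tiny $\sqcup$}}$) to learn that $x_0$ has no corrupt unadulterated successor within $\mathfrak{t}|\vec{s}|$ steps and to obtain a configuration $t_0$ that is the last unadulterated successor of $x_0$ up to that bound, together with its partial description $\mathbb{D}_{\mbox{\tiny $\sqcup$}}(t_0,\vec{s})$; (ii) read off from the title component of $t_0$'s description whether $t_0$ is a move state, and if so, which surface choice move $\mathcal X$ makes — this is exactly the move that selects $F_i$ from a surface $F_0\adc F_1$ or instantiates a surface $\ada x F(x)$ at the value carried on the run tape; (iii) verify that since $E^{\circ}(x_0,\vec{s})$ holds and $\mathcal X$ (if $\mathbb{W}$ is true) plays legally within budget, the move made at $t_0$ — read against $y$ (or, for the $\adc$ case, the literal component index $i$) — produces a configuration $x_1$ whose yield is precisely the designated $(\oo,y)$-development $H(\vec{s},y)$, whose background still does not exceed $\max(\vec{s})$, and which is again uncorrupt; (iv) harvest the $\mathbb{D}^{\epsilon}$-data for $x_1$: its state, header, last-inserted-buffer-string, and run-tape move-header list, assembling the tuple $\vec{d}$ with $|\vec{d}|\leq\mathfrak{s}|\vec{s}|$ using the arithmetic from Section \ref{sboot} (Successor, Log, $\add$, $\ominus$, Facts \ref{numerals}, \ref{tri}, \ref{plus}, \ref{minus}, \ref{br}), in the same style as the $\vec{d}$-construction in the proof of Lemma \ref{l1} and the buffer-tracing of Lemma \ref{l4a}; (v) choose $x_1$ for the $\cle x$ in the consequent, supply $\vec{d}$ for the $\mathbb{D}^{\epsilon}_{\mbox{\tiny $\sqcup$}}$ part, and win. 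A subtlety to handle explicitly is the case where $t_0$ is \emph{not} a move state, or where the choice it makes does not match the designated surface occurrence of the $(\oo,y)$-development — but this cannot happen when $\mathbb{W}$ is true, because $H$ being an $(\oo,y)$-development means the choice is the adversary's (i.e., $\mathcal X$'s opponent's) prerogative in the game $E(\vec{s})$, so it is $\mathcal X$ that eventually must resolve it, and providence of $\mathcal X$ guarantees the buffer is flushed so the yield stabilizes at $H(\vec{s},y)$; if $\mathbb{W}$ fails, then by Lemma \ref{jan4d}-style reasoning the antecedent $E^{\circ}(x_0,\vec{s})$ is already contradictory (no uncorrupt witness exists), so again there is nothing to prove.

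The main obstacle I expect is step (iii): pinning down, within $\thr$ and using only the partial information recorded in $\mathbb{D}^{\epsilon}$ and the title/scene data, that the configuration reached after the choice move genuinely has yield equal to the \emph{specified} $(\oo,y)$-development rather than some other development or an off-track position. This requires a careful bookkeeping argument tying the surface-subformula occurrence to the move header stored on the run tape — essentially an induction on the structure of $E(\vec{s})$ relating surface occurrences of $\adc$/$\ada$-subformulas to legal-move headers of $X$ — and it must be done uniformly in the blindly-quantified configuration $x_0$. The other somewhat delicate point is confirming that uncorruptness and the $\leq\max(\vec{s})$ background bound are preserved across the single transition that makes the move, which follows from the fact that a move transition does not reposition heads or alter work tapes (by the simplifying assumptions at the end of Section \ref{sds}) and does not add $\oo$-moves to the run tape; this keeps all seven clauses of Definition \ref{dcor} intact and is analogous to the persistence arguments already carried out in Lemmas \ref{l1} and \ref{l4a}. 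The remaining calculations (assembling $\vec{d}$, checking length bounds) are routine given the bootstrapped arithmetic and I would not spell them out in full.
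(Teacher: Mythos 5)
There is a genuine gap, and it is at the heart of your plan rather than in the details you deferred. A $(\oo,y)$-development of $E(\vec{s})$ corresponds to a labmove $\oo\beta$ made by $\mathcal X$'s \emph{adversary} in the play of $X$ (a choice of a $\adc$-conjunct or of a value for a $\ada$-quantified variable). Your steps (i)--(iii) trace the deterministic evolution of $\mathcal X$ from $x_0$ (via Lemma \ref{l3} and the title data) and wait for $\mathcal X$ to ``make the choice move that resolves the relevant surface occurrence''. But $\mathcal X$ never makes such a move: its own moves are $\pp$-labeled and can only resolve $\add$/$\ade$ occurrences, i.e.\ produce $(\pp,y)$-developments --- that is exactly the content of Lemma \ref{l5}, whose proof is indeed the tracing argument you describe. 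Tracing $\mathcal X$ from $x_0$ can never land on a configuration whose yield is $H(\vec{s},y)$; moreover $y$ here is a free variable of the target (implicitly $\ada$-bound, so its value is picked by \emph{our} environment), while the traced run of $\mathcal X$ is fixed, so no single traced configuration could accommodate an arbitrary $y$. Your attempted rescue in the ``subtlety'' sentence --- that since the choice is the adversary's prerogative, ``it is $\mathcal X$ that eventually must resolve it'' --- is exactly the confusion: the adversary's prerogatives are not exercised by $\mathcal X$, and $\mathbb{W}$ says nothing about when or whether $\mathcal X$'s environment moves. (Also, your fallback that falsity of $\mathbb{W}$ makes $E^{\circ}(x_0,\vec{s})$ contradictory is unsound: uncorruptness and reachability in Definition \ref{dcor} do not presuppose that $\mathcal X$ actually wins or respects its bounds globally.)

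The actual proof is much lighter and requires no tracing at all, precisely because the environment of $\mathcal X$ may move whenever it pleases: if $x$ is a reachable uncorrupt configuration with yield $E(\vec{s})$, then the configuration obtained from $x$ by supposing that $\oo$ additionally made the move $\beta$ (with header $\alpha$ and numer $y$) on the transition into $x$ is again reachable and uncorrupt, and its yield is exactly $H(\vec{s},y)$. Inside $\thr$ this is implemented statically: after Environment resolves the $\mathbb{D}^{\epsilon}_{\mbox{\tiny $\sqcup$}}$ part of the antecedent with a tuple $\vec{c}$, one checks $c_{2\mathfrak{y}+3}<\hat{\mathfrak{m}}$ (Facts \ref{numerals} and \ref{tri}), replaces the titular number $c_{2\mathfrak{y}+3}$ by the number $c_{2\mathfrak{y}+3}^{+}$ whose title appends $\oo\alpha$ to the 4th titular component, and plays the resulting tuple in the consequent; the remaining blind formula is then simply true, since the $\cle x$ witness is the just-described configuration and none of the other components of $\vec{c}$ change. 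So the only data manipulation is on the title component --- no appeal to Lemma \ref{l3}, no reconstruction of $\vec{d}$ in the style of Lemma \ref{l1}, and no providence considerations are needed.
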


\begin{proof} 
Assume the conditions of the lemma. The target formula whose $\thr$-provability we want to show  partially disabbreviates as   
\begin{equation}\label{g2}
\begin{array}{l}
\cle x\bigl({E}^\circ(x,\vec{s})\mlc \mathbb{D}^{\epsilon}_{\mbox{\tiny $\sqcup$}} (x,\vec{s})\bigr)\mli\cle x\bigl({H}^\circ(x,\vec{s},y)\mlc \mathbb{D}^{\epsilon}_{\mbox{\tiny $\sqcup$}} (x,\vec{s},y)\bigr).
\end{array}
\end{equation}

Let $\oo\beta$ be the labmove that brings $E(\vec{s})$ down to $H(\vec{s},y)$,\footnote{In the rare cases where there are more than one such 
$\beta$, take the lexicographically smallest one.} and let $\alpha$ be the header of $\beta$.  For instance, if $E(\vec{s})$ is $G\mli F_0\adc F_1$ 
and $H(\vec{s},y)$ is $G\mli F_0$, then both $\beta$ and $\alpha$ are ``$\oo 1.0$''; and if $E(\vec{s})$ is $G\mli \ada zF(x)\mld J$ and 
$H(\vec{s},y)$ is $G\mli F(y)\mld J$, then $\beta$ is $1.0.\#y$  and $\alpha$ is $1.0.\#$.

For each natural number $j$, let $j^+$ be the number such that the first three titular components of $\mbox{\em Title}_{j^+}$ are the same as those of  $\mbox{\em Title}_{j}$, and the $4$th titular component of 
$\mbox{\em Title}_{j^+}$ is obtained from that of $\mbox{\em Title}_{j}$ by appending $\oo\alpha$ to it. 
Intuitively, if $\mbox{\em Title}_{j}$ is the title of a given configuration $x$, then $\mbox{\em Title}_{j^+}$ is the title of the configuration that results from $x$ in the scenario where $\oo$ made the (additional) move $\beta$ on the transition to $x$ from the predecessor configuration.
Observe that, if $j$ is a member of $\{0,\ldots,\mathfrak{m}-1\}$, then so is $j^+$.   

Argue in $\thr$. To win (\ref{g2}), we wait till Environment brings it down to 
\begin{equation}\label{lkk}
\begin{array}{l}
\cle x\bigl({E}^\circ(x,\vec{s})\mlc |\vec{c}|\leq\mathfrak{s}|\vec{s}|\mlc \mathbb{D}^{\epsilon} (x,\vec{s},\vec{c})\bigr)\mli  \cle x\bigl({H}^\circ(x,\vec{s},y)\mlc \mathbb{D}^{\epsilon}_{\mbox{\tiny $\sqcup$}} (x,\vec{s},y)\bigr) 
\end{array}
\end{equation}
for some tuple $\vec{c}=c_1,\ldots,c_{2\mathfrak{y}+3}$ of constants. 
Based on clause 2 of Definition \ref{dadm}
and Facts \ref{numerals} and \ref{tri}, we check whether 
$c_{2\mathfrak{y}+3}<\hat{\mathfrak{m}}$. If not, the antecedent of (\ref{lkk}) can be seen to be false, so we win (\ref{lkk}) by doing nothing. Suppose now $c_{2\mathfrak{y}+3}<\hat{\mathfrak{m}}$. In this case we  bring (\ref{lkk}) down to 
\begin{equation}\label{g4}
\begin{array}{l}
\cle x\bigl({E}^\circ(x,\vec{s})\mlc |\vec{c}|\leq\mathfrak{s}|\vec{s}|\mlc \mathbb{D}^{\epsilon} (x,\vec{s},\vec{c})\bigr)\mli \\ \cle x\bigl({H}^\circ(x,\vec{s},y)\mlc |\vec{c}^{\hspace{2pt}+}|\leq\mathfrak{s}|\vec{s}|\mlc \mathbb{D}^{\epsilon}(x,\vec{s},y,
\vec{c}^{\hspace{2pt}+})\bigr),
\end{array}
\end{equation}
where $\vec{c}^{\hspace{2pt}+}$ is the same as $\vec{c}$, only with $c_{2\mathfrak{y}+3}^+$ instead of $c_{2\mathfrak{y}+3}$.
The  elementary formula (\ref{g4}) can be easily seen to be true, so we win. 
\end{proof}

\begin{lemma}\label{l5}
Assume $E(\vec{s})$ is a formula all of whose free variables are among $\vec{s}$,  $y$ is a variable not occurring in $E(\vec{s})$, and       
 $H_1(\vec{s},y),\ldots,H_n(\vec{s},y)$ are all of the $(\pp,y)$-developments of $E(\vec{s})$. Then $\thr$  proves
\begin{equation}\label{m2ex}    \tilde{E}^{\circ}  (\vec{s})  \mli \tilde{E}^{\bullet}   (\vec{s})\add \gneg\mathbb{W}\add 
\ade y \tilde{H}_{1}^{\circ}(\vec{s},y)\add\ldots\add \ade y \tilde{H}_{n}^{\circ}(\vec{s},y)  . \end{equation}
\end{lemma}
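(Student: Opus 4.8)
\textbf{Proof proposal for Lemma \ref{l5}.}

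The plan is to argue in $\thr$ and trace the work of $\mathcal X$ starting from a configuration witnessing $\tilde{E}^\circ(\vec{s})$, using the traceability machinery (Lemmas \ref{l1}, \ref{l3}, \ref{l4a}, \ref{l4}) to decide what $\mathcal X$ does over the next $\mathfrak{t}|\vec{s}|$ steps. First I would wait until Environment resolves the antecedent $\tilde E^\circ(\vec s)$, i.e.\ until we are handed a configuration (code) $x$ together with a tuple $\vec v$ with $|\vec v|\leq\mathfrak s|\vec s|$ satisfying $E^\circ(x,\vec s)\mlc\mathbb D^\epsilon(x,\vec s,\vec v)$. So $x$ is reachable, uncorrupt, with yield $E(\vec s)$, background $\leq\max(\vec s)$, and buffer-empty title. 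Using Lemma \ref{l3} I obtain the resource $\mathbb Q(\vec s,\mathfrak t|\vec s|)$; feeding it $x$ (together with $\vec v$) I get the disjunction $\gneg\mathbb N(x,\mathfrak t|\vec s|)\add\bigl(\mathbb N(x,\mathfrak t|\vec s|)\mlc\cle t(\mathbb U_{\sqcup}^{\vec s}(x,t,\mathfrak t|\vec s|)\mlc\mathbb D_{\sqcup}(t,\vec s))\bigr)$. If the left disjunct is chosen, then $x$ has a corrupt unadulterated successor within $\mathfrak t|\vec s|$ steps, which (since $x$ itself is uncorrupt and its background is bounded) contradicts what $\mathbb W$ asserts about $\mathcal X$'s behaviour in the scenario where the adversary does not move; so $\mathbb W$ is false and I choose the $\gneg\mathbb W$ disjunct of (\ref{m2ex}) and retire. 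Otherwise I obtain the $(\mathfrak t|\vec s|)$th unadulterated successor $t$ of $x$ (and a tuple $\vec d$, $|\vec d|\leq\mathfrak s|\vec s|$, with $\mathbb D(t,\vec s,\vec d)$), which is the configuration reached if the adversary stays idle for $\mathfrak t|\vec s|$ steps beginning from $x$.

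Next I would inspect the title component $d_{2\mathfrak y+3}$ of $t$ (decoded via Facts \ref{numerals}, \ref{tri}), since the state recorded there tells us whether $\mathcal X$, starting from $x$, has moved within those $\mathfrak t|\vec s|$ steps and, if so, in which subgame/header. There are exactly three cases to distinguish, and we have the corresponding "resource" to do so because all the relevant data sit in titles, of which there are only finitely many. Case (a): $\mathcal X$ has not moved, so $t$'s buffer is empty and its state is not a move state; then the yield of $t$ is still $E(\vec s)$, and since $x$ (hence $t$) is uncorrupt and $t$ is the $(\mathfrak t|\vec s|)$th unadulterated successor, $x$ itself witnesses $E^\bullet(x,\vec s)$; combined with $\mathbb D^\epsilon_{\sqcup}(x,\vec s)$ (from the $\vec v$ we already have) this gives $\tilde E^\bullet(\vec s)$, and I choose that disjunct of (\ref{m2ex}). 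Case (b): the title of $t$ records a $\pp$-move whose \emph{header} $\alpha$ is not the header of any $\pp$-labmove that leads $E(\vec s)$ to one of its $(\pp,y)$-developments --- e.g.\ an illegal move, or a move in an already-resolved subgame, or an oversized move (the latter detectable from $\mathbb D^\epsilon(t,\vec s,\vec d)$ via $d_{2\mathfrak y+2}$, the numer length, compared against $\mathfrak a$ using clause 2 of Definition \ref{dadm} and Fact \ref{tri}, together with condition 7 of Definition \ref{dcor}); in every such situation $\mathcal X$ has done something inconsistent with $\mathbb W$, so $\mathbb W$ is false and I take $\gneg\mathbb W$. Case (c): the title of $t$ records a legal $\pp$-move with header $\alpha$, where $\alpha$ is the header of the move that brings $E(\vec s)$ down to $H_i(\vec s,y)$ for some uniquely determined $i\in\{1,\ldots,n\}$ (the header pins down which $\add$-choice, or which $\ade$-variable, was acted on). Then I must produce $\ade y\,\tilde H_i^\circ(\vec s,y)$: the value of $y$ needed is the numer $c$ of the move in $t$'s buffer; I reconstruct $c$ bit by bit using Lemma \ref{l4} (which, for each $r$, decides $\bit(r,c)$) and then assemble $c$ via $\mathcal R$-Comprehension --- its length is at most $\mathfrak a|\vec s|$, which is in ${\mathcal R}\amp$ by regularity, so the comprehension bound is legitimate. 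Having obtained $c$, I select the $i$th disjunct of (\ref{m2ex}), choose $c$ for $y$, and then must exhibit a configuration $x'$ witnessing $\tilde H_i^\circ(\vec s,c)$; this is done exactly as in Lemma \ref{l6}, by taking $x'$ to be (the code of) $t$ with its title's $4$th component amended by appending $\pp\alpha$ (and buffer contents/title adjusted accordingly), which is again a reachable uncorrupt configuration whose yield is $H_i(\vec s,c)$, and whose defining data are within the $\mathfrak s|\vec s|$ bound.

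The main obstacle I anticipate is making precise the trichotomy of Case (b) versus Case (c) versus Case (a) \emph{within the system} --- that is, showing that from the finitely-many-titles data alone one can, provably in $\thr$, decide which of "$\mathcal X$ is idle / $\mathcal X$ misbehaves / $\mathcal X$ makes exactly a legal development-move" holds, and that in the misbehaviour case this genuinely entails $\gneg\mathbb W$. Concretely this requires: a careful bookkeeping lemma that the $4$th titular component of $t$ differs from that of $x$ by at most one appended $\pp$-labmove header (because $\mathcal X$ puts at most one symbol per transition into the buffer and the run-tape content changes only on a move-transition); a verification that an "oversized" buffer numer at step $\mathfrak t|\vec s|$ really does contradict the amplitude clause of $\mathbb W$, using that the background at $t$ is still $\leq\max(\vec s)$ because $t$ is an unadulterated successor of $x$; and, for Case (c), the delicate point that the reconstructed constant $c$ is exactly the move $\mathcal X$ will have committed to the run tape once the adversary eventually moves --- this is where providence of $\mathcal X$ (fixed at the start of Section \ref{s19}) is used, so that $c$ is the \emph{complete} numer and not a proper prefix. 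Everything else (decoding titles, comparing with $\mathfrak a|\vec s|$, $\mathfrak s|\vec s|$, $\mathfrak t|\vec s|$, reconstructing $c$, building $x'$) reduces to the arithmetic and tracing facts already established, and the final sentence (\ref{m2ex}) follows by LC from the case analysis.
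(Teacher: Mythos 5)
Your overall architecture matches the paper's: use Lemma \ref{l3} to trace $\mathcal X$ from a configuration witnessing $\tilde E^\circ(\vec s)$, branch on whether $\mathcal X$ stays idle, moves, or misbehaves, and in the ``moves'' case reconstruct the numer via Lemma \ref{l4} plus Comprehension and feed it into $\ade y\,\tilde H_i^\circ(\vec s,y)$. However, there is a real gap at the center of your case analysis: you assert that from $\mathbb Q(\vec s,\mathfrak t|\vec s|)$ you obtain ``the $(\mathfrak t|\vec s|)$th unadulterated successor $t$ of $x$ \ldots which is the configuration reached if the adversary stays idle for $\mathfrak t|\vec s|$ steps.'' That is not what $\mathbb Q$ delivers. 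Unpacking $\mathbb U_{\sqcup}^{\vec s}(x,t,\mathfrak t|\vec s|)$, the resource hands you a pair $(t,b)$ where $t$ is the $b$th unadulterated successor and $b$ is the \emph{greatest} number $\leq\mathfrak t|\vec s|$ for which such a successor exists. If $\mathcal X$ commits a move at some step $k\leq\mathfrak t|\vec s|$, the sequence of unadulterated successors stops and $b=k-1<\mathfrak t|\vec s|$; the configuration ``after $\mathfrak t|\vec s|$ idle steps'' has an altered run tape and is \emph{not} what you get. Because of this misreading, your three-way split driven by ``inspecting $t$'s title'' cannot do the needed work: $t$'s title can tell you whether the next transition from $t$ would be a move transition, but it cannot tell you whether that move falls inside or outside the $\mathfrak t|\vec s|$-step window. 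A machine that loads its buffer and is poised to move at step $\mathfrak t|\vec s|+1$ has exactly the same title data as one poised to move at step $b+1<\mathfrak t|\vec s|$, yet the first is the $E^\bullet$ case and the second is the development case. The paper resolves this by explicitly extracting $b$ from the $\mathbb U$-resource and comparing it to $a=\mathfrak t|\vec s|$ via Fact \ref{tri}; $b=a$ gives $E^\bullet$, $b<a$ gives the move case. Without that comparison your argument does not separate the cases.

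A few smaller inaccuracies. Your citation of Lemma \ref{l6} for building the witness $x'$ is off target: that lemma handles $(\oo,y)$-developments, where one fabricates a configuration with a title pretending the adversary moved; here the successor $t_1$ genuinely exists as $\mathcal X$'s next configuration, and the paper just computes its titular number $e$ from $t_0$'s state and scene (via Lemma \ref{sclemma}) and uses $\vec d^e$. Your ``oversized move'' subcase of Case (b) is vacuous: since $t$ ($=t_0$) is uncorrupt (guaranteed by $\mathbb N(x_0,a)$), condition 7 of Definition \ref{dcor} already bounds the magnitude of the buffered move by $\mathfrak a(\ell)$. Your appeal to providence for ensuring $c$ is ``the complete numer and not a prefix'' is not where providence is used; $c$ is just $t_0$'s buffer numer and, by the HPM stipulation, exactly the numer of the move committed. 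You do omit the paper's extra $\gneg\mathbb W$ branch when $|\vec d^e|\not\leq\mathfrak s|\vec s,w_0|$ (failure of monotonicity of $\mathfrak s$). On the other hand, your Case (b) ``illegal-header'' subcase raises a point that the paper glides over (``Fix such an $i$'' without arguing such an $i$ exists), which is a reasonable concern to flag, though it does not substitute for fixing the $b$-versus-$a$ confusion that is the crux of the proof.
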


\begin{proof} Assume the conditions of the lemma and 
argue in $\thr$  to justify (\ref{m2ex}). The antecedent 
of (\ref{m2ex}) disabbreviates as  
$ \cle x\bigl({E}^\circ(x,\vec{s})\mlc \ade |\vec{v}|\leq\mathfrak{s}|\vec{s}| \mathbb{D}^{\epsilon} (x,\vec{s},\vec{v})\bigr).$
At the beginning, we wait till the $\ade |\vec{v}|\leq\mathfrak{s}|\vec{s}| \mathbb{D}^{\epsilon} (x,\vec{s},\vec{v})$ subcomponent of it 
 is resolved and thus (\ref{m2ex}) is brought down to  
\begin{equation}\label{jh}\begin{array}{l}
 \cle x\bigl({E}^\circ(x,\vec{s})\mlc |\vec{c}|\leq\mathfrak{s}|\vec{s}|\mlc \mathbb{D}^{\epsilon}(x,\vec{s},\vec{c})\bigr)
\mli \\ \tilde{E}^{\bullet}   (\vec{s})\add \gneg\mathbb{W}\add 
\ade y \tilde{H}_{1}^{\circ}(\vec{s},y)\add\ldots\add \ade y \tilde{H}_{n}^{\circ}(\vec{s},y)
\end{array}
\end{equation}
for some  tuple $\vec{c}=c_1,\ldots,c_{2\mathfrak{y}+3}$ of constants. 
From now on, we shall assume that the antecedent of (\ref{jh}) is true, or else we win no matter what. Let then $x_0$ be 
the obviously unique number that, in the role of $x$, makes the antecedent of  (\ref{jh})  true. That is,  we have 
\begin{equation}\label{ks}
{E}^\circ(x_0,\vec{s})\mlc |\vec{c}|\leq\mathfrak{s}|\vec{s}|\mlc \mathbb{D}^{\epsilon}(x_0,\vec{s},\vec{c}).
\end{equation}
In order to win  (\ref{jh}), it is sufficient to figure out how to win  its consequent, so, from now on, our target will be 
\begin{equation}\label{csg}
\tilde{E}^{\bullet}   (\vec{s})\add \gneg\mathbb{W}\add 
\ade y \tilde{H}_{1}^{\circ}(\vec{s},y)\add\ldots\add \ade y \tilde{H}_{n}^{\circ}(\vec{s},y).
\end{equation}

For some ($\ade$) constant $a$, Lemma \ref{l3} provides the resource $a=\mathfrak{t}|\vec{s}|\mlc  \mathbb{Q}(\vec{s},a)$, which 
   disabbreviates as

\[a=\mathfrak{t}|\vec{s}|\mlc  \cla x\Bigl[\mathbb{D}_{\mbox{\tiny $\sqcup$}}(x,\vec{s})\mli  \gneg \mathbb{N}(x,a)\add\Bigl(\mathbb{N}(x,a)\mlc \cle t \bigl(\mathbb{U}_{\mbox{\tiny $\sqcup$}}^{\vec{s}}(x,t,a)\mlc  \mathbb{D}_{\mbox{\tiny $\sqcup$}}(t,\vec{s})\bigr)\Bigr)\Bigr].\]
We use $\vec{c}$ to resolve the $\mathbb{D}_{\mbox{\tiny $\sqcup$}}(x,\vec{s})$ component of the above game, bringing the latter it down to
\begin{equation}\label{exc}
\begin{array}{l}
a=\mathfrak{t}|\vec{s}|\mlc  \cla x\Bigl[|\vec{c}|\leq\mathfrak{s}|\vec{s}|\mlc \mathbb{D}(x,\vec{s},\vec{c})\mli  \\ \gneg \mathbb{N}(x,a)\add\Bigl(\mathbb{N}(x,a)\mlc \cle t \bigl(\mathbb{U}_{\mbox{\tiny $\sqcup$}}^{\vec{s}}(x,t,a)\mlc  \mathbb{D}_{\mbox{\tiny $\sqcup$}}(t,\vec{s})\bigr)\Bigr)\Bigr].
\end{array}
\end{equation}
Plugging the earlier fixed $x_0$ for $x$ in (\ref{exc})  and observing that $|\vec{c}|\leq\mathfrak{s}|\vec{s}|\mlc \mathbb{D}(x_0,\vec{s},\vec{c})$
is true by (\ref{ks}), it is clear that having the resource (\ref{exc}), in fact, implies having  
\begin{equation}\label{re}
a=\mathfrak{t}|\vec{s}|\mlc  \Bigl(\gneg \mathbb{N}(x_0,a)\add\bigl(\mathbb{N}(x_0,a)\mlc \mathbb{U}_{\mbox{\tiny $\sqcup$}}^{\vec{s}}(x_0,t_0,a)\mlc  \mathbb{D}_{\mbox{\tiny $\sqcup$}}(t_0,\vec{s})\bigr)\Bigr)
\end{equation}
for some ($\cle$) $t_0$. We wait till the displayed $\add$-disjunction of (\ref{re}) is resolved by the provider. 

Suppose the left $\add$-disjunct  $\gneg\mathbb{N}(x_0,a)$ is chosen in (\ref{re}). Then $\mathbb{N}(x_0,a)$ has to be false. This means that $x_0$ has a corrupt unadulterated successor. At the same time, from  the ${E}^\circ(x_0,\vec{s})$ conjunct of (\ref{ks}), we know that $x_0$ is a reachable semiuncorrupt configuration. All this, together with (\ref{aprili20}), (\ref{pp1}) and (\ref{pp2}), as can be seen with some analysis, implies that   $\mathbb{W}$ is false.\footnote{Namely, $\mathbb{W}$ is false because $\mathcal X$ ``does something wrong'' after reaching the configuration $x_0$.} So, we win (\ref{csg}) by choosing its $\add$-disjunct 
$\gneg\mathbb{W}$. 

 Now suppose the right $\add$-disjunct is chosen in (\ref{re}), bringing the game down to 
\[a=\mathfrak{t}|\vec{s}|\mlc\mathbb{N}(x_0,a)\mlc \mathbb{U}_{\mbox{\tiny $\sqcup$}}^{\vec{s}}(x_0,t_0,a)\mlc  \mathbb{D}_{\mbox{\tiny $\sqcup$}}(t_0,\vec{s}).\] 
We wait till  the above  is further brought down to 
\begin{equation}\label{exgg}
a=\mathfrak{t}|\vec{s}|\mlc \mathbb{N}(x_0,a)\mlc
|b|\leq\mathfrak{s}|\vec{s}|\mlc \mathbb{U}(x_0,t_0,a,b) 
\mlc |\vec{d}|\leq\mathfrak{s}|\vec{s}|\mlc  \mathbb{D}(t_0,\vec{s},\vec{d}) 
\end{equation}
for some constant $b$ and some 
tuple $\vec{d}$ of constants. Take a note of the fact that, by the  $\mathbb{U}(x_0,t_0,a,b)$ conjunct of (\ref{exgg}), $t_0$ is a $b$th unadulterated successor of $x_0$. Using Fact \ref{tri}, we figure out whether $b=a$ or $b\not=a$.

First, assume $b=a$, so that, in fact, (\ref{exgg}) is  
\begin{equation}\label{exggw}
a=\mathfrak{t}|\vec{s}|\mlc\mathbb{N}(x_0,a)\mlc 
|a|\leq\mathfrak{s}|\vec{s}|\mlc \mathbb{U}(x_0,t_0,a,a) 
\mlc |\vec{d}|\leq\mathfrak{s}|\vec{s}|\mlc  \mathbb{D}(t_0,\vec{s},\vec{d}) .
\end{equation}
In this case we choose $\tilde{E}^{\bullet}   (\vec{s})$  
in (\ref{csg}) and then further bring the latter down to   
\begin{equation}\label{exl}
 \cle x\bigl(E^{\bullet}   (x,\vec{s})\mlc |\vec{c}|\leq \mathfrak{s}|\vec{s}|\mlc \mathbb{D}^\epsilon(x,\vec{s},\vec{c})\bigr).
\end{equation}
According to  (\ref{ks}),  ${E}^\circ(x_0,\vec{s})$  is true.   From the first and the fourth conjuncts of (\ref{exggw}), we also know that the run tape content of $e$ persists for ``sufficiently long'', namely, for at least $\mathfrak{t}|\vec{s}|$ steps. Therefore, ${E}^\circ(x_0,\vec{s})$ implies 
${E}^{\bullet} (x_0,\vec{s})$. For this reason,  (\ref{exl}) is true, as it follows from (\ref{ks}). We thus win.

Now, for the rest of this proof, assume $b\not=a$. Note that then, by the $\mathbb{U}(x_0,t_0,a,b)$ conjunct of (\ref{exgg}), 
$b<a$ and, in the scenario that we are dealing with, $\mathcal X$ made a move on the $(b+1)$st step after reaching configuration $x_0$, i.e., immediately ($1$ step) after reaching configuration $t_0$. Let us agree to refer to that move as $\sigma$, and use $t_1$ to refer to the configuration that describes the  $(b+1)$st step after reaching configuration $x_0$ --- that is, the step on which the move $\sigma$ was made.     In view of \cite{cl12}'s stipulation that an HPM never adds anything to its buffer when transitioning to a move state, we find that $\sigma$ is exactly the move found in configuration $t_0$'s buffer. 

Applying Comprehension to the formula (\ref{ext}) of Lemma \ref{l4} and taking $\vec{c}$ in the role of $\vec{v}$, we get
\[\begin{array}{l}
\ade |w|\leq \mathfrak{a}|\vec{s}|\cla r< \mathfrak{a}|\vec{s}|\Bigl(\bit(r,w)\leftrightarrow \\
\cle x \cle t\cle y \bigl(\mathbb{N}(x,\mathfrak{t}|\vec{s}|)\mlc \mathbb{D}^\epsilon(x,\vec{s},\vec{c})\mlc 
  \mathbb{U}_{\mbox{\tiny $\exists$}}^{\vec{s}}(x,t) \mlc \mathbb{F}(t,y)\mlc \bit(r,y)\bigr)\Bigr).
\end{array}\]
The provider of the above resource will have to choose a value $w_0$ for $w$ and bring the game down to
\begin{equation}\label{stage}
\begin{array}{l}
|w_0|\leq \mathfrak{a}|\vec{s}|\mlc \cla r< \mathfrak{a}|\vec{s}|\Bigl(\bit(r,w_0)\leftrightarrow \\
\cle x \cle t\cle y \bigl( \mathbb{N}(x,\mathfrak{t}|\vec{s}|)\mlc \mathbb{D}^\epsilon(x,\vec{s},\vec{c})\mlc 
  \mathbb{U}_{\mbox{\tiny $\exists$}}^{\vec{s}}(x,t) \mlc \mathbb{F}(t,y)\mlc \bit(r,y)\bigr)\Bigr).
\end{array}
\end{equation}

From (\ref{ks}) we know that $\mathbb{D}^\epsilon(x_0,\vec{s},\vec{c})$ is true, and then from $\pa$ we know that $x_0$ is a unique number satisfying  $\mathbb{D}^\epsilon(x_0,\vec{s},\vec{c})$. Also remember from (\ref{exgg}) that $\mathfrak{t}|\vec{s}|=a$. For these reasons, the (para)formula 
\begin{equation}\label{qxw}
\cle x \cle t\cle y \bigl( \mathbb{N}(x,\mathfrak{t}|\vec{s}|)\mlc \mathbb{D}^\epsilon(x,\vec{s},\vec{c})\mlc 
  \mathbb{U}_{\mbox{\tiny $\exists$}}^{\vec{s}}(x,t) \mlc \mathbb{F}(t,y)\mlc \bit(r,y)\bigr)
\end{equation}
can be equivalently re-written as 
\begin{equation}\label{qxe}
\cle t\cle y \bigl( \mathbb{N}(x_0,a)\mlc 
  \mathbb{U}_{\mbox{\tiny $\exists$}}^{\vec{s}}(x_0,t) \mlc \mathbb{F}(t,y)\mlc \bit(r,y)\bigr).
\end{equation}
From the $a=\mathfrak{t}|\vec{s}|$ and $ \mathbb{U}(x_0,t_0,a,b)$ conjuncts of (\ref{exgg}), by $\pa$, we know that $t_0$ is a unique number satisfying $\mathbb{U}_{\mbox{\tiny $\exists$}}^{\vec{s}}(x_0,t_0)$. From (\ref{exgg}) we also know that 
$ \mathbb{N}(x_0,a)$ is true. 
And, from $\pa$, we also know that there is ($\cle$) a unique number --- let us denote it by $y_0$ --- satisfying  $\mathbb{F}(t_0,y_0)$. Consequently, (\ref{qxe}) can be further re-written as 
$\bit(r,y_0)$. So, (\ref{qxw}) is equivalent to $\bit(r,y_0)$, which allows us to re-write (\ref{stage}) as
\begin{equation}\label{stage1}
\begin{array}{c}
|w_0|\leq \mathfrak{a}|\vec{s}|\mlc \cla r< \mathfrak{a}|\vec{s}|\bigl(\bit(r,w_0)\leftrightarrow \bit(r,y_0)\bigr).
\end{array}
\end{equation}

With the $ \mathbb{N}(x_0,a)$ conjunct of (\ref{exgg}) in mind, by $\pa$ we can see that $t_0$, being a $b$th unadulterated successor of $x_0$ with $b<a$, is uncorrupt. If so, remembering that $y_0$ is the numer of the move $\sigma$ found in $t_0$'s buffer, by condition 7 of Definition \ref{dcor}, we have $|y_0|\leq \mathfrak{a}|\vec{s}|$.  This fact, together with (\ref{stage1}), obviously implies that $y_0$ and $w_0$ are simply the same. Thus, $w_0$ is the numer of $\sigma$. 

In view of the truth of the $\mathbb{D}(t_0,\vec{s},\vec{d})$ conjunct of (\ref{exgg}), $d_{2\mathfrak{y}+3}$ contains information on the  header of $\sigma$. From this header, we can determine the number $i\in\{1,\ldots,n\}$ such that the move $\sigma$ by $\mathcal X$ in position $E(\vec{s})$ yields $H_i(\vec{s},w_0)$. Fix such an $i$. Observe that the following is true:
\begin{equation}\label{dges}H_{i}^{\circ}(t_1,\vec{s},w_0).\end{equation}

From $d_{2\mathfrak{y}+3}$  we determine the state of $t_0$. Lemma \ref{sclemma} further allows us to determine the scene of $t_0$ as well. These two pieces of information, in turn, determine the titular number  of $t_0$'s successor configuration $t_1$. Let $e$ be that titular number. Let $\vec{d}^e$ be the same as $\vec{d}$, only with $e$ instead of $d_{2\mathfrak{y}+3}$. 

From the $E^{\circ}(x_0,\vec{s})$ conjunct of  (\ref{ks}) we know that $x_0$ is uncorrupt and hence semiuncorrupt. This implies that $t_1$ is also 
 semiuncorrupt, because  $x_0$ has evolved to $t_1$ in the scenario where Environment made no moves. For this reason, the titular number $e$ of $t_0$ is smaller than $\mathfrak{m}$. From $E^{\circ}(x_0,\vec{s})$ and $x_0$'s being uncorrupt, in view of clause 3 of Definition \ref{dcor}, we also know that $\mathfrak{m}\leq\mathfrak{s}|\vec{s}|$. Consequently, 
$e\leq\mathfrak{s}|\vec{s}|$. This fact, together with the $|\vec{d}|\leq\mathfrak{s}|\vec{s}|$ conjunct of (\ref{exgg}), implies that   
\begin{equation}\label{dges1}
|\vec{d}^e|\leq\mathfrak{s}|\vec{s}|.
\end{equation}
 Next, from (\ref{exgg}) again, we know that $\mathbb{D}(t_0,\vec{s},\vec{d})$ is true. This fact, in view of our earlier assumption that $\mathcal X$ never moves its scanning heads and never makes any changes on its work tapes on a transition to a move state, obviously implies that the following is also true:
\begin{equation}\label{dges2}
\mathbb{D}(t_1,\vec{s},w_0,\vec{d}^e).\end{equation}

At this point, at last, we are  ready to describe our strategy for (\ref{csg}). First, relying on Fact \ref{tri} several times, we figure out whether  $|\vec{d}^e|\leq\mathfrak{s}|\vec{s},w_0|$. If not, then, in view of (\ref{dges1}), $\mathfrak{s}$ is not monotone and hence $\mathbb{W}$ is false. In this case we select the $\gneg\mathbb{W}$ disjunct of (\ref{csg}) and celebrate victory. Now suppose $|\vec{d}^e|\leq\mathfrak{s}|\vec{s},w_0|$.
In this case we select the $\ade y \tilde{H}_{i}^{\circ}(\vec{s},y)$ disjunct of (\ref{csg}), then bring the resulting game down to $\tilde{H}_{i}^{\circ}(\vec{s},w_0)$, i.e., to 
 $\cle x\bigl(H_{i}^{\circ}(x,\vec{s},w_0)\mlc \ade |\vec{v}|\leq\mathfrak{s}|\vec{s},w_0|\mathbb{D}(x,\vec{s},w_0,\vec{v})\bigr)$, which we then further bring down to  
\[\cle x\bigl(H_{i}^{\circ}(x,\vec{s},w_0)\mlc |\vec{d}^e|\leq\mathfrak{s}|\vec{s},w_0|\mlc \mathbb{D}(x,\vec{s},w_0,\vec{d}^e)\bigr).\] The latter is true in view  (\ref{dges}), (\ref{dges2}) and our assumption $|\vec{d}^e|\leq\mathfrak{s}|\vec{s},w_0|$, so we win. 
\end{proof}

\subsection{Main lemma}\label{ssss}
 
\begin{lemma}\label{m2c}
Assume $E(\vec{s})$ is a formula all of whose free variables are among $\vec{s}$. 
Then $\thr$  proves $\tilde{E}^{\circ}  (\vec{s})  \mli \overline{E(\vec{s})}$.  
\end{lemma}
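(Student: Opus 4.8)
The plan is to prove $\tilde{E}^{\circ}(\vec{s}) \mli \overline{E(\vec{s})}$ by induction on the complexity of the formula $E(\vec{s})$, using the three ``senior lemmas'' (Lemmas \ref{l6}, \ref{l5}) together with the circle/black-circle lemmas (Lemmas \ref{august20b}, \ref{august20a}) as the workhorses, and reducing everything else by Logical Consequence. The general shape mirrors the informal sketch in Section \ref{gggg}: $\tilde{E}^{\circ}(\vec{s})$ asserts that we have reached a reachable uncorrupt configuration whose yield is $E(\vec{s})$, and we must extract from this the ability to win $\overline{E(\vec{s})}$.

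First I would treat the base/easy cases where $E(\vec{s})$ is non-critical. Here Lemma \ref{l5} lets us pass from $\tilde{E}^{\circ}(\vec{s})$ to $\tilde{E}^{\bullet}(\vec{s})\add\gneg\mathbb{W}\add(\text{disjunction of }\ade y\tilde{H}_i^{\circ})$; but since $E$ is non-critical it has no $(\pp,y)$-developments of the critical kind that force us to descend, and the key point is that, by Lemma \ref{august20b}, $E^{\bullet}(z,\vec{s})$ (hence $\tilde{E}^{\bullet}(\vec{s})$) already implies $\elz{\overline{E(\vec{s})}}$, which together with Lemma \ref{jan4d} ($\mathbb{W}\mld\cla\overline{E(\vec{s})}$) suffices by LC to win $\overline{E(\vec{s})}$ in every branch of the $\add$-disjunction. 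For the inductive cases I would split on the principal connective of $E$. For $\mlc$, both conjuncts inherit a $\tilde{(\cdot)}^{\circ}$ resource from $\tilde{E}^{\circ}(\vec{s})$ (the yield of a configuration distributes over $\mlc$), and we apply the induction hypothesis to each. For $\mld$, one disjunct's yield is still ``live'' and we recurse there. For $\adc H_0 H_1$ and $\ada x H(x)$: here the environment chooses, so after the environment's move we have a $(\oo,y)$-development $H$ of $E$; Lemma \ref{l6} gives $\tilde{E}^{\circ}(\vec{s})\mli\tilde{H}^{\circ}(\vec{s},y)$, and the induction hypothesis applied to $H$ (which has strictly smaller complexity) finishes the job. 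For $\add H_0 H_1$ and $\ade x H(x)$: here $\mathcal X$ chooses, so we invoke Lemma \ref{l5} to get $\tilde{E}^{\bullet}(\vec{s})\add\gneg\mathbb{W}\add\bigadd_i \ade y\tilde{H}_i^{\circ}(\vec{s},y)$; in the $\gneg\mathbb{W}$ branch we win $\overline{E(\vec{s})}$ via Lemma \ref{jan4d}; in the $\tilde{E}^{\bullet}(\vec{s})$ branch we win via Lemma \ref{august20a} (note $E$ is now critical) or Lemma \ref{august20b} as appropriate; and in each $\ade y\tilde{H}_i^{\circ}(\vec{s},y)$ branch we choose the corresponding move in $\overline{E(\vec{s})}$ and then apply the induction hypothesis to $H_i(\vec{s},y)$. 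The blind quantifier cases $\cla xH(x)$, $\cle xH(x)$ are routine: the yield commutes with $\cla,\cle$ and we recurse on $H$.

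The main obstacle I anticipate is bookkeeping the interaction between the $\tilde{(\cdot)}^{\circ}$/$\tilde{(\cdot)}^{\bullet}$ abbreviations and the overline operation across the $\add$-branches produced by Lemma \ref{l5} — specifically, making sure that in the $\ade y\tilde{H}_i^{\circ}(\vec{s},y)$ branch the induction hypothesis is actually applicable, which requires that $H_i(\vec{s},y)$ have all its free variables among $\vec{s},y$ and strictly smaller complexity than $E(\vec{s})$ (both true for $(\pp,y)$-developments), and that $\overline{H_i(\vec{s},y)}$ relate correctly to $\overline{E(\vec{s})}$ under the move that $\mathcal X$ makes. The overline operation replaces politerals uniformly, so it commutes with taking developments, and the move $\mathcal X$ makes in $E(\vec{s})$ is matched by the same move in $\overline{E(\vec{s})}$ (since $\add,\ade$ occurrences are inherited position-for-position); thus $\seq{\sigma}!\overline{E(\vec{s})}=\overline{H_i(\vec{s},y)}$. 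Once this is pinned down, each inductive case collapses to: make the appropriate move (or wait for the environment's move), invoke one of Lemmas \ref{l6}, \ref{l5}, together with Lemmas \ref{august20b}, \ref{august20a}, \ref{jan4d}, and then close by LC and the induction hypothesis. I would also need the observation that $\tilde{E}^{\bullet}(\vec{s})$ implies $\tilde{E}^{\circ}(\vec{s})$ (immediate from $E^{\bullet}\mli E^{\circ}$) so that the machinery chains smoothly, and that $\thr$ proves $\tilde{X}^{\circ}(\vec{s})$ for the empty tuple in the case relevant to the final completeness argument — but that last point belongs to the subsequent corollary, not to this lemma.
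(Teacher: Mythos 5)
There is a genuine structural gap. The paper's induction hypothesis is applied to the $(\oo,y)$- and $(\pp,y)$-\emph{developments} of the whole formula $E(\vec{s})$ (formulas obtained by resolving one surface choice operator anywhere in $E$), not to its classical subformulas; the argument is uniform in $E$, with the case analysis driven by which $\add$-disjunct the provider of Lemma \ref{l5} chooses, and with the criticality dichotomy (Lemmas \ref{august20a}/\ref{august20b}) deciding whether the current position is safe to sit on while waiting for Environment, whose move is then tracked by Lemma \ref{l6} and handed to the induction hypothesis. Your plan instead descends into subformulas in the $\mlc$, $\mld$, $\cla$, $\cle$ cases, justified by the claim that ``the yield of a configuration distributes over $\mlc$'' (and commutes with $\cla,\cle$). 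No such fact is available, and it is false in this setting: $F^{\circ}(x,\vec{s})$ and hence $\tilde{F}^{\circ}(\vec{s})$ are tied to reachable configurations of the fixed machine $\mathcal X$ playing the fixed sentence $X$, and their yields are exactly formulas of the form $\seq{\Gamma}!X$; a proper classical subformula of such a yield (e.g.\ $F_0$ when $E=F_0\mlc F_1$) is in general never the yield of any reachable configuration, so $\tilde{F_0}^{\circ}$ need not be true or provable, and no lemma transfers $\tilde{E}^{\circ}$ to it. Consequently your inductive step for $\mlc,\mld,\cla,\cle$ — and hence for every formula whose choice operators are not at the root — cannot be carried out with the stated resources. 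The fix is exactly the paper's move: keep the whole formula, apply Lemma \ref{l5} to $E$ itself (its developments resolve choice operators at arbitrary surface positions, not just at the root), and recurse only along developments.

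A second, related lapse: you treat non-critical $E$ as an essentially closed ``base case,'' claiming that Lemma \ref{august20b} plus Lemma \ref{jan4d} suffice ``by LC to win $\overline{E(\vec{s})}$ in every branch.'' But non-critical formulas can still contain $\adc,\ada$ (and even $\add,\ade$ under $\mld$), so after the provider chooses $\tilde{E}^{\bullet}(\vec{s})$ the truth of $\elz{\overline{E(\vec{a})}}$ only wins the play in which Environment stays silent; if Environment moves, you must continue, which is precisely where Lemma \ref{l6} and the induction hypothesis on the resulting $(\oo,y)$-development are needed (Subcase 3.2 of the paper's proof). Your separate $\adc/\ada$ case covers this only when the choice operator is the main connective, and there it silently skips the step establishing that the position is safe before Environment moves (harmless at the root, where the elementarization is $\twg$, but not in general). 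So the ingredients you list are the right ones, but the induction must be reorganized around developments of $E$ rather than its subformulas for the proof to go through.
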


\begin{proof} We prove this lemma by (meta)induction on the complexity of $E(\vec{s})$. By the induction hypothesis, for any $(\oo,y)$- or $(\pp,y)$-development $H_i(\vec{s},y)$ of $E(\vec{s})$ (if there are any), $\thr$  proves 
\begin{equation}\label{m2d}
\tilde{H}_{i}^{\circ}  (\vec{s},y)  \mli \overline{H_i(\vec{s},y)},
\end{equation} 
which is the same as 
\begin{equation}\label{h5}
\cle x\bigl({H}_{i}^{\circ}  (x,\vec{s},y)\mlc \mathbb{D}^{\epsilon}_{\mbox{\tiny $\sqcup$}} (x,\vec{s})\bigr)  \mli \overline{H_i(\vec{s},y)}.
\end{equation}

Argue in $\thr$  to justify $\tilde{E}^{\circ}  (\vec{s})  \mli \overline{E(\vec{s})}$, which disabbreviates as 
\begin{equation}\label{h1}
 \cle x\bigl({E}^{\circ}  (x,\vec{s})\mlc  \mathbb{D}^{\epsilon}_{\mbox{\tiny $\sqcup$}} (x,\vec{s})\bigr)  \mli \overline{E(\vec{s})}.
\end{equation}
To win (\ref{h1}), we wait till Environment brings it down to  
\begin{equation}\label{h2}
\cle x\bigl({E}^{\circ}  (x,\vec{a})\mlc  |\vec{c}|\leq\mathfrak{s}|\vec{s}|\mlc \mathbb{D}^\epsilon(x,\vec{a},\vec{c}) \bigr)  \mli \overline{E(\vec{a})}
\end{equation}
for some  tuples $\vec{a}=a_1,\ldots,a_n$ and $\vec{c}=c_1,\ldots,c_{2\mathfrak{y}+3}$ of constants.\footnote{Here, unlike the earlier followed practice, for safety, we are reluctant to use the names  $\vec{s},\vec{v}$ for those constants.} 
  Assume the antecedent of (\ref{h2})  is true (if not, we win). Our goal is to show how to win the consequent $\overline{E(\vec{a})}$.  Let $b$ be the (obviously unique) constant satisfying the  antecedent of (\ref{h2}) in the role of $x$.

Let ${H}_{1}^{\circ}(\vec{s},y),\ldots, {H}_{n}^{\circ}(\vec{s},y)$ be all of the $(\top,y)$-developments of $E(\vec{s})$.  By  Lemma 
\ref{l5},  the following resource is at our disposal: 
\begin{equation}\label{h3} 
\begin{array}{l}
\cle x\bigl({E}^{\circ}  (x,\vec{s})\mlc  \mathbb{D}^{\epsilon}_{\mbox{\tiny $\sqcup$}}(x,\vec{s})\bigr)  \mli \\ \tilde{E}^{\bullet}  (\vec{s})\add \gneg\mathbb{W}\add \ade y \tilde{H}_{1}^{\circ}(\vec{s},y)\add\ldots\add \ade y \tilde{H}_{n}^{\circ}(\vec{s},y). 
\end{array} \end{equation}
We bring (\ref{h3}) down to 
\begin{equation}\label{h4}
\begin{array}{l} 
\cle x\bigl({E}^{\circ}  (x,\vec{a})\mlc |\vec{c}|\leq\mathfrak{s}|\vec{a}|\mlc  \mathbb{D}^{\epsilon}(x,\vec{a},\vec{c})\bigr) \mli \\ \tilde{E}^{\bullet}  (\vec{a})\add\gneg\mathbb{W}\add   \ade y \tilde{H}_{1}^{\circ}(\vec{a},y)\add\ldots\add \ade y \tilde{H}_{n}^{\circ}(\vec{a},y). 
\end{array}
\end{equation}
 Since the  antecedent of (\ref{h4}) is identical to the  antecedent of (\ref{h2}) and hence is true,   the provider of (\ref{h4}) will have to choose one of the $\add$-disjuncts in the consequent  
\begin{equation} \label{m2f}   \tilde{E}^{\bullet}  (\vec{a})\add \gneg\mathbb{W}\add  \ade y \tilde{H}_{1}^{\circ}(\vec{a},y)\add\ldots\add \ade y \tilde{H}_{n}^{\circ}(\vec{a},y). \end{equation}

\begin{description}
\item[Case 1] $\gneg\mathbb{W}$ is chosen in (\ref{m2f}). $\mathbb{W}$ has to be false, or else the provider loses. By Lemma \ref{jan4d}, the resource $\mathbb{W}\mld \cla\overline{E(\vec{s})}$ is at our disposal, which, in view of $\mathbb{W}$'s being false, simply means having $\cla\overline{E(\vec{s})}$. But the strategy that wins the latter, of course, also (``even more so'') wins our target $\overline{E(\vec{a})}$.

\item[Case 2] One of $\ade y \tilde{H}_{i}^{\circ}(\vec{a},y)$ is chosen in (\ref{m2f}). This should be followed by a further choice of some constant  $d$ for  $y$, yielding $\tilde{H}_{i}^{\circ}(\vec{a},d)$. Plugging $\vec{a}$  and $d$ for $\vec{s}$  and $y$ in (\ref{m2d}), we get $\tilde{H}_{i}^{\circ}  (\vec{a},d)  \mli\overline{ H_i(\vec{a},d)}$. Thus,  the two resources $\tilde{H}_{i}^{\circ}(\vec{a},d)$ and $\tilde{H}_{i}^{\circ}  ( \vec{a},d)  \mli \overline{H_i( \vec{a},d)}$ are at our disposal. Hence so is $\overline{H_i( \vec{a},d)}$.  
 But, remembering that the formula $H_i(\vec{s},y)$ is a $(\pp,y)$-development of the formula $E(\vec{s})$, we can now win $\overline{E(\vec{a})}$ by making a move $\alpha$  that brings ($E(\vec{a})$ down to  $H_i( \vec{a},d)$ and hence) $\overline{E(\vec{a})}$ down to  $\overline{H_i( \vec{a},d)}$, which we already know how to win.   For example, imagine $E(\vec{s})$ is $Y(\vec{s})\mli Z(\vec{s})\add T(\vec{s})$ and $H_i(\vec{s},y)$ is $Y(\vec{s})\mli Z(\vec{s})$.  Then the above move $\alpha$ will be ``$1.0$''. It indeed brings 
($Y(\vec{a})\mli Z(\vec{a})\add T(\vec{a})$ down to $Y(\vec{a})\mli Z(\vec{a})$ and hence) $\overline{Y(\vec{a})\mli Z(\vec{a})\add T(\vec{a})}$ down to $\overline{Y(\vec{a})\mli Z(\vec{a})}$. As another example, imagine $E(\vec{s})$ is $Y(\vec{s})\mli \ade w 
Z(\vec{s},w)$ and $H_i( \vec{s},y)$ is $Y(\vec{s})\mli Z( \vec{s},y)$.
Then the above move $\alpha$ will be ``$1.\#d$''. It indeed brings
$\overline{Y(\vec{a})\mli \ade w Z( \vec{a},w)}$ down to
$\overline{Y(\vec{a})\mli Z( \vec{a},d)}$.
\item[Case 3]  $\tilde{E}^{\bullet}   (\vec{a})$, i.e., $\cle x\bigl({E}^{\bullet}(x,\vec{a})\mlc \mathbb{D}^{\epsilon}_{\mbox{\tiny $\sqcup$}} (x,\vec{a})\bigr)$, 
is chosen in (\ref{m2f}). It has to be true, or else the provider loses. For this reason, $\cle x E^{\bullet} (x,\vec{a})$ is also true.

\item[Subcase 3.1] The formula $\tilde{E}^{\bullet}   (\vec{s})$ is critical. Since $\cle x E^{\bullet} (x,\vec{a})$ is true, so is $\cle E^{\bullet} (z,\vec{s})$. By Lemma \ref{august20a}, we also have 
$\cle E^{\bullet} (z,\vec{s})  \mli \cla\overline{E(\vec{s})}$. So, we have a winning strategy for $\cla\overline{E(\vec{s})}$. Of course, the same strategy also wins $\overline{E(\vec{a })}$.

\item[Subcase 3.2] The formula $\tilde{E}^{\bullet}   (\vec{s})$ is not critical. From $\cle x E^{\bullet} (x,\vec{a})$ and Lemma \ref{august20b}, by LC, we find that the elementarization of $\overline{E(\vec{a})}$ is true.
 This obviously means that if Environment does not move in  $\overline{E(\vec{a})}$, we win the latter. So, assume Environment makes a move $\alpha$ in $\overline{E(\vec{a})}$. 
The move should be legal, or else we  win. Of course,  for one of the $(\oo,y)$-developments 
$H_i(\vec{s},y)$ of the formula $E(\vec{s})$ and some constant $d$,  $\alpha$ brings $ \overline{E(\vec{a})}$  down to $\overline{H_i(\vec{a},d)}$.  For example, if  $E(\vec{s})$ is $Y(\vec{s})\mli Z(\vec{s})\adc T(\vec{s})$, $\alpha$ could be the move ``$1.0$'', which brings $\overline{Y(\vec{a})\mli Z(\vec{a})\adc T(\vec{a})}$ down to $\overline{Y(\vec{a})\mli Z(\vec{a})}$; the formula $Y(\vec{s})\mli Z(\vec{s})$ is indeed a $(\oo,y)$-development of the formula $Y(\vec{s})\mli Z(\vec{s})\adc T(\vec{s})$. As another example, imagine $E(\vec{s})$ is $Y(\vec{s})\mli \ada w Z(\vec{s},w)$. Then the above move $\alpha$ could be ``$1.\#d$'',  which brings $\overline{Y(\vec{a})\mli \ada wZ( \vec{a},w)}$ down to $\overline{Y(\vec{a})\mli Z( \vec{a},d)}$; the formula $Y(\vec{s})\mli Z(  \vec{s},y)$ is indeed a $(\oo,y)$-development of the formula $Y(\vec{s})\mli \ada wZ( \vec{s},w)$. Fix the above formula $H_i(\vec{s},y)$ and constant $d$. Choosing   $\vec{a}$ and $d$ for   $\vec{s}$ and $y$ in the resource $\tilde{E}^{\circ}   ( \vec{s})  \mli   \tilde{H}_{i}^{\circ}(  \vec{s},y)$ provided 
by   Lemma 
\ref{l6}, we get the resource $\tilde{E}^{\circ} (\vec{a})  \mli  \tilde{H}_{i}^{\circ}(\vec{a},d)$. 
Since $\tilde{E}^{\bullet}   (\vec{a})$  
is chosen in (\ref{m2f}), we have a winning strategy for $\tilde{E}^{\bullet}   (\vec{a})$ and hence for the weaker 
$\tilde{E}^{\circ}   (\vec{a})$. This, together with   $\tilde{E}^{\circ} (\vec{a})  \mli \tilde{H}_{i}^{\circ}(\vec{a},d)$, by LC, yields  
$\tilde{H}_{i}^{\circ}(\vec{a},d)$. 
 By choosing   $\vec{a}$ and $d$ for    $\vec{s}$ and $y$  in (\ref{m2d}), we now get the resource 
$\overline{H_i(\vec{a},d)}$. That is, we have a strategy for the game $\overline{H_i(\vec{a},d)}$ to which $\overline{E(\vec{a})}$ has 
evolved after Environment's move $\alpha$. We switch to that strategy
and win.\qedhere
\end{description}
\end{proof}

\subsection{Conclusive steps}\label{sconc}
Now we are ready to claim the target result of this section.  Let $a$ be the (code of the) start configuration of $\mathcal X$
where the run tape is empty. Without loss of generality we may assume that the titular number of $a$ is $0$. Let $\vec{0}$ stand for a $(2\mathfrak{y}+3)$-tuple of $0$s.   Of course,  
$\pa$ proves $X^\circ(\hat{a})\mlc \mathbb{D}^\epsilon(\hat{a},\vec{0})$,\footnote{Whatever would normally appear as an additional $\vec{s}$ argument of $\mathbb{D}^\epsilon$ is empty in the present case.} and hence $\pa$ also proves 
$\cle x\bigl(X^\circ(x)\mlc \mathbb{D}^\epsilon(x,\vec{0})\bigr)$. Then, by LC, $\thr$ proves $\cle x\bigl(X^\circ(x)\mlc \ade|\vec{v}|\leq \mathfrak{s}(0)\mathbb{D}^\epsilon(x,\vec{v})\bigr)$, i.e., 
$\cle x\bigl(X^\circ(x)\mlc \mathbb{D}^{\epsilon}_{\mbox{\tiny $\sqcup$}}(x)\bigr)$, i.e.,  
$\tilde{X}^\circ$.
 By Lemma \ref{m2c},   $\thr$  also proves $\tilde{X}^{\circ}   \mli \overline{X}$. These two   imply the desired  $\overline{X}$ by LC, thus completing our proof of the extensional completeness of $\thr$.

\section{Intensional completeness}\label{s19e}

\subsection{The intensional completeness of \texorpdfstring{$\areleven^{\mathcal R}_{{\mathcal A}!}$}{CLA11AR}}

 Let us fix an arbitrary regular theory $\thr$ and an arbitrary sentence $X$ with an $\mathcal R$ tricomplexity solution. Proving the intensional completeness of $\areleven^{\mathcal R}_{{\mathcal A}!}$ --- i.e., the completeness part of clause 2 of \mbox{Theorem \ref{tt1}} ---  means showing  that $\areleven^{\mathcal R}_{{\mathcal A}!}$ proves (not only $\overline{X}$ but also) $X$.  This is what the present section is devoted to. Let $\mathcal X$, $(\mathfrak{a},\mathfrak{s},\mathfrak{t})$, $\mathbb{W}$ be as in Section \ref{s19}, and so be the meaning of the overline notation.

\begin{lemma}\label{feb15}
 $\thr\vdash \mathbb{W}\mli X$.
\end{lemma}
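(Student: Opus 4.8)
The plan is to exploit the work already done in Section \ref{s19}, together with the $\pa$-fact that $\mathbb{W}$ entails all the ``good behaviour'' of $\mathcal X$, to collapse the overline modification $\overline{X}$ back to $X$. Recall that $\overline{X}$ is obtained from $X$ by replacing every politeral $L$ with $\mathbb{W}\mli L$, and that we already have $\thr\vdash\overline{X}$ (the conclusion of Section \ref{s19}), hence also $\areleven^{\mathcal R}_{{\mathcal A}!}\vdash\overline{X}$ since the latter is a superset of the former. So it suffices to show $\areleven^{\mathcal R}_{{\mathcal A}!}\vdash\mathbb{W}\mli X$: combining $\overline{X}$ with $\mathbb{W}\mli X$ is not quite a one-line LC step, but the natural route is first to establish, by induction on the complexity of $E$, the logical fact that $\mathbb{W}$ together with $\overline{E}$ entails $E$ — indeed $\mathbb{W}\mlc(\mathbb{W}\mli L)$ entails $L$ for each politeral, and this propagates up through $\mlc,\mld,\adc,\add,\cla,\cle,\ada,\ade$ by CL12-provable steps exactly as in the proof of Lemma \ref{jan4d}. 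Applying this to $E=X$ gives $\thr\vdash \mathbb{W}\mli(\overline{X}\mli X)$, and then $\thr\vdash\overline{X}$ discharges $\overline{X}$, leaving $\thr\vdash\mathbb{W}\mli X$ — which is even stronger than what is claimed for $\areleven^{\mathcal R}_{{\mathcal A}!}$.

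Here is how I would lay out the steps. First I would record the auxiliary claim: for every formula $E$, $\cltw$ proves $\mathbb{W},\overline{E}\intimpl E$. The base case is $E$ a literal $L$, where $\overline{E}=\mathbb{W}\mli L$ and $\mathbb{W},\mathbb{W}\mli L\intimpl L$ is classically (hence $\cltw$-) valid. The inductive cases for the binary connectives $H_0\mlc H_1$, $H_0\mld H_1$, $H_0\adc H_1$, $H_0\add H_1$ and the quantifiers $\cla x,\cle x,\ada x,\ade x$ each follow by LC from the induction hypotheses for the immediate subformulas, using the fact that $\mathbb{W}$ is a $\cla,\cle,\adc,\add,\ada,\ade$-free (elementary) sentence and can therefore be freely duplicated and distributed — this is the same bookkeeping already performed in Lemma \ref{jan4d} and in the proof of Lemma \ref{august20b}. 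Second, I would instantiate the claim at $E=X$ to obtain $\thr\vdash\mathbb{W}\mli(\overline{X}\mli X)$ by LC. Third, I would invoke the main result of Section \ref{s19}, namely $\thr\vdash\overline{X}$ (established in Subsection \ref{sconc}), and combine it with the previous line by LC to get $\thr\vdash\mathbb{W}\mli X$. Finally, since by definition $\areleven^{\mathcal R}_{{\mathcal A}!}$ has all theorems of $\thr$ among its theorems (its supplementary-axiom set ${\mathcal A}!\supseteq{\mathcal A}$), we conclude $\areleven^{\mathcal R}_{{\mathcal A}!}\vdash\mathbb{W}\mli X$, which is the assertion of the lemma.

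The only genuine subtlety — and the step I expect to be the main obstacle — is verifying that the auxiliary claim ``$\mathbb{W},\overline{E}\intimpl E$ is $\cltw$-provable'' really does go through for the full language uniformly in $E$, rather than just for particular formulas. Because $\overline{E}$ differs from $E$ only at politerals, and $\mathbb{W}$ is an honest elementary sentence, the required sequents are in each case of a shape whose $\cltw$-provability is routine; but since $E$ ranges over all formulas we cannot appeal to the analyticity/decidability of $\cltw$ for a fixed formula and must instead give the induction explicitly, appealing to Thesis 9.2 of \cite{cl12} (our ``Church–Turing thesis'' for $\thr$-provability) for the schematic steps, exactly as flagged in the footnote accompanying the discussion of Fact \ref{rr26}. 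Everything else is a direct application of LC and of the already-proven $\thr\vdash\overline{X}$, so no new induction on proof length or on formula complexity beyond this one small lemma is needed.
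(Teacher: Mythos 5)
Your overall route is the paper's: a (meta)induction on the complexity of $E$ showing that $\overline{E}$ together with $\mathbb{W}$ entails $E$ (literal case classical, propagation through connectives and quantifiers harmless because $\mathbb{W}$ is elementary), followed by combining the instance $E=X$ with the already established $\thr\vdash\overline{X}$ by LC. The paper's proof of Lemma \ref{feb15} does exactly this, in the form of claim (\ref{m14a}): $\thr\vdash\cla(\overline{E}\mlc\mathbb{W}\mli E)$ for every formula $E$.

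There is, however, a concrete flaw in how you formulate and then transfer the auxiliary claim. You state it as $\cltw$-provability of the sequent $\mathbb{W},\overline{E}\intimpl E$, and then assert that instantiating at $E=X$ ``gives $\thr\vdash\mathbb{W}\mli(\overline{X}\mli X)$ by LC''. That is not an application of LC: LC yields $\thr\vdash F$ from a $\cltw$-provable sequent $E_1,\ldots,E_n\intimpl F$ only when every premise $E_i$ is itself a theorem of $\thr$, and $\mathbb{W}$ is not one (its unprovability in $\thr$ is precisely why the intensional completeness argument passes to ${\mathcal A}!$). Nor may you silently move the premise $\mathbb{W}$ of a provable sequent into the antecedent of $\mli$ in its conclusion: the premises of $\intimpl$ are reusable resources (cf.\ the Replicate rule), so $\cltw$ admits no deduction theorem of the form ``$E,\vec{G}\intimpl F$ provable implies $\vec{G}\intimpl E\mli F$ provable''. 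What you actually need --- and what your own induction delivers if run at the formula level, exactly because $\mathbb{W}$ is elementary and so its duplication across components costs nothing --- is the $\thr$-provability of $\cla(\overline{E}\mlc\mathbb{W}\mli E)$ itself: the base case is a classically valid elementary sentence, and each inductive step is a genuine LC inference from previously proved $\thr$-theorems. With that reformulation the final step is a legitimate LC from the two theorems $\cla(\overline{X}\mlc\mathbb{W}\mli X)$ and $\overline{X}$, giving $\thr\vdash\mathbb{W}\mli X$ as in the paper; as written, though, your step 2 is not a valid inference. (The closing detour through $\areleven^{\mathcal R}_{{\mathcal A}!}$ is unnecessary, since the lemma itself is about $\thr$.)
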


\begin{proof} First, by induction on the complexity of $E$, we  want to show that
\begin{equation}\label{m14a}
\mbox{\em For any formula $E$, $\thr\vdash \cla (\overline{E}\mlc \mathbb{W}\mli E)$.} 
\end{equation}
If  $E$ is a literal, then $\cla(\overline{E}\mlc \mathbb{W}\mli E)$ is nothing but 
$\cla\bigl((\mathbb{W}\mli E)\mlc  \mathbb{W}\mli E\bigr)$. Of course $\thr$ proves this elementary sentence, which happens to be classically valid.  
Next, suppose $E$ is $F_0\mlc F_1$. By the induction hypothesis, $\thr$ proves both $\cla (\overline{F_0}\mlc \mathbb{W}\mli F_0)$ and  $\cla (\overline{F_1}\mlc \mathbb{W}\mli F_1)$. These two, by LC, imply $\cla \bigl((\overline{F_0}\mlc\overline{F_1})\mlc  \mathbb{W}\mli F_0\mlc F_1\bigr)$. And the latter is nothing but the desired $\cla (\overline{E}\mlc \mathbb{W}\mli E)$. 
The remaining cases where $E$ is $F_0\mld F_1$, $F_0\adc F_1$, $F_0\add F_1$, $\ada xF(x)$, $\ade xF(x)$, $\cla xF(x)$ or $\cle xF(x)$ are handled in a similar way.
 (\ref{m14a}) is thus proven.

(\ref{m14a}) implies that $\thr$ proves $\overline{X}\mlc \mathbb{W}\mli X$.  As established in Section \ref{s19}, $\thr$ also proves $\overline{X}$. From these two, by LC, $\thr$ proves  $ \mathbb{W}\mli X$, as desired. 
\end{proof} 

As we remember from Section \ref{s19}, $\mathbb{W}$ is a true elementary sentence. As such, it is an element of 
${\mathcal A}!$ and is thus provable in $\areleven^{\mathcal R}_{{\mathcal A}!}$. By Lemma \ref{feb15},  
$\areleven^{\mathcal R}_{{\mathcal A}!}$  also proves both $\overline{X}$ and $\overline{X}\mlc \mathbb{W}\mli X$. Hence, by LC, $\areleven^{\mathcal R}_{{\mathcal A}!}\vdash X$.
This proves the completeness part of Theorem \ref{tt1}.   

\subsection{The intensional strength of \texorpdfstring{$\thr$}{CLA11AR}}
While $\areleven^{\mathcal R}_{{\mathcal A}!}$ is intensionally complete, $\thr$ generally is not. Namely, the G\"{o}del-Rosser incompleteness theorem  precludes $\thr$ from being intensionally complete as long as it is consistent and $\mathcal A$ is recursively enumerable. Furthermore, in view of Tarski's theorem on the undefinability of truth, it is not hard to see that 
$\thr$, if sound, cannot be intensionally complete even if the set $\mathcal A$ is just arithmetical, i.e., if the predicate ``$x$ is the code of some element of $\mathcal A$'' is expressible in the language of $\pa$.

Intensionally, even though incomplete,  $\thr$ is still very strong. The last sentence of Section \ref{sprc},  
in our present terms, reads: 
\begin{quote} {\em ... If a sentence $F$ is not
provable in $\thr$, it is unlikely that anyone would find an $\mathcal R$ tricomplexity algorithm solving the problem
expressed by $F$: either such an algorithm does not exist, or showing its correctness requires going beyond
ordinary combinatorial reasoning formalizable in $\pa$.}
\end{quote} 

\noindent To explain and justify this claim, assume $F$ has a $\bigl(\mathfrak{b}(x),\mathfrak{c}(x),\mathfrak{d}(x)\bigr)$ tricomplexity solution/algorithm $\mathcal F$, where $\bigl(\mathfrak{b}(x),\mathfrak{c}(x),\mathfrak{d}(x)\bigr)\in {\mathcal R}\amp\times {\mathcal R}\spa\times {\mathcal R}\tim$.
Let $\mathbb{V}$ be a sentence constructed from $F$, $\mathcal F$ and $(\mathfrak{b},\mathfrak{c},\mathfrak{d})$ in the same way as we earlier constructed $\mathbb{W}$ from $X$, $\mathcal X$ and $(\mathfrak{a},\mathfrak{s},\mathfrak{t})$. Note that $\mathbb{V}$ is a sentence asserting the ``correctness'' of $\mathcal F$. Now, assume a proof of $\mathcal F$'s correctness can be formalized in $\pa$, in the precise sense that $\pa\vdash \mathbb{V}$. According to Lemma \ref{feb15}, we also have  
$\thr\vdash \mathbb{V}\mli F$. Then, by LC, $\thr \vdash F$.

\makeatletter
\renewenvironment{theindex}
               {\section*{\indexname}%
                \@mkboth{\MakeUppercase\indexname}%
                        {\MakeUppercase\indexname}%
                \thispagestyle{plain}\parindent\z@
                \parskip\z@ \@plus .3\p@\relax
                \columnseprule \z@
                \columnsep 35\p@
                \let\item\@idxitem}
               {}
\makeatother
\twocolumn
\begin{theindex}
\item $\amp$ (as a subscript) \pageref{xtroica}
\item argument variable \pageref{xav}
\item arithmetical problem \pageref{x93} 
\item at least linear \pageref{x89}
\item at least logarithmic \pageref{x89}
\item at least polynomial \pageref{x90}
\item background (of a configuration) \pageref{xcbgsd}
\item basis of induction \pageref{x70}  
\item ``Big-O'' notation \pageref{xbon}
\item $\bit(y,x)$ \pageref{x50a}
\item Bit axiom \pageref{A3}
\item $\bitsum$ \pageref{xbitsum}
\item $\subcarry$ \pageref{xborr}
\item bound \pageref{x60}
\item boundclass \pageref{x62}
\item boundclass triple \pageref{x64}
\item bounded formula \pageref{x63}
\item bounded arithmetic \pageref{x13}
\item $\br_0(x,s)$, $\br_1(x,s)$ \pageref{xbrhg}
\item buffer-empty title \pageref{xbet}
\item $\mcarry$ \pageref{xsmacr}  
\item $\adcarry$ \pageref{xcar1}
\item clarithmetic \pageref{x3}
\item $\cltw$ \pageref{x17}
\item $\thr$ \pageref{x65}
\item {\bf CLA11} \pageref{x6} 
\item choice operators \pageref{x5}
\item cirquent calculus \pageref{0crq}
\item Comprehension ($\mathcal R$-Comprehension) \pageref{x69}
\item comprehension bound \pageref{xcbn}
\item comprehension formula \pageref{xcf} 
\item computability logic (CoL) \pageref{x1}
\item configuration \pageref{xconf} 
\item corrupt configuration \pageref{xncog}
\item critical formula \pageref{icritical}
\item $\mathfrak{d}$ \pageref{x5dd}
\item $\mathbb{D}$ \pageref{x41}  
\item $\mathbb{D}^\epsilon$ \pageref{x42}  
\item $\mathbb{D}_{\mbox{\tiny $\sqcup$}}$ \pageref{x43}  
\item $\mathbb{D}^{\epsilon}_{\mbox{\tiny $\sqcup$}}$ \pageref{x43}
\item elementary (formula, sentence) \pageref{x27}
\item elementary (game, problem) \pageref{x4}
\item elementary basis \pageref{x81} 
\item extended proof \pageref{x80}
\item extensional: strength \pageref{x9} completeness \pageref{x11}
\item $\mathbb{F}$ \pageref{x48}  
\item formula \pageref{x19}
\item header (of a move) \pageref{xhdr}  
\item HPM \pageref{xHPM}  
\item Induction (${\mathcal R}$-Induction) \pageref{x68}
\item induction bound \pageref{x75}
\item induction formula \pageref{x74}
\item inductive step \pageref{x71} 
\item instance of {\bf CLA11} \pageref{xinst}
\item intensional: strength \pageref{x8} completeness \pageref{x12}
\item $\mathfrak{j}$ \pageref{xscn}
\item $\mathfrak{k}$ \pageref{x5m} 
\item $\mathbb{L}$ \pageref{x18}
\item $\mathbb{L}$-sequent \pageref{xlsq}
\item LC \pageref{x67}
\item least significant bit \pageref{x46a}
\item left premise (of induction) \pageref{x72} 
\item linear closure \pageref{x82}
\item linearly closed \pageref{x84}
\item literal \pageref{xliteral} 
\item Log axiom \pageref{A2}
\item logical consequence (as a relation) \pageref{23}  
\item Logical Consequence (as a rule) \pageref{x66}
\item logically imply \pageref{xlcc}
\item logically valid \pageref{x24} 
\item $\mathfrak{m}$ \pageref{x5m}
\item  $\min $ \pageref{xmin} 
\item monotonicity \pageref{x61}
\item most significant bit \pageref{x47a}
\item $\mathbb{N}$ \pageref{xnnn} 
\item $\mbox{\em Nothing}$ \pageref{xnothing}
\item numer \pageref{xnmrte}
\item $\mbox{\em One}$ \pageref{xone}
\item paraformula \pageref{21}
\item paralegal move \pageref{xprlmv}
\item parasentence \pageref{22}
\item Peano arithmetic ($\pa$) \pageref{x2},\pageref{x26}
\item Peano axioms \pageref{x28},\pageref{A1}
\item politeral \pageref{ipoliteral}
\item polynomial closure \pageref{x83}
\item polynomially closed \pageref{x84}
\item provider (of a resource/game) \pageref{xprovider}
\item pterm (pseudoterm) \pageref{x35}
\item $\mathbb{Q}(\vec{s},z)$ \pageref{x47}  
\item reachable configuration \pageref{xreaco}
\item Reasonable $\mathcal R$-Comprehension \pageref{xrrco}
\item Reasonable $\mathcal R$-Induction \pageref{xrin}
\item regular boundclass triple \pageref{x86} 
\item regular theory \pageref{x91}
\item relevant branch \pageref{xrelv} 
\item relevant parasentence \pageref{xrpr}
\item representable \pageref{x10},\pageref{x95}
\item representation \pageref{x94}  
\item right premise (of induction) \pageref{x73}  
\item  scene (of a configuration) \pageref{xtitsc}
\item $\mbox{\it Scene}_i$ \pageref{xscn}
\item semiuncorrupt configuration \pageref{xscrff}
\item sentence \pageref{x20}
\item $\spa$ (as a subscript) \pageref{xtroica}
\item standard interpretation (model) \pageref{x29}
\item standard model of arithmetic \pageref{xsma}
\item Successor axiom \pageref{A15}
\item successor function  \pageref{x32},\pageref{x33s}
\item supplementary axioms \pageref{A4}
\item syntactic variation \pageref{x36}
\item $Th(N)$ \pageref{x103},\pageref{x101} 
\item $\tim$ (as a subscript) \pageref{xtroica} 
\item title (of a configuration) \pageref{xtit}  
\item $\mbox{\em Title}_i$ \pageref{x5m}  
\item titular component \pageref{xtitc}
\item titular number \pageref{xtnb} 
\item tricomplexity \pageref{x7},\pageref{xnw}
\item true \pageref{x30}
\item truth arithmetic \pageref{x103},\pageref{x101} 
\item $\mathbb{U}$ \pageref{x44}  
\item $\mathbb{U}_{\mbox{\tiny $\sqcup$}}^{\vec{s}}$ \pageref{x45} 
\item $\mathbb{U}_{\mbox{\tiny $\exists$}}^{\vec{s}}$ \pageref{x46}
\item unadulterated successor \pageref{xus}
\item unary numeral \pageref{x34}
\item uncorrupt configuration \pageref{xuncog}
\item value variable \pageref{xvv}
\item $\mathbb{W}$ \pageref{xwwwww}
\item $\mathbb{W}_1$ \pageref{xwlong}
\item $X$, $\mathcal X$ \pageref{xxxxx}  
\item $\mathfrak{y}$ \pageref{xyyy}
\item yield (of a configuration) \pageref{xyield}
\item $\mbox{\em Zero}$ \pageref{xzero}  
\item  \
\item  \ 
\item \ 
\item  \ 
\item \ 
\item  \ 
\item \ 
\item $\ada x\leq \mathfrak{p} $ (and similarly for the other quantifiers) \pageref{xabq}
\item $\ada |x|\leq \mathfrak{p} $ (and similarly for the other quantifiers) \pageref{xabq} 
\item ${\mathcal A}!$ \pageref{x100}
\item $\vdash$ \pageref{x98}
\item $\rep$ \pageref{x99}
\item $\preceq$ (as a relation between bounds/bound-\\classes)~\pageref{x85}
\item $\preceq$ (as a relation between tricomplexities) \pageref{xapr19}
\item $|x|$ \pageref{x40}
\item $\tau|\vec{x}|$ \pageref{xtx}
\item $(x)_y$ \pageref{x41a}
\item $x\successor$ \pageref{x32},\pageref{x33}
\item $\hat{n}$ \pageref{x31}
\item $F^\dagger$ \pageref{x29} 
\item  $\cla  F$, $\cle F$, $\ada F$, $\ade F$ \pageref{x25} 
\item $\adc,\add,\ada,\ade$ \pageref{x6a}
\item $\leftrightarrow$ \pageref{xleftrih}  
\item $E^\circ$ \pageref{xecr}
\item $E^\bullet$ \pageref{xecrb}
\item $\tilde{E}^\circ(\vec{s})$ \pageref{x49} 
\item $\tilde{E}^{\bullet}(\vec{s})$ \pageref{x50}
\item $\lfloor u/2\rfloor$ \pageref{xslja}
\item $\seq{\Phi}!F$ \pageref{xiprf} 
\item $\overline{E}$ (where $E$ is a formula) \pageref{ipver}
\item $S^\heartsuit$  \pageref{xmasti}
\item $S^\spadesuit$ \pageref{xmasti}

\end{theindex}
\vspace{-30 pt}
\end{document}